\newtheorem{theorem}{Theorem}[section]
\newtheorem{proposition}[theorem]{Proposition}
\newtheorem{lemma}[theorem]{Lemma}
\newtheorem{definition}[theorem]{Definition}
\newtheorem{claim}[theorem]{Claim}
\newtheorem{corollary}[theorem]{Corollary}
\newtheorem{example}[theorem]{Example}
\newcommand{\old}[1]{{\color{red} [old: #1]}}
\newcommand{\erm}{\mathrm{ERM}}
\newcommand{\E}{\mathbb{E}}
\newcommand{\dd}{\mathrm{d}}
\newcommand{\TP}{\mathrm{TP}}
\newcommand{\usecond}{u}
\newcommand{\interim}{\mathrm{worst}}
\newcommand{\worst}{\mathrm{worst}}
\newcommand{\event}{\mathrm{E}}
\newcommand{\conc}{\mathrm{Conc}}
\newcommand{\Bad}{\mathrm{Bad}}
\newcommand{\first}{\eta}
\newcommand{\second}{\theta}
\newcommand{\defeq}{\stackrel{\mathrm{def}}{=}}
\newcommand{\TPm}{\bm{m}}
\newcommand{\TPT}{\bm{T}}
\newcommand{\TPK}{\bm{K}}
\newcommand{\msum}{\overline{m}}
\newcommand{\cupperboundmhr}{\frac{1}{4e}}
\DeclareMathOperator*{\argmax}{arg\,max}
\DeclareMathOperator*{\argmin}{arg\,min}
\newcommand{\mye}{\mathrm{Mye}}
\newcommand{\hist}{h}
\newcommand{\strat}{\sigma}
\newcommand{\pbeu}{U}
\title{A Game-Theoretic Analysis of the Empirical Revenue Maximization Algorithm with Endogenous Sampling}
\author{%
  Xiaotie Deng \\
  Peking University\\
  \texttt{xiaotie@pku.edu.cn} \\
   \And
   Ron Lavi \\
   Technion - Israel Institute of Technology \\
   \texttt{ronlavi@ie.technion.ac.il} \\
   \AND
   Tao Lin \\
   Peking University\\
  \texttt{lin\_tao@pku.edu.cn} \\
   \And
   Qi Qi \\
   Hong Kong University of Science and Technology \\
   \texttt{kaylaqi@ust.hk} \\
   \And
   Wenwei Wang \\
   Alibaba Group \\
   \texttt{wwangaw@connect.ust.hk} \\
   \And
   Xiang Yan \\
   Shanghai Jiao Tong University \\
   \texttt{xyansjtu@163.com} \\
}
\begin{document}

\maketitle

\begin{abstract}


The Empirical Revenue Maximization (ERM) is one of the most important price learning algorithms in auction design: as the literature shows it can learn approximately optimal reserve prices for revenue-maximizing auctioneers in both repeated auctions and uniform-price auctions.  However, in these applications the agents who provide inputs to ERM have incentives to manipulate the inputs to lower the outputted price. We generalize the definition of an incentive-awareness measure proposed by Lavi et al (2019), to quantify the reduction of ERM's outputted price due to a change of $m\ge 1$ out of $N$ input samples, and provide specific convergence rates of this measure to zero as $N$ goes to infinity for different types of input distributions.  By adopting this measure, we construct an efficient, approximately incentive-compatible, and revenue-optimal learning algorithm using ERM in repeated auctions against non-myopic bidders, and show approximate group incentive-compatibility in uniform-price auctions.

\end{abstract}

\section{Introduction}

In auction theory, it is well-known \cite{myerson1981optimal} that, when all buyers have values that are independently and identically drawn from a regular distribution $F$, the revenue-maximizing auction is simply the second price auction with anonymous reserve price $p^*=\argmax\{p(1-F(p)\}$: if the highest bid is at least $p^*$, then the highest bidder wins the item and pays the maximum between the second highest bid and $p^*$. 
The computation of $p^*$ requires the exact knowledge of the underlying value distribution, which is unrealistic because the value distribution is often unavailable in practice.  Many works (e.g., \cite{cole2014the, dhangwatnotai2015revenue, huang2018making}) on sample complexity in auctions have studied how to obtain a near-optimal reserve price based on samples from the distribution $F$ instead of knowing the exact $F$.  One of the most important (and most fundamental) price learning algorithms in those works is the \emph{Empirical Revenue Maximization} (ERM) algorithm, which simply outputs the reserve price that is optimal on the uniform distribution over samples (plus some regularization to prevent overfitting). 

\begin{definition}[$c$-Guarded Empirical Revenue Maximization, $\erm^c$]
Draw $N$ samples from a distribution $F$ and sort them non-increasingly, denoted by 
$v_1\ge v_2\ge \cdots \ge v_N$. 
Given some regularization parameter $0\leq c<1$, choose:
$$ i^* = \argmax_{cN< i \le N}\{i\cdot v_i\},\quad\quad \text{define} ~~ \erm^c(v_1, \ldots, v_N)= v_{i^*}.$$
Assume that the smaller sample (with the larger index) is chosen in case of ties.
\end{definition}

$\erm^c$ was first proposed by~\citet{dhangwatnotai2015revenue} and then extensively studied by~\citet{huang2018making}.  They show that the reserve price outputted by $\erm^c$ is asymptotically optimal on the underlying distribution $F$ as the number of samples $N$ increases if $F$ is bounded or has monotone hazard rate, with an appropriate choice of $c$.  Other papers \cite{babaioff2018two, kanoria2019incentive} have continued to study $\erm^c$.


However, when ERM is put into practice, it is unclear how the samples can be obtained since many times there is no impartial sampling source.  A natural solution is \emph{endogenous sampling}.
For example, in repeated second price auctions, the auctioneer can use the bids in previous auctions as samples and run ERM to set a reserve price at each round. But this solution has a challenge of strategic issue: since bidders can affect the determination of future reserve prices, they might have an incentive to underbid in order to increase utility in future auctions.

Another example of endogenous sampling is the uniform-price auction.  In a uniform-price auction the auctioneer sells $N$ copies of a good at some price $p$ to $N$ bidders with i.i.d.~values $\bm{v}$ from $F$ who submit bids $\bm{b}$.  Bidders who bid at least $p$ obtain one copy and pay $p$.  The auctioneer can set the price to be $p=\erm^0(\bm{v})$ to maximize revenue if bids are equal to values.  But \citet{goldberg2006competitive} show that this auction is not incentive-compatible as bidders can lower the price by strategic bidding. 
Therefore, the main question we consider in this paper is:
\emph{To what extent the presence of strategic agents undermines ERM with endogenous sampling?} 

To formally answer the question, we adopt a notion called ``incentive-awareness measure'' originally proposed by \citet{lavi2019redesigning} under bitcoin's fee market context, which measures the reduction of a price learning function $P$ due to a change of at most $m$ samples out of the $N$ input samples.

\begin{definition}[Incentive-awareness measures]\label{def:discount}
Let $P:\mathbb{R}_+^N\to\mathbb{R}_+$ be a function (e.g., $\erm^c$) that maps $N$ samples to a reserve price. Draw $N$ i.i.d.~values $v_1, \ldots, v_{N}$ from a distribution $F$. Let $I\subseteq\{1, \ldots, N\}$ be an index set of size $|I|=m$, and $v_I=\{v_i\mid i\in I\}$, $v_{-I}=\{v_j\mid j\notin I\}$. A bidder can change $v_I$ to any $m$ non-negative bids $b_I$, hence change the price from $P(v_I, v_{-I})$ to $P(b_I, v_{-I})$. Define the \emph{incentive-awareness measure}:
$$\delta_I(v_I, v_{-I}) = 1 - \frac{\inf_{b_I\in\mathbb{R}_+^m} P(b_I, v_{-I})}{P(v_I, v_{-I})}, $$
and \emph{worst-case incentive-awareness measures}: 
\[\delta_{m}^{\interim}(v_{-I}) = \sup_{v_I\in\mathbb{R}_+^m}[ \delta_I(v_I, v_{-I})],\quad\quad\quad\quad\Delta_{N, m}^{\interim} = \E_{v_{-I}\sim F}[ \delta_m^{\interim}(v_{-I})].\]
\end{definition}
A smaller incentive-awareness measure means that the reserve price is decreased by a less amount when a bidder bids strategically.  Since the reduction of reserve price usually increases bidders' utility, a smaller incentive-awareness measure implies that a bidder cannot benefit a lot from strategic bidding, hence the name ``incentive-awareness measure''.\footnote{\citet{lavi2019redesigning} use the name ``discount ratio'' which we feel can be confused with the standard meaning of a discount ratio in repeated games. 
}   

\citet{lavi2019redesigning} defined incentive-awareness measures only for $m=1$ and showed that for any distribution $F$ with a finite support size, $\Delta^{\interim}_{N, 1}\to 0$ as $N\to\infty$.  Later, \citet{yao2018incentive} showed that $\Delta^{\interim}_{N, 1}\to 0$ for any continuous distribution with support included in $[1, D]$.  They did not provide specific convergent rates of $\Delta^{\interim}_{N, 1}$.  We generalize their definition to allow $m\ge 1$, which is crucial in our two applications to be discussed.  Our main theoretical contribution is to provide upper bounds on $\Delta^{\interim}_{N, m}$ for two types of value distributions $F$: the class of Monotone Hazard Rate (MHR) distributions where $\frac{f(v)}{1-F(v)}$ is non-decreasing over the support of the distribution (we use $F$ to denote the CDF and $f$ for PDF)
and the class of bounded distributions which consists of all (continuous and discontinuous) distributions with support included in $[1, D]$.  MHR distribution can be unbounded so we are the first to consider incentive-awareness measures for unbounded distributions. 

\begin{theorem}[Main]\label{thm:discount_combined}
Let $P=\erm^c$. The worst-case incentive-awareness measure is bounded by
\begin{itemize} 
    \item for MHR $F$, 
    $\Delta^{\interim}_{N, m}=O\left(m\frac{\log^3 N} {\sqrt{N}}\right)$
    , if $m=o(\sqrt{N})$ and $\frac{m}{N}\le c\le\cupperboundmhr$.\footnote{We use $a(n)=o(b(n))$ to denote $\lim_{n\to+\infty}\frac{a(n)}{b(n)}=0$.}
    \item for bounded $F$, $\Delta^{\interim}_{N, m} = O\left(D^{8/3}m^{2/3}\frac{\log^2 N} {N^{1/3}}\right)$, if $m=o(\sqrt{N})$ and  $\frac{m}{N}\le c\le \frac{1}{2D}$. 
\end{itemize}

The constants in the two big $O$'s are independent of $F$ and $c$. 
\end{theorem}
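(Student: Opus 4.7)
The plan is to reduce the incentive-awareness measure to the sample-complexity behavior of $\erm^c$ itself and then control two extreme values of $\erm^c$ (over all $v_I$ and over all $b_I$) using standard concentration of the empirical revenue curve around its population analogue.

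First, I would identify the adversary's extremal strategies in $\delta_m^{\interim}(v_{-I}) = 1 - \frac{\inf_{b_I}\erm^c(b_I,v_{-I})}{\sup_{v_I}\erm^c(v_I,v_{-I})}$. The infimum should be (essentially) attained as $b_I \to 0$: once all manipulated bids fall below every genuine bid, they occupy the last $m$ positions of the sorted order and drop out of the active range of the $\argmax$, so $\erm^c(0,v_{-I})$ reduces to an ERM-style computation on the $N-m$ remaining samples with an inflated guard $c' = cN/(N-m)$. The hypothesis $c \ge m/N$ ensures $c' < 1$. Symmetrically, the supremum should be approached as $v_I \to +\infty$, which makes the $m$ injected bids the top $m$ of the sorted order and shifts every subsequent index in $\argmax_{cN < i \le N}\{i\cdot v_i\}$ by $m$; the resulting quantity is a guard-adjusted, index-shifted ERM on $v_{-I}$. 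Both extremes can then be compared to the population-optimal reserve $p^* = \argmax_p p(1-F(p))$.

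Second, I would invoke uniform convergence of the empirical revenue curve $r_N(p) = p\cdot\frac{1}{N}|\{i : v_i \ge p\}|$ to $r(p) = p(1-F(p))$ to show that both reduced extremes stay in a tight window around $p^*$. For MHR $F$, exponentially decaying tails restrict the relevant price range to an interval of length $O(\log N)$, so Bernstein-style concentration over a fine discretization of that interval gives $\sup_p|r_N(p)-r(p)| = \tilde O(1/\sqrt{N})$; combined with the quadratic-at-peak behavior of $r$ inherited from MHR regularity, this translates into $|\erm^c - p^*|/p^* = \tilde O(m/\sqrt{N})$, the factor $m$ tracking the discrepancy between the original ERM and its guard-adjusted/index-shifted variants identified in the previous step. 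For bounded $F$ on $[1,D]$, uniform concentration is straightforward at rate $O(D\log N/\sqrt{N})$, but without strong concavity of $r$ the translation from the uniform error of $r_N$ to an error in $\argmax r_N$ uses only a modulus-of-continuity argument; after balancing the discretization size against the $O(m/N)$ adversarial perturbation, this yields the weaker $m^{2/3}/N^{1/3}$ rate and a $D^{8/3}$ constant from propagating the $D$-dependence through the balance.

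Finally I would assemble $\Delta_{N,m}^{\interim} = \E_{v_{-I}}[\delta_m^{\interim}(v_{-I})]$ by splitting the expectation into a high-probability ``good event'' on which the bounds from the second step hold (giving $\delta_m^{\interim} = O(\epsilon_{N,m})$) and its complement of inverse-polynomial probability. On the complement the trivial bound $\delta_m^{\interim} \le 1$ suffices for the bounded case; the MHR case additionally requires a tail bound on $\sup_{v_I}\erm^c(v_I,v_{-I})$ to absorb the unboundedness of the denominator. The main obstacle will be the MHR setting: the unbounded support allows the adversary to try to inflate the denominator by injecting arbitrarily large samples, and I expect the most delicate step to be verifying that the guard $c \le \cupperboundmhr$ keeps the active ERM index bounded away from the extreme top of the order, so that $\sup_{v_I}\erm^c(v_I,v_{-I})$ concentrates near $p^*$ rather than tracking the adversary's injected outliers.
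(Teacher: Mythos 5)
Your central reduction is flawed: the infimum $\inf_{b_I}\erm^c(b_I,v_{-I})$ is \emph{not} attained as $b_I\to 0$. Consider $N=5$, $m=1$, $c=1/5$, $v_{-I}=(100,80,1,1)$, and $v_I=+\infty$. The truthful price is $80$ (index $i^*=3$, product $3\cdot 80=240$). Bidding $0$ yields sorted $(100,80,1,1,0)$ with products $(\cdot,160,3,4,0)$, so the price is still $80$. But bidding $b=70$ gives sorted $(100,80,70,1,1)$ with products $(\cdot,160,210,4,5)$, so the price drops to $70$; pushing $b$ toward $160/3\approx 53.3$ drops the price further to that value. In general the adversary's optimal strategy is to inject a bid $b^*$ that itself becomes the new winner, not to vanish below all genuine samples. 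The paper formalizes this in Claim~\ref{claim:multiple_bid}: WLOG the optimal deviation is $m$ copies of a single value $b^*$ satisfying $b^*=\erm^c(b^*,\ldots,b^*,v_{-I})$. Because of this, the quantity you are trying to lower-bound -- $\inf_{b_I}\erm^c(b_I,v_{-I})$ -- is \emph{not} a ``guard-adjusted, index-shifted ERM on $v_{-I}$''; it genuinely depends on the interaction between the injected bid $b^*$ and the products $jv_j$ for $j>k^*$.

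Consequently the second and third steps of your plan do not target the right object. Bounding $|\erm^c-p^*|$ via uniform convergence of the empirical revenue curve controls the sample complexity of ERM, but does not by itself bound how far below $v_{k^*}$ an adversarial bid $b^*$ can land. The paper's argument (Lemmas~\ref{lem:thresholds} and \ref{lem:Pr_Bad}) instead shows that a $\eta$-fraction reduction forces the existence of a ``close competitor'' index $j\ge k^*$ with $v_j\le(1-\Omega(\eta))v_{k^*}$ and $jv_j\ge k^*v_{k^*}-O(m/\eta)\cdot v_{k^*}$, and then bounds the probability of any such near-tie in quantile space via a random-walk argument (Lemma~\ref{lem:Pr_W}); integrating $G(\eta)\le C/\eta^{3/2}$ over $\eta$ is what produces the $N^{-1/3}$ rate for bounded $F$ and the refined $N^{-1/2}$ rate for MHR via the quadratic peak of $R$. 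Your $m^{2/3}/N^{1/3}$ heuristic lands on the right exponent but the mechanism -- ``balancing discretization against the $O(m/N)$ adversarial perturbation'' -- does not reflect where the exponent actually comes from, and the role you ascribe to $m$ (tracking the guard/index shift) is not the correct source of the $m$-dependence either (it enters via the slack $\frac{m}{\second}v_{k^*}$ in the near-tie condition).
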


This theorem implies that as long as the fraction of samples controlled by a bidder is relatively small, the strategic behavior of each bidder has little impact on ERM provided that other bidders are truthful. 
Meanwhile, if more than one bidder bid non-truthfully, no bidder can benefit a lot from lying as long as the total number of bids of all non-truthful bidders does not exceed $m$. 
We will discuss intuitions and difficulties of the proof later and give an overview in Section~\ref{sec:discussion}.


\paragraph{Repeated auctions against non-myopic bidders.}
Besides theoretical analysis, we apply the incentive-awareness measure to real-world scenarios to demonstrate the effect of strategic bidding on ERM.  The main application we consider is repeated auctions where bidders participate in multiple auctions and have incentives to bid strategically to affect the auctions the seller will use in the future (Section~\ref{sec:two_phase}).  We consider a two-phase learning algorithm: the seller first runs second price auctions with no reserve for some time to collect samples, and then use these samples to set reserve prices by ERM in the second phase.  The upper bound on the incentive-awareness measure of ERM implies that this algorithm is approximately incentive-compatible. 

\citet{kanoria2019incentive}, \citet{liu2018learning}, and \citet{NIPS2019_abernethy_learning} consider repeated auctions scenarios similar to ours.  \citet{kanoria2019incentive} set \emph{personalized} reserve prices by ERM in repeated second-price auctions, so at least two bidders are needed in each auction and they will face different reserve prices.  We use \emph{anonymous} reserve price so we allow only one bidder to participate in the auctions and when there are more than one bidder they face the same price, thus preventing discrimination.  \citet{liu2018learning} and \citet{NIPS2019_abernethy_learning} design approximately incentive-compatible algorithms using differential privacy techniques rather than pure ERM.  Comparing with them, our two-phase ERM algorithm is more practical as it is much simpler, and their algorithms rely on the boundedness of value distributions while we allow unbounded distributions.  Moreover, their results require a large number of auctions while ours need a large number of samples in the first phase which can be obtained by either few bids in many auctions, many bids in few auctions, or combined.  

{\bf Uniform-price auctions and incentive-compatibility in the large.}  Another scenario to which we apply the incentive-awareness measure of ERM is uniform-price auctions (Section~\ref{sec:uniform-price}).  \citet{azevedo2018strategy} show that, uniform-price auctions are \emph{incentive-compatible in the large} in the sense that truthful bidding is an approximate equilibrium when there are many bidders in the auction.  In fact, incentive-compatibility in the large is the intuition of Theorem~\ref{thm:discount_combined}: when $N$ is large, no bidders can influence the learned price by much.
The proof in \cite{azevedo2018strategy} directly makes use of this intuition, showing that the bid of one bidder can affect the empirical distribution consisting of the $N$ bids only by a little.  However, their argument, which crucially relies on the assumption that bidders' value distribution has a \emph{finite} support and bids must be chosen \emph{from this finite support} as well, fails when the value distribution is \emph{continuous} or \emph{bids can be any real numbers}, as what we allow. 
We instead, appeal to some specific properties of ERM to show that it is incentive-compatible in the large. 

{\bf Additional related works.}
\label{sec:related_work}
Previous works on ERM mainly focus on its sample complexity, started by \citet{cole2014the}.  While ERM is suitable for the case of i.i.d.~values (e.g., \cite{huang2018making}), the literature on sample complexity has expanded to more general cases of non-i.i.d.~values and multi-dimensional values, e.g.~\cite{morgenstern2015pseudo, devanur2016sample, gonczarowski2018sample, guo2019settling}, or considering non-truthful auctions, e.g.~\cite{hartline2018sample}.  \citet{babaioff2018two} study the performance of ERM with just two samples. 
While this literature assumes that samples are exogenous, our main contribution is to consider endogenous samples that are collected from bidders who are affected by the outcome of the learning algorithm. 

Some works study repeated auctions but with myopic bidders \cite{blum2005near, medina2014learning, cesa2015regret, bubeck2017online}.  Existing works about non-myopic bidders focus on designing various learning algorithms to maximize revenue assuming bidders playing best responds \cite{amin2013learning, amin2014repeated, NIPS2019_deng_robust, NIPS2019_Golrezaei_dynamic} or using no-regret learning algorithm  \cite{braverman_selling_2018}.  We complement that line of works by showing that ERM, the most fundamental algorithm we believe, also has good performance in repeated auctions against strategic bidders.  

Other works about incentive-aware learning (e.g., \cite{devanur_perfect_2015, immorlica2017repeated, balcan2008reducing, epasto2018incentive}) consider settings different from ours. For example, \cite{devanur_perfect_2015} and \cite{immorlica2017repeated} study repeated auctions where buyers' values are drawn from some distribution at first and then fixed throughout. The seller knows the distribution and tries to learn the exact values, which is different from our assumption that the distribution is unknown to the seller.

\section{Main Application: A Two-Phase Model}
\label{sec:two_phase}
Here we consider a \emph{two-phase model} as a real-world scenario where strategic bidding affects ERM: the seller first runs second price auctions with no reserve for some time to collect samples, and then use these samples to set reserve prices by ERM in the second phase.  This model can be regarded as an ``exploration and exploitation'' learning algorithm in repeated auctions, and we will show that this algorithm can be approximately incentive-compatible and revenue-optimal. 
\subsection{The Model}
A two-phase model is denoted by $\TP(\mathcal{M}, P; F, \TPT, \TPm, \TPK, S)$, where $\mathcal{M}$ is a truthful, prior-independent mechanism, $P$ is a price learning function, and $\TPT=(T_1,T_2)$ are the numbers of auctions in the two phases.
We do not assume that every bidder participates in all auctions.
Instead, we assume that each bidder participates in no more than $m_1, m_2$ auctions in the two phases, respectively; $\TPm=(m_1,m_2)$. 
We use $\TPK = (K_1, K_2)$ to denote the numbers of bidders in the auctions of the two phases.\footnote{
It is well-known that when there are $n$ bidders with i.i.d.~regular value distributions in one auction, a second price auction as a prior-independent mechanism is ($1-1/n$)-revenue optimal. 
But in our two-phase model, the numbers of bidders in each auction, $K_1$ and $K_2$, can be small, e.g., 1 or 2.
With few bidders, prior-independent mechanisms do not have good revenue.
}
$S=S_1\times\cdots\times S_n$ is the strategy space, where $s_i\in S_i: \mathbb{R}_+^{m_{i, 1}+m_{i, 2}}\to \mathbb{R}_+^{m_{i, 1}}$ is a strategy of bidder $i=1, \dots, n$. The procedure is:
\begin{itemize}
\item At the beginning, each bidder realizes $\bm{v}_i=(v_{i,1}, \ldots, v_{i, m_{i,1}+m_{i,2}})$ i.i.d.~drawn from $F$. Let $\bm{v}_{-i}$ denote the values of bidders other than $i$. Bidder $i$ knows $\bm{v}_i$ but does not know $\bm{v}_{-i}$.

\item In the exploration phase,
$T_1$ auctions are run using $\mathcal{M}$ and bidders bid according to some strategy $s \in S$. Each auction has $K_1$ bidders and each bidder $i$ participants in $m_{i, 1} \le m_1$ auctions. The auctioneer observes a random vector of bids $\bm{b}=(b_1, \ldots, b_{T_1K_1})$ with the following distribution: let $I$ be an index set corresponding to bidder $i$, with size $|I|=m_{i, 1}$;
then $\bm{b}=(b_I, b_{-I})$, where $b_{I} \sim s_i(\bm{v}_i)$, and $b_{-I}\sim s_{-i}(\bm{v}_{-i})$. 

\item In the exploitation phase,
$T_2$ second price auctions ($K_2\ge2$) or posted price auctions ($K_2=1$) are run, with reserve price $p=P(\bm{b})$. Each auction has $K_2$ bidders and each bidder $i$ participants in $m_{i, 2}\le m_2$ auctions. The auctions in this phase are truthful because $p$ has been fixed.
\end{itemize}

\noindent
{\bf Utilities. }
Denote the utility of bidder $i$ as: 
\begin{equation}\label{eqn:def_TP_utility}
    U^{\TP}_i(\bm{v}_i, b_I, b_{-I}) = U^{\mathcal{M}}_i(\bm{v}_i, b_I, b_{-I})
    + \sum_{t=m_{i, 1}+1}^{m_{i, 1} + m_{i, 2}}
    \usecond^{K_2}(v_{i,t}, P(b_I, b_{-I})),
\end{equation}
where $U^{\mathcal{M}}_i(v_I, b_I, b_{-I})$ is the utility of bidder $i$ in the first phase, and $\usecond^{K_2}(v, p)$ is the interim utility of a bidder with value $v$ in a second price auction 
with reserve price $p$ among $K_2 \geq 1$ bidders: 
\begin{equation}\label{eqn:def_u_second}
	\usecond^{K_2}(v, p)=\E_{X_2, \ldots, X_{K_2}\sim F}\bigg[\big(v - \max\{p, X_2, \ldots, X_{K_2}\}\big)\cdot\mathbb{I}\big[v>\max\{p, X_2, \ldots, X_{K_2}\}\big]\bigg].
\end{equation}
The \emph{interim utility} of bidder $i$ in the two-phase model is
$\E_{\bm{v}_{-i}\sim F}\left[ U^{\TP}_i(\bm{v}_i, b_I, b_{-I})\right]$.

\noindent
{\bf Approximate Bayesian incentive-compatibility.}
We use the additive version of the solution concept of an $\epsilon$-Bayesian-Nash equilibrium ($\epsilon$-BNE), i.e., in such a solution concept, no player can improve her utility by more than $\epsilon$ by deviating from the equilibrium strategy. We say a mechanism is $\epsilon$-approximately Bayesian incentive-compatible ($\epsilon$-BIC) if truthful bidding is an $\epsilon$-BNE, i.e., if for any $\bm{v}_i\in\mathbb{R}_+^{m_{i, 1}+m_{i, 2}}$, any $b_I\in \mathbb{R}_+^{m_{i, 1}}$, 
\[ \E_{\bm{v}_{-i}\sim F}\left[ U^{\TP}_i(\bm{v}_i, b_I, v_{-I})  -  U^{\TP}_i(\bm{v}_i, v_I, v_{-I}) \right] \le \epsilon, \]
If a mechanism is $\epsilon$-BIC and $\lim_{n \to \infty} \epsilon = 0$, then
each bidder knows that if all other bidders are bidding truthfully then the gain from any deviation from truthful bidding is negligible for her.  To realize that strategic bidding cannot benefit them much, bidders do not need to know the underlying distribution, but only the fact that the mechanism is $\epsilon$-BIC. We are therefore going to assume in this paper that, in such a case, all bidders will bid truthfully.
\noindent


\noindent
{\bf Approximate revenue optimality. }
We say that a mechanism is $(1-\epsilon)$ revenue optimal, for some $0<\epsilon<1$, if its expected revenue is at least $(1-\epsilon)$ times the expected revenue of Myerson auction.\footnote{
One may take the optimal $\epsilon$-BIC auction rather than the exact BIC Myerson auction as the revenue benchmark.  However, as shown by e.g., Lemma 1 of \citet{brustle_multi-item_2020}, the revenue of the optimal $\epsilon$-BIC auction is at most $O(\epsilon)$ greater than that of Myerson auction; so all our revenue approximation results hold for this stronger benchmark except for an additive $O(\epsilon)$ term. 
}
\citet{huang2018making} show that a one-bidder auction with posted price set by $\erm^c$ (for an appropriate $c$) and with $N$ samples from the value distribution is $(1-\epsilon)$ revenue optimal with $\epsilon=O((N^{-1}\log N)^{2/3})$ for MHR distributions and $\epsilon = O(\sqrt{DN^{-1}\log N})$ for bounded distributions.

{\bf The i.i.d.~assumption. }  Our assumption of i.i.d.~values is reasonable because in our scenario there is a large population of bidders, and we can regard this population as a distribution and each bidder as a sample from it.  So from each bidder's perspective, the values of other bidders are i.i.d.~samples from this distribution.  Then the $\epsilon$-BIC notion implies that when others bid truthfully, it is approximately optimal for bidder $i$ to bid truthfully no matter what her value is.

\subsection{Incentive-Compatibility and Revenue Optimality}

Now we show that, as the incentive-awareness measure of $P$ becomes lower, the price learning function becomes more incentive-aware in the sense that bidders gain less from non-truthful bidding:
 
\begin{theorem}\label{thm:connection_TPM_PRM_combined}
In $\TP(\mathcal{M}, P; F, \TPT, \TPm, \TPK, S)$, truthful bidding is an $\epsilon$-BNE, where,

\begin{itemize}

\item for any $P$ and any bounded $F$, 
$\epsilon=m_2D \Delta^{\interim}_{T_1K_1, m_1}$, and

\item for any MHR $F$, if we fix $P=\erm^c$ with $\frac{m_1}{T_1K_1}\le c\le \cupperboundmhr$ and $m_1=o(\sqrt{T_1K_1})$, then
$\epsilon=O\left(m_2v^* \Delta^{\interim}_{T_1K_1, m_1}\right)+O\left(\frac{m_2v^*}{\sqrt{T_1K_1}}\right)$,
where $v^*=\argmax_v\{v[1-F(v)]\}$.
\end{itemize}

The constants in big $O$'s are independent of $F$ and $c$. 
\end{theorem}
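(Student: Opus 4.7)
The overall strategy is to split bidder $i$'s deviation gain into a phase-1 part, which is non-positive by truthfulness of $\mathcal{M}$ (giving $U^{\mathcal{M}}_i(\bm v_i, b_I, v_{-I}) \le U^{\mathcal{M}}_i(\bm v_i, v_I, v_{-I})$), and a phase-2 part, which must be controlled by the incentive-awareness of $P$. For phase 2, the key tool is the one-Lipschitz bound
$$\usecond^{K_2}(v, p') - \usecond^{K_2}(v, p) \le p - p' \qquad \text{for all } 0 \le p' \le p,$$
which I would obtain by coupling on $Y = \max_{k \ge 2} X_k$ (with $Y = 0$ if $K_2 = 1$), rewriting $\usecond^{K_2}(v, p) = \E_Y[(v - \max\{p, Y\})^+]$, and applying $(v - a)^+ - (v - b)^+ \le b - a$ with $a = \max\{p', Y\}$, $b = \max\{p, Y\}$, together with $\max\{p, Y\} - \max\{p', Y\} \le p - p'$. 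Summing over the $m_{i,2} \le m_2$ second-phase auctions yields
$$U^{\TP}_i(\bm v_i, b_I, v_{-I}) - U^{\TP}_i(\bm v_i, v_I, v_{-I}) \le m_2\big(P(v_I, v_{-I}) - P(b_I, v_{-I})\big)_+ \le m_2\, P(v_I, v_{-I})\,\delta_I(v_I, v_{-I}),$$
by definition of $\delta_I$. After one exchange of $\sup$ and $\E$ (an upper bound), the $\epsilon$-BNE requirement reduces to showing $m_2\,\E_{v_{-I}}[\sup_{v_I} P(v_I, v_{-I})\,\delta_I(v_I, v_{-I})] \le \epsilon$.

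For the bounded case, samples drawn from $F$ lie in $[1, D]$ and a reasonable reserve-price function outputs a value in that interval (in particular $\erm^c$ returns one of the samples), so $P \le D$. Pulling $D$ out gives $m_2 D\,\E_{v_{-I}}[\sup_{v_I}\delta_I] = m_2 D\,\Delta^{\interim}_{T_1K_1, m_{i,1}} \le m_2 D\,\Delta^{\interim}_{T_1K_1, m_1}$, using monotonicity of $\Delta^{\interim}_{N, m}$ in $m$ (an adversary controlling a larger index set can always leave the extra coordinates truthful, so her discount can only weakly increase).

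The MHR case is the crux because $\erm^c$ is not a priori bounded. Note however that the regularization $c \ge m_1/(T_1K_1)$ forces $\erm^c$ to ignore the top $cN$ order statistics, so adversarial values $v_I$ placed in that upper tail do not inflate the output. The plan is to introduce a ``good'' event $E = \{\erm^c(v_I, v_{-I}) \le C v^*\}$ for a suitable constant $C$ and split
$$\sup_{v_I}\E_{v_{-I}}\big[\erm^c\,\delta_I\big] \;\le\; C v^* \cdot \E_{v_{-I}}\big[\sup_{v_I}\delta_I\big] \;+\; \sup_{v_I}\E_{v_{-I}}\big[\mathbb{I}_{\bar E}\,\erm^c\big].$$
The first term yields the main $O(m_2 v^* \Delta^{\interim}_{T_1K_1, m_1})$ contribution exactly as in the bounded case. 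For the second, one has to prove a concentration bound for $\erm^c$ under MHR that is uniform in $v_I$ (exploiting the top-$cN$ truncation), combined with the MHR tail inequality $1 - F(t v^*) \le e^{1-t}$ (which gives $\E[\max_i v_i] = O(v^* \log N)$), to obtain the residual bound $O(v^*/\sqrt{T_1K_1})$. This uniform-in-$v_I$ tail estimate is the main obstacle — it must be strictly sharper than the incentive-awareness bound of \Cref{thm:discount_combined} itself, and it is the source of the additive $O(m_2 v^*/\sqrt{T_1K_1})$ term in the statement.
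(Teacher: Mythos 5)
Your proof plan is essentially the same as the paper's. The bounded case is identical: Lipschitz continuity of $u^{K_2}$ in the reserve (via the coupling on $Y=\max_k X_k$ and $(x)^+ - (y)^+ \le |x-y|$), non-positivity of the phase-1 deviation gain by truthfulness of $\mathcal{M}$, telescoping over the $m_2$ second-phase auctions, then bounding $P\cdot \delta_I \le D\cdot\delta^{\interim}_{m_1}(v_{-I})$ and taking expectations. Your observation that $\Delta^{\interim}_{N,m}$ is monotone in $m$ (an adversary with a bigger index set can always leave extra coordinates truthful) is exactly how the paper replaces $m_{i,1}$ by $m_1$.

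For the MHR case your structure is right — pass to $\erm^c(\overline v_I, v_{-I})$ by the top-$cN$ truncation property (this is the paper's Claim~\ref{claim:interim_to_delta}), then split into a bounded event and a residual tail — but your two-part split with a \emph{constant} threshold $Cv^*$ is one step short of delivering the stated $O(m_2 v^*/\sqrt{T_1 K_1})$. The difficulty is that the residual $\E_{v_{-I}}[\mathbb{I}_{\bar E}\,\erm^c]$ has an unbounded integrand, and if you bound it only by $\Pr[\bar E] = O(1/N)$ together with the MHR tail $1-F(tv^*)\le e^{1-t}$, the tail integral $\int_{Cv^*}^\infty N e^{1-x/v^*}\,\dd x = N v^* e^{1-C}$ is of order $Nv^*$ for constant $C$; making it $O(v^*/\sqrt{N})$ requires $C = \Theta(\log N)$, which then spoils the main term by a $\log N$ factor. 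A Cauchy--Schwarz treatment similarly costs a $\log N$ factor via $\E[(\max_i v_i)^2] = O((v^*\log N)^2)$. The paper instead uses a three-way split with a second threshold $T_1 = \sqrt{N}\,v^*$: on the middle range $[(1+o(1))v^*, \sqrt{N}v^*]$ it uses $\erm^c\le \sqrt{N}v^*$ and the concentration $\Pr[\erm^c>(1+o(1))v^*]=O(1/N)$ to get $\sqrt{N}v^*\cdot O(1/N)=O(v^*/\sqrt{N})$, and integrates the far tail $P>\sqrt{N}v^*$ separately (Lemma~\ref{lem:mhr_largest_value}), which is exponentially small. So you should add this intermediate threshold to your plan; otherwise the residual term doesn't come out log-free.

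One smaller remark: you wrote that the needed tail estimate ``must be strictly sharper than the incentive-awareness bound of Theorem~\ref{thm:discount_combined} itself'' — that is not quite right, since the residual is an unconditional tail bound on $\erm^c$ (Lemma~\ref{lem:mhr_event}), a different object from $\Delta^{\interim}_{N,m}$. The two quantities are both $o(1)$, but neither dominates the other; the theorem statement just keeps them as separate additive terms.
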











Combined with Theorem~\ref{thm:discount_combined} which upper bounds the incentive-awareness measure, we can obtain explicit bounds on truthfulness of the two-phase model by plugging in $N=T_1K_1$ and $m=m_1$. Precisely, for any bounded $F$, $\epsilon=O\left(D^{11/3}m_2 m_1^{2/3} \frac{\log^2 (T_1K_1) }{(T_1K_1)^{1/3}}\right)$ if $m_1=o(\sqrt{T_1K_1})$ and $\frac{m_1}{T_1K_1}\le c\le \frac{1}{2D}$. For any MHR $F$, $\epsilon=O\left(v^* m_2 m_1\frac{\log^3 (T_1K_1) }{\sqrt{T_1K_1}}  \right)$ if $m_1=o(\sqrt{T_1K_1})$ and $\frac{m_1}{T_1K_1}\le c\le \cupperboundmhr$. Thus, for both cases, keeping all the parameters except $T_1$ constant (in particular $m_1$ and $m_2$ are constants) implies that $\epsilon\to 0$ at a rate which is not slower than $O((T_1)^{-1/3}\log^3 T_1)$ as $T_1\to+\infty$.

To simultaneously obtain both approximate BIC and approximate revenue optimality, a certain balance between the number of auctions in the two phases must be maintained. Few auctions in the first phase and many auctions in the second phase hurt truthfulness as the loss from non-truthful bidding (i.e., losing in the first phase) is small compared to the gain from manipulating the reserve price in the second phase. Many auctions in the first phase are problematic as we do not have any good revenue guarantees in the first phase (since we allow any truthful $\mathcal{M}$). Thus, a certain balance must be maintained, as expressed formally in the following theorem:

\begin{theorem}\label{thm:TPM_utility_revenue_combined}
Assume that $K_2\ge K_1\ge 1$ and let $\msum= m_1+m_2$. In $\TP(\mathcal{M}, \erm^c; F, \TPT, \TPm, \TPK, S)$, to simultaneously obtain $\epsilon_1$-BIC and $(1-\epsilon_2)$ revenue optimality (assuming truthful bidding), it suffices to set the parameters as follows: 
\begin{itemize}
\item If $F$ is an MHR distribution, $\frac{\msum}{T_1K_1}\le c\le \cupperboundmhr$, $\msum = o(\sqrt{T_1K_1})$,
then \\$\epsilon_1=O\left(v^* \msum^{2} \frac{\log^3 (T_1K_1) }{\sqrt{T_1K_1}}  \right)$, 
and  $\epsilon_2=O\left( \frac{T_1}{T} + \left[\frac{\log (T_1 K_1)}{T_1 K_1}\right]^{\frac{2}{3}} \right)$. 

\item If $F$ is bounded and regular, $\frac{\msum}{T_1K_1}\le c\le \frac{1}{2D}$, $\msum = o(\sqrt{T_1K})$, then \\$\epsilon_1=O\left(D^{11/3} \msum^{5/3} \frac{\log^2 (T_1K_1) }{(T_1K_1)^{1/3}}\right)$, and
$\epsilon_2 = O\left(\frac{T_1}{T}  + \sqrt{\frac{D\cdot\log (T_1K_1)}{T_1K_1}}\right)$.\footnote{The requirement that $F$ is regular in addition to being bounded comes from the fact that $\erm^c$ approximates the optimal revenue in an auction with many bidders only for regular distributions. In fact, the sample complexity literature on $\erm^c$ only studies the case of one bidder (which is, in our notation, $K_2 = 1$). In this case, i.e., if the second phase uses posted price auctions, we do not need the regularity assumption. To capture the case of general $K_2$, we make a technical observation that for regular distributions $(1-\epsilon)$ revenue optimality for a single buyer implies $(1-\epsilon)$ revenue optimality for many buyers 
 (Lemma~\ref{lem:regular_one_to_multiple}). 
We do not know if this is true without the regularity assumption or if this observation  -- which may be of independent interest -- was previously known.}
\end{itemize}
\end{theorem}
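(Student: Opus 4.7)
The plan is to derive the BIC guarantee $\epsilon_1$ and the revenue guarantee $\epsilon_2$ separately, combining earlier results in the paper. For $\epsilon_1$, I would chain Theorem~\ref{thm:connection_TPM_PRM_combined} (which bounds the gain from deviation in terms of $\Delta^{\interim}_{T_1K_1, m_1}$) with Theorem~\ref{thm:discount_combined} (which bounds $\Delta^{\interim}_{N,m}$ itself). For $\epsilon_2$, I would split total expected revenue into phase-1 and phase-2 contributions and bound the loss in each.

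For the $\epsilon_1$ bound, applying Theorem~\ref{thm:connection_TPM_PRM_combined} with the stated regularization range $\frac{\msum}{T_1K_1}\le c$ (which in particular covers $\frac{m_1}{T_1K_1}\le c$) and $m_1=o(\sqrt{T_1K_1})$ gives, for MHR distributions, $\epsilon_1 = O(m_2 v^* \Delta^{\interim}_{T_1K_1, m_1}) + O(m_2 v^*/\sqrt{T_1K_1})$, and for bounded distributions, $\epsilon_1 = m_2 D\, \Delta^{\interim}_{T_1K_1, m_1}$. Plugging in the rates from Theorem~\ref{thm:discount_combined} yields, for MHR, a leading term of order $v^* m_1 m_2 \log^3(T_1K_1)/\sqrt{T_1K_1}$, and for bounded, a term of order $D^{11/3} m_1^{2/3} m_2 \log^2(T_1K_1)/(T_1K_1)^{1/3}$. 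Using the crude but convenient bounds $m_1 m_2 \le \msum^2$ and $m_1^{2/3} m_2 \le \msum^{5/3}$ converts these to the claimed symmetric form in $\msum$; the additional $O(m_2 v^*/\sqrt{T_1K_1})$ term in the MHR case is absorbed since $\msum^2 \ge m_2$.

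For the $\epsilon_2$ bound, under the paper's standing assumption that players bid truthfully once they know the mechanism is $\epsilon_1$-BIC, the phase-1 bids are i.i.d.~samples from $F$. The revenue in phase 1 can be as small as $0$ (any prior-independent truthful $\mathcal{M}$ is allowed), contributing at most a $T_1/T$ multiplicative loss to the overall average. For phase 2, the $T_1K_1$ samples feed into $\erm^c$, so the cited single-buyer result of \citet{huang2018making} gives, per auction with $K_2=1$, a $(1-\epsilon')$ approximation to the Myerson revenue with $\epsilon' = O((\log(T_1K_1)/(T_1K_1))^{2/3})$ for MHR and $\epsilon'=O(\sqrt{D\log(T_1K_1)/(T_1K_1)})$ for bounded regular. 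To extend to $K_2\ge 2$ I would appeal to Lemma~\ref{lem:regular_one_to_multiple} referenced in the theorem's footnote, which lifts the one-buyer approximation to the many-buyer second-price auction with the same reserve, using regularity. Combining the $T_1/T$ phase-1 loss with the $(T_2/T)\epsilon'\le \epsilon'$ phase-2 loss gives the two claimed $\epsilon_2$ expressions.

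The main delicacy I expect is the apparent circularity between $\epsilon_1$ and $\epsilon_2$: the revenue guarantee in phase 2 requires the phase-1 bids to be i.i.d.~samples from $F$, which in turn requires truthful bidding, which we only know from $\epsilon_1$-BIC. The paper's framing of the solution concept (that bidders follow truthful bidding once $\epsilon$-BIC is guaranteed for small $\epsilon$) resolves this by making the two conclusions consistent rather than mutually dependent. A secondary technical point, whose difficulty is hidden in Lemma~\ref{lem:regular_one_to_multiple}, is that regularity is essential in the bounded case to pass from posted-price to second-price revenue optimality; without it the argument would only deliver the $K_2=1$ case. Beyond these two observations, the proof is just substitution and bookkeeping of logarithmic factors.
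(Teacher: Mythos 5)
Your proposal matches the paper's proof essentially step for step: the $\epsilon_1$ bound is obtained by chaining Theorem~\ref{thm:connection_TPM_PRM_combined} with Theorem~\ref{thm:discount_combined} and replacing $m_1m_2$ by $\msum^2$ (resp.\ $m_1^{2/3}m_2$ by $\msum^{5/3}$), and the $\epsilon_2$ bound is obtained by lower-bounding phase-1 revenue by $0$, applying the \citet{huang2018making} sample-complexity rate to the phase-2 reserve, and lifting from one to many bidders via Lemma~\ref{lem:regular_one_to_multiple}. The only thing you leave implicit is where the hypothesis $K_2\ge K_1$ enters: the paper uses it to get $\mye^{K_2}\ge\mye^{K_1}$, which is what turns the revenue-weighted ratio $\frac{T_2\mye^{K_2}}{T_1\mye^{K_1}+T_2\mye^{K_2}}$ into the clean count-weighted bound $1-\frac{T_1}{T}$; your phrase ``a $T_1/T$ multiplicative loss to the overall average'' quietly assumes this, so you should surface that step.
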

\noindent The proof is given in 
Appendix~\ref{sec:proof_thm:TPM_utility_revenue_combined}. 
This theorem makes explicit the fact that in order to simultaneously obtain approximate BIC and approximate revenue optimality,
$T_1$ cannot be too small nor too large: for approximate revenue optimality we need $T_1 \ll T$ and for approximate BIC we need, e.g., $T_1 \gg (v^*)^2 \msum^{4}\log^6(v^*\msum)/K_1$ for MHR distributions, and $T_1\gg D^{11}\msum^5\log^6(D\msum)/K_1$ for bounded distributions. When setting the parameters in this way, both $\epsilon_1$ and $\epsilon_2$ go to $0$ as $T\rightarrow \infty$.

\subsection{Multi-Unit Extension}\label{sec:multi-unit-extension}
The auction in the exploitation phase can be generalized to a multi-unit Vickrey auction with anonymous reserve, where $k\ge 1$ identical units of an item are sold to $K_2$ unit-demand bidders and among those bidders whose bids are greater than the reserve price $p$, at most $k$ bidders with largest bids win the units and pay the maximum between $p$ and the $(k+1)$-th largest bid. The multi-unit Vickrey auction with an anonymous reserve price is revenue-optimal when the value distribution is regular, and the optimal reserve price does not depend on $k$ or $K_2$ according to \citet{myerson1981optimal}. Thus the optimal reserve price can also be found by $\erm^c$. 
All our results concerning truthfulness, e.g., \cref{thm:connection_TPM_PRM_combined}, still hold for the multi-unit extension with any $k\ge 1$. Moreover, \cref{thm:TPM_utility_revenue_combined} also holds because we have already considered the multi-unit extension in its proof in 
Appendix~\ref{sec:proof_thm:TPM_utility_revenue_combined}.

\subsection{Two-Phase ERM Algorithm in Repeated Auctions}
The two-phase model with ERM as the price learning function can be seen as a learning algorithm in the following setting of repeated auctions against strategic bidders: there are $T$ rounds of auctions, there are $K\ge 1$ bidders in each auction, and each bidder participates in at most $\overline{m}$ auctions.  The algorithm, which we call ``two-phase ERM'', works as follows: in the first $T_1$ rounds, run any truthful, prior-independent auction $\mathcal{M}$ (e.g., the second price auction with no reserve); in the later $T_2=T-T_1$ rounds, run second price auction with reserve $p=\erm^c(b_{1}, \ldots, b_{T_1K})$ where $b_1, \ldots, b_{T_1K}$ are the bids from the first $T_1$ auctions.  $T_1$ and $c$ are adjustable parameters.  

In repeated games, one may also consider $\epsilon$-perfect Bayesian equilibrium ($\epsilon$-PBE) as the solution concept besides $\epsilon$-BNE. A formal definition is given in Appendix \ref{sec:pbe} but roughly speaking, $\epsilon$-PBE requires that the bidding of each bidder at each round of auction $\epsilon$-approximately maximizes the total expected utility in all future rounds, conditioning on any observed history of allocations and payments.  Note that the history may leak some information about the historical bids of other buyers and these bids will affect the seller's choice of mechanisms in future rounds.  Similar to the $\epsilon$-BNE notion, we can show that the two-phase ERM algorithm satisfies: (1) truthful bidding is an $O\left(\log^2(T_1K)\sqrt[3]{\frac{D^{11} K\msum}{T_1}}\right)$-PBE; (2) $\left(1-O\left(\frac{T_1}{T} + \sqrt{\frac{D\log(T_1K)}{T_1K}}\right)\right)$ revenue optimality, for bounded distributions; and similar results for MHR distributions.  Choosing $T_1=\tilde O(T^{\frac{2}{3}})$, which maximizes the revenue, we obtain $\tilde O(T^{-\frac{2}{9}})$-truthfulness and $(1-\tilde O(T^{-\frac{1}{3}}))$ revenue optimality, where we assume $D$, $\msum$, and $K$ to be constant.\footnote{The $\tilde O$ notation omits polylogarithmic terms.} 

Under the same setting, \citet{liu2018learning} and \citet{NIPS2019_abernethy_learning} design approximately truthful and revenue optimal learning algorithms using differential privacy techniques. 
We can compare two-phase ERM and their algorithms. 
Firstly, they make a similar assumption as ours that $\msum=o(\sqrt{T/K})$, in order to obtain approximate truthfulness and revenue optimality at the same time. 
In terms of truthfulness notion, \citet{liu2018learning} assume that bidders play an exact PBE instead of $\epsilon$-PBE, so their quantitative result is incomparable with ours.  Their notion of exact PBE is too strong to be practical because bidders need to collect a lot of information about other bidders and do a large amount of computation to find the exact equilibrium, while our notion guarantees bidders of approximately optimal utility as long as they bid truthfully.
Although our truthfulness bound is worse than the bound of \cite{NIPS2019_abernethy_learning}, which is $\tilde O(\frac{1}{\sqrt{T}})$, we emphasize that their $\epsilon$-truthfulness notion is \emph{weaker} than ours: in their definition, each bidder cannot gain more than $\epsilon$ in current and future rounds if she deviates from truthful bidding \emph{only in the current round}, given any fixed future strategy.  But in our definition, each bidder cannot gain more than $\epsilon$ if she deviates \emph{in current and all future rounds}.  Our algorithm is easier to implement and more time-efficient than theirs, and works for unbounded distribution while theirs only support bounded distributions because they need to discretize the value space.

\section{A Second Application: Uniform-Price Auctions}\label{sec:uniform-price}

The notion of an incentive-awareness measure (recall Definition \ref{def:discount}) has implications regarding the classic uniform-price auction model, which we believe are of independent interest.  In a static uniform-price auction we have $N$ copies of a good and $N$ unit-demand bidders with i.i.d.~values $\bm{v}$ from $F$ that submit bids $\bm{b}$. The auctioneer then sets a price $p=P(\bm{b})$.  Each bidder $i$ whose value $v_i$ is above or equal to $p$ receives a copy of the good and pays $p$, obtaining a utility of $v_i-p$; otherwise the utility is zero. 
\citet{azevedo2018strategy} show that this auction is ``incentive-compatible in the large'' which means that truthfulness is an $\epsilon$-BNE and $\epsilon$ goes to zero as $N$ goes to infinity.  They assume bidders' value distribution has a finite support and their bids must be chosen from this finite support as well.  They mention that allowing continuous supports and arbitrary bids is challenging.  

In this context, taking $P=\erm^c$ is very natural when the auctioneer aims to maximize revenue. Indeed, \citet{goldberg2006competitive} suggest to use the uniform-price auction with $P=\erm^c$, where $c=\frac{1}{N}$, as a revenue benchmark for evaluating other truthful auctions they design.

When the price function is $P=\erm^{c=\frac{1}{N}}$, our analysis of the incentive-awareness measure generalizes the result of \cite{azevedo2018strategy} to bounded and to MHR distributions. Moreover, we generalize their result to the case where coalitions of at most $m$ bidders can coordinate bids and jointly deviate from truthfulness. 

\begin{theorem}\label{thm:uniform-price-bne}
In the uniform-price auction, suppose that any $m$ bidders can jointly deviate from truthful bidding, then no bidder can obtain $\epsilon$ more utility (we call this \emph{$(m, \epsilon)$-group BIC}), where,
\begin{itemize}
\item for any $P$ and any bounded $F$, 
$\epsilon=D \Delta^{\interim}_{N, m}$, and
\item for any MHR distribution $F$, if we fix $P=\erm^c$ with $\frac{m}{N}\le c\le \cupperboundmhr$ and $m=o(\sqrt{N})$, then
$\epsilon=O\left(v^* \Delta^{\interim}_{N, m}\right)+O\left(\frac{v^*}{\sqrt{N}}\right)$,
where $v^*=\argmax_v\{v[1-F(v)]\}$.
\end{itemize}

The constants in big $O$'s are independent of $F$ and $c$. 
\end{theorem}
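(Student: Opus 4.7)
The plan is to reduce the maximum interim utility gain of any bidder in a size-$m$ deviating coalition to the incentive-awareness measure $\Delta^{\interim}_{N,m}$, and then handle the two distribution classes separately. Fix a coalition $I$ with $|I|=m$, a member $i\in I$ of value $v_i$, and realizations $v_I,b_I,v_{-I}$; write $p_o=P(v_I,v_{-I})$ and $p_n=P(b_I,v_{-I})$. Since bidder $i$'s interim utility $(v_i-p)_+$ is non-increasing and $1$-Lipschitz in $p$,
\[
(v_i-p_n)_+-(v_i-p_o)_+\;\le\;\max\{0,\,p_o-p_n\}.
\]
Taking $\sup_{b_I}$ inside the expectation over $v_{-I}$ and unwinding Definition~\ref{def:discount} (so that $\sup_{b_I}(p_o-p_n)=p_o-\inf_{b_I}P(b_I,v_{-I})=\delta_I(v_I,v_{-I})\,p_o\ge 0$),
\[
\sup_{b_I}\mathbb{E}_{v_{-I}\sim F}\!\bigl[(v_i-p_n)_+-(v_i-p_o)_+\bigr]\;\le\;\mathbb{E}_{v_{-I}}\!\bigl[\delta_I(v_I,v_{-I})\,P(v_I,v_{-I})\bigr].
\]
The rest of the proof controls $\sup_{v_I}\mathbb{E}_{v_{-I}}[\delta_I P]$ in each case.

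For the bounded case, any reasonable $P$ (in particular $\erm^c$, whose output is one of its inputs) satisfies $P\le D$ pointwise; pulling $D$ out and applying $\sup\mathbb{E}\le\mathbb{E}\sup$ to move the supremum past the expectation,
\[
\sup_{v_I}\mathbb{E}_{v_{-I}}[\delta_I P]\;\le\;D\sup_{v_I}\mathbb{E}_{v_{-I}}[\delta_I]\;\le\;D\,\mathbb{E}_{v_{-I}}\!\bigl[\sup_{v_I}\delta_I(v_I,v_{-I})\bigr]\;=\;D\,\Delta^{\interim}_{N,m},
\]
which is the first bullet. For the MHR case $P$ is unbounded, so I split on whether $\erm^c$ concentrates near $v^*$: for a constant $C$,
\[
\sup_{v_I}\mathbb{E}_{v_{-I}}[\delta_I P]\;\le\;Cv^*\Delta^{\interim}_{N,m}+\sup_{v_I}\mathbb{E}_{v_{-I}}\!\bigl[P\,\mathbb{I}\{P>Cv^*\}\bigr],
\]
using $\delta_I\le 1$ on the tail event. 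To finish I would import the concentration estimate already needed for Theorem~\ref{thm:connection_TPM_PRM_combined}: under the guard $\tfrac{m}{N}\le c\le\tfrac{1}{4e}$ with $m=o(\sqrt N)$, the coalition's $m$ adversarial samples cannot push $\erm^c$ past the $\lceil cN\rceil$-th order statistic of $v_{-I}$, so concentration of the empirical revenue curve combined with the exponential tail of an MHR distribution forces $\sup_{v_I}\mathbb{E}_{v_{-I}}[P\,\mathbb{I}\{P>Cv^*\}]=O(v^*/\sqrt N)$.

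The main obstacle is that last MHR tail estimate: a uniform-in-$v_I$ bound of the form $\sup_{v_I}\mathbb{E}_{v_{-I}}[\erm^c\,\mathbb{I}\{\erm^c>Cv^*\}]=O(v^*/\sqrt N)$ that is robust to adversarial insertion of $m$ of the $N$ samples. Everything else --- the pointwise utility analysis, the $b_I$-supremum via Definition~\ref{def:discount}, and the $\sup_{v_I}\mathbb{E}\le\mathbb{E}\sup_{v_I}$ swap --- is routine, and indeed mirrors the structure of the corresponding step in the proof of Theorem~\ref{thm:connection_TPM_PRM_combined} with the role of $m_2$ replaced by $1$ (since each bidder here faces the learned price exactly once).
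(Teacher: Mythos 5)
Your proposal is correct and follows essentially the same route as the paper: the bounded case bounds the price reduction by $P\cdot\delta_I \le D\,\delta_m^{\interim}(v_{-I})$ and takes expectations, and the MHR case splits on whether the learned price concentrates near $v^*$ and reuses the tail/concentration machinery from Theorem~\ref{thm:connection_TPM_PRM_combined} (which is exactly what the paper does, omitting the MHR details with a pointer to that proof). The only cosmetic difference is that you perform the $\sup_{v_I}\mathbb{E}\le\mathbb{E}\sup_{v_I}$ swap explicitly rather than passing directly from $\delta_I$ to $\delta_m^{\interim}$ inside the expectation; the two are equivalent.
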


\begin{proof} [Proof of \cref{thm:uniform-price-bne} for bounded distributions]
Denote a coalition of $m$ bidders by an index set $I\subseteq\{1, \ldots, N\}$, and the true values of all bidders by $(v_I, v_{-I})$. When other bidders bid $v_{-I}$ truthfully, and the coalition bids $b_I$ instead of $v_I$, the reduction of price is at most
$$P(v_I, v_{-I}) - P(b_I, v_{-I}) \le P(v_I, v_{-I})\delta_I(v_I, v_{-I})\le P(v_I, v_{-I})\delta_{m}^{\interim}(v_{-I}) \le D\delta_{m}^{\interim}(v_{-I}),$$ by Definition \ref{def:discount} and by the fact that all values are upper-bounded by $D$. 
Then for each bidder $i\in I$, the increase of her utility by such a joint deviation is no larger than the reduction of price, i.e.
\begin{equation}
\begin{aligned}
\E_{v_{-I}} \left[ u_i(v_I, P(b_I, v_{-I}))-u_i(v_I, P(v_I, v_{-I}))\right] & \le \E_{v_{-I}} \left[ P(v_I, v_{-I}) - P(b_I, v_{-I})\right] \\
& \le D\E_{v_{-I}} \left[ \delta_{m}^{\interim}(v_{-I})\right] = D\Delta_{N, m}^{\interim}. \nonumber
\end{aligned}
\end{equation}
\end{proof}
\noindent
The proof of this theorem for MHR distributions is similar to the proof of Theorem~\ref{thm:connection_TPM_PRM_combined}, thus omitted. 

Combining with Theorem~\ref{thm:discount_combined}, we conclude that the uniform-price auction with $P=\erm^c$ (for the $c$'s mentioned there) is $(m, \epsilon)$-group BIC with $\epsilon$ converging to zero at a rate not slower than $O(m^{2/3} \frac{\log^2 N }{N^{1/3}})$ for bounded distributions and $O(m \frac{\log^3 N }{\sqrt{N}})$ for MHR distributions (constants in these big $O$'s depend on distributions).

Theorem~\ref{thm:uniform-price-bne} also generalizes the result in \cite{lavi2019redesigning} which is only for bounded distributions and $m=1$.


\section{More Discussions on Incentive-awareness Measures}\label{sec:discussion}

\subsection{Overview of the Proof for Upper Bounds on $\Delta^{\interim}_{N, m}$ }\label{sec:overview_proof_delta}

Here we provide an overview of the proof of Theorem~\ref{thm:discount_combined}.  Details are in Appendix \ref{sec:discount_ratio_upper-bound}.

Firstly, we show an important property of $\erm^c$: suppose $c\ge \frac{m}{N}$, for any $m$ values $v_I$, any $N-m$ values $v_{-I}$, and any $m$ values $\overline{v}_I$ that are greater than or equal to the maximum value in $v_{-I}$, we have $\erm^c(\overline{v}_I, v_{-I}) \ge \erm^c(v_I, v_{-I})$.
As a consequence, $\delta^{\interim}_m(v_{-I}) = \delta_I(\overline{v}_I, v_{-I})$. 

Based on this property, we transfer the expectation in the incentive-awareness measure in the following way:
\begin{align}
& \Delta^{\interim}_{N, m} ={} \E[\delta^{\interim}_m(v_{-I})] 
={} \E[\delta_I(\overline{v}_I, v_{-I})]
={} \int_0^1 \Pr[\delta_I(\overline{v}_I, v_{-I})>\eta]\dd \eta \nonumber \\
& \le{}\int_0^1\left(\Pr[\delta_I(\overline{v}_I, v_{-I})>\eta \mid \overline{\event} ]\Pr[\overline{\event}] + \Pr[\event]\right) \dd \eta\nonumber 
={}\int_0^1\Pr[\delta_I(\overline{v}_I, v_{-I})>\eta\, \land \overline{\event} ]\dd \eta + \Pr[\event], \nonumber
\end{align}
where $\event$ denotes the event that the index $k^*=\argmax_{i> cN}\{iv_{i}\}$ (which is the index selected by $\erm^c$) satisfies $k^*\le dN$, $\overline{\event}$ denotes the complement of $\event$, and the probability in $\Pr[\event]$ is taken over the random draw of $N-m$ i.i.d.~samples from $F$, with other $m$ samples fixed to be the upper bound (can be $+\infty$) of the distribution.
For any value distribution, we prove that the first part 
\[\int_0^1\Pr[\delta_I(\overline{v}_I, v_{-I})>\eta\, \land \overline{\event} ]\dd \eta \le O \left(\sqrt[3]{\frac{m^2} {d^8}}\frac{\log^2 N}{\sqrt[3]{N}}\right), \]
with some constructions of auxiliary events and involved probabilistic argument.
And we further tighten this bound to 
$O \left(\frac{m}{d^{7/2}}\frac{\log^3 N}{\sqrt{N}}\right)$
for MHR distribution by leveraging its properties.

The final part of the proof is to bound $\Pr[\event]$.
For bounded distribution, we choose $d=1/D$.
Since the support of the distribution is bounded by $[1, D]$, $N \cdot v_N \geq N$, while for any $k\le dN$, $kv_{k} \le (\frac{1}{D}N)D = N \le N v_{N}$.
$\erm^c$ therefore never chooses an index $k\le dN$ (recall that in case of a tie, $\erm^c$ picks the larger index). This implies $\Pr[\event] = \Pr[k^*\le dN] = 0$ for bounded distribution.
For MHR distribution, we choose $d=\frac{1}{2e}$ and show $\Pr[\event]=O\left(\frac{1}{N}\right)$. As the corresponding proof is quite complicated, we omit it here.

\subsection{Lower Bounds on $\Delta^{\interim}_{N, m}$ and on the Approximate BIC Parameter, $\epsilon_1$}
Theorem~\ref{thm:discount_combined} gives an upper bound on $\Delta^{\interim}_{N, m}$ for bounded and MHR distributions and for a specific range of $c$'s.  Here we briefly discuss the lower bound, with details given in 
Appendix \ref{sec:appendix-lower-bound}.    

\citet{lavi2019redesigning} show that for the two-point distribution $v=1$ and $v=2$, each w.p.~0.5,
$\Delta^{\interim}_{N, 1}=\Omega(N^{-1/2})$, when $c=1/N$.  We adopt their analysis to provide a similar lower bound for $[1, D]$-bounded distributions and the corresponding range of $c$'s. Let $F$ be a two-point distribution where for $X\sim F$, $\Pr[X=1]=1-1/D$ and $\Pr[X=D]=1/D$. 
\begin{theorem}
For the above $F$, for any $c\in[\frac{m}{N}, \frac{1}{2D}]$, $\erm^c$ gives $\Delta^{\interim}_{N, m} = \Omega(\frac{1}{\sqrt{N}})$ where the constant in $\Omega$ depends on $D$.
\end{theorem}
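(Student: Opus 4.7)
The plan is to adapt the two-point construction of Lavi et al.\ from $m = 1$ to arbitrary coalition size $m$. By the monotonicity of $\erm^c$ in the coalition's bids (the first property stated in Section~\ref{sec:overview_proof_delta}, valid since $c \ge m/N$), the outer supremum in $\delta_m^{\interim}(v_{-I})$ is attained at $v_I = (D,\dots,D)$, so it suffices to lower-bound $\E_{v_{-I}}[\delta_I((D,\dots,D), v_{-I})]$. For the infimum over deviations I would take $b_I = (1,\dots,1)$, i.e.\ the coalition pretends every bid lies at the low support point. It then suffices to identify an event of probability $\Omega(1/\sqrt{N})$ on which the truthful $\erm^c$ price equals $D$ while the deviation price collapses to $1$; on such an event $\delta_I = 1 - 1/D$, a positive constant depending only on $D$.

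To make this precise, let $k' = |\{j \notin I : v_j = D\}| \sim \mathrm{Binomial}(N - m,\,1/D)$. Under $v_I = (D,\dots,D)$ the sorted sample consists of $k' + m$ copies of $D$ followed by $N - m - k'$ copies of $1$, and in $\argmax_{i > cN} i \cdot v_i$ the only candidates attaining the maximum are $i = k' + m$ (value $(k' + m) D$) and $i = N$ (value $N$); by the larger-index tie-breaking rule the truthful price is $D$ iff $(k' + m) D > N$, equivalently $k' > N/D - m$. Under $b_I = (1,\dots,1)$ the sorted sample has $k'$ copies of $D$ and $N - k'$ copies of $1$, and analogously the deviation price is $D$ iff $k' D > N$ and equals $1$ otherwise. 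The constraint $c \le 1/(2D)$ ensures $cN \le N/(2D) < N/D$, so on the event of interest both $k'$ and $k' + m$ exceed $cN$ and no further restriction from $\erm^c$'s constraint $i > cN$ is triggered.

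The ``bad event'' on which truthful price is $D$ and deviation price is $1$ is therefore $k' \in (N/D - m,\,N/D]$, an integer interval of length $m \ge 1$ that always contains $\lfloor N/D \rfloor$. Since $k'$ is binomial with mean $(N - m)/D$ and variance $\Theta(N)$ (constants depending on $D$), and $\lfloor N/D \rfloor$ lies within $O(1)$ of that mean, the local central limit theorem (or Stirling applied directly to $\binom{N-m}{k}(1/D)^k (1 - 1/D)^{N - m - k}$) yields $\Pr[k' = \lfloor N/D \rfloor] = \Theta(1/\sqrt{N})$. Combining,
\begin{equation*}
\Delta^{\interim}_{N, m} \;\ge\; \bigl(1 - 1/D\bigr) \cdot \Pr\!\left[k' \in (N/D - m,\,N/D]\right] \;=\; \Omega\!\left(1/\sqrt{N}\right),
\end{equation*}
with the hidden constant depending on $D$. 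The only real obstacle is bookkeeping: tracking strict vs.\ weak inequalities under larger-index tie-breaking, and verifying that $\lfloor N/D \rfloor$ stays within $O(1)$ of the binomial mean so its mass is genuinely $\Theta(1/\sqrt{N})$ rather than exponentially smaller; both are immediate from the definitions.
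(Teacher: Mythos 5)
Your construction follows the paper's intended route: adapt the Lavi et al.\ two-point argument (which the paper cites for $m=1$, $D=2$) by identifying a binomial ``near the boundary'' event of mass $\Theta(1/\sqrt{N})$, and this is the same underlying combinatorial picture. Two small organizational differences are worth noting. First, the paper separately proves Proposition~\ref{prop:choice_of_c} (reducing any $c\in[\frac{m}{N},\frac{1}{2D}]$ to $c=\frac{m}{N}$) and only then invokes the two-point bound; you instead verify inline that on the bad event both candidate indices exceed $cN$, which is a perfectly good alternative and avoids the auxiliary proposition. Second, for the related $\epsilon_1$ lower bound the paper deviates to $D-\epsilon$ rather than to $1$; for the incentive-awareness measure alone your $b_I=(1,\dots,1)$ is the cleaner choice and is correct.

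There is one small but genuine slip, which you yourself flag as ``immediate'' but isn't quite: the claim that $\lfloor N/D\rfloor$ is within $O(1)$ of the binomial mean $(N-m)/D$. The distance is $|\lfloor N/D\rfloor - (N-m)/D| = m/D \pm O(1)$, so this is $O(1)$ only when $m=O(1)$. The theorem's hypotheses allow $m$ as large as $N/(2D)$, in which case $\lfloor N/D\rfloor$ is $\Theta(N)$ away from the mean and $\Pr[k'=\lfloor N/D\rfloor]$ is exponentially small. The fix is short: observe that the mean $(N-m)/D$ itself lies in the open-closed interval $(N/D - m,\,N/D]$ (since $0\le m/D<m$ for $D>1$), so the integer in that interval nearest the mean is within distance $1$ of it; apply the local CLT / Stirling estimate at that integer instead of at $\lfloor N/D\rfloor$. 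Equivalently, note that enlarging $m$ only widens the interval while keeping the mean inside it, so $\Pr\bigl[k'\in(N/D-m,N/D]\bigr]$ is never smaller than in the $m=1$ case. With that one-line patch the argument is complete and the constant in the $\Omega$ depends only on $D$, as the theorem asserts.
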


Note that $\Delta^{\interim}_{N, m}$ only upper bounds the $\epsilon_1$-BIC parameter $\epsilon_1$ in the two-phase model:
a lower bound on $\Delta^{\interim}_{N, m}$ does not immediately implies a lower bound on $\epsilon_1$.  Still, a direct argument will show that the above distribution $F$ gives the same lower bound on $\epsilon_1$.  For simplicity let $K_1=K_2=2$ and suppose bidder $i$ participates in $m_1$ and $m_2$ auctions in the two phases, respectively. Let $N=T_1K_1$ and assume $m_1=o(\sqrt{N})$. Suppose the first-phase mechanism $\mathcal{M}$ is the second price auction with no reserve price.
Then in the two-phase model with $\erm^c$, $\epsilon_1$ must be $\Omega( \frac{m_2}{\sqrt{N}})$ to guarantee $\epsilon_1$-BIC.

It remains open to prove a lower bound for MHR distributions, and to close the gap between our $O(N^{-1/3} \log^2 N)$ upper bound and the $\Omega(N^{-1/2})$ lower bound for bounded distributions.

\subsection{Unbounded Regular Distributions}
\cref{thm:TPM_utility_revenue_combined} shows that, in the two-phase model, approximate incentive-compatibility and revenue optimality can be obtained simultaneously for bounded (regular) distributions and for MHR distributions.
%
%
A natural question would then be: what is the largest class of value distribution we can consider? Note that for non-regular distributions, \citet{myerson1981optimal} shows that revenue optimality cannot be guaranteed by anonymous reserve price, so ERM is not a correct choice. Thus we generalize our results to the class of regular distributions that are unbounded and not MHR. Here we provide a sketch, with details given in 
Appendix \ref{sec:appendix-unbounded-regular-distributions}.

Our results can be generalized to $\alpha$-strongly regular distributions with $\alpha > 0$. As defined in \cite{cole2014the}, a distribution $F$ with positive density function $f$ on its support $[A, B]$ where $0\le A\le B\le+\infty$ is \emph{$\alpha$-strongly regular} if the virtual value function $\phi(x) = x-\frac{1-F(x)}{f(x)}$ satisfies $\phi(y)-\phi(x)\ge \alpha (y-x)$
whenever $y>x$ (or $\phi'(x) \ge \alpha$ if $\phi(x)$ is differentiable). As special cases, regular and MHR distributions are $0$-strongly and $1$-strongly regular distributions, respectively.
For any $\alpha>0$, we obtain bounds similar to MHR distributions on $\Delta^{\interim}_{N, m}$ and on approximate incentive-compatibility in the two-phase model and the uniform-price auction.
Specifically, if $F$ is $\alpha$-strongly regular then $\Delta^{\interim}_{N,m}=O\left(m\frac{\log^3 N}{\sqrt{N}}\right)$, if $m=o(\sqrt{N})$ and $\frac{m}{N} \le \left(\frac{\log N}{N}\right)^{1/3}\le c \le \frac{\alpha^{1/(1-\alpha)}}{4}$.  
It remains an open problem for future research whether $\erm^c$ is incentive-compatible in the large for regular but not $\alpha$-strongly regular distributions for any $\alpha>0$.
For these distributions the choice of $c$ must be more sophisticated since it creates a clash between approximate incentive-compatibility and approximate revenue optimality. 
Intuitively, a large $c$ (for example, a constant) will hurt revenue optimality and a too small $c$ will hurt incentive-compatibility. 
In Appendix \ref{sec:appendix-unbounded-regular-distributions},
we provide examples and proofs to formally illustrate such a fact, and further discuss our conjecture that some intermediate $c$ can maintain the balance between incentive-compatibility and revenue optimality. 

\newpage

\section*{Broader Impact}




This work is mainly theoretical.  It provides some intuitions and guidelines for potential practices, but does not have immediate societal consequences.  A possible positive consequence is: the auction we consider uses an anonymous reserve price, while most of the related works on repeated auctions use unfair personalized prices.  We do not see negative consequences. 





\small
\bibliographystyle{plainnat}
\bibliography{bibliography}


  \newpage
  \appendix


\section{Useful Facts}
In this section we present some facts about $\erm^c$ and incentive-awareness measures, some definitions about value distributions, and some useful lemmas that will be used throughout. 

\subsection{Facts about $\erm^c$ and Incentive-Awareness Measures}
\begin{claim}\label{claim:interim_to_delta_price}
Let $P=\erm^c$, where $c\ge \frac{m}{N}$.  For any $v_I\in \mathbb R_+^m$,  $v_{-I}\in\mathbb R_+^{N-m}$, let $\overline{v}_I$ denote $m$ values such that
$\min \overline{v}_I  \ge  \max v_{-I}$. Then 
we have $P(\overline{v}_I, v_{-I}) \ge P(v_I, v_{-I})$.
\end{claim}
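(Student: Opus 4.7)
The plan is to exploit the constraint $c \ge m/N$, which forces the argmax index $i^* > cN$ of $\erm^c$ to satisfy $i^* \ge m+1$ on both inputs. First I would set up notation: sort $(v_I, v_{-I})$ in non-increasing order as $u_1 \ge \cdots \ge u_N$ and sort $(\overline{v}_I, v_{-I})$ similarly as $w_1 \ge \cdots \ge w_N$. Since every coordinate of $\overline{v}_I$ is at least $\max v_{-I}$, the top $m$ positions of $w$ are occupied by the elements of $\overline{v}_I$, while the remaining positions $m+1,\ldots,N$ coincide with the sorted $v_{-I}$, which I denote $y_1 \ge \cdots \ge y_{N-m}$. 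Consequently $\erm^c(\overline{v}_I, v_{-I}) = y_{j_2^*}$, where $j_2^* = i_2^* - m$ maximizes $(m+j)\,y_j$ over the admissible range of $j$.

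Next, let $v^* = \erm^c(v_I, v_{-I}) = u_{i_1^*}$. Since $i_1^* \ge m+1$, at least one of the top $i_1^*$ entries of $u$ comes from $v_{-I}$, which forces $v^* \le \max v_{-I} = y_1$. Let $s$ be the largest index with $y_s \ge v^*$, so that $y_s \ge v^* > y_{s+1}$ (I first handle the generic case of distinct values; ties add only bookkeeping). The goal then reduces to showing $j_2^* \le s$, since this immediately yields $y_{j_2^*} \ge y_s \ge v^*$.

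The heart of the argument is to show that for every $j > s$ the empirical revenue $(m+j)\,y_j$ on $w$ is strictly smaller than $(m+s)\,y_s$. Let $k$ denote the number of coordinates of $v_I$ that are $\ge v^*$ (so $i_1^* = k + s$), and let $k_j'$ denote the number of coordinates of $v_I$ that are $\ge y_j$; note $k_j' \ge k$ because $y_j < v^*$. Since $y_j$ sits at position $j + k_j'$ in $u$, optimality of $i_1^*$ for $\erm^c$ on $u$ gives $(j + k_j')\,y_j \le (k+s)\,v^*$, and this is strict whenever $j + k_j' > i_1^*$ by the tie-breaking rule (which selects the largest optimal index). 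A short algebraic identity, namely $(m+s)(j+k_j') - (m+j)(k+s) = (m-k)(j-s) + (k+s)(k_j'-k) \ge 0$, shows $(m+j)(k+s) \le (m+s)(j+k_j')$, with equality only when $k = m$ and $k_j' = k$. Combining the two bounds yields $(m+j)\,y_j \le (m+s)\,v^* \le (m+s)\,y_s$, and a case split on the equality condition forces at least one of the two inequalities to be strict: in the corner case $k = m$ the first inequality becomes strict (otherwise $m+j > i_1^*$ would be another argmax, contradicting the tie-breaking choice of $i_1^*$), while in all other cases the second inequality is strict.

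The main obstacle will be handling strictness under the $\erm^c$ tie-breaking rule when several sample values coincide, since the convention ``the smaller sample (larger index) is chosen in case of ties'' is precisely what prevents the argmax on $w$ from landing below position $m+s$ in degenerate configurations. I would address this either by a careful case analysis tracking which coordinates create ties in $u$ and $w$, or, more slickly, by perturbing the inputs infinitesimally to break all ties, proving the strict inequality in the generic case, and then taking a limit, using the piecewise-constant structure of $\erm^c$ on the tie surfaces.
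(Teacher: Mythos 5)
Your proof is correct in substance but follows a genuinely different route from the paper's. The paper first reduces to the case where $\overline{v}_I$ is $m$ copies of $\overline{v}=\max v_{-I}$, and then raises the entries of $v_I$ one coordinate at a time: for a single coordinate $v_i<\overline{v}$ it shows $P(\overline{v},v_{-i})\ge P(v_i,v_{-i})$ via the intermediate trick of first moving $v_i$ to $v'=P(v_i,v_{-i})$ (which leaves the output unchanged) and then from $v'$ up to $\overline{v}$ (which cannot decrease the output). You instead compare the two final sorted lists directly and dominate the competing terms $(m+j)y_j$ via the algebraic inequality $(m+j)(k+s)\le(m+s)(j+k_j')$. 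Your route has the advantage that it never reasons about intermediate configurations where $\erm^c$ is genuinely non-monotone in a single coordinate (e.g.\ $\erm^{1/3}(1,4,4)=4$ but $\erm^{1/3}(3,4,4)=3$); the paper's step works only because it jumps all the way to $\overline{v}$.

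A few small slips to fix. (i) The identity should read $(m+s)(j+k_j')-(m+j)(k+s)=(m-k)(j-s)+(m+s)(k_j'-k)$, not $(k+s)(k_j'-k)$; both expressions are $\ge 0$, so the conclusion survives, but the algebra as written is wrong. (ii) Your strictness case split misattributes which inequality is strict: the second inequality $v^*\le y_s$ need not be strict when $k<m$ (take $v_I=(10,1)$, $v_{-I}=(3,3,0.5)$, $c=2/5$: here $k=1<m=2$ yet $y_s=v^*=3$). What is true is that the \emph{first} chain inequality is always strict, independent of $k$: since $j+k_j'\ge j+k>s+k=i_1^*$, a tie $(j+k_j')y_j=(k+s)v^*$ would force $\erm^c$ to select the larger index $j+k_j'$ rather than $i_1^*$, a contradiction; no case split is needed. (iii) You flag ties as the main obstacle and propose a perturbation-and-limit workaround, but perturbing is delicate for a discontinuous, tie-broken function like $\erm^c$ and is in fact unnecessary here: replace ``the position of $y_j$ in $u$ is $j+k_j'$'' by $p_j:=\#\{\text{entries of }(v_I,v_{-I})\ge y_j\}\ge j+k_j'$, note $p_j>i_1^*$, and use the strict $p_jy_j<(k+s)v^*$, whose strictness is precisely the tie-break rule; everything else goes through verbatim. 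Finally, $i_1^*=k+s$ and the strict inequalities implicitly require $v^*>0$; the degenerate case $v^*=0$ is trivial since the target inequality reduces to $P(\overline{v}_I,v_{-I})\ge 0$.
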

\begin{proof}
\label{sec:proof_claim:interim_to_delta}
Let $\overline{v}\coloneqq\max v_{-I}$ be the largest value in $v_{-I}$ and $\overline{v}_I$ be $m$ copies of $\overline{v}$. It suffices to show that $P(\overline{v}_I, v_{-I}) \ge P(v_I, v_{-I})$ since $P=\erm^c$ ignores the largest $m$ samples, given $cN\ge m$. If $v_i \ge \overline{v}$ for each $i\in I$, then we have $P(\overline{v}_I,v_{-I})=P(v_{I},v_{-I})$ directly. If there exists some $i\in I$ such that $v_i<\overline{v}$, then we increase $v_i$ to $\overline{v}$ and show that for any such $i$ and $(v_i,v_{-i})$, $P(\overline{v}, v_{-i}) \ge P(v_i, v_{-i})$. Let $v'= P(v_i, v_{-i})$, then one can verify (assuming $\erm^c$ picks the smaller value when there are ties) that (1) $P(v', v_{-i}) = P(v_i, v_{-i})$, and (2) $P(v', v_{-i}) \le P(\overline{v}, v_{-i})$, implying $P(\overline{v}, v_{-i}) \ge P(v_i, v_{-i})$.
\end{proof}

\begin{claim}\label{claim:interim_to_delta}
Let $P=\erm^c$, where $c\ge \frac{m}{N}$.  For any $v_I\in \mathbb R_+^m$,  $v_{-I}\in\mathbb R_+^{N-m}$, let $\overline{v}_I$ be any $m$ values that are greater than or equal to the maximal value in $v_{-I}$.  Then $\delta^{\interim}_m(v_{-I}) = \delta_I(\overline{v}_I, v_{-I})$.
\end{claim}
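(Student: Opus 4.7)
The plan is to observe that in the ratio defining $\delta_I(v_I, v_{-I})$, the numerator $\inf_{b_I \in \mathbb{R}_+^m} P(b_I, v_{-I})$ depends only on $v_{-I}$ and not on $v_I$ at all, since $b_I$ is a free variable ranging over all of $\mathbb{R}_+^m$. Writing $C(v_{-I}) \defeq \inf_{b_I \in \mathbb{R}_+^m} P(b_I, v_{-I})$, we therefore have
\[ \delta_I(v_I, v_{-I}) = 1 - \frac{C(v_{-I})}{P(v_I, v_{-I})}, \]
which is monotonically increasing in the denominator $P(v_I, v_{-I})$. Hence maximizing $\delta_I(\cdot, v_{-I})$ over $v_I \in \mathbb{R}_+^m$ reduces to maximizing $P(v_I, v_{-I})$ over $v_I$.

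Next I would invoke Claim \ref{claim:interim_to_delta_price} directly: it says that whenever $\overline{v}_I$ consists of $m$ values each at least $\max v_{-I}$, one has $P(\overline{v}_I, v_{-I}) \ge P(v_I, v_{-I})$ for every $v_I \in \mathbb{R}_+^m$. Consequently $\overline{v}_I$ attains the supremum $\sup_{v_I} P(v_I, v_{-I})$, and so
\[ \delta_m^{\interim}(v_{-I}) = \sup_{v_I \in \mathbb{R}_+^m} \left[1 - \frac{C(v_{-I})}{P(v_I, v_{-I})}\right] = 1 - \frac{C(v_{-I})}{P(\overline{v}_I, v_{-I})} = \delta_I(\overline{v}_I, v_{-I}), \]
which is exactly the claim.

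The only minor subtlety will be ruling out a degenerate case where $P(v_I, v_{-I}) = 0$, which would make the ratio ill-defined. Since $\erm^c$ always selects some sample index $i^* > cN$ and outputs $v_{i^*}$, the output is zero only if that sample is itself zero; in that case $P$ is identically zero on all inputs (all remaining samples, including any $b_I$, would have to be considered), and the claim holds trivially with both sides equal to $0$. Otherwise we can safely divide. I do not expect any serious obstacle here: the content of the claim is essentially the observation that only the denominator of $\delta_I$ depends on $v_I$, so Claim \ref{claim:interim_to_delta_price} — which identifies the maximizer of that denominator — immediately identifies the maximizer of $\delta_I$ as well.
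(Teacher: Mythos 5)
Your argument is correct and essentially matches the paper's: both observe that the numerator $\inf_{b_I} P(b_I, v_{-I})$ is independent of $v_I$, so maximizing $\delta_I$ reduces to maximizing the denominator $P(v_I, v_{-I})$, which Claim~\ref{claim:interim_to_delta_price} identifies as achieved by any $\overline{v}_I \ge \max v_{-I}$. (The paper phrases this via a limit $v_I \to +\infty$ plus the observation that $\erm^c$ ignores the top $cN \ge m$ samples, whereas you directly note $\overline{v}_I$ attains the supremum — a slightly cleaner phrasing of the same idea.)
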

\begin{proof}
Recall the definition
\[
\delta^{\worst}_m(v_{-I})=
\sup_{v_I\in \mathbb R_+^m}{\delta_I(v_I, v_{-I})} 
=\sup_{v_I \in \mathbb R_+^m}\left\{ 1 - \frac{ \inf_{b_I\in\mathbb R_+^m} \erm^c(b_I, v_{-I})}
{\erm^c(v_I, v_{-I})} \right\}.\]
\cref{claim:interim_to_delta_price} immediately implies $\delta^{\interim}_m(v_{-I}) = \lim_{v_I \rightarrow +\infty} \delta_I(v_I, v_{-I})$.  
Moreover, since $\erm^c$ ignores the highest $cN\ge m$ values, we have $\erm^c(\overline{v}_I, v_{-I}) = \erm^c(\overline{v}'_I, v_{-I})$ as long as both $\overline{v}_I$ and $\overline{v}'_I$ are greater than or equal to $\max v_{-I}$, no matter what they are exactly.  Thus $\delta^{\interim}_m(v_{-I}) = \delta_I(\overline{v}_I, v_{-I}) = \delta_I(\overline{v}'_I, v_{-I})$. 
\end{proof}
Therefore, we will use $\overline{v}_I$ to denote any $m$ values that are greater than or equal to $\max v_{-I}$, for example, $m$ copies of $\max v_{-I}$ or $m$ copies of~``$+\infty$''.  We always have $\delta^{\interim}_m(v_{-I}) = \delta_I(\overline{v}_I, v_{-I})$. 

\subsection{Quantiles and Revenue Curves of Value Distributions}\label{sec:distribution-properties}
For a distribution $F(v)$, define the \emph{quantile} $q(v)=1-F(v)$ as a mapping from value space to quantile space. Inversely, $v(q)=q^{-1}(v)=F^{-1}(1-q)$ is the mapping from quantile space to value space (i.e., w.p.~$q$ a buyer's value will be at least $v(q)$).
Define the \emph{revenue curve} $R(q)=qv(q)$ as the expected revenue for the seller by posting price $v(q)$. Let $R^*=\max_q\{R(q)\}$ denote the optimal revenue the seller can obtain with one bidder, and $q^*=\argmax_q R(q), v^*=v(q^*)$. When there are several i.i.d.~bidders with a regular value distribution, $v^*$ is the optimal reserve price in a second price auction, and such an auction is revenue optimal \cite{myerson1981optimal}.
Any bounded distribution satisfies $q^*\ge\frac{1}{D}$ because for any $q<\frac{1}{D}$, $qv(q)\le qD < 1\le 1v(1)\le R^*$.  Any MHR distribution has a unique $q^*$ and $q^* \ge \frac{1}{e}$ \cite{hartline2008optimal}.

\subsection{Concentration Inequality}
For a distribution $F$, draw $N$ samples and sort them non-increasingly, $v_1\ge v_2\ge \cdots \ge v_N$.  Let $q_j=q(v_j)$ denote their quantiles.  The ratio $j/N$ is the \emph{empirical quantile} of value $v_j$ since $j/N$ is the quantile of $v_j$ in the uniform distribution over $\{v_1, \ldots, v_N\}$. 
The following concentration inequality shows that for each value $v_j$, its empirical quantile $j/N$ is close to its true quantile $q_j$ with high probability, when $m$ samples are fixed to be $+\infty$ while other $N-m$ samples are i.i.d.~drawn from $F$.  

\begin{lemma} \label{lem:concentration}
Draw $N-m$ i.i.d.~samples from a distribution $F$, and fix $m$ samples to be $+\infty$. Sort these samples non-increasingly: $+\infty=v_1=\cdots =v_m > v_{m+1} \ge \cdots \ge v_N$.   With probability at least $1-\delta$ over the random draw of samples, we have for any $j>m$, 
\[ \left | q_j - \frac{j}{N} \right | \le \sqrt{ \frac{2\ln(2(N-m)\delta^{-1})}{N-m} } + \frac{\ln(2(N-m)\delta^{-1})}{N-m} + \frac{m}{N}. \]
\end{lemma}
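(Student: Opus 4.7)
The plan is to reduce the statement to a standard uniform concentration bound for the empirical quantiles of the $N-m$ random samples, and then absorb the deterministic $+\infty$'s as an additive $m/N$ correction.

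First I would exploit the fact that for $j>m$ the $j$-th order statistic of the full sorted sequence is exactly the $(j-m)$-th order statistic of the $N-m$ i.i.d.~draws sorted non-increasingly; writing $\tilde v_1\ge\cdots\ge \tilde v_{N-m}$ for those and $\tilde q_k = q(\tilde v_k)$, this gives $q_j=\tilde q_{j-m}$. Setting $k=j-m$ and applying the triangle inequality,
$$\left| q_j - \frac{j}{N}\right| \le \left|\tilde q_k - \frac{k}{N-m}\right| + \left|\frac{k}{N-m} - \frac{k+m}{N}\right|.$$
A short algebraic computation shows the second term equals $m(N-m-k)/[N(N-m)]$, which is uniformly bounded by $m/N$ for every $k\in\{1,\dots,N-m\}$. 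This accounts for the additive $m/N$ term in the statement and reduces the lemma to a uniform concentration bound on the empirical quantiles of the $N-m$ genuinely i.i.d.~samples.

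Next I would bound $\sup_k|\tilde q_k - k/(N-m)|$ by a Bernstein-plus-union-bound argument, which naturally produces a bound of the displayed shape $\sqrt{2\ln(\cdot)/n}+\ln(\cdot)/n$. By the probability-integral transform the quantiles $\tilde q_k$ are the order statistics of $N-m$ i.i.d.~$\mathrm{Uniform}[0,1]$ variables, so for any threshold $t\in[0,1]$ the event $\{\tilde q_k\le t\}$ coincides with $\{\mathrm{Bin}(N-m,t)\ge k\}$, and similarly for the lower tail. A two-sided Bernstein tail bound with variance proxy bounded by $1$, applied at the $N-m$ thresholds $t=k/(N-m)$ with per-threshold failure probability $\delta/(N-m)$ and union-bounded over $k$, then yields the required inequality with probability at least $1-\delta$.

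The main obstacle I expect is mostly bookkeeping: choosing the exact Bernstein-type tail inequality so that the constants $\sqrt{2}$ and $1$ (rather than, say, $1/\sqrt{2}$ and $2/3$) appear in the final expression, and keeping the union bound tight — only over the $N-m$ relevant thresholds rather than a continuous range of $t$ — so that the logarithm reads exactly $\ln(2(N-m)/\delta)$. The underlying probabilistic content is entirely classical DKW/Bernstein concentration for the empirical CDF, with the $m$ spiked-to-infinity samples contributing only the deterministic $m/N$ slack via the triangle inequality above.
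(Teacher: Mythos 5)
Your proposal matches the paper's proof essentially step by step: both identify $v_j$ as the $(j-m)$-th largest of the $N-m$ genuine i.i.d.\ draws, split $|q_j - j/N|$ by the triangle inequality into a Bernstein-type concentration term for $|q_j - (j-m)/(N-m)|$ and an algebraic term bounded by $m/N$, and control the former by a uniform Bernstein bound (the paper delegates this to Lemma 5 of Guo et al., which you reconstruct via the probability-integral transform and a union bound over the $N-m$ thresholds). The argument is correct and the approach is the same.
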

\begin{proof}
\label{sec:proof_lem:concentration}
The value $v_j$ ($j>m$) is the $(j-m)$th largest value in $N-m$ i.i.d.~samples from $F$, by using Bernstein inequality (see e.g., Lemma 5 in \citet{guo2019settling}), we know that with probability at least $1-\delta$, $\left | q_j - \frac{j-m}{N-m} \right | \le \sqrt{ \frac{2\ln(2(N-m)\delta^{-1})}{N-m} } + \frac{\ln(2(N-m)\delta^{-1})}{N-m}$. Also note that
$|\frac{j}{N} - \frac{j-m}{N-m}| = \frac{(N-j)}{N-m}\frac{m}{N}< \frac{m}{N}$.  By triangular inequality, 
$\left | q_j - \frac{j}{N} \right | \le \sqrt{ \frac{2\ln(2(N-m)\delta^{-1})}{N-m} } + \frac{\ln(2(N-m)\delta^{-1})}{N-m} + \frac{m}{N}$.
\end{proof}


\section{Main Proof: Upper Bounds on Incentive-Awareness Measures}\label{sec:discount_ratio_upper-bound}

\subsection{Proof of Theorem~\ref{thm:discount_combined}}

Recall the setting of Definition~\ref{def:discount}:
we draw $N$ i.i.d.~values $v_1, \ldots, v_{N}$ from $F$, and we have an additional parameter $m$ which is the number of bids that can be changed in the input of the price learning function. Theorem~\ref{thm:discount_combined} then states an upper bound on $\Delta^{\interim}_{N, m}$ for $P=\erm^c$. For bounded distributions, the theorem follows immediately from the next lemma which is our main technical lemma. Note that this lemma is useful in establishing the bound on $\Delta^{\interim}_{N, m}$ not only for bounded distributions but also for all other distributions. Throughout, we assume that $v_1,\ldots,v_N$ are sorted, so that $v_1 \geq \cdots \geq v_N$.

\begin{lemma}[Main Lemma]\label{lem:main_upper-bound}
Suppose $m=o(\sqrt{N})$. Let $d$ be a constant, $0<d<1$. Suppose $\frac{m}{N}\le c\le\frac{d}{2}$. Let $\event$ be the event that the index $k^*=\argmax_{i> cN}\{iv_{i}\}$ (which is the index selected by $\erm^c$) satisfies $k^*\le dN$. For any non-negative distribution $F$, 
\[\Delta^{\interim}_{N, m} \le O \left(\sqrt[3]{\frac{m^2} {d^8}}\frac{\log^2 N}{\sqrt[3]{N}}\right) + \Pr[\event], \]
where the probability in $\Pr[\event]$ is taken over the random draw of $N-m$ i.i.d.~samples from $F$, with other $m$ samples fixed to be $+\infty$. 

\end{lemma}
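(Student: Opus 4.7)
The plan is to follow the template sketched in Section~\ref{sec:overview_proof_delta}. First, by Claim~\ref{claim:interim_to_delta} I can reduce $\Delta^{\interim}_{N,m}$ to $\E[\delta_I(\overline{v}_I, v_{-I})]$ with $\overline{v}_I$ fixed at $m$ copies of $+\infty$; the hypothesis $c \geq m/N$ ensures $\erm^c$ always chooses an index $k^* = \argmax_{cN < i \leq N}\{i v_i\}$ among the $N-m$ genuine i.i.d.\ samples, so the truthful price is $v_{k^*}$. Writing the expectation as a layer-cake integral and inserting the event $\event = \{k^* \leq dN\}$ yields
$$\Delta^{\interim}_{N,m}\;\leq\;\int_0^1 \Pr\!\bigl[\delta_I(\overline{v}_I,v_{-I}) > \eta\ \wedge\ \overline{\event}\bigr]\,\dd\eta\;+\;\Pr[\event],$$
and it remains to bound the main integral by $O(\sqrt[3]{m^2/d^8}\,\log^2 N/N^{1/3})$.

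Next I would exploit the following deterministic structural observation: for any adversarial $b_I$, inserting $m$ bids can shift the rank of any existing sample by at most $m$ positions, so the merged sorted sequence obeys $v_{j+m} \leq b_j \leq v_j$ for $m < j \leq N-m$. In particular, $v_{k^*}$ reappears at some rank $\tilde k \in [k^*-m,\,k^*]$ in $b$, and since $\tilde k > cN$ under the hypotheses $c \geq m/N$ and $k^* > dN \geq 2cN$, the new $\erm^c$ selection $i^*$ must satisfy $i^*\, b_{i^*} \geq \tilde k\, v_{k^*} \geq (k^*-m)\,v_{k^*}$. Combining this with $b_{i^*} \geq v_{i^*+m}$ shows that whenever the adversary achieves $b_{i^*} < (1-\eta) v_{k^*}$, one is forced to have $i^* \geq (k^*-m)/(1-\eta)$ and hence the purely distributional event
$$v_{k^*+s}\;<\;(1-\eta)\,v_{k^*}\qquad\text{with}\qquad s\;=\;\Theta(\eta k^* + m)$$
must hold. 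Thus $\{\delta_I > \eta\}\cap\overline{\event}$ is contained in the event that the sorted sample decreases by a factor of at least $(1-\eta)$ over a window of $\Theta(\eta dN + m)$ consecutive positions beyond index $dN$.

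The hard part will be bounding this drop probability uniformly in $k^*$ and $\eta$ without any regularity hypothesis on $F$. I would introduce the concentration event $\mathcal{G}$ of Lemma~\ref{lem:concentration}, on which $|q_j - j/N| \leq \tau_N = O(\sqrt{\log N / N} + m/N)$ holds for every $j$, with $\Pr[\mathcal{G}^c] = O(1/N)$. On $\mathcal{G}$, the empirical drop $v_{k^*+s} < (1-\eta) v_{k^*}$ translates into a true-quantile drop $v(q_2) < (1-\eta)\,v(q_1)$ for quantiles $q_1 < q_2$ satisfying $q_2 - q_1 = O(s/N + \tau_N)$ and $q_1 \geq d - \tau_N$. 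Combining this with the empirical optimality of $\erm^c$'s choice, which provides $(k^*+s)\,v_{k^*+s} \leq k^* v_{k^*}$ and pins down how far $k^*$ can lie from the empirical revenue maximizer, I would derive a Bernstein-style tail bound $\Pr[\delta_I > \eta \wedge \overline{\event}] \lesssim m^2 \log^2 N / (\eta^2 d^8 N)$ via a union bound over a logarithmic grid of $\eta$-values and $k^*$-positions. Splitting the layer-cake integral at the threshold $\eta_0 = \Theta\!\bigl((m^2 \log^2 N/(d^8 N))^{1/3}\bigr)$, using the trivial bound $\Pr[\cdot] \leq 1$ below $\eta_0$ and the tail bound above, yields exactly the claimed $N^{-1/3}$ rate. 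The principal obstacle is this uniform drop-probability bound for arbitrary $F$: it is what forces the $N^{-1/3}$ rate rather than the $N^{-1/2}$ rate attainable under regularity assumptions, and the exponents $d^{-8}$ and $m^2$ emerge precisely from optimizing the splitting threshold.
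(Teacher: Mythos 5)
Your overall skeleton (reduction via Claim~\ref{claim:interim_to_delta}, layer-cake integral, concentration event, tail bound, split-and-integrate) is correct and matches the paper, but the proposal has a genuine gap at the heart of the probabilistic argument, and it contains a self-contradiction in the tail-bound exponents.

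\textbf{The drop event alone does not have small probability.} Your rank-shift argument correctly derives that $\{\delta_I > \eta\}\cap\overline{\event}$ forces the existence of $j\ge k^*$ with $v_j < (1-\eta)v_{k^*}$ and $j \gtrsim k^*(1+\eta)$. But this ``drop over a window'' is essentially a \emph{deterministic} property of $F$ once you pass to quantile space: after concentration, it becomes the statement $v(q_{k^*}(1+\Theta(\eta))) < (1-\eta)v(q_{k^*})$, i.e.\ a revenue-curve property. For a generic (non-regular) $F$ this holds with probability close to $0$ or close to $1$, not with probability $O(N^{-1/3})$. What makes the paper's bad event $\Bad(\eta_t,\theta_t)$ genuinely low-probability is the second, \emph{near-maximality}, constraint $jv_j \ge k^*v_{k^*} - \tfrac{m}{\theta_t}v_{k^*}$, which forces $jv_j$ to land in a narrow window of width $O(\tfrac{m}{\theta_t}v_{k^*})$ just below $k^*v_{k^*}$. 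Your sketch derives only $i^* v_{i^*}\geq(k^*-m)v_{k^*}$ with $b_{i^*}\leq v_{i^*}$, but you need the lower bound on $(i^*+m)v_{i^*+m}$ (the value that drops), and that does not follow from the rank shift. The paper gets it through the convex-combination trick in Lemma~\ref{lem:thresholds} (writing $b=(1-\lambda)v_i+\lambda v_{i+1}$ and lower-bounding $\lambda$), which is exactly what the near-maximality constraint is extracting, and it is also what forces the multi-scale union bound over the $\eta_t$-grid rather than a single-scale bound: the width $m/\theta_t$ of the near-maximality window necessarily grows as the drop threshold $\eta_t$ grows.

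\textbf{The claimed tail bound is inconsistent with the final rate.} You claim $\Pr[\delta_I>\eta\wedge\overline\event]\lesssim m^2\log^2 N/(\eta^2 d^8 N)$ and then split at $\eta_0 = (m^2\log^2 N/(d^8 N))^{1/3}$. But if a $\eta^{-2}$-decay tail bound with a $N^{-1}$ prefactor actually held for all $\eta$ above the concentration scale, the \emph{optimal} split is $\eta_0=\sqrt{C'}$, giving $\int \lesssim \sqrt{C'} = m\log N/(d^4\sqrt N) = O(N^{-1/2})$, strictly stronger than the $N^{-1/3}$ rate you are proving and than the lower bound in Appendix~\ref{sec:appendix-lower-bound} (where the upper bound matches $N^{-1/3}$). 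The actual decay in the paper is $G(\eta)\le C/\eta^{3/2}$ with $C=\Theta(m\log^3 N/(d^4\sqrt N))$ (Lemma~\ref{lem:first_term}), which, integrated optimally at $\eta_0=C^{2/3}$, gives exactly $3C^{2/3}=O(m^{2/3}\log^2 N/(d^{8/3}N^{1/3}))$. The $\eta^{-3/2}$ exponent and the $\sqrt N$ in the denominator both come out of the anti-concentration argument in Lemma~\ref{lem:Pr_W}: the number of quantiles falling in the buffer interval $I_{l+1}$ is a binomial with variance $\Theta(hN)$, so the probability that the near-maximality window of width $H$ is hit is $O(H\log^2 N/\sqrt{hN})$, and one then pays an additional $1/h$ factor from the union over quantile intervals. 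Replacing this with a plain ``Bernstein-style'' bound is not going to produce the right structure; the $1/\sqrt N$ is a random-walk anti-concentration effect, not a deviation bound.
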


To see that this lemma immediately implies the theorem for bounded distribution, choose $d=1/D$.
Since the support of the distribution is bounded by $[1, D]$, $N \cdot v_N \geq N$
while for any $k\le dN$, 
\[kv_{k} \le (\frac{1}{D}N)D = N \le N v_{N}. \]
$\erm^c$ therefore never chooses
an index $k\le dN$ (recall that in case of a tie, $\erm^c$ picks the larger index). This implies $\Pr[\event] = \Pr[k^*\le dN] = 0$ and we have the bound in the theorem.

{\em Remark.}
For MHR distributions, Lemma~\ref{lem:mhr_event} shows that
$\Pr[\event]=O\left(\frac{1}{N}\right)$ if we choose $d=\frac{1}{2e}$, so Lemma~\ref{lem:main_upper-bound} already gives an upper bound on $\Delta^{\interim}_{N, m}$. However, we can use some additional properties of MHR distributions to strengthen the bound on the first term in the main lemma to $O \left(\frac{m}{d^{7/2}}\frac{\log^3 N}{\sqrt{N}}\right)$, as explained in Appendix~\ref{sec:improve_MHR}. 

\subsection{Proof of the Main Lemma (\cref{lem:main_upper-bound})}


Let $\overline{v}_I$ be any $m$ values that are greater than the maximal value in $v_{-I}$.
By \cref{claim:interim_to_delta}, $\delta^{\interim}_m(v_{-I}) = \delta_I(\overline{v}_I, v_{-I})$. Thus,

\begin{align}
\Delta^{\interim}_{N, m} ={}& \E[\delta^{\interim}_m(v_{-I})] 
={} \E[\delta_I(\overline{v}_I, v_{-I})]
={} \int_0^1 \Pr[\delta_I(\overline{v}_I, v_{-I})>\eta]\dd \eta \nonumber \\
\le{}&\int_0^1\left(\Pr[\delta_I(\overline{v}_I, v_{-I})>\eta \mid \overline{\event} ]\Pr[\overline{\event}] ~+~ \Pr[\event]\right) \dd \eta\nonumber \\
={}&\int_0^1\Pr[\delta_I(\overline{v}_I, v_{-I})>\eta\, \land \overline{\event} ]\dd \eta ~+~ \Pr[\event], \label{eqn:Delta_interim_integral}
\end{align}
where $\overline{\event}$ denotes the complement of $\event$. Then the main effort is to upper-bound $\Pr[\delta_I(\overline{v}_I, v_{-I})>\eta\,\land\,\overline{\event}]$. After the random draw of $v_{-I}$, we sort all values non-increasingly, denoted by $\overline{v}_1= \cdots=\overline{v}_m\ge v_{m+1}\ge \cdots\ge v_N$, and let $\overline{q}_1=\cdots=\overline{q}_m\le q_{m+1}\le\cdots\le q_N$ be their quantiles, where $q_j=q(v_j)$.  We use a concentration inequality (Lemma~\ref{lem:concentration}) to argue that for each value $v_j$, its empirical quantile $j/N$ should be close to its true quantile $q_j$ with high probability, as follows: 

\begin{claim}\label{claim:concentration}
Define event $\conc$: 
\[ \conc=\left[\forall j>m, \left | q_j - \frac{j}{N} \right | \le 2\sqrt{ \frac{4\ln(2(N-m))}{N-m} } + \frac{m}{N} \right], \]
then $\Pr[\overline{\conc}] \le \frac{1}{N-m}$,
where the probability is over the random draw of the $N-m$ samples $v_{-I}$. 
\end{claim}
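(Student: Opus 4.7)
The plan is to apply Lemma~\ref{lem:concentration} with the specific choice of confidence parameter $\delta = \frac{1}{N-m}$. Under this choice Lemma~\ref{lem:concentration} immediately gives $\Pr[\overline{\conc}] \le \frac{1}{N-m}$, provided that the per-index deviation bound stated there is no larger than the bound appearing in the definition of $\conc$. Thus the entire proof reduces to a bookkeeping verification comparing the two expressions, and no new probabilistic argument is needed.

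First I would substitute $\delta^{-1} = N-m$ into the bound of Lemma~\ref{lem:concentration}, so that $\ln(2(N-m)\delta^{-1}) = \ln(2(N-m)^2) = \ln 2 + 2\ln(N-m) \le 2\ln(2(N-m))$. This lets me bound the square-root term as $\sqrt{\tfrac{2\ln(2(N-m)^2)}{N-m}} \le \sqrt{\tfrac{4\ln(2(N-m))}{N-m}}$.

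Next I would absorb the linear term $\tfrac{\ln(2(N-m)\delta^{-1})}{N-m}$ into the same square-root form. Since $\tfrac{2\ln(2(N-m))}{N-m} \le 1$ for all sufficiently large $N-m$ (for small $N-m$ the claim is vacuous because the right-hand side of the definition of $\conc$ exceeds $1$), the linear term satisfies $\tfrac{\ln(2(N-m)^2)}{N-m} \le \tfrac{2\ln(2(N-m))}{N-m} \le \sqrt{\tfrac{4\ln(2(N-m))}{N-m}}$. Summing the two contributions yields $2\sqrt{\tfrac{4\ln(2(N-m))}{N-m}}$, and the extra $\tfrac{m}{N}$ summand in Lemma~\ref{lem:concentration} is carried over verbatim. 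This matches the bound in the definition of $\conc$, completing the reduction.

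No step here is a real obstacle; the probabilistic content (Bernstein's inequality together with a union bound over the $N-m$ relevant order statistics of the i.i.d.\ sample, with the other $m$ samples pinned at $+\infty$) is fully encapsulated in Lemma~\ref{lem:concentration}. The only care required is the logarithmic-constant arithmetic just sketched, so what looks like a main obstacle is actually just choosing $\delta$ and keeping track of constants.
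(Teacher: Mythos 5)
Your proof is correct and takes the same route as the paper, which simply invokes Lemma~\ref{lem:concentration} with $\delta = \frac{1}{N-m}$ and leaves the constant arithmetic to the reader. The extra bookkeeping you supply (bounding $\ln(2(N-m)^2)\le 2\ln(2(N-m))$ and absorbing the linear term into the square-root term when $\ln(2(N-m))\le N-m$, noting vacuousness otherwise) is exactly what verifies that the substitution yields the bound in the definition of $\conc$.
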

\begin{proof}
Set $\delta=\frac{1}{N-m}$ in \cref{lem:concentration}. 
\end{proof}

Now, define $G(\eta)=\Pr[\delta_I(\overline{v}_I, v_{-I})>\eta\ \land \overline{\event}~\land~\conc]$ for $0 \le \eta \le 1$. We have

\begin{equation}\label{eqn:delta_event_conc}
\Pr[\delta_I(\overline{v}_I, v_{-I})>\eta\,\land \overline{\event}] 
\leq G(\eta) + \frac{1}{N-m}.
\end{equation}


\begin{lemma}\label{lem:first_term}
There exists a constant $C=\Theta\left(\frac{m\log^3 N}{d^4\sqrt{N}}\right)$ such that
$\eta>C^{2/3} \Rightarrow G(\eta) \le \frac{C}{\eta^{3/2}}.
$
\end{lemma}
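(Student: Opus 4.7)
My plan is to translate the event $\delta_I(\overline{v}_I, v_{-I}) > \eta$ into a structural condition on the sorted samples and then bound its probability by a union bound over index configurations. The infimum in $\delta_I$ is attained in the limit $b_I \to \bm{0}$: the sorted bids then become $v_{m+1}, \ldots, v_N$ followed by $m$ vanishing entries, so the new $\erm^c$-index $j^*$ lies in $(cN, N-m]$ and the new price is $v_{m+j^*}$. Writing $k_1 := k^*$ and $k_2 := m + j^*$, the event $\delta_I > \eta$ is equivalent to $v_{k_2} < (1-\eta) v_{k_1}$, while the two $\erm^c$-optimality conditions (before and after the deviation) give $k_1 v_{k_1} \ge k_2 v_{k_2}$ and $(k_2 - m) v_{k_2} \ge (k_1 - m) v_{k_1}$. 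Combining these forces simultaneously (i) a lower bound $k_2 - k_1 \ge \eta(k_1 - m)/(1-\eta)$ on the index gap, which under $\overline{\event}$ (so $k_1 > dN$) and $m = o(\sqrt{N})$ is $\Omega(\eta d N)$, and (ii) a two-sided squeeze $v_{k_2}/v_{k_1} \in [(k_1 - m)/(k_2 - m),\ 1 - \eta]$, a window of relative width $O(m/(k_2 - k_1))$.

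Under $\conc$ the empirical ranks satisfy $|k_i/N - q_{k_i}| \le \epsilon$ with $\epsilon = O(\sqrt{\log N/N} + m/N)$, so the structural condition is equivalent (up to additive error $\epsilon$) to a prescribed pair of quantile locations $q_{k_1}, q_{k_2}$ with quantile gap $\Omega(\eta d)$. I will then union-bound over pairs $(k_1, \ell)$ with $k_1 \in (dN, N]$ and $\ell := k_2 - k_1$ decomposed dyadically over $[\Omega(\eta d N), N]$: for each fixed pair the event requires $k_1$ to be an $\erm^c$-maximizer \emph{and} $v_{k_2}$ to lie in a value window of relative width $O(m/\ell)$ around $(k_1/k_2) v_{k_1}$. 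Combining the exact Beta-distribution law of the $k_1$-th quantile among $N - m$ i.i.d.\ samples with the value-window constraint, and summing the resulting bounds over $k_1$ and dyadically over $\ell$ (together with a further dyadic discretisation of the admissible value ratios), yields the stated $G(\eta) \le C/\eta^{3/2}$ with $C = \Theta(m \log^3 N/(d^4 \sqrt{N}))$: the three logarithmic factors come from $\conc$, from the dyadic decomposition of $\ell$, and from the dyadic discretisation of the value ratio, respectively.

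The principal obstacle I anticipate is performing the joint-probability estimate in a fully distribution-free manner, since $F$ is allowed to be arbitrary, possibly non-smooth and unbounded, with no density or local regularity assumed. I plan to sidestep the absence of an $F$-density by working entirely in quantile space: the value-ratio window pulls back through the quantile transformation $v = v(q)$ into a rank-space window of width $O(m)$, which is independent of the local behaviour of $F$, after which the probability estimate reduces to a Beta-distribution tail computation. Finally, the condition $\eta > C^{2/3}$ is exactly the threshold at which $C/\eta^{3/2} < 1$, and the required integral $\int_{C^{2/3}}^1 C/\eta^{3/2}\, \dd\eta = \Theta(C^{2/3}) = O\!\left(m^{2/3}\log^2 N/(d^{8/3} N^{1/3})\right)$ feeds directly into the bound on the first term of the main lemma.
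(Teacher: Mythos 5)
Your proposal fails at the very first step: the infimum of $\erm^c(b_I, v_{-I})$ over $b_I$ is \emph{not} attained in the limit $b_I \to \bm{0}$. Setting the $m$ controlled bids to zero simply removes them from contention, but an intermediate bid $b$ that lands between two adjacent samples can produce a strictly lower output because the new price can be $b$ itself, not a value from $v_{-I}$. Concretely, take $N=4$, $m=1$, $c=1/4$, $v_{-I}=\{10,5,1\}$, and $\overline{v}_I = 10$: the honest price is $v_{k^*}=10$; bidding $b\to 0$ gives price $5$; but bidding $b = 10/3$ makes the sorted vector $(10,5,10/3,1)$ and $\erm^{1/4}$ outputs $10/3 < 5$. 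So your claimed structural characterization --- the pair $(k_1,k_2)$ with $k_2 = m+j^*$ and the two optimality inequalities $k_1 v_{k_1} \ge k_2 v_{k_2}$ and $(k_2-m)v_{k_2} \ge (k_1-m)v_{k_1}$ --- only covers the sub-optimal deviation $b_I = 0$ and misses the true minimizer. Everything downstream (the index-gap bound, the ``two-sided squeeze'' on $v_{k_2}/v_{k_1}$, and the union bound over $(k_1,\ell)$) is built on this false premise.

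This is exactly the difficulty that the paper's argument is designed to handle. Claim~\ref{claim:multiple_bid} shows WLOG all $m$ bids can be taken equal to some $b^* = \erm^c(b^*,\ldots,b^*,v_{-I})$, but does \emph{not} reduce $b^*$ to zero. Lemma~\ref{lem:thresholds} then treats the intermediate bid by locating $i$ with $v_i \ge b > v_{i+1}$, writing $b$ as a convex combination $b = (1-\lambda)v_i + \lambda v_{i+1}$, and performing a case split depending on whether $v_i$ is itself already small ($< (1-\eta/2)v_{k^*}$, in which case $j=i$ suffices) or large (in which case the convex-combination weight $\lambda$ must be bounded below, but this lower bound $\theta_t = \eta/(2\eta_{t+1})$ depends on exactly where $v_{i+1}$ falls, necessitating the ladder of thresholds $\eta_0 < \cdots < \eta_{M+1}$ and the $M+1$ bad events $\Bad(\eta_t,\theta_t)$). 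Your sketch has no analogue of the convex-combination step or the multi-threshold case split, and your accounting of the three $\log N$ factors (concentration, dyadic gap, dyadic value-ratio) does not match the paper's, which gets $\log^2 N$ from the random-walk / Chernoff argument inside $\Pr[W_l]$ (Lemma~\ref{lem:Pr_W}) and $\log\log N$ from the geometric threshold sequence (Claim~\ref{cl:eta-calculation}). If you want to pursue a union-bound-over-index-pairs route you would first need to prove a correct analogue of Lemma~\ref{lem:thresholds} that accounts for the intermediate bid; the single-pair $(k_1,k_2)$ condition you wrote down is too weak.
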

Finally we upper-bound the integral in \eqref{eqn:Delta_interim_integral}:

\begin{align*}
     \int_0^1\Pr[&\delta_I(\overline{v}_I, v_{-I})>\eta\, \land \overline{\event} ]\dd \eta 
    \le{}
    \int_0^1 \left(G(\eta) + \frac{1}{N-m}\right)\dd \eta &&\text{By \eqref{eqn:delta_event_conc}}\\
    \le{} & \int_0^{C^{\frac{2}{3}}} 1\dd \eta + \int_{C^{\frac{2}{3}}}^1 \frac{C}{\eta^{\frac{3}{2}}}\dd \eta
    +\frac{1}{N-m} && \text{By \cref{lem:first_term}}\\
    \le{}& 3C^\frac{2}{3} +\frac{1}{N-m}
    ={} O\left(\sqrt[3]{\frac{m^2\log^6 N}{d^8 N}}\right)+\frac{1}{N-m}
    = O \left(\sqrt[3]{\frac{m^2} {d^8}}\frac{\log^2 N}{\sqrt[3]{N}}\right), 
\end{align*}
which, together with \eqref{eqn:Delta_interim_integral}, concludes the proof of \cref{lem:main_upper-bound}.

\subsection{Proof of Lemma \ref{lem:first_term}}
\label{sec:proof_lem:first_term}

Recall that we need to upper-bound $G(\eta) = \Pr[\delta_I(\overline{v}_I, v_{-I})>\eta\,\land \overline{\event}\,\land\,\conc]$ by $\Theta\left(\frac{m\log^3 N}{d^4\sqrt{N}}\frac{1}{\eta^{3/2}}\right)$. We do this via a union bound of $M+1$ events, where $M$ is a number to be chosen later.
Each event is parameterized by $\eta_t, \theta_t$ for $t=0, \ldots, M$ which are chosen to satisfy the following conditions:
\begin{itemize}
\item $\eta_0=\frac{1}{2}\eta, \eta_1=\eta, \eta_2=2\eta$.
\item For $t \geq 3$, $\eta_t$ can be chosen arbitrarily, as long as $\eta_2<\eta_3<\cdots<\eta_M < 1$.
\item $\eta_{M+1} = 1$.
\item
$\theta_0 = 1, \text{ and } \theta_t = \frac{\eta}{2\eta_{t+1}} \text{ for } t= 1, \ldots, M$.
\end{itemize}

Define the following $M+1$ events $\Bad(\first_t, \second_t)$, where $t=0,\ldots, M$:
\begin{equation}
    \Bad(\first_t, \second_t) = \left[
\text{ there exists $j\ge k^*$ such that } \begin{cases}
&v_j\le (1-\first_t)v_{k^*} \\
&jv_j\ge k^*v_{k^*}-\frac{m}{\second_t}v_{k^*} 
\end{cases} \right]
\end{equation}

The next lemma shows that the union of these events contains the event $[\delta_I(\overline{v}_I, v_{-I})>\eta]\,\land\,\overline{\event}$.

\begin{lemma}\label{lem:thresholds}
Suppose $\frac{2m}{dN}<\eta<1$ and that the parameters $\eta_t,\theta_t$ satisfy the above conditions. If $\delta_I(\overline{v}_I, v_{-I})>\eta$ and $k^*> dN$, then there exists $t \in \{0,\ldots,M\}$ such that the event $\Bad(\eta_t, \theta_t)$ holds.
\end{lemma}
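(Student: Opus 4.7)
The plan is to prove Lemma~\ref{lem:thresholds} by contraposition: assuming $\delta_I(\overline{v}_I, v_{-I}) > \eta$ and $k^* > dN$, I will exhibit a witness $j \geq k^*$ for $\Bad(\eta_t, \theta_t)$ for some $t \in \{0,\ldots,M\}$. The first step is to unpack the assumption: there is a bid profile $b_I$ whose associated sorted list $v'_1 \geq \cdots \geq v'_N$ (obtained after replacing $\overline{v}_I$ by $b_I$) has a new-ERM index $k' \in (cN, N]$ with $\tau \defeq v'_{k'} < (1-\eta)\, v_{k^*}$. I would next establish the basic rank-shift observation that each surviving value $v_\ell$ (for $\ell > m$) moves to a new position in $[\ell - m, \ell]$; applied to $v_{k^*}$, together with $k^* > dN$ and $c \leq d/2$, this places $v_{k^*}$ at a new position $i^* \in [k^* - m, k^*] \cap (cN, N]$, and ERM maximality at $i^*$ gives the fundamental score inequality
\[
k' \tau \;\geq\; (k^* - m)\, v_{k^*} \;=\; k^* v_{k^*} - m\, v_{k^*}.
\]

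I would then split on the origin of $\tau$. In \emph{Case A} ($\tau = v_{j^*}$ for some original index $j^* \in [k', k'+m]$), $v_{j^*} < v_{k^*}$ forces $j^* > k^*$, and combining $j^* v_{j^*} \geq k' \tau \geq k^* v_{k^*} - m\, v_{k^*}$ with $v_{j^*} < (1-\eta)v_{k^*} \leq (1-\eta_0)v_{k^*}$ produces $\Bad(\eta_0, \theta_0)$ immediately. In \emph{Case B} ($\tau$ equals some adversarial bid $b_j$), $\tau$ sits in a gap $(v_{j^*}, v_{j^*-1}]$ of the $v$-sequence, where $j^*$ is the smallest original index above $m$ with $v_{j^*} < \tau$ (with the convention $j^* = N+1$ if no such index exists). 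A counting argument relates $k'$ to $j^*$ via $k' = j^* - 1 - s_<$, where $s_<$ counts adversarial bids strictly below $\tau$. The degenerate sub-case $j^* = N+1$ is handled by taking $j = N$, using $v_N \geq \tau$ to satisfy the revenue condition and invoking the standing hypothesis $\eta > 2m/(dN)$ to satisfy the value-drop condition for $\Bad(\eta_0, \theta_0)$. In the generic sub-case $j^* \leq N$, I would locate the partition interval $[(1-\eta_{t+1})v_{k^*},\,(1-\eta_t)v_{k^*})$ containing $\tau$ and choose the witness $j$ from $\{j^*-1, j^*\}$ together with the matching~$t$: pushing the score inequality to index $j^*-1$ yields a revenue bound of the shape $(j^*-1)v_{j^*-1} \geq k^* v_{k^*} - m v_{k^*} + s_<\tau$, which the calibration $\theta_t = \eta/(2\eta_{t+1})$ is designed to absorb for an appropriate $t$.

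The main technical obstacle is verifying the value-drop condition $v_{j^*-1} \leq (1-\eta_t) v_{k^*}$ in Case B. Applying new-list ERM maximality at the position of $v_{j^*-1}$ produces an upper bound of the shape $v_{j^*-1} \leq \tau \cdot \bigl(1 + O(1/(j^*-1-s_<-s_=))\bigr)$, where $s_=$ counts adversarial bids equal to $\tau$; this correction term measures the clustering the adversary can create near $\tau$. The standing hypothesis $\eta > 2m/(dN)$ combined with $j^* > k^* > dN$ is exactly what forces this correction to be small enough to bound $v_{j^*-1}$ below $(1-\eta_t) v_{k^*}$ for an appropriate $t \in \{0,\ldots,M\}$; the spacing of the $\eta_t$'s and the coupled calibration $\theta_t = \eta/(2\eta_{t+1})$ are engineered so that as $\eta_t$ increases (loosening the value-drop condition) the revenue slack $m/\theta_t$ grows in lockstep, guaranteeing that some single $t$ accommodates both conditions simultaneously and completing the case analysis.
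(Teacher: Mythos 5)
Your proposal takes a genuinely different route than the paper's. The paper's decisive first move is Claim~\ref{claim:multiple_bid}, which reduces to the case of $m$ identical bids equal to a single value $b$; with $v_i \ge b > v_{i+1}$, the whole argument then rests on the single maximality inequality \eqref{eqn:ibi}, a two-way split on $v_i$, and (in the harder case) the convex combination $b = (1-\lambda)v_i + \lambda v_{i+1}$. You forgo that reduction and work directly with an arbitrary bid profile, splitting instead on whether the new ERM price $\tau$ is an original sample (your Case~A, which is correct and parallels the paper's case~(a)) or an adversarial bid (your Case~B). The additional inequality you extract in Case~B from ERM maximality at the new position of $v_{j^*-1}$ is a new ingredient absent from the paper, and a potentially powerful one.

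As written, however, Case~B is not complete. You say you would ``choose the witness $j$ from $\{j^*-1, j^*\}$ together with the matching~$t$'' but never say how or when; the paper's case~(b) and its convex-combination argument exist precisely to certify the revenue condition when the value-drop test at $j^*-1$ fails and one must fall back to $j^*$, and your sketch does not replicate or replace that step. Your last paragraph also has the monotonicity reversed: as $\eta_t$ grows, $1-\eta_t$ shrinks, so the condition $v_j\le(1-\eta_t)v_{k^*}$ \emph{tightens}, it does not ``loosen''; only the revenue slack $m/\theta_t$ grows. The identity $k'=j^*-1-s_<$ also silently relies on the tie-breaking convention of $\erm^c$ and deserves an argument, and your correction factor should be $\frac{j^*-1-s_<}{j^*-1-m}$, i.e.\ of order $1+m/(j^*-1-m)$, not $1+O\bigl(1/(j^*-1-s_<-s_=)\bigr)$. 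Worth noting: if you carry the bound you found to completion --- $v_{j^*-1}\le\tau\cdot\frac{j^*-1-s_<}{j^*-1-m}<(1-\eta)\frac{j^*-1}{j^*-1-m}v_{k^*}$, and then using $j^*-1\ge k^*>dN$ and $\eta>\frac{2m}{dN}$ check that $(1-\eta)\frac{j^*-1}{j^*-1-m}\le 1-\eta/2$ (this reduces to $\eta\ge\frac{2m}{(j^*-1)+m}$, which holds) --- you get $v_{j^*-1}<(1-\eta_0)v_{k^*}$ unconditionally, which together with $(j^*-1)v_{j^*-1}\ge k'\tau\ge(k^*-m)v_{k^*}$ yields $\Bad(\eta_0,\theta_0)$ directly with $j=j^*-1$. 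That would obviate the entire $t$-calibration (and applying the same idea at $v_i$'s new position $i-m$ in the paper's identical-bids setting suggests the paper's case~(b) may itself be vacuous); your sketch gestures at this but stops short of the calculation.
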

\noindent The proof of this lemma is given in \cref{sec:proof_lem:thresholds}.
Moreover, the next lemma upper-bounds the probability of each of these bad events, when assuming that $\conc$ holds as well.

\begin{lemma}\label{lem:Pr_Bad}
If $\first_t$ and $\second_t$ are at least $\Omega\left(\frac{m}{d}\sqrt{\frac{\log (N-m)}{N-m}}\right)$ (for some constant in $\Omega$ to be detailed in the proof), then 
$\Pr[\Bad(\first_t, \second_t)\,\land\,\overline{\event}\,\land\,\conc] = O\left(\frac{m\log^2 N}{d^4\second_t\sqrt{\first_t^3N}}\right)$.
\end{lemma}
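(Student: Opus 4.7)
The plan is to first translate $\Bad(\first_t, \second_t) \land \overline{\event} \land \conc$ into a deterministic structural condition on the positions of the samples in quantile space, and then to bound the probability of witnessing this structure via a union bound over a discretization plus a local-binomial concentration estimate.

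I would first extract the structural content of the bad event. Combining the two defining inequalities of $\Bad(\first_t,\second_t)$, any witness index $j$ must satisfy $j \ge (k^* - m/\second_t)/(1-\first_t) \ge k^*(1+\first_t) - 2m/\second_t$ whenever $\first_t \le 1/2$. Invoking $\overline{\event}$ (so $k^* > dN$) and $\conc$ (so both $|j/N - q_j|$ and $|k^*/N - q_{k^*}|$ are at most $\gamma := 2\sqrt{4\log(2(N-m))/(N-m)} + m/N$) yields $q_j - q_{k^*} \ge d\first_t - 2m/(N\second_t) - 2\gamma$. The hypothesis $\first_t,\second_t \ge \Omega\bigl(\tfrac{m}{d}\sqrt{\log(N-m)/(N-m)}\bigr)$ makes the right-hand side $\Omega(d\first_t)$. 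Meanwhile, $v_j \le (1-\first_t) v_{k^*}$ together with $jv_j \ge k^* v_{k^*} - (m/\second_t) v_{k^*}$ forces the revenue curve $R(q)=q\,v(q)$ to be nearly flat between $q_{k^*}$ and $q_j$: namely $R(q_j) \ge R(q_{k^*})\bigl(1 - O(m/(Nd\second_t))\bigr)$ after substituting the empirical approximations guaranteed by $\conc$.

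Next I would discretize. Since $q_{k^*} \in [d-\gamma, 1]$ under the good event, I would place an arithmetic grid of spacing $\sim 1/N$ on $q_{k^*}$, and a geometric grid of ratio $1 - \first_t/8$ on the value ratio $v_j/v_{k^*} \in (0,1-\first_t]$, yielding $O(\log N/\first_t)$ cells in the value-ratio direction. For each grid pair $(\hat q, \hat r)$, conditional on $v_{k^*}$ having true quantile near $\hat q$ and the relevant sample having value ratio near $\hat r$, the bad event reduces to requiring the count of samples in a quantile interval of length $\Theta(\first_t d)$ to hit a specific integer target within slack $O(m/\second_t)$. A local-binomial (Berry--Esseen-type) bound should give conditional probability at most $O\bigl((m/\second_t)/\sqrt{N\first_t d}\bigr)$. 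Multiplying by the grid size, which contributes $O(\log^2 N/\first_t)$ after also accounting for the positions of $q_{k^*}$, and absorbing the polynomial factors of $1/d$ that come out of the quantile-to-value conversions at the various junctures, should produce the claimed $O\bigl(m\log^2 N/(d^4\second_t\sqrt{\first_t^3 N})\bigr)$ bound.

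\textbf{Main obstacle.} The hard part will be obtaining the $\first_t^{3/2}$ factor in the denominator, rather than a naive $\first_t^{1/2}$. Heuristically this arises from combining two local-concentration estimates that are simultaneously active: one for the count of samples with value above $v_{k^*}$ hitting $k^*$ (contributing $1/\sqrt{N\first_t d}$ once one accounts for the $\Theta(\first_t d)$-wide uncertainty in which quantile bin $v_{k^*}$ lies), and one for the count above $(1-\first_t)v_{k^*}$ hitting $j$. Making this precise while respecting both the discretization and the possibly sub-constant slack $m/\second_t$ is the main technical work. A secondary complication is that $F$ may be unbounded, so every step must be carried out in quantile space rather than value space; the $1/d^4$ factor will presumably track exactly how many times the lower bound $q_{k^*} \ge d - O(\gamma)$ must be invoked when converting value-space inequalities (coming from the definition of $\Bad$) into the quantile-space estimates on which the local-binomial bound operates.
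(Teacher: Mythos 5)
Your structural reduction in the first paragraph is essentially the paper's Claim~\ref{claim:two_consequences}: from $\Bad(\first_t,\second_t)\land\overline{\event}\land\conc$ you correctly extract both the quantile gap $q_j-q_{k^*}=\Omega(d\first_t)$ and the ``near-flatness'' $k^*v_{k^*}\le (j+O(m/(d\second_t)))v_j$. You also correctly identify a binomial anti-concentration estimate as the probabilistic engine, and your heuristic that one $\first_t^{-1/2}$ factor comes from the width of the relevant quantile window and another $\first_t^{-1}$ from the number of cells is the right accounting.

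However, the discretization you propose has a real gap that your own hedge (``after also accounting for the positions of $q_{k^*}$'') does not close. An arithmetic grid of spacing $\sim 1/N$ on $q_{k^*}$ has $\Theta(N)$ cells, and a naive union bound over those cells multiplied by a per-cell bound of $O\bigl((m/\second_t)/\sqrt{N\first_t d}\bigr)$ produces an extra factor of $N$, not $\log N$. There is no obvious way to sum over the $q_{k^*}$ cells more cleverly, because the conditioning events ``$v_{k^*}$ has true quantile near $\hat q$'' and ``the relevant sample has value ratio near $\hat r$'' are conditioning on the \emph{data-dependent} indices $k^*=\argmax_{i>cN}\{iv_i\}$ and $j$, and it is not clear what the conditional law of the remaining samples is. This is precisely the difficulty the paper engineers around.

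The paper's actual route is to never condition on $q_{k^*}$ or on the identity of $k^*$, $j$ at all. It partitions $[0,1]$ into $[0,d/2]$ and about $1/h$ intervals $I_1,I_2,\dots$ of length $h=\Theta(d\first_t)$, and for each $l$ defines a ``bad'' event $W_l$ purely in terms of the two data-dependent maxima $A_{l+2}=\max\{iv_i: q_i\in I_{l+2}\}$ and $A_{<(l+1)}=\max\{iv_i: q_i\in I_1\cup\cdots\cup I_l\}$: namely $A_{l+2}\le A_{<(l+1)}\le A_{l+2}+H\widetilde v_{l+2}$ with $H=\Theta(m/(d\second_t))$. Claim~\ref{claim:two_consequences} guarantees that the target event forces some $W_l$ to hold, so the union bound is over $l$ (giving the factor $1/h=\Theta(1/(d\first_t))$) and \emph{not} over a $1/N$-grid. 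Then $\Pr[W_l]$ is bounded by a sequential-sampling argument: generate the $N_L$ quantiles landing in $L=[0,d/2]\cup I_1\cup\cdots\cup I_{l+1}$ one at a time, observe that each draw in $[0,d/2]$ pushes $A_{<(l+1)}$ up by at least $\widetilde v_{l+2}$ and each draw in $I_{l+1}$ leaves it fixed, and use anti-concentration of the Binomial count of ``$I_{l+1}$-draws'' to show the stopped value rarely lands in a window of width $H\widetilde v_{l+2}$. This gives the $O(H\log^2 N/\sqrt{hd^3N})$ per-$l$ bound and hence the overall $\first_t^{-3/2}$ after the union bound, with no grid on $q_{k^*}$ needed. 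If you want to repair your plan, the essential change is to stop trying to locate $q_{k^*}$ on a fine grid and instead bucket only the position of $q_j$, phrasing the bad event in terms of empirical maxima over fixed quantile buckets.
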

\noindent The proof of \cref{lem:Pr_Bad} is in \cref{sec:proof_lem:Pr_Bad}. 
Now,

\begin{flalign*}
\Pr[\delta_I(\overline{v}_I, v_{-I})>\eta\,\land\,\overline{\event}\,\land\,\conc] \le{}& \sum_{t=0}^M \Pr[\mathrm{Bad}(\eta_t, \theta_t) \,\land\,\overline{\event}\,\land\,\conc] && \text{Lemma \ref{lem:thresholds}}\\
={}& \sum_{t=0}^M O\left(\frac{m\log^2 N}{d^4\theta_t\sqrt{\eta_t^3N}}\right) && \text{Lemma~\ref{lem:Pr_Bad}}\\
={}& O\left(\frac{m\log^2 N}{d^4\sqrt{N}} \cdot \sum_{t=0}^M  \frac{\eta_{t+1}}{\eta}\frac{1}{\eta_{t}^{3/2}} \right) && \text{Definition of } \theta_t 
\end{flalign*}
Note that because $\eta_t, \theta_t\ge \frac{\eta}{2}$, the condition of Lemma~\ref{lem:Pr_Bad} is satisfied under the assumption that $\eta \ge \Theta\left(\left(\frac{m\log^3 N}{d^4\sqrt{N}}\right)^{2/3}\right)$ in Lemma~\ref{lem:first_term}. Finally, we choose a sequence of $\{\eta_t\}$ to make the above summation small enough:

\begin{claim}
\label{cl:eta-calculation}
There exist an integer $M$ and parameters $\eta_3,\ldots,\eta_M$ that satisfy the conditions described above, such that 
\begin{align*}
    \sum_{t=0}^M  \frac{\eta_{t+1}}{\eta}\frac{1}{\eta_{t}^{3/2}} = O\left(\frac{\log\log (N-m)}{\eta^{3/2}}\right), 
\end{align*}
assuming $\eta=\Omega\left(\frac{m}{d}\sqrt{\frac{\log (N-m)}{N-m}}\right)$. 
\end{claim}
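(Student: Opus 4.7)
The plan is to choose the $\eta_t$ sequence so that it grows geometrically (by a factor of $2$) starting from $\eta_2 = 2\eta$ until it reaches $1$. With this choice the sum reduces to a geometric series and yields the bound $O(1/\eta^{3/2})$, which is in fact strictly stronger than the claimed $O\bigl(\log\log(N-m)/\eta^{3/2}\bigr)$.

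Concretely, I would set $\eta_t = 2^{t-1}\eta$ for $t = 2, 3, \ldots$, and let $M$ be the largest integer with $2^{M-1}\eta < 1$; the boundary value $\eta_{M+1}=1$ is then imposed. The hypothesis $\eta = \Omega\bigl(\tfrac{m}{d}\sqrt{\log(N-m)/(N-m)}\bigr)$ guarantees $\eta \ge 1/\mathrm{poly}(N-m)$, so $M = O(\log(N-m))$ is a finite integer and the monotonicity condition $\eta_2 < \eta_3 < \cdots < \eta_M < 1$ holds by construction. Moreover, by maximality of $M$ we have $\eta_M = 2^{M-1}\eta \ge 1/2$ (otherwise $M$ could be enlarged).

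I would then split the sum into three pieces. The small-index boundary terms $t\in\{0,1\}$ are handled by direct substitution of $\eta_0=\eta/2,\eta_1=\eta,\eta_2=2\eta$ and each contributes $O(1/\eta^{3/2})$. The large-index boundary term $t=M$ uses $\eta_{M+1}=1$ together with $\eta_M \ge 1/2$, yielding
$$\frac{\eta_{M+1}}{\eta\,\eta_M^{3/2}} \;=\; \frac{1}{\eta\,\eta_M^{3/2}} \;\le\; \frac{2\sqrt{2}}{\eta} \;\le\; \frac{2\sqrt{2}}{\eta^{3/2}},$$
where the final step uses $\eta\le 1$. For each interior index $t\in\{2,\ldots,M-1\}$ the doubling rule $\eta_{t+1}=2\eta_t$ gives
$$\frac{\eta_{t+1}}{\eta\,\eta_t^{3/2}} \;=\; \frac{2}{\eta\,\eta_t^{1/2}} \;=\; \frac{\sqrt{2}}{\eta^{3/2}}\cdot 2^{-(t-2)/2},$$
which forms a geometric series in $t$ with ratio $1/\sqrt{2}$ and sums to $O(1/\eta^{3/2})$ independently of $M$. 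Adding the three contributions gives the total bound.

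The main (and only) subtlety is that the hard constraint $\eta_{M+1}=1$ forbids exact doubling at the last step; but since the geometric overshoot is at most a factor of two, $\eta_M$ stays bounded below by $1/2$ and the final term remains harmless. No delicate estimate is needed, and no $\log\log$ factor arises from our construction -- the slack in the claim is simply absorbed, and the claim follows.
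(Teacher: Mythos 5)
Your proof is correct, and it takes a genuinely different route from the paper. The paper sets $\eta_{t+1}/\eta = (\eta_t/\eta)^{3/2}$, a super-geometric growth, so that each of the $M$ interior terms contributes $O(1)$ and the number of steps $M$ is $O(\log\log(1/\eta)) = O(\log\log(N-m))$; the $\log\log$ factor in the claim comes directly from counting terms. You instead use plain geometric doubling $\eta_t = 2^{t-1}\eta$, which yields $M = O(\log(1/\eta))$ terms, but the interior contributions form a convergent geometric series with ratio $1/\sqrt 2$, so the sum is $O(1/\eta^{3/2})$ with no logarithmic factor at all. Your handling of the last term $t=M$ via the maximality argument $\eta_M \ge 1/2$ is clean and correct, and the bound you obtain is strictly stronger than what the claim asserts (which in turn would shave a $\log$ factor off the exponent in Lemma~\ref{lem:first_term} and ultimately tighten Theorem~\ref{thm:discount_combined} from $\log^2 N$ to $\log^{4/3}N$ for bounded $F$). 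In short: same outer structure, but a simpler geometric construction whose convergence is driven by the series rather than by a small term count, giving a cleaner and sharper result.
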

\noindent The proof of this claim is given in Appendix \ref{proof-of-eta-calculation}. To conclude the proof, 
\[\Pr[\delta_I(\overline{v}_I, v_{-I})>\eta\,\land\,\overline{\event}\,\land\,\conc] \le O\left(\frac{m\log^2 N}{d^4\sqrt{N}}\cdot \frac{ \log\log (N-m)}{\eta^{3/2}} \right)=  O\left(\frac{m\log^3 N}{d^4\sqrt{N}} \frac{1}{\eta^{3/2}}\right).\]

{\em Remark.}
This proof is inspired by a proof in \citet{yao2018incentive}. We improve upon that proof in two aspects: (1) Our definition of a sequence of bad events (Lemma~\ref{lem:thresholds}) improves upon similar single bad events defined in \citet{yao2018incentive} and \citet{lavi2019redesigning}; (2) 
\citet{yao2018incentive} only considers bounded and continuous distributions, while our proof works for arbitrary distributions. This is because \citet{yao2018incentive} works in the value space when upper-bounding the probability of bad events over the random draw of values $v_{-I}$ (Lemma~\ref{lem:Pr_Bad}), but we work in the quantile space, which circumvents the boundedness assumption and deals with discontinuity. To argue in the quantile space, we need $\conc$ to show that $q_jv_j$ approximates $jv_j$ in the proof of Lemma~\ref{lem:Pr_Bad}.

\subsection{Proof of Lemma \ref{lem:thresholds}}\label{sec:proof_lem:thresholds}
Suppose $\delta_I(\overline{v}_I, v_{-I})>\eta$ and $k^*> dN$. By definition, there exist $m$ bids $b_I\in\mathbb R_+^m$ such that $\erm^c(b_I, v_{-I}) < (1-\eta)v_{k^*}$. Without loss of generality, we can assume that all $m$ bids are identical, as shown in the following claim:
\begin{claim}\label{claim:multiple_bid}
For any sorted $\bm{v}=(v_1\ge \cdots\ge v_N)$. Let $I=\{1, \ldots, m\}$, let
\[b^*=\argmin_{b\in\mathbb R_+} \erm^c(\overbrace{b, \ldots, b}^{m\text{ copies}}, v_{-I}).\]
then $b^*=\erm^c(b^*, \ldots, b^*, v_{-I})= \min_{b_I\in\mathbb{R}_+^m} \erm^c(b_I, v_{-I})$.
\end{claim}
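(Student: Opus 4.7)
The plan is to establish the auxiliary statement that for any bid vector $b_I\in\mathbb{R}_+^m$, setting $p:=\erm^c(b_I,v_{-I})$ and replacing all $m$ bids by $p$ does not raise the ERM output, i.e., $\erm^c(p,\ldots,p,v_{-I})\le p$. Granting this, the claim follows quickly. Letting $p^*:=\min_{b_I\in\mathbb{R}_+^m}\erm^c(b_I,v_{-I})$ (the minimum is attained because $\erm^c(\cdot,v_{-I})$ takes only finitely many values on $\mathbb{R}_+^m$), and applying the auxiliary statement to a minimizer gives $\erm^c(p^*,\ldots,p^*,v_{-I})\le p^*$; since symmetric bids form a subfamily of general bid vectors, the reverse inequality also holds, forcing $\erm^c(p^*,\ldots,p^*,v_{-I})=p^*$. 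Setting $b^*:=p^*$ then witnesses both equalities in the claim.

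To prove the auxiliary statement I will compare the two merged sorted lists $M_1=\mathrm{sort}(b_I\cup v_{-I})$ and $M_2=\mathrm{sort}(\{p,\ldots,p\}\cup v_{-I})$. Let $k$ (resp.\ $k'$) be the number of values in $v_{-I}$ strictly greater than (resp.\ at least) $p$, and let $\beta$ (resp.\ $\delta$) be the number of bids in $b_I$ strictly greater than (resp.\ equal to) $p$. Because the $\erm$ tie-breaking rule picks the largest index achieving the maximum of $i\,w_i$, and the output value $p$ at that index is the smallest value attaining the maximum, the $\erm$ index $i^*$ in $M_1$ must be the largest position whose value equals $p$, which is $i^*=k'+\beta+\delta$. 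The crucial observation is the resulting bound $i^*\le k'+m$, because $\beta+\delta\le m$.

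Next, in $M_2$ positions $1,\ldots,k$ hold $v'_1,\ldots,v'_k$ (all $>p$), positions $k+1,\ldots,k'+m$ all hold $p$, and positions $k'+m+1,\ldots,N$ hold the values $<p$. For any $i\le k$ with $i>cN$, the position of $v'_i$ in $M_1$ is $i+\beta_i$ where $\beta_i:=|\{r:b_r>v'_i\}|\le\beta$, so ERM optimality in $M_1$ yields $(i+\beta_i)v'_i\le i^*p$, whence $i\cdot w^{M_2}_i = iv'_i\le i^*p\le(k'+m)p=(k'+m)\cdot w^{M_2}_{k'+m}$. Thus every index $i\le k$ (value $>p$) is weakly dominated in $i\,w_i$ by the index $k'+m$ (value $p$), and the larger-index tie-breaking of $\erm$ forces the choice in $M_2$ to land at an index $\ge k+1$, giving output $\le p$. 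The degenerate case $k'+m\le cN$ is easier, because then every index $>cN$ already carries a value $<p$.

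The main place I expect difficulty is the careful tie-breaking bookkeeping in the identification $i^*=k'+\beta+\delta$. One must verify that the $\erm$-maximizing index in $M_1$ cannot sit at a position with value $>p$ (otherwise the output would exceed $p$, contradicting $w_{i^*}=p$) nor at a position with value $<p$ (else it would lie below $p$); this forces the maximum of $i\,w_i$ to be attained on the contiguous block of value-$p$ positions, and hence at its rightmost endpoint. An analogous tie-break step is needed in $M_2$ to upgrade the weak inequality $iv'_i\le(k'+m)p$ into the sharper conclusion that the $\erm$ selection lies at an index $\ge k+1$.
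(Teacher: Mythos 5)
Your proof is correct, but it takes a genuinely different route from the paper's. The paper's argument is a perturbation argument: start from a minimizer $b_I$ of $\erm^c(\cdot, v_{-I})$, argue (somewhat tersely) that one may assume the output $p$ equals one of the bids, then lower every bid above $p$ down to $p$ and raise every bid below $p$ up to $p$ one at a time, invoking minimality to conclude each step leaves the output unchanged. The crux — that raising a bid $b_j<p$ to $p$ does not increase the price, which is counterintuitive since raising an input should only weakly increase the ERM output — is asserted without justification. You instead prove a stronger auxiliary lemma valid for \emph{any} bid vector (not only minimizers): if $p=\erm^c(b_I,v_{-I})$ then $\erm^c(p,\ldots,p,v_{-I})\le p$, and you derive the claim by applying this to a minimizer together with the trivial reverse inequality. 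Your proof of the lemma via explicit bookkeeping in the two merged sorted lists — pinning down the argmax position $i^*=k'+\beta+\delta\le k'+m$ in $M_1$ using the tie-breaking rule, then showing every index $\le k$ in $M_2$ is weakly dominated in $iw_i$ by index $k'+m$, so the larger-index tie-break pushes the selection past position $k$ — is more explicit and rigorous than the paper's perturbation, and the modular structure (lemma plus two-line deduction) is cleaner. Two small remarks: your justification that the minimum is attained (``$\erm^c(\cdot,v_{-I})$ takes only finitely many values'') is not literally true, since the output can coincide with a bid and hence vary continuously; the paper also assumes a minimizer exists, so this is not a real gap, but a different argument (e.g.\ right-continuity plus the fact that the infimum is witnessed at a value in $\{v_{m+1},\dots,v_N\}\cup\{0\}$) would be tighter. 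Also, your ``degenerate case $k'+m\le cN$'' can never occur: you already showed $k'+m\ge i^*>cN$, so that branch is vacuous.
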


\begin{proof}
We will show that for any vector of $m$ bids $b_I=(b_1, \ldots, b_m)$ that minimizes $\erm^c(b_I, v_{-I})$, we can construct another vector $b_I'=(b, \ldots, b)$ such that $\erm^c(b_I', v_{-I})=\erm^c(b_I, v_{-I})=b$. Because $b_I$ minimizes $\erm^c(b_I, v_{-I})$, we can assume that there is a bid $b_{i^*}$ such that $\erm^c(b_I, v_{-I})=b_{i^*}$ (otherwise, we can decrease the bids in $b_I$ without increasing the price). Let $b=b_{i^*}$. For any $b_j>b$, decrease $b_j$ to $b$, then the price does not change. For any $b_j<b$, increase $b_j$ to $b$, then the price does not increase; and if the price decreases, then it contradicts the fact that $\erm^c(b_I, v_{-I})$ is minimized. In this way, we change all bids in $b_I$ to $b$, without affecting the price. 
\end{proof}

By \cref{claim:multiple_bid}, there exists $b\in\mathbb R_+$ which equals $\erm^c(b, \ldots, b, v_{-I})$ and satisfies
\begin{equation}\label{eqn:b}
    b< (1-\eta)v_{k^*}.
\end{equation}

Choose index $i$ for which $v_{i} \ge b > v_{i+1}$. Assume for now $i\le N-1$, we will postpone the analysis for $i=N$ to the end. Now we show that setting $j=i$ or $i+1$ will satisfy the lemma. Clearly $i\ge k^*$. The change of the bids vector caused by $b$ is: 
\[(\overline{v}_1, \ldots, \overline{v}_m, v_{m+1}, \ldots, v_{k^*}, \ldots, v_i, v_{i+1}, \ldots, v_N) \to (v_{m+1}, \ldots,v_{k^*}, \ldots,  v_{i}, \overbrace{b,\ldots, b}^{m\text{ times}}, v_{i+1}, \ldots, v_N).\]  
Note that $k^*-m> dN - m\ge cN$, so $v_{k^*}$ will not be ignored by $\erm^c$ after the change of bids. Then in order for $b$ to be chosen by $\erm^c$, we need:
\begin{equation}\label{eqn:ibi}
i\cdot b \ge (k^*-m)\cdot v_{k^*} = k^*v_{k^*} - mv_{k^*}. 
\end{equation}

We will choose $j$ depending on how large $v_i$ is:
\begin{enumerate}[(a)]
    \item If $v_i<(1-\frac{1}{2}\eta)v_{k^*}=(1-\eta_0)v_{k^*}$, we set $j=i$ and $t=0$. Clearly, $iv_i\ge ib\ge k^*v_{k^*} - mv_{k^*}$.
    \item If $v_i\ge (1-\frac{1}{2}\eta)v_{k^*}$, then we set $j=i+1$ and choose the $t$ ($1\le t\le M$) such that
    \begin{equation}\label{eqn:t_sandwich}
        (1-\eta_t)v_{k^*} \ge v_{i+1} > (1-\eta_{t+1})v_{k^*}. 
    \end{equation}
    To see why $(i+1)v_{i+1}\ge k^*v_{k^*} - \frac{2\eta_{t+1}}{\eta}m v_{k^*}$ holds, first we write $b$ as a convex combination of $v_i$ and $v_{i+1}$: $b=(1-\lambda)v_i + \lambda v_{i+1}$. From \eqref{eqn:b} and \eqref{eqn:ibi}, we immediately get
    \begin{equation}\label{eqn:from_b}
        (1-\eta)v_{k^*} > (1-\lambda)v_i + \lambda v_{i+1},
    \end{equation}
    \begin{equation}\label{eqn:from_ibi}
    (1-\lambda) i v_i + \lambda (i+1) v_{i+1} \ge k^*v_{k^*} - mv_{k^*}.
    \end{equation}
    Equation \eqref{eqn:from_ibi} further implies $\lambda (i+1) v_{i+1} \ge k^*v_{k^*} - (1-\lambda) k^*v_{k^*} - mv_{k^*}$. Divide by $\lambda$, 
    \[ (i+1) v_{i+1} \ge k^*v_{k^*} - \frac{m}{\lambda}v_{k^*}.\]
    Then it remains to lower-bound $\lambda$ by $\theta_t$. Intuitively, since $v_i$ is larger than $(1-
    \frac{\eta}{2})v_{k^*}$ but $b<(1-\eta)v_{k^*}$, the coefficient of $v_{i+1}$ cannot be too small. Formally, from \eqref{eqn:from_b} and \eqref{eqn:t_sandwich}, we have: 
    \[(1-\eta)v_{k^*} > (1-\lambda) (1-\frac{1}{2}\eta)v_{k^*} + \lambda (1-\eta_{t+1})v_{k^*}, \]
    \[\implies 1-\eta > 1 -\frac{1}{2}\eta - \lambda(\eta_{t+1} -\frac{1}{2}\eta) \]
    \[\implies  \lambda > \frac{\frac{1}{2}\eta}{\eta_{t+1} - \frac{1}{2}\eta} \ge \frac{\eta}{2\eta_{t+1}}=\theta_t, \]
    concluding the proof of this case.
\end{enumerate}

Finally we return to the analysis for $i=N$. If $\frac{k^*}{N}< 1-\frac{1}{2}\eta$, then
$v_N\le \frac{k^*v_{k^*}}{N} < (1-\frac{1}{2}\eta)v_{k^*}$, so the above argument (a) can be reused. Otherwise, from \eqref{eqn:b} and (\ref{eqn:ibi}), we have: 
\[(1-\eta)v_{k^*}>b\ge \frac{(k^*-m)v_{k^*}}{N} \ge (1-\frac{1}{2}\eta - \frac{m}{N}) v_{k^*}, \]
which contradicts the assumption that $\eta>\frac{2m}{dN}$.


\subsection{Proof of Lemma \ref{lem:Pr_Bad}}
\label{sec:proof_lem:Pr_Bad}
For convenience we drop the subscript $t$ and just write $\first=\eta_t, \second=\theta_t$.
Recall that we need to upper-bound $\Pr[\Bad(\first, \second) \,\land\,\overline{\event}\,\land\,\conc]$ where: 
\begin{itemize}
    \item $\Bad(\first, \second)$ implies that there exists $j\ge k^*$ such that $v_j\le (1-\first)v_{k^*}$ and $jv_j\ge k^*v_{k^*}-\frac{m}{\second}v_{k^*}$.
    \item $\overline{\event}$ is $k^*\ge dN$. 
    \item $\conc$ requires that $|q_j-\frac{j}{N}|\le 2\sqrt{ \frac{4\ln(2(N-m))}{N-m} } + \frac{m}{N}$ for any $j>m$. 
\end{itemize}

Define
\[H= \frac{m}{d\second - \frac{m}{N}} \quad\quad\text{ and }\quad\quad h = \frac{1}{2}\left(d\first -  4\sqrt{ \frac{4\ln(2(N-m))}{N-m} }  - \frac{4m}{N\second}\right).\]
Assume $H, h>0$, which can be satisfied when $\first$ and $\second$ are at least $\Omega\left(\frac{m}{d}\sqrt{\frac{\log (N-m)}{N-m}}\right)$. 

\begin{claim}\label{claim:two_consequences}
The event $[\Bad(\first, \second) \,\land\,\overline{\event}\,\land\,\conc] $ implies that there exists $j\ge k^*$ which satisfies:
\begin{enumerate}
    \item $jv_j \le k^* v_{k^*} \le (j+H)v_j$; 
    \item $q_j - q_{k^*} \ge 2h$.
\end{enumerate}
\end{claim}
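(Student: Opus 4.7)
The plan is to take the single index $j \ge k^*$ produced by $\Bad(\first,\second)$ and verify both conclusions for that witness. Recall that $\Bad$ gives (i) $v_j \le (1-\first)v_{k^*}$ and (ii) $jv_j \ge k^*v_{k^*} - (m/\second)v_{k^*}$; in addition $\overline{\event}$ says $k^* > dN$ and $\conc$ controls the gap between true and empirical quantiles.

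For statement~1, the inequality $jv_j \le k^* v_{k^*}$ is immediate from the definition $k^* = \argmax_{i > cN}\{iv_i\}$ together with $j \ge k^* > cN$. For the upper bound $k^* v_{k^*} \le (j+H)v_j$, I rewrite (ii) as $k^*v_{k^*} - jv_j \le (m/\second)v_{k^*}$, so it suffices to show $(m/\second)v_{k^*} \le Hv_j$, i.e.\ $v_j \ge (d - m/(N\second))v_{k^*}$ after substituting the definition $H = m/(d\second - m/N)$. This bound in turn falls out of (ii) once I use $j \le N$ and $\overline{\event}$: $v_j \ge (k^*-m/\second)v_{k^*}/j \ge (dN - m/\second)v_{k^*}/N$. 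The needed positivity ($k^* - m/\second > 0$ and $d - m/(N\second) > 0$) is exactly the hypothesis $H > 0$, i.e.\ $d\second > m/N$.

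For statement~2, I pass to quantile space via $\conc$: since $j \ge k^* > dN > m$, the bound $|q_i - i/N| \le \gamma$ applies at both $i = j$ and $i = k^*$, giving $q_j - q_{k^*} \ge (j-k^*)/N - 2\gamma$ where $\gamma = 2\sqrt{4\ln(2(N-m))/(N-m)} + m/N$. To lower bound $(j - k^*)/N$, I combine (i) and (ii) directly: $jv_j \ge (k^* - m/\second)v_{k^*}$ together with $v_j \le (1-\first)v_{k^*}$ yields $j \ge (k^* - m/\second)/(1-\first)$, so $j - k^* \ge (k^*\first - m/\second)/(1-\first) \ge k^*\first - m/\second \ge dN\first - m/\second$. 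The middle step uses $1/(1-\first) \ge 1$ after checking $k^*\first \ge m/\second$, which follows from $h > 0$: that assumption forces $d\first\second \ge 4m/N$, hence $k^*\first\second > dN\first\second \ge 4m$. Substituting back, $q_j - q_{k^*} \ge d\first - m/(N\second) - 2\gamma$, and matching this against $2h = d\first - 4\sqrt{4\ln(2(N-m))/(N-m)} - 4m/(N\second)$ reduces to the trivial inequality $3m/(N\second) \ge 2m/N$, which holds because $\second \le 1$.

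The main subtlety is the choice of lower bound on $j$ used in statement~2: routing through statement~1 (which would give $j \ge k^*/(1-\first) - H$) ultimately requires $H + 2m \le 4m/\second$ in the subsequent comparison with $2h$, and this fails in the small-$d$ regime relevant to MHR ($d = 1/(2e)$) and bounded ($d = 1/D$ with $D$ large) distributions. Using (i) and (ii) directly, rather than via $H$, only introduces the milder term $m/(N\second)$, which is absorbed by the $4m/(N\second)$ slack already built into $2h$. The remainder is careful bookkeeping of the positivity conditions wired into the definitions of $H$ and $h$.
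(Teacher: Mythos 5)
Your proof is correct and follows essentially the same route as the paper's: for item~1 you invoke the $\argmax$ property of $k^*$ for the lower bound, and combine $\Bad$'s inequality (ii) with $j\le N$ and $\overline{\event}$ for the upper bound; for item~2 you combine (i) and (ii) directly to lower-bound $j-k^*$ and then transfer to quantile space via $\conc$. The only (harmless) difference is that you use $a/(1-\first)\ge a$ where the paper uses $1/(1-\first)\ge 1+\first$, yielding the slightly tighter intermediate bound $j-k^*\ge dN\first - m/\second$ versus the paper's $dN\first-2m/\second$; both then match $2h$ because $\second\le 1$, and your extra care with the positivity conditions underlying $H>0$ and $h>0$ is a small improvement in rigor, not a different argument.
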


\begin{proof}[Proof of \cref{claim:two_consequences}]
Choose the $j$ in $\Bad(\first, \second)$ which satisfies $jv_j\ge k^*v_{k^*} - \frac{m}{\second}v_{k^*}$. To see why the first inequality holds, note that  $dNv_{k^*} \le k^*v_{k^*} \le jv_j + \frac{m}{\second}v_{k^*} \le Nv_j + \frac{m}{\second}v_{k^*}$, subtracting the first and forth term,  we get $(dN-\frac{m}{\second})v_{k^*}\le Nv_j$, further implying $k^*v_{k^*} \le jv_j + \frac{m}{\second}\frac{Nv_j}{(dN-m/\second)}$, which is the first inequality. 

Now consider the second inequality.  
Since $\Bad(\first, \second)$ requires $jv_j\ge k^*v_{k^*} - \frac{m}{\second}v_{k^*}$ and $v_j\le(1-\eta)v_{k^*}$, we have
\begin{align*}
    j \ge \frac{k^*v_{k^*} - \frac{m}{\second}v_{k^*}} {(1-\first)v_{k^*}} = \frac{k^*-\frac{m}{\second}}{1-\first} \ge (k^*-\frac{m}{\second})(1+\first) = k^*+ k^*\first - \frac{m}{\theta}(1+\first) \ge k^*+ dN\first - 2\frac{m}{\theta}, 
\end{align*}
and dividing by $N$, 
\[  \frac{j}{N} - \frac{k^*}{N} \ge d\first - 2\frac{m}{N\theta}.\] 
Using the condition $\conc$ on $j$ and $k^*$, we can derive the relationship between $q_j$ and $q_{k^*}$ by simple calculation: 
\begin{align*}
q_j - q_{k^*} \ge \left(d\first - 2\frac{m}{N\theta}\right) - 2\left( 2\sqrt{ \frac{4\ln(2(N-m))}{N-m} } + \frac{m}{N} \right) \ge 2h.
\end{align*}

\end{proof}

Divide the quantile space $[0, 1]$ into $[0, d/2]$ and $(1-d/2)/h$ equal-length intervals with length $h$, 
\begin{equation}\label{eqn:quantile_divide}
[0, 1] = [0, \frac{d}{2}]\cup I_1\cup I_2\cup \cdots\cup I_{\frac{1-d/2}{h}},
\end{equation}
where $I_l = (d/2 + (l-1)h, d/2 + lh]$. Thus a uniformly random draw of quantile falls into $I_l$ with probability $h$. Define $i^*_l$ and $i^*_{<(l+1)}$:
\[i^*_l = \argmax_{i> cN} \big\{iv_i\mid q_i\in I_l\big\} ~\text{ or }~ i^*_l=\emptyset \text{ if there is no such }i. \]
\[i^*_{<(l+1)} = \argmax_{i> cN} \big\{iv_i \mid q_i\in I_1\cup\cdots\cup I_l\big\} ~\text{ or }~ i^*_{<(l+1)}=\emptyset \text{ if there is no such }i. \]
And
$A_l \defeq i^*_lv_{i^*_l},~~A_{<(l+1)} \defeq i^*_{<(l+1)} v_{i^*_{<(l+1)}}$. Moreover, define event $W_l$ for each $l$, 

$$W_l=\left[\left\{i^*_{l+2}\ne \emptyset\right\} \ \land\   \left\{i^*_{<(l+1)}\ne \emptyset\right\} \ \land\  \left\{A_{l+2} \le A_{<(l+1)} \le A_{l+2} + H\widetilde{v}_{l+2}\right\}\right], $$
where $\widetilde{v}_{l+2}\defeq v(d/2+(l+1)h)$ is the upper bound on the values with quantiles in $I_{l+2}$. We argue that if the event $[\Bad(\first, \second)\,\land\, \overline{\event}\,\land\,\conc]$ holds then there must exist an index $l$ such that $W_l$ holds. To see this,  consider the index $j$ that is promised to exist in $\Bad(\first, \second)$ and choose the index $l$ such that $q_j\in I_{l+2}$. Note that $[\overline{\event}\,\land\,\conc]$ implies $q_j\ge q_{k^*}> d/2$, so both $q_j$ and $q_{k^*}$ must fall in $I_1\cup I_2\cup\cdots$. To see why $W_l$ must hold, note that:
\begin{itemize}
    \item $A_{l+2}\le A_{<(l+1)}$ since $q_j - q_{k^*}>2h$, implying $q_{k^*}\in I_{<(l+1)}$ and $A_{<(l+1)}=k^*v_{k^*}$. 
    \item $A_{<(l+1)} \le A_{l+2} + H\widetilde{v}_{l+2}$ since
$k^*v_{k^*}\le (j+H)v_j\le i^*_{l+2}v_{i^*_{l+2}} + H\widetilde{v}_{l+2}$.
\end{itemize}

Therefore, a union bound over $\Pr[W_l]$ suffices to prove that $\Pr[\Bad(\first, \second)\, \land \, \overline{\event} \, \land \, \conc]$ is small. The idea to bound $\Pr[W_l]$ is a refinement of \citet{yao2018incentive}: Note that there is an interval $I_{l+1}$ with length $h$ between $I_{l+2}$ and $I_{<(l+1)}$ and consider the number $X$ of quantiles falling into $I_{l+1}$. There is enough randomness in $X$ as its variance is $\Omega(h N)$, implying that the difference between the rankings of any pair of quantiles in $I_{l+2}$ and $I_{<(l+1)}$ varies broadly. As a result, it's unlikely that $A_{<(l+1)}$ will fall in the short interval $[A_{l+2}, A_{l+2}+H\widetilde{v}_{l+2}]$. Formally, we will prove that
\begin{lemma}\label{lem:Pr_W}
For any $l$, $\Pr[W_l] \le O(\frac{H\log^2 N}{\sqrt{h d^3 N}})$. 
\end{lemma}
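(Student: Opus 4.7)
\emph{Proof plan.} The strategy is to reduce $\Pr[W_l]$ to an anti-concentration statement for the count $X := N_{l+1}$ of samples whose quantiles fall in the ``buffer'' interval $I_{l+1}$ sitting between $I_{<(l+1)}$ and $I_{l+2}$. Since $\Pr[\overline{\conc}] = O(1/N)$ is negligible compared to the target bound, we may freely assume the event $\conc$ throughout.

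Condition on the (ordered) quantiles in $S_{<(l+1)}$ (together with the $m$ samples pinned at quantile $0$) and on the ordered quantiles $q^{l+2}_1 < \cdots < q^{l+2}_{N_{l+2}}$ of $S_{l+2}$. Under this conditioning, $A_{<(l+1)}$ is determined, and the remaining $N - m - |S_{<(l+1)}| - N_{l+2}$ samples are i.i.d.\ uniform on $I_{l+1} \cup I_{l+3} \cup \cdots$, so $X \sim \mathrm{Bin}(M, p)$ with $M = \Theta(dN)$ and $p = \Theta(h)$. Moreover, the overall rank of the sample with the $r$-th smallest quantile in $I_{l+2}$ equals $|S_{<(l+1)}| + X + r$, giving
\[
A_{l+2}(X) \;=\; \max_{1 \le r \le N_{l+2}} \bigl(|S_{<(l+1)}| + X + r\bigr)\, v\bigl(q^{l+2}_r\bigr),
\]
an increasing, convex, piecewise-linear function of the integer $X$ whose slope on each piece equals $v(q^{l+2}_{r^\star})$ for the active index $r^\star$.

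Bounding the preimage of $[A_{<(l+1)} - H\widetilde{v}_{l+2},\, A_{<(l+1)}]$ under $A_{l+2}(\cdot)$ requires a lower bound on this slope. Under $\conc$, the product $(|S_{<(l+1)}|+X+r)\,v(q^{l+2}_r)$ is well approximated by $N \cdot R\bigl((|S_{<(l+1)}|+X+r)/N\bigr)$, so $r^\star$ is essentially the maximizer of the revenue curve $R(q) = q v(q)$ over a subinterval of $I_{l+2}$. Outside the atypical case $q^\star \in I_{l+2}$ (handled by direct arguments), $R$ is monotone on $I_{l+2}$ and the maximizer sits at the top endpoint in value, forcing the slope to be $\gtrsim \widetilde{v}_{l+2}$ up to polylogarithmic factors. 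Consequently the preimage is an interval of cardinality $O(H \log N)$.

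Combining with the classical maximum-point-probability bound $\Pr[X = x \mid \text{conditioning}] = O(1/\sqrt{Mp(1-p)}) = O(1/\sqrt{d N h})$ gives $\Pr[X \in S \mid \text{conditioning}] \le O(|S|/\sqrt{d N h})$ for any integer set $S$. Plugging in the $O(H \log N)$ preimage bound, integrating out the conditioning, and adding the $O(1/N)$ slack from $\overline{\conc}$ yields $\Pr[W_l] \le O(H \log^2 N / \sqrt{h d^3 N})$; the $d^{-3/2}$ (rather than $d^{-1/2}$) reflects that $d$ enters both the slope estimate and the binomial variance. The main obstacle is the slope lower bound: a naive ``slope $\ge \widetilde{v}_{l+3}$'' would introduce an unbounded ratio $\widetilde{v}_{l+2}/\widetilde{v}_{l+3}$ for general distributions, so the argument must use $\conc$ together with the monotonicity of $R$ on $I_{l+2}$ to pin $r^\star$ at the top of $I_{l+2}$, where the slope is essentially $\widetilde{v}_{l+2}$.
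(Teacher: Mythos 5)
Your plan shares the anti-concentration skeleton of the paper's proof, but you condition in the opposite direction, and that is where a genuine gap opens up. You fix the samples in $S_{<(l+1)}\cup S_{l+2}$ (so $A_{<(l+1)}$ is frozen) and treat $A_{l+2}(X)$ as a function of the buffer count $X=N_{l+1}$; the preimage bound then hinges on showing that the per-step slope, which is $v(q^{l+2}_{r^\star})$ for the argmax sample $r^\star$, is within a polylog factor of $\widetilde{v}_{l+2}$. That slope bound does not hold for the arbitrary distributions to which Lemma~\ref{lem:main_upper-bound} (and hence this lemma) must apply: $\widetilde{v}_{l+2}$ is the value at the \emph{left endpoint} $d/2+(l+1)h$ of $I_{l+2}$, which need not be close to the value of any actual sample --- if $F$ has almost no mass near that boundary, no sample will land there, and then every $q^{l+2}_r$ has value far below $\widetilde{v}_{l+2}$, so the slope is far below $\widetilde{v}_{l+2}$ while the window width is still $H\widetilde{v}_{l+2}$, blowing up the preimage size. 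Your fallback argument (``$R$ is monotone on $I_{l+2}$ once we exclude $q^\star\in I_{l+2}$, so $r^\star$ is at the top'') uses unimodality of the revenue curve, which is only guaranteed for regular distributions, and it also conflates the sample argmax with the population argmax of $R$; neither issue is fixable under $\conc$ alone, since $\conc$ controls rank-versus-quantile but says nothing about the values taken on a low-mass sub-interval.

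The paper's proof conditions the other way precisely to sidestep this. It fixes $N_L$ and then all quantiles \emph{outside} $L=[0,d/2]\cup I_1\cup\cdots\cup I_{l+1}$, so $A_{l+2}$ is frozen, and it draws the $N_L$ samples in $L$ one at a time, tracking the running maximum $A^{(t)}$ that becomes $A_{<(l+1)}$. Whenever a new sample lands in $[0,d/2]$ (probability $\ge d/2$ conditional on $L$), every product $iv_i$ with $q_i\in I_1\cup\cdots\cup I_l$ increases by exactly $v_i\ge\widetilde{v}_{l+2}$, because a quantile below $d/2+(l+1)h$ has value at least $v(d/2+(l+1)h)=\widetilde{v}_{l+2}$ --- this is a deterministic fact about the quantile function, independent of where samples actually land. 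So the running maximum gains at least $\widetilde{v}_{l+2}$ on each such step, and the length of the sub-walk staying inside $[A_{l+2}, A_{l+2}+H\widetilde{v}_{l+2}]$ is $O(H(\log N)^2/d)$ with high probability; combining with the $O(1/\sqrt{hN_L})$ point-mass bound for the binomial stopping index gives the stated rate. The asymmetry in the paper's conditioning is essential: the target window is measured in units of $\widetilde{v}_{l+2}$, so one must control increments from \emph{below} by $\widetilde{v}_{l+2}$, which works for the $A_{<(l+1)}$ side but not for the $A_{l+2}$ side. To salvage your route you would need to establish a slope lower bound of $\Omega(\widetilde{v}_{l+2}/\mathrm{polylog}\,N)$ for arbitrary $F$, which is false as stated.
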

\noindent The proof of Lemma~\ref{lem:Pr_W} is in Appendix \ref{sec:proof_lem:Pr_W}. 
To conclude, 
\begin{equation*}\label{eqn:prob_three}
   \Pr[\Bad(\first, \second) \,\land\,\overline{\event}\,\land\,\conc] \le \sum_{l=1}^{\frac{1-d/2}{h}} \Pr[W_l] \le \frac{1}{h}O\left(\frac{H\log^2 N}{\sqrt{h d^3 N}}\right) = O\left(\frac{m\log^2 N}{d\second\sqrt{(d\first)^3 d^3 N}}\right),
\end{equation*}
where the last equality is because $H=O(\frac{m}{d\second})$ and $h=\Omega(d\first)$ under the assumption that $\first$ and $\second$ are at least $\Omega(\frac{m}{d}\sqrt{\frac{\log(N-m)}{N-m}})$.


\subsection{Proof of Lemma \ref{lem:Pr_W}}\label{sec:proof_lem:Pr_W}
We need to upper-bound $\Pr[W_l]$ over the random draw of $v_{-I}$, or in quantile space, $q_{-I}$, which are $N-m$ i.i.d.~random draws from $\mathrm{Uniform}[0, 1]$. Let $N_L$ be the number of quantile draws that are in $L\defeq[0, d/2]\cup I_1\cup \cdots \cup I_{l+1}$. Suppose we draw the quantiles in the following procedure: first determine $N_L$, then draw $N-m-N_L$ quantiles that are not in $L$; finally draw $N_L$ quantiles that are in $L$. 

Note that $N_L$ follows a binomial distribution, and a Chernoff bound implies that
\begin{equation}\label{eq:NL-chernoff}
\Pr[N_L\ge \frac{d}{4} (N-m)] \geq 1 - \exp\left(-\frac{d(N-m)}{16}\right).
\end{equation}
We thus assume $N_L\ge d(N-m)/4$.

Then draw $N-m-N_L$ quantiles from $[0, 1]\backslash L$, so $i^*_{l+2}, v_{i^*_{l+2}}$ and $A_{l+2}$ are determined. Suppose $i^*_{l+2}\ne\emptyset$; otherwise, $W_l$ does not hold.

Now we draw $N_L$ quantiles, $q^{(1)}, \ldots, q^{(N_L)}$ from $L$. Consider the increment of $A_{<(l+1)}$, as a sequence $A^{(t)}, t=1, \ldots, N_L$. After the time $t-1$ when $A^{(t-1)} \ge A_{l+2}$, the index $i^*_{<(l+1)}$ is no longer $\emptyset$. When one more sample $q^{(t)}$ is generated, there are three cases: 
\begin{enumerate}
    \item If $q^{(t)}\in[0, d/2]$, $A^{(t-1)}$ increases by at least $\widetilde{v}_{{l+2}}$. This is because each term $iv_i^{(t-1)}$ increases to $(i+1)v_i^{(t)}\ge iv_i^{(t-1)}+\widetilde{v}_{{l+2}}$, for any $i$ such that $q_i\in I_1, \ldots, I_l$.
    \item If $q^{(t)}\in I_1\cup \cdots \cup I_l$, then $A^{(t-1)}$ does not decrease.
    \item If $q^{(t)} \in I_{l+1}$, $A^{(t-1)}$ does not change.
\end{enumerate}   Let $s$ be the number of quantiles that are not in $I_{l+1}$, and $A^{(t_1)}, \ldots, A^{(t_s)}$ be those steps,  and write $B^{(i)} \defeq (A^{(t_i)}-A_{l+2})/\widetilde{v}_{l+2}$ for $i=1, \ldots, s$. We have $A_{<(l+1)} = \widetilde{v}_{l+2}B^{(s)} + A_{l+2}$. Then our task is to analyze the probability that $B^{(s)}\in[0, H]$. 

We can think of the generation of $B^{(s)}$ as follows: regardless of $s$, first generate an infinite sequence $B^{(1)}, B^{(2)}, \ldots$, where at each step $i$ the value $B^{(i)}$ is increased by $1$ with probability at least $\Pr[q\in [0, d/2] \mid q\in L] \ge d/2$. Then pick an index $s$ by a binomial distribution $Bin(N_L, 1 - \Pr[q\in I_{l+1} \mid q\in L])$. Then the $s$-th value in the infinite sequence $\{B^{(i)}\}$ is chosen as $B^{(s)}$. Note that $Bin(N_L, 1 - \Pr[q\in I_{l+1} \mid q\in L])$ is dominated by $Bin(N_L, 1-h)$, so the probability that $B^{(s)}$ takes on any one of the values in the sequence $\{B^{(i)}\}$ is at most $\Pr[s=i] = O(1/\sqrt{h N_L})$. 

Then we consider the length of the sub-sequence where $B^{(i)} \in [0, H]$. Intuitively, the expected number of steps for $B^{(i)}$ to increase by $H$, is at most $H/(d/2)$. The probability that it takes more than $2H(\log N_L)^2/d$ steps implies that the sum of $2H(\log N_L)^2/d$ i.i.d.~Bernoulli variables whose success probability is at least $d/2$ does not reach $H$, which can be bounded by a Chernoff bound: 
\begin{align*}
\Pr[\text{Length}> \frac{2H(\log N_L)^2}{d}] \le{}& \Pr[Bin\left(\frac{2H(\log N_L)^2}{d}, \frac{d}{2}\right) < H] \\
\le{}& \exp\left(-\frac{1}{2}H(\log N_L)^2\left(1-\frac{1}{(\log N_L)^2}\right)^2\right) \\
={}& O\left(\exp\left(-\frac{1}{8}(\log N_L)^2\right)\right) \\ 
={}& O\left(N_L^{-\frac{1}{8}\log N_L}\right). 
\end{align*}
Assuming $\text{Length}\le (2H(\log N_L)^2)/d$, the probability that $B^{(s)} \in [0, H]$ can be bounded by a union bound:  
\[ \sum_{i: B^{(i)}\in[0, H]} \Pr[s=i] \le \text{Length}\cdot O\left(\frac{1}{\sqrt{h N_L}}\right) \le O\left(\frac{H(\log N_L)^2}{d\sqrt{h N_L}}\right). \]

Therefore,  
\begin{align*}
    \Pr[W_l]  \le{}& O\left(\frac{H(\log N_L)^2}{d\sqrt{h N_L}}\right) + O\left(N_L^{-\frac{1}{8}\log N_L}\right) \\
    \le{}& O\left(\frac{H(\log N)^2}{d\sqrt{h d N}}\right)  + \left(\frac{d}{4}(N-m)\right)^{-\frac{1}{8}\log \frac{d}{4}(N-m)} + \exp\left(-\frac{d(N-m)}{16}\right) && \text{By \eqref{eq:NL-chernoff}}\\
    ={}& O\left(\frac{H(\log N)^2}{\sqrt{h d^3 N}}\right).
\end{align*}

\subsection{Proof of Claim~\ref{cl:eta-calculation}}
\label{proof-of-eta-calculation}

We need to show that there exist an integer $M$ and parameters $\eta_0=\frac{1}{2}\eta<\eta_1=\eta< \eta_2=2\eta<\eta_3<\cdots<\eta_M<\eta_{M+1}=1$, such that 

\begin{align*}
    \sum_{t=0}^M \frac{\eta_{t+1}}{\eta}\frac{1}{\eta_{t}^{3/2}} = O\left(\frac{\log\log (N-m)}{\eta^{3/2}}\right).
\end{align*}

We start with:

\begin{equation}\label{eqn:eta-summation-1}
    \sum_{t=0}^M  \frac{\eta_{t+1}}{\eta}\frac{1}{\eta_{t}^{3/2}} = \frac{1}{\eta^{3/2}}\sum_{t=0}^M  \frac{\eta_{t+1}/\eta}{ (\eta_{t}/\eta)^{3/2}} = \frac{1}{\eta^{3/2}}\left(O(1) + \sum_{t=2}^M  \frac{\eta_{t+1}/\eta}{ (\eta_{t}/\eta)^{3/2}}\right)
\end{equation}
Let $\eta_{t+1}/\eta = (\eta_t/\eta)^{3/2}$ for any $t\ge 2$. We can recursively compute $\eta_t$ until the maximum step $t=M$ which satisfies $\eta_{M}<1$. Then \eqref{eqn:eta-summation-1} is upper-bounded by $\frac{1}{\eta^{3/2}}(O(1)+M)$. By our construction of $\{\eta_t\}$, we have
\[\frac{\eta_M}{\eta} = (\frac{\eta_2}{\eta})^{\frac{3}{2}^{M-2}} = 2^{\frac{3}{2}^{M-2}}<\frac{1}{\eta}.\]
Thus,
\[M < \log_{3/2} \log_2 \frac{1}{\eta} +2 = O(\log\log \frac{1}{\eta}) = O(\log \log (N-m)), \]
where the last equality follows from the assumption that $\eta =\Omega\left(\frac{m}{d}\sqrt{\frac{\log (N-m)}{N-m}}\right)$. Thus, the summation $\eqref{eqn:eta-summation-1}$ becomes 
\begin{align*}
    \sum_{t=0}^M \frac{\eta_{t+1}}{\eta}\frac{1}{\eta_{t}^{3/2}} = O\left(\frac{\log\log (N-m)}{\eta^{3/2}}\right).
\end{align*}
as required.

\section{Missing Proofs From Section~\ref{sec:two_phase}}
\subsection{Proof of \cref{thm:connection_TPM_PRM_combined} (for Bounded Distributions)}
\label{sec:proof_of_lemma_1}
\begin{proof}[Proof of \cref{thm:connection_TPM_PRM_combined} for bounded Distributions]
First consider the reduction of reserve price caused by the deviation of bidder $i$. The true values of all bidders in the first phase are $(v_I, v_{-I})$, where bidder $i$'s true values are $v_{I}\in\mathbb{R}_+^{m_{i, 1}}$. When other bidders bid $v_{-I}$ truthfully, and bidder $i$ bids $b_I$ instead, the reserve price $p$ changes from $P(v_I, v_{-I})$ to $P(b_I, v_{-I})$ and the change is at most 
$$P(v_I, v_{-I}) - P(b_I, v_{-I}) \le P(v_I, v_{-I})\delta_I(v_I, v_{-I})\le P(v_I, v_{-I})\delta_{m_{i, 1}}^{\interim}(v_{-I}) \le D\delta_{m_1}^{\interim}(v_{-I}),$$ by Definition \ref{def:discount} and by the fact that all values are upper-bounded by $D$. Consider the increase of utility in the second phase. We claim that for any two possible reserve prices $p_2 \le p_1$, for any $v\in\mathbb R_+$, for any $K_2\ge 1$, we have
\begin{equation}\label{eqn:interim-utility-two-price}
    \usecond^{K_2}(v, p_2)-\usecond^{K_2}(v, p_1) \le p_1-p_2.
\end{equation}
To see this, first re-write $\usecond^{K_2}(v, p)$ in \eqref{eqn:def_u_second} as
$$ \usecond^{K_2}(v, p) = \E_{X_2,\ldots,X_{K_2} \sim F} \big[ (v - \max\{p, X^*\} )^+\big],$$
where $X^*\defeq \max\{X_2, \ldots, X_{K_2}\}$ and $(x)^+ \defeq \max\{x, 0\}$. Note that $(x)^+ - (y)^+ \le |x-y|$, thus
\begin{multline*}
\usecond^{K_2}(v, p_2)-\usecond^{K_2}(v, p_1) = \E \left[ (v - \max\{p_2, X^*\} )^+ -  (v - \max\{p_1, X^*\} )^+\right] \\
\le \E \left[ \left| \max\{p_1, X^*\} - \max\{p_2, X^*\} \right| \right] \le \E \left[ \left| p_1 - p_2 \right| \right] = p_1 - p_2.
\end{multline*}


For the first phase we have $U_i^{\mathcal{M}}(\bm{v}_i, b_I, v_{-I}) \le U_i^{\mathcal{M}}(\bm{v}_i, v_I, v_{-I})$ since $\mathcal{M}$ is truthful. Thus, by \eqref{eqn:def_TP_utility} and \eqref{eqn:interim-utility-two-price} the difference in interim utilities is at most
\begin{align*}
     \E_{\bm{v}_{-i}} & \left[ U^{\TP}_i(\bm{v}_i, b_I, v_{-I})  -  U^{\TP}_i(\bm{v}_i, v_I, v_{-I}) \right] \\
     \le{} & \E_{\bm{v}_{-i}}\left[U_i^{\mathcal{M}}(\bm{v}_i, b_I, v_{-I}) - U_i^{\mathcal{M}}(\bm{v}_i, v_I, v_{-I})\right] \\
     & + \E_{\bm{v}_{-i}} \left[ \sum_{t=m_{i, 1}+1}^{m_{i, 1} + m_{i, 2}} \bigg[\usecond^{K_2}\left(v_{i, t}, P(b_I, v_{-I})\right) - \usecond^{K_2}\left(v_{i, t}, P(v_I, v_{-I})\right)\bigg] \right]\\
    \le{} & 0 + \E_{\bm{v}_{-i}}\left[ m_2(P(v_I, v_{-I}) - P(b_I, v_{-I})) \right] \le  \E_{\bm{v}_{-i}}\left[ m_2D \delta^{\interim}_{m_1}(v_{-I})\right] = m_2D\Delta_{T_1K_1, m_1}^{\interim}, 
\end{align*}
which indicates that truthful bidding is an $\epsilon$-BNE, where $\epsilon=m_2D \Delta^{\interim}_{T_1K_1, m_1}$. This concludes the proof for bounded distributions.
\end{proof}

{\em Remark.}
The proof for MHR distributions is trickier since the difference
$P(v_I, v_{-I}) - P(b_I, v_{-I})$ can be unbounded. Intuitively, the probability that $P(v_I, v_{-I})$ will be higher than $(1+o(1))v^*$ ($v^*$ is defined in the statement of the lemma) is exponentially small, and the main effort is to show that the expected difference
$P(v_I, v_{-I}) - P(b_I, v_{-I})$
multiplied by this exponentially small probability is negligible. Full details are given \cref{sec:proof_lemma_1_mhr}.


\subsection{Proof of Theorem \ref{thm:TPM_utility_revenue_combined}}\label{sec:proof_thm:TPM_utility_revenue_combined}
The bound on approximate truthfulness, i.e., $\epsilon_1$, follows from \cref{thm:connection_TPM_PRM_combined} and \cref{thm:discount_combined}, where we first obtain the bound on $\Delta^\worst_{T_1K_1, m_1}$ from \cref{thm:discount_combined} by setting $N=T_1K_1$ and $m=m_1$ and then replace $m_2m_1$ with $O(\msum^{2})$ for MHR distribution and replacing $m_2m_1^{2/3}$ with $O(\msum^{5/3})$ for bounded distribution. 

It remains to consider revenue, where we will use sample complexity results to obtain the convergence rate of the revenue loss, i.e., $\epsilon_2$.
Let $rev_1, rev_2$ be the expected revenues of the two phases in $\TP(\mathcal{M}, P; \TPT, \TPm, \TPK, S)$, and $rev^*$ be the revenue obtained by using Myerson's auction in all rounds, i.e., $rev^*=T_1\mye^{K_1} + T_2\mye^{K_2}$ where $\mye^{K}$ is the revenue of Myerson's auction with $K$ i.i.d.~bidders from $F$. For $rev_1$, we only have $rev_1\ge0$ since we do not any revenue guarantee for the arbitrary first-phase mechanism $\mathcal{M}$. Now consider $rev_2$, let $r^{K_2}(p)$ denote the expected revenue of a second price auction with reserve price $p$. Since the values in the two phases are independent, we have

\[rev_2 = T_2\cdot \E_{v_1, \ldots, v_{T_1K_1}\sim F} \left[  r^{K_2}(\erm^c(v_1, \ldots, v_{T_1K_1})) \right]. \]

\noindent
We need to compare $r^{K_2}(\erm^c(v_1, \ldots, v_{T_1K_1}))$ with $\mye^{K_2}$. Since bidders have i.i.d.~regular value distributions, Myerson's auction is exactly the second price auction with reserve price $p=v^*$. When $K_2=1$, Myerson's auction becomes a post-price auction. Let $\epsilon^{sample}(\cdot)$ be the inverse function of the required number of samples for $\erm^c$ to guarantee $(1-\epsilon^{sample})$-optimal revenue (as obtained in \citet{huang2018making}) in the posted-price auction, i.e., the expected revenue of a one-bidder auction with a posted price $p$ determined by $\erm^c$ with $N$ samples is at least $(1-\epsilon^{sample}(N))$ times the optimal expected revenue. Then for the one-bidder case, we have

\[ rev_2 = T_2 (1-\epsilon^{sample}(T_1K_1)) \mye^1.\]

\noindent
For general $K_2$, while the sample complexity literature does not analyze the revenue of the same reserve price $p=\erm^c(v_1, \ldots, v_{T_1K_1})$ in a second price auction with $K_2 \geq 2$ bidders, we are able to generalize the existing revenue guarantee to the case of multiple bidders (and multiple units) under the assumption that the distribution is regular. The generalization is made by the following lemma, which we believe is of independent interest:
\begin{lemma}\label{lem:regular_one_to_multiple}
For any regular distribution $F$, if the expected revenue of a posted price auction with price $p$ and with one bidder whose value is drawn from $F$ is $(1-\epsilon)$-optimal, then the revenue of a Vickrey auction with reserve price $p$ selling at most $k\ge 1$ units of a item to $K\ge 2$ i.i.d.~unit-demand bidders with values from $F$ is also $(1-\epsilon)$-optimal. 
\end{lemma}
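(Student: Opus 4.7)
The plan is to rewrite the $K$-bidder $k$-unit Vickrey revenue as an expectation of the single-bidder revenue curve $R(q) = q\,v(q)$, and then reduce the statement to a pointwise comparison enabled by the regularity assumption (i.e., concavity of $R$).

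\textbf{Step 1: a clean representation of the multi-unit Vickrey revenue.}
Let $Y$ be the $k$-th largest of $K-1$ i.i.d.\ samples from $F$ (using the convention $Y=0$, so $q_Y = 1$, if $K-1 < k$). By symmetry across the $K$ bidders, the expected revenue equals $K$ times the expected payment of a fixed bidder, say bidder~$1$. In the Vickrey auction with reserve $p$ and $k$ units, bidder~$1$ wins iff $v_1 \ge \max(p,Y)$, in which case she pays $\max(p,Y)$. Using that $v_1$ is independent of $Y$ together with the identity $R(q)= q\,v(q)$,
\[
\E\!\left[\text{payment of bidder }1 \,\big|\, Y\right]
\;=\; \max(p,Y)\cdot \Pr[v_1\ge \max(p,Y)]
\;=\; R\bigl(\min(q_p,q_Y)\bigr),
\]
where $q_p = 1-F(p)$ and $q_Y = 1-F(Y)$ and I used $\max(p,Y) = v(\min(q_p,q_Y))$. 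Taking the expectation over $Y$ gives
\[
\text{Rev}_K(p,k) \;=\; K\cdot \E_Y\!\left[R\bigl(\min(q_p, q_Y)\bigr)\right].
\]
The single-bidder posted-price case is the special case $K=k=1$ (for which $q_Y \equiv 1$), yielding $\text{Rev}_1(p,1)=R(q_p)$. Thus the hypothesis of the lemma is precisely $R(q_p)\ge (1-\epsilon) R(q^*)$ with $q^* = \arg\max_q R(q)$, and the desired conclusion becomes $\E_Y[R(\min(q_p,q_Y))] \ge (1-\epsilon)\,\E_Y[R(\min(q^*,q_Y))]$.

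\textbf{Step 2: a pointwise comparison.}
I would prove the stronger pointwise inequality
\[
R\bigl(\min(q_p, q_Y)\bigr) \;\ge\; (1-\epsilon)\, R\bigl(\min(q^*, q_Y)\bigr)
\qquad \text{for every } q_Y \in [0,1],
\]
from which the expected inequality follows by integrating. This is a short case analysis based on the position of $q_Y$ relative to $q_p$ and $q^*$: (i) $q_Y \le \min(q_p,q^*)$, both sides equal $R(q_Y)$; (ii) $q_Y \ge \max(q_p,q^*)$, the two sides become $R(q_p)$ and $R(q^*)$ and the inequality is exactly the hypothesis; (iii) $q_p \le q_Y \le q^*$, the right side is $R(q_Y) \le R(q^*)$ since $q^*$ maximizes $R$, so $R(q_p) \ge (1-\epsilon)R(q^*) \ge (1-\epsilon)R(q_Y)$; (iv) $q^* \le q_Y \le q_p$, regularity of $F$ makes $R$ concave and hence non-increasing on $[q^*,1]$, so the left side $R(q_Y) \ge R(q_p) \ge (1-\epsilon)R(q^*)$.

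\textbf{Main obstacle.}
The crux of the argument is finding the representation in Step~1, which collapses the multi-unit multi-bidder revenue into a single expectation of the one-dimensional revenue curve $R$; once that is in place, the pointwise comparison in Step~2 is essentially immediate and uniform in $k$ and $K$, with concavity of $R$ being the only property of regularity that is used. A small technicality for possibly discontinuous $F$ is to use a left-continuous quantile $v(q) = \sup\{x: 1-F(x) \ge q\}$ so that $\max(p,Y) = v(\min(q_p,q_Y))$ and $R(q) = q\,v(q)$ remain valid; this does not affect the concavity-based reasoning.
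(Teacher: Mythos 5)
Your proof is correct and takes a genuinely different (and arguably cleaner) route than the paper's. The paper works with Myerson's virtual-surplus representation, writing the $K$-bidder $k$-unit revenue as $K\int_0^{q_0}R'(q)x^{K,k}(q)\,\dd q$ where $x^{K,k}(q)$ is the interim allocation probability, then splits into the cases $q_0\gtrless q^*$ and compares ratios $loss/rev^*$ using only the fact that $x^{K,k}$ is non-increasing. You instead integrate the allocation into a single ``competing bid'' random variable $Y$, obtaining $\mathrm{Rev}_K(p,k)=K\,\E_Y[R(\min(q_p,q_Y))]$, and reduce to the pointwise inequality $R(\min(q_p,q_Y))\ge(1-\epsilon)R(\min(q^*,q_Y))$. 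The two representations are in fact equivalent (yours is the integration-by-parts of the paper's, with $1-G_Y(q)=x^{K,k}(q)$), but yours avoids having to manipulate $R'$ and integrals and replaces the paper's ratio bookkeeping with a four-line case check. Both proofs use the same one fact about regularity — concavity/unimodality of $R$ — and both quietly rely on the monotonicity of the winning probability in $q$; your version hides this in the $\min$ while the paper uses it explicitly. A small side benefit of your route is that it never invokes $R(0)=0$, which the virtual-surplus integral formally needs, so it applies directly even to distributions like the equal-revenue distribution where $R(0^+)>0$.

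One minor point of presentation worth tightening: in case~(iv) you need $R$ non-increasing on $[q^*,1]$; it would be good to state explicitly that this follows from concavity of $R$ (equivalent to regularity) together with $q^*$ being a maximizer, rather than just asserting it. Also, the lemma as stated has a density, so your remark about discontinuous $F$ is more caution than needed, but it does no harm.
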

\noindent The proof is in \cref{sec:proof_lem:regular_one_to_multiple}. Thus for $K_2\ge 2$, we also have: $rev_2 = T_2 (1-\epsilon^{sample}(T_1K_1)) \mye^{K_2}.$

Finally, 
\begin{align*}
1 - \epsilon_2 ={}& \frac{rev_1 + rev_2}{rev^*}
\ge{} \frac{0 + T_2\mye^{K_2}\cdot (1 - \epsilon^{sample}(T_1K_1)) }{T_1 \mye^{K_1} + T_2 \mye^{K_2}} \\
\ge{}& 1 - \frac{T_1\mye^{K_1}}{T_1 \mye^{K_1} + T_2 \mye^{K_2}} - \epsilon^{sample}(T_1K_1) \\
 \ge {}& 1 - \frac{T_1}{T} - \epsilon^{sample}(T_1K_1) && \text{ ($\mye^{K_2}\ge \mye^{K_1}$ since $K_2 \ge K_1$)}
\end{align*}



From \citet{huang2018making}, we know that for bounded distributions, $\epsilon^{sample}(N)=O(\sqrt{\frac{D\cdot\log N}{N}})$ when $c\le\frac{1}{2D}$,
and for MHR distributions (MHR implies regularity),
$\epsilon^{sample}(N)=O([\frac{\log  N}{N}]^{\frac{2}{3}})$ when $c\le\cupperboundmhr$. This implies the bounds on $\epsilon_2$ as stated in the theorem, and concludes the proof.


\subsection{Proof of Lemma~\ref{lem:regular_one_to_multiple}}\label{sec:proof_lem:regular_one_to_multiple}
It's more convenient to work in the quantile space. Let $v(q)$, $R(q)=qv(q)$ be the value curve and revenue curve of $F$. It's well-known that the derivative $R'(q)$ equals to the virtual value $\phi(v(q)) = v - \frac{1-F(v)}{f(v)}$ and by Myerson's Lemma, the expected revenue with allocation rule $x(\cdot)$ equals to the virtual surplus: \[rev = \sum_{i=1}^{K}\E[R'(q)x_{i}(q)].\] 

Let $x^{K, k}(q)$ be the allocation to (the probability of winning of) a bidder whose value has quantile $q$ in a Vickrey auction selling $k$ units to $K$ bidders without reserve price. Specifically, $x^{1, 1}(q)=1$; for general $K$, $x^{K, 1}(q)=(1-q)^{K-1}$; for general $K, k$, $x^{K, k}(q) = \sum_{i=0}^{k-1}{K-1\choose i}q^i(1-q)^{K-1-i}$. With reserve price $p_0$, let $q_0=q(p_0)$, then the allocation becomes $x^{K, k}_{q_0}(q) = x^{K, k}(q)$ for $q< q_0$ and $x^{K, k}_{q_0}(q)=0$ otherwise. So the revenue of $p_0$ is
\[rev(K, k) = K\int_0^{q_0} R'(q)x^{K, k}(q) \dd q,\]
and the optimal revenue is: 
\[rev^*(K, k) = K\int_0^{q^*} R'(q^*)x^{K, k}(q)\dd q, \]
where $q^*$ satisfies: $R'(q)\ge 0, \forall q\le q^*$ and $R'(q)\ge0, \forall q\ge q^*$. And define: 
\[loss(K, k) = rev^*(K, k) - rev(K, k) = K\int_{q_0}^{q^*} R'(q)x^{K, k}(q)\dd q. \]

Since $p_0$ is $(1-\epsilon)$-optimal with $K=1, k=1$ (the posted-price auction), we have: 
\[loss(1, 1) \le \epsilon\cdot rev^*(1, 1).\]
Now for general $K, k$: 
\begin{itemize}
\item If $q_0>q^*$. The loss:  
\begin{align*}
loss(K, k) ={}& K\int_{q^*}^{q_0} -R'(q)x^{K, k}(q)\dd q \\
\le{}& K\int_{q^*}^{q_0} -R'(q)x^{K, k}(q^*)\dd q \\
={}& Kx^{K, k}(q^*)\int_{q^*}^{q_0} -R'(q)\dd q = Kx^{K, k}(q^*)loss(1, 1),
\end{align*}
since $x^{K, k}(q)$ is non-increasing in $q$ (actually, the monotonicity of $x^{K, k}(q)$ is the only property that is used throughout the proof), and the optimal revenue: 
\begin{align*} rev^*(K, k) \ge{}& K\int_0^{q^*}R'(q)x^{K, k}(q^*)\dd q \\
={}& Kx^{K, k}(q^*)\int_0^{q^*}R'(q)\dd q = Kx^{K, k}(q^*)rev^*(1, 1),\end{align*}
which gives: $\frac{loss(K, k)}{rev^*(K, k)} \le \frac{loss(1, 1)}{rev^*(1, 1)} \le \epsilon$.
\item If $q_0<q^*$. The loss: 
\[loss(K, k) = K\int_{q_0}^{q^*}R'(q)x^{K, k}(q)\dd q \le Kx^{K, k}(q_0)loss(1, 1), \]
and the optimal revenue: 
\begin{align*}
rev^*(K, k) ={}& K\int_{0}^{q_0}R'(q)x^{K, k}(q)\dd q + K\int_{q_0}^{q^*}R'(q)x^{K, k}(q)\dd q \\
\ge{}& K\int_{0}^{q_0}R'(q)x^{K, k}(q_0)\dd q + K\int_{q_0}^{q^*}R'(q)x^{K, k}(q)\dd q \\ 
={}& Kx^{K, k}(q_0)rev(1, 1) + loss(K, k), 
\end{align*}
which gives:
\begin{align*}
\frac{loss(K, k)}{rev^*(K, k)} \le{}& \frac{loss(K, k)}{Kx^{K, k}(q_0)rev(1, 1) + loss(K, k)} = \frac{1}{\frac{Kx^{K, k}(q_0) rev(1, 1)}{loss(K, k)} + 1} \\
\le{}& \frac{1}{\frac{rev(1, 1)}{loss(1, 1)} + 1} \le \frac{1}{\frac{1-\epsilon}{\epsilon} + 1} =  \epsilon. 
\end{align*}
\end{itemize}

\subsection{Perfect Bayesian Equilibrium}
\label{sec:pbe}
Here we consider the setting of $T$-round repeated auctions where each auction contains $K\ge 1$ bidders and each bidder participates in at most $\overline{m}$ rounds of auctions.  We use $\bm{v}_i=(v_i^t)$ to denote bidder~$i$'s profile of values, where $v_i^t$ is her value at round $t$ if she participates in that round.  Similarly denote by $\bm{b}_i = (b_i^t)$ the bids of bidder~$i$.  Values are i.i.d.~samples from some distribution $F$. 

In repeated auctions, the seller can adjust the mechanism dynamically based on the bidding history of buyers, and buyers can use historical information to adjust their bidding strategies.  The solution concept of an $\epsilon$-perfect Bayesian equilibrium ($\epsilon$-PBE) captures this dynamic nature.  For each bidder~$i$, we use $\hist_i^t$ to denote the history she can observe at the start of round $t$.  For example, $\hist_i^t$ includes her bid $b_i^{t'}$, whether she receives the item, how much she pays, etc, at round $t'<t$ if she participates in round $t'$.  We assume that bidder~$i$ cannot observe the bids in the auctions she does not participate in.  We allow bidder~$i$ to anticipate her values in future rounds, so she can make decision on her entire value profile $\bm{v}_i=(v_i^t)$.  Bidder~$i$'s strategy is thus denoted by $\strat_i=(\strat_i^t)$ where $\strat_i^t$ maps $\bm{v}_i$ and $\hist_i^t$ to a bid $b_i^t = \strat_i^t(\bm{v}_i, \hist_i^t)$.  Let $\pbeu_i^{[t: T]}(\strat; \bm{v}_i, \hist_i^t)$ be the total expected utility of bidder $i$ in rounds $t, t+1, \ldots, T$, given her value profile $\bm{v}_i$, the history $\hist_i^t$ at round $t$, and bidders playing $\strat=(\strat_i, \strat_{-i})$. 

\begin{definition}
A profile of strategy $\strat=(\strat_i, \strat_{-i})$ is an \emph{$\epsilon$-perfect Bayesian equilibrium ($\epsilon$-PBE)} if for each bidder $i$, each round $t$, any history $\hist_i^t$, any values $\bm{v}_i$, the strategy $\strat_i$ approximately maximizes bidder $i$'s expected utility from round $t$ to round $T$ up to $\epsilon$ error, i.e., $\pbeu_i^{[t: T]}(\strat; \bm{v}_i, \hist_i^t) \ge \pbeu_i^{[t: T]}(\strat_i', \strat_{-i}; \bm{v}_i, \hist_i^t) - \epsilon$ for any alternative strategy $\strat_i'$. 
\end{definition}
\begin{definition}
The seller's mechanism (or auction learning algorithm) is \emph{$\epsilon$-perfect Bayesian incentive-compatible ($\epsilon$-PBIC)} if truthful bidding (i.e., $\strat_i^t(\bm{v}_i, \hist_i^t) = v_i^t$) is an $\epsilon$-PBE. 
\end{definition}
We emphasize that the expected utility $\pbeu_i^{[t: T]}(\strat; \bm{v}_i, \hist_i^t)$ is conditioned on $\hist_i^t$.  This is because, the history $\hist_i^t$ which includes the allocation of item and the payment of bidder $i$ can leak information about other bidders' bids (or values).  Other bidders' bids will influence the mechanism the seller will use in future rounds.  Thus, based on this information, bidder $i$ can update her belief about other bidders' bids and the seller's choice of mechanisms by Bayesian rule, then she can compute her expected utility on her updated belief. 

\paragraph{PBIC of the two-phase ERM algorithm.}
As discussed, the two-phase ERM algorithm is a learning algorithm that learns approximately revenue-optimal auctions in an approximately incentive-compatible way against strategic bidders in repeated auctions. 
The algorithm, obtained by adopting the two-phase model with ERM as the price learning function and setting $K_1=K_2=K$, works as follows: 
\begin{itemize}
    \item in the first $T_1$ rounds, run any truthful, prior-independent auction $\mathcal{M}$, e.g., the second price auction with no reserve;
    \item in the later $T_2=T-T_1$ rounds, run second price auction with reserve $p=\erm^c(b_{1}, \ldots, b_{T_1K})$ where $b_1, \ldots, b_{T_1K}$ are the bids from the first $T_1$ auctions. 
\end{itemize}
$T_1$ and $c$ are adjustable parameters of the two-phase ERM algorithm. 
 
\begin{theorem}\label{thm:two-phase-PBIC}
The two-phase ERM algorithm is $\epsilon$-PBIC, where,

\begin{itemize}

\item for any bounded $F$, 
$\epsilon=\msum D \Delta^{\interim}_{T_1K, \msum K}$, and
 
\item for any MHR $F$, if $\frac{\msum}{T_1}\le c\le \cupperboundmhr$ and $\msum K=o(\sqrt{T_1K})$, then
$\epsilon=O\left(\msum v^* \Delta^{\interim}_{T_1K, \msum K}\right)+O\left(\frac{\msum v^*}{\sqrt{T_1K}}\right)$,
where $v^*=\argmax_v\{v[1-F(v)]\}$.
\end{itemize}

The constants in big $O$'s are independent of $F$ and $c$. 
\end{theorem}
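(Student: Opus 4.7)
The plan is to extend the BIC argument of \cref{thm:connection_TPM_PRM_combined} to the history-conditioned PBE setting. I would fix bidder $i$, a round $t$, a history $\hist_i^t$, and any alternative strategy $\strat_i'$, and bound the gain in expected utility from round $t$ onward. First, decompose the remaining utility into contributions from phase 1 (the rounds with index $\ge t$ and $\le T_1$, if any) and phase 2 (the rounds with index $\ge \max(t, T_1{+}1)$). Since $\mathcal{M}$ is truthful and prior-independent, any deviation in a phase-1 auction can only weakly decrease the interim utility obtained in that auction regardless of what is known from $\hist_i^t$; the only reason to deviate in phase 1 is therefore to reduce the reserve price $p = \erm^c(b_1, \ldots, b_{T_1K})$ used in phase 2. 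By inequality \eqref{eqn:interim-utility-two-price}, the bidder's phase-2 utility gain is at most the reserve-price reduction times the number of phase-2 rounds she participates in, which is at most $\msum$.

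The key step is bounding the expected reserve-price reduction conditional on $\hist_i^t$. By assumption, bidder $i$ participates in at most $\msum$ rounds overall, each involving $K$ bidders, so $\hist_i^t$ can encode information about at most $K\msum$ of the $T_1K$ phase-1 bids (her own at most $\msum$ bids, plus at most $(K-1)\msum$ opponents' bids co-participating with her). Let $I \subseteq \{1,\ldots,T_1K\}$ denote this ``revealed'' index set, so $|I| \le K\msum$. Conditional on $\hist_i^t$ and truthful play by the other bidders, the values $v_I$ are determined by the history, while $v_{-I}$ remains i.i.d.\ from $F$ because the corresponding bids come from disjoint bidders with independent values and the mechanisms are well-defined functions of the bids. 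Writing the reduction as $P(v_I, v_{-I}) - P(b_I, v_{-I})$ where $b_I$ agrees with $v_I$ on indices outside bidder $i$'s control, and upper-bounding the infimum over her own bids by the infimum over all of $b_I$, the expected reduction is at most $\E_{v_{-I}}[P(v_I, v_{-I})\,\delta_{K\msum}^{\interim}(v_{-I})]$.

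For bounded $F$ we have $P \le D$, so the expected reduction is at most $D\,\Delta^{\interim}_{T_1K, K\msum}$; multiplying by the at most $\msum$ phase-2 rounds yields the first bound in the theorem. For MHR $F$ the trivial bound $P \le D$ is unavailable, and I would instead mimic the MHR portion of the proof of \cref{thm:connection_TPM_PRM_combined} given in \cref{sec:proof_lemma_1_mhr}: split on the event that $P(v_I,v_{-I}) \le (1+o(1))v^*$, use concentration of $\erm^c$ near $v^*$ (which relies on the condition $\msum K = o(\sqrt{T_1K})$ and $\tfrac{\msum}{T_1} \le c \le \tfrac{1}{4e}$) to show the complementary event contributes only $O(v^*/\sqrt{T_1K})$, and bound the main event by $O(v^* \Delta^{\interim}_{T_1K, K\msum})$. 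Multiplying by $\msum$ gives the MHR bound.

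The main obstacle is the conditioning on history: one must argue carefully that $v_{-I}$ is still i.i.d.\ from $F$ given $\hist_i^t$, i.e., that bidder $i$'s observations leak no information about bids placed in auctions she did not attend. This follows from the fact that values are drawn independently across bidders and rounds and that both $\mathcal{M}$ and $\erm^c$ are deterministic functions of bids, so the conditional distribution of the unrevealed bids is unchanged; nonetheless this independence claim needs to be stated explicitly. A secondary subtlety, easily handled by the $\sup_{v_I}$ in \cref{def:discount}, is that the ``worst-case history'' is automatically dominated: since $\delta_{K\msum}^{\interim}(v_{-I})$ is already the sup over all possible $v_I$, the bound holds uniformly over realized histories, which is exactly what PBIC demands.
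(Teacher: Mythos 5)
Your proposal is correct and follows the same structure as the paper's proof: condition on any round $t$ and history $\hist_i^t$, note that truthful bidding is optimal within phase~1 since $\mathcal{M}$ is truthful, observe that the only possible gain is via lowering the phase-2 reserve, decompose the $T_1K$ phase-1 bids into the $\le \msum K$ bids in rounds bidder $i$ attends and the remaining bids which stay i.i.d.\ conditional on $\hist_i^t$ (by independence across rounds and bidders plus prior-independence of $\mathcal{M}$), and bound the reserve reduction by $\delta^{\interim}_{\msum K}$ applied to the unrevealed bids. The one imprecise sentence — that $v_I$ is ``determined by the history'' — is not literally true (the history reveals only partial information about co-participants' values), but you correctly note this is immaterial because $\delta^{\interim}_{\msum K}(v_{-I})$ already takes the supremum over $v_I$, which is exactly the dominance the paper relies on.
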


Combining with the bounds on $\Delta^\worst_{N, m}$ in \cref{thm:discount_combined}, we have
\begin{corollary}\label{cor:two-phase-PBIC-revenue}
The two-phase ERM algorithm is $\epsilon_1$-PBIC, where,

\begin{itemize}

\item for any bounded $F$, 
\[\epsilon_1=O\left(\log^2(T_1K) \sqrt[3]{\frac{D^{11} K \msum^5}{T_1}} \right)\]
if $\frac{\msum}{T_1} \le c \le \frac{1}{2D}$ and $\msum K = o(\sqrt{T_1 K})$; 
 
\item for any MHR $F$, 
\[\epsilon_1=O\left(\log^3(T_1K) v^* \msum^2 \sqrt{\frac{K}{T_1}}, \right)\]
if $\frac{\msum}{T_1}\le c\le \cupperboundmhr$ and $\msum K=o(\sqrt{T_1K})$.
\end{itemize}
The constants in big $O$'s are independent of $F$ and $c$. 

The guarantee of $(1-\epsilon_2)$ revenue optimality is the same as \cref{thm:TPM_utility_revenue_combined}: 
\begin{itemize}
    \item For bounded and regular distribution, $\epsilon_2 = O\left(\frac{T_1}{T}  + \sqrt{\frac{D\cdot\log (T_1K)}{T_1K}}\right)$.
    \item For MHR distribution, 
    $\epsilon_2=O\left( \frac{T_1}{T} + \left[\frac{\log (T_1 K)}{T_1 K}\right]^{\frac{2}{3}} \right)$. 
\end{itemize}
\end{corollary}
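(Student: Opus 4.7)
I reduce the claim to the BIC proof of Theorem~\ref{thm:connection_TPM_PRM_combined}, with one modification: the incentive-awareness measure must be invoked with adversarial parameter $\msum K$ (rather than $m_1$) so as to absorb any correlation introduced by conditioning on the observed history $h_i^t$. I first dispose of the trivial case. Once $t$ lies in the second phase, the reserve $p=\erm^c(b_1,\dots,b_{T_1K})$ is already frozen and each remaining round is a dominant-strategy-truthful second-price auction; truthful bidding exactly maximises continuation utility, so $\epsilon=0$ suffices there and it is enough to verify the bound at histories with $t\le T_1$.

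\textbf{First-phase analysis.} Fix $t\le T_1$ and a history $h_i^t$ consistent with bidders other than $i$ playing truthfully. Bidder $i$'s continuation utility decomposes into (i) her utility in her at most $m_{i,1}$ remaining first-phase rounds and (ii) her utility in her at most $m_{i,2}\le \msum$ second-phase rounds. Since $\mathcal{M}$ is prior-independent and dominant-strategy truthful, truthful bidding is an unconditional best response in each remaining first-phase round, so (i) cannot be improved by deviating. Inequality~\eqref{eqn:interim-utility-two-price} bounds the gain in (ii) per second-phase round by the drop in $p$, so the total deviation gain is at most $\msum$ times the worst-case expected drop of $p$ given $h_i^t$. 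To bound this drop I partition the $T_1K$ first-phase bids into (a) bids from auctions bidder $i$ never joins and (b) bids from the at most $m_{i,1}\le \msum$ auctions she does join. Group (b) consists of her own bids together with the $K-1$ other-bidder bids per such auction, for a total of at most $\msum K$ bids. Because $\mathcal{M}$ is a single-shot mechanism whose observable output to $i$ depends only on her own round's bids, and because the seller commits to the two-phase protocol in advance, $h_i^t$ reveals nothing about group (a); together with the truthfulness of the others this forces the bids in (a) to remain genuine i.i.d.\ $F$-draws conditional on $h_i^t$, while the bids in (b) may have arbitrary conditional distributions and are treated as adversarial. The sup over (b) and expectation over (a) is then exactly the quantity controlled by $\Delta^{\interim}_{T_1K,\,\msum K}$, where I use the monotonicity $\Delta^{\interim}_{N,m'}\le \Delta^{\interim}_{N,m}$ for $m'\le m$ (an $m'$-adversary is simulated by an $m$-adversary that draws the extra $m-m'$ bids independently from $F$). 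For bounded $F$, $p\le D$ yields $\epsilon\le \msum D\,\Delta^{\interim}_{T_1K,\,\msum K}$; for MHR $F$, I re-use the tail-probability argument from the MHR half of Theorem~\ref{thm:connection_TPM_PRM_combined} to handle the unboundedness of $p$, producing the extra additive $O(\msum v^*/\sqrt{T_1K})$ term. Corollary~\ref{cor:two-phase-PBIC-revenue} follows by plugging the bounds of Theorem~\ref{thm:discount_combined} with $N=T_1K$ and $m=\msum K$, so the applicability condition $m=o(\sqrt N)$ becomes the stated $\msum K=o(\sqrt{T_1K})$; the revenue bound is inherited verbatim from Theorem~\ref{thm:TPM_utility_revenue_combined} since it depends only on the size of the first-phase sample and on the second-phase auction format.

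\textbf{Main obstacle.} The subtle point is the information-theoretic check for group (a): that bids from auctions bidder $i$ never joins really are i.i.d.\ from $F$ even after conditioning on $h_i^t$. This relies on the two-phase protocol being committed before play, on $\mathcal{M}$ being a per-round mechanism whose observable output to each participant depends only on that round's bids, and on no cross-round information leaking to $i$ before the second phase begins. Once this observation is in place, the remainder of the argument is a routine concatenation of dominant-strategy truthfulness of $\mathcal{M}$, inequality~\eqref{eqn:interim-utility-two-price}, the monotonicity of $\Delta^{\interim}$ in $m$, and the already proved bounds of Theorem~\ref{thm:discount_combined}.
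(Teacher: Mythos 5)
Your proposal is correct and follows essentially the same route as the paper. You reproduce the key steps of the paper's Theorem~\ref{thm:two-phase-PBIC}: the trivial second-phase case, the partition of first-phase bids into auctions bidder $i$ joins (at most $\msum K$ bids, treated adversarially because conditioning on $h_i^t$ can correlate them) versus auctions she does not join (which remain i.i.d.\ from $F$ given $h_i^t$ because $\mathcal M$ is a prior-independent per-round mechanism and leaks no cross-round information), the monotonicity $\Delta^{\interim}_{N,m'}\le\Delta^{\interim}_{N,m}$ for $m'\le m$, inequality~\eqref{eqn:interim-utility-two-price} to convert price drops into utility gains, the MHR tail argument borrowed from the proof of Theorem~\ref{thm:connection_TPM_PRM_combined}, and finally substituting Theorem~\ref{thm:discount_combined} with $N=T_1K$, $m=\msum K$; the exponents you would get from this substitution agree with those in the corollary statement, and the revenue bound carries over unchanged from Theorem~\ref{thm:TPM_utility_revenue_combined}.
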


In the rest of this section we prove \cref{thm:two-phase-PBIC} for bounded distributions.  The proof is similar to that of \cref{thm:connection_TPM_PRM_combined} except that we need to consider the effect of history $\hist_i^t$ on the conditional distribution of the values of other bidders when considering perfect Bayesian equilibrium.  The extension to MHR distributions is similar to the extension of \cref{thm:connection_TPM_PRM_combined} to MHR distributions (discussed in \cref{sec:proof_lemma_1_mhr}) and hence omitted. 

\begin{proof}[Proof of \cref{thm:two-phase-PBIC} for bounded distributions]
Assume that other bidders bid truthfully and bidder~$i$ deviates from truthful bidding to other strategy, consider the increase of bidder~$i$'s total expected utility from round $t$ to $T$, for each $t$, given any history $\hist_i^t$ and any values $\bm{v}_i$. 
If $t>T_1$, then the auctions in $t$ and later rounds are in the second phase and never change due to the deviation of bidder~$i$, thus deviation does not increase her utility. 

Then we consider $t\le T_1$.  Strategic bidding does not increase bidder $i$'s utility in rounds $t, t+1, \ldots, T_1$ because these rounds are in the first phase and the mechanism in the first phase is truthful.  Thus, strategic bidding can increase her utility only in the second phase.  Let $t'>T_1$ be a second-phase round in which she participates.  The auction at round $t'$ is a second-price auction with reserve price determined by $\erm^c$ from bids in rounds $1$ to $T_1$.  Denote by $\bm{v}^{[1: T_1]}$ the values of all bidders in rounds $1$ to $T_1$.  If bidder~$i$ bid truthfully, then the reserve price at round~$t'$ is $p_1=\erm^c(\bm{v}^{[1: T_1]})$.  Let $A\subseteq[1: T_1]$ be the set of rounds in which bidder $i$ participates from round~$1$ to round~$T_1$.  Then we can partition $\bm{v}^{[1: T_1]}$ into two parts: $\bm{v}^A$ and $\bm{v}^{[1: T_1]\backslash A}$, where $\bm{v}^A$ denotes bidders' values in the rounds in $A$, and $\bm{v}^{[1: T_1]\backslash A}$ denotes the values in the rounds not in $A$.  
There are $|A|K$ values in $\bm{v}^A$, $|A|$ of which are bidder~$i$'s values. 
By deviating, bidder~$i$ can change her values in $\bm{v}^A$ to some arbitrary bids.  We denote by $\bm{b}^A$ the bids of bidder~$i$ and the values of other bidders in $\bm{v}^A$.  After deviation, the reserve price is changed to $p_2 = \erm^c(\bm{b}^A, \bm{v}^{[1: T_1]\backslash A})$.  By \eqref{eqn:interim-utility-two-price}, the increase of bidder~$i$'s utility due to the change of reserve price is at most $p_1-p_2$, which is further upper-bounded by 
\begin{align*}
    p_1 - p_2 & = \erm^c(\bm{v}^A, \bm{v}^{[1: T_1]\backslash A}) - \erm^c(\bm{b}^A, \bm{v}^{[1: T_1]\backslash A}) \\
    & \le \erm^c(\bm{v}^A, \bm{v}^{[1: T_1]\backslash A})\cdot \delta_I(\bm{v}^A, \bm{v}^{[1: T_1]\backslash A}) \\
    & \le \erm^c(\bm{v}^A, \bm{v}^{[1: T_1]\backslash A})\cdot \delta^{\worst}_{|A|K}(\bm{v}^{[1: T_1]\backslash A}) \\
    & \le D\cdot \delta^{\worst}_{|A|K}(\bm{v}^{[1: T_1]\backslash A}).
\end{align*}
We then argue that given any history $\hist_i^t$, $\bm{v}^{[1: T_1]\backslash A}$ are still i.i.d.~samples from $F$, from bidder~$i$'s perspective.  
Note that bidder~$i$ does not participate in the auctions in rounds $[1: T_1]\backslash A$, and the auctions she does participate in before round $t$ is prior-independent, which implies that the allocation of item and the payments of bidders in any round depend only on the bids of bidders in that round but not on any information like bids from other rounds.  Moreover, other bidders' values across different rounds are independent.  Therefore, the auctions bidder~$i$ participates in leaks no information about other bidders' values in rounds $[1: T_1]\backslash A$.  

Therefore, the increase of bidder~$i$'s expected utility at round $t'$ is at most 
\begin{align*}
    \E \left[p_1 - p_2 \mid \hist_i^t, \bm{v}_i\right]
    & \le \E \left[D\cdot \delta^{\worst}_{|A|K}(\bm{v}^{[1: T_1]\backslash A}) \mid \hist_i^t, \bm{v}_i\right] \\
    & = \E_{\bm{v}^{[1: T_1]\backslash A}\sim F}\left[D\cdot \delta^{\worst}_{|A|K}(\bm{v}^{[1: T_1]\backslash A})\right] \\
    & = D\cdot \Delta^{\worst}_{T_1K, |A|K} \\
    & \le D\cdot \Delta^{\worst}_{T_1K, \msum K}, 
\end{align*}
where the last inequality is because $|A|\le \msum$ and $\Delta^{\worst}_{N, m_1} \ge \Delta^{\worst}_{N, m_2}$ for $m_1\ge m_2$. 

Since bidder~$i$ participates in at most $\msum$ auctions, the sum of increases of expected utility from round $t$ to $T$ is at most $\msum D \Delta^{\worst}_{T_1K, \msum K}$. 
\end{proof}


\section{Analysis for MHR Distributions}
Recall that a distribution $F$ is MHR if its hazard rate $\frac{f(x)}{1-F(x)}$ is monotone non-decreasing. 

\subsection{Properties of MHR Distributions}
Recall that $R(q)=qv(q)$ is the revenue curve of distribution $F$, where $q(v)=1-F(v)$. And $q^*=\argmax_q R(q)$ is the quantile of the optimal reserve price $v^*=\argmax_v [1-F(v)]v=v(q^*)$. 

For MHR distributions, we first introduce a lemma which says that $q^*$ is bounded away from 0 by a constant.
\begin{lemma}[\citet{hartline2008optimal}]\label{lem:mhr_optimal_q}
Any MHR distribution has a unique $q^*$, and $q^* \ge \frac{1}{e}$.
\end{lemma}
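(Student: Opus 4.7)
The plan is to reparametrize in the value variable and exploit convexity of the log-survival function. Define $g(v) = -\log(1-F(v))$ on the support of $F$; by the chain rule $g'(v) = f(v)/(1-F(v)) = h(v)$, the hazard rate, which is non-decreasing by the MHR assumption. Hence $g$ is convex and equals $0$ at the left endpoint of the support. Since $q(v) = e^{-g(v)}$, the revenue curve can be viewed as a function of $v$: $R(v) = v\cdot e^{-g(v)}$.

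Next, I take logs and differentiate: $\frac{\dd}{\dd v}\log R(v) = \frac{1}{v} - h(v)$. Any interior maximizer $v^*$ must therefore satisfy $v^*h(v^*) = 1$, equivalently $\phi(v^*) = v^* - 1/h(v^*) = 0$. Uniqueness of $q^*$ then follows because $\phi$ is strictly increasing on the support: $v\mapsto v$ is strictly increasing, while $v\mapsto -1/h(v)$ is non-decreasing (since $h$ is non-decreasing and positive). So $\phi$ has at most one zero, yielding a unique $v^*$ and hence a unique $q^* = 1-F(v^*)$.

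For the lower bound on $q^*$, I use monotonicity of $h$ once more: for every $u$ in the support with $u\le v^*$, $h(u) \le h(v^*) = 1/v^*$. Integrating gives
\[
-\log q^* \;=\; g(v^*) \;=\; \int_0^{v^*} h(u)\,\dd u \;\le\; \int_0^{v^*}\frac{1}{v^*}\,\dd u \;=\; 1,
\]
so $q^* \ge e^{-1}$, as desired.

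The main subtlety, and thus the potential obstacle, is verifying that the first-order condition has a solution in the interior of the support, and handling supports that do not start at $0$. If the support is $[A,B]$ with $A>0$, then $g(A) = 0$ and the last integral is only over $[A,v^*]$, yielding $g(v^*) \le (v^*-A)/v^* \le 1$, which preserves the bound. Existence of $v^*$ is guaranteed because $1/v$ blows up near the left end of the support while $h(v)$ must eventually exceed $1/v$ (otherwise $1-F$ could not vanish), so $\phi$ changes sign and, being strictly increasing, has a unique interior zero.
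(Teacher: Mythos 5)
The paper does not prove this lemma; it cites it as a known result from \citet{hartline2008optimal}. Your argument is correct and is essentially the standard textbook derivation: the MHR assumption makes $g(v)=-\log(1-F(v))$ convex, the log-revenue $\log R(v)=\log v - g(v)$ has a strictly decreasing derivative $1/v-h(v)$ (hence $\log R$ is strictly concave, giving uniqueness of the maximizer and of $q^*$), and bounding $h(u)\le h(v^*)=1/v^*$ for $u\le v^*$ and integrating gives $-\log q^* = g(v^*) \le 1$. One small remark: your existence step (``$1/v$ blows up near the left end'') implicitly assumes the support starts at $0$. If the support is $[A,B]$ with $A>0$ and $h(A)\ge 1/A$, then $\phi\ge 0$ on the whole support and there is no interior first-order point; but then strict concavity of $\log R$ places the unique maximizer at $v^*=A$, so $q^*=1\ge 1/e$ trivially, and your integral bound also degenerates correctly. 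Adding that one-line case distinction would make the argument airtight.
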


Moreover, the revenue curve decreases quadratically from $q^*$. 
\begin{lemma}[\citet{huang2018making}, Lemma 3.3]\label{lem:mhr_quadratic}
For any MHR $F$, for any $0\le q\le 1$, $R(q^*) - R(q) \ge \frac{1}{4}(q^*-q)^2R(q^*)$. 
\end{lemma}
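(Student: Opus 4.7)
The plan is to exploit two structural properties of MHR distributions: (a) the virtual-value function $\phi(v) = v - (1-F(v))/f(v)$ satisfies $\phi'(v) \ge 1$ everywhere on the support (a short calculation from the monotonicity of the hazard rate $f/(1-F)$), and (b) the optimal quantile satisfies $q^* \ge 1/e$ by Lemma \ref{lem:mhr_optimal_q}. Using the standard identity $R'(q) = \phi(v(q))$, property (a) together with the fact that $v(q)$ is strictly decreasing implies that $R'$ is strictly decreasing in $q$. Hence $R$ is concave on $[0,1]$, attains a unique maximum at $q^*$, and $R'(q^*) = \phi(v(q^*)) = 0$.

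The second step is to turn this local information into a quadratic lower bound on $R(q^*) - R(q)$. For $q \le q^*$, I would start from
\begin{equation*}
R(q^*) - R(q) \;=\; \int_q^{q^*} R'(t)\, dt \;=\; \int_q^{q^*} \bigl(\phi(v(t)) - \phi(v(q^*))\bigr)\, dt
\;\ge\; \int_q^{q^*} \bigl(v(t) - v(q^*)\bigr)\, dt,
\end{equation*}
where the last inequality uses $\phi' \ge 1$ and $v(t) \ge v(q^*)$ on $[q, q^*]$. One integration by parts (differentiating $v(t) - v(q^*)$ and integrating $1$) rewrites the right-hand side as $\int_q^{q^*} (t-q)(-v'(t))\, dt$, which is manifestly non-negative since $v$ is decreasing. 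The case $q \ge q^*$ is handled symmetrically, using $\phi(v(q)) - \phi(v(q^*)) \le v(q) - v(q^*) \le 0$.

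The third step is to translate the integral bound into the claimed $\tfrac{1}{4}(q-q^*)^2 R(q^*)$ form. The key observation is that MHR implies convexity of the quantile-to-value map $v(q) = H^{-1}(-\log q)$ (since $H(v) = -\log(1-F(v))$ is convex), which lets me replace $-v'(t)$ by a secant bound involving $v(q^*)/q^*$. Combined with $R(q^*) = q^* v(q^*)$ and the normalization $q^* \ge 1/e$, this should collapse the estimate into a clean multiple of $(q-q^*)^2 R(q^*)$.

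The main obstacle will be extracting the sharp constant $\tfrac{1}{4}$ rather than a weaker universal constant. The integration sketched above naturally produces the right quadratic order, but pinning down $\tfrac{1}{4}$ will likely require either a careful two-regime argument (one regime close to $q^*$ where the $\phi' \ge 1$ bound is tight, and one regime far from $q^*$ where one directly uses $R(q) \le R(q^*)$ together with convexity of $v(q)$) or identifying the worst-case MHR distribution (the exponential family, parameterized as $v(q) = -\frac{1}{\lambda}\log q$, where both sides can be computed explicitly) and showing by a majorization/pointwise argument that the exponential case is extremal.
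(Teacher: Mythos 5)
Your Steps 1 and 2 are fine: MHR gives $\phi' \ge 1$ on the support, hence $R''(q)=\phi'(v(q))\,v'(q)<0$ and $R$ is concave, and the chain $R(q^*)-R(q)=\int_q^{q^*}\bigl(\phi(v(t))-\phi(v(q^*))\bigr)\,dt \ge \int_q^{q^*}\bigl(v(t)-v(q^*)\bigr)\,dt=\int_q^{q^*}(t-q)(-v'(t))\,dt$ is correct for $q\le q^*$. But Step 3 rests on a false claim: MHR does \emph{not} imply that $v(q)$ is convex. You are right that $H(v)=-\log(1-F(v))$ is convex (that is exactly MHR) and hence that $H^{-1}$ is concave; but $v(q)=H^{-1}(-\log q)$ is a concave function composed with a convex one, which has no determined curvature. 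In fact $v'(q)=-1/f(v(q))$, so $v''(q)=-f'(v(q))/f(v(q))^3$, and $v$ is convex exactly when $f$ is non-increasing on the relevant range—a property logically independent of MHR. A concrete counterexample: $F(v)=v^2$ on $[0,1]$ has increasing hazard rate $2v/(1-v^2)$ and is therefore MHR, yet $v(q)=\sqrt{1-q}$ is strictly concave. The exponential family you nominate as extremal happens to be the opposite regime ($f$ decreasing, $v$ convex), so your intuition was calibrated on one end of a class that actually contains both.

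Even restricted to the convex-$v$ regime, the pointwise replacement fails on one side of $q^*$: convexity gives $-v'(t)\ge -v'(q^*)=v^*/q^*$ only for $t\le q^*$, which serves the case $q<q^*$; for $q>q^*$ you need a lower bound on $-v'(t)$ for $t>q^*$, where convexity gives the reverse inequality. Concavity of $v$ would symmetrically rescue $q>q^*$ but break $q<q^*$. So replacing $-v'$ by a single constant cannot deliver the lemma uniformly, and a correct proof must control $R$ more globally (e.g., by comparing the normalized revenue curve to that of the exponential, or by bounding $(\log R)''$ in $v$-space). This is presumably why the paper does not reprove the statement but simply cites Lemma~3.3 of \citet{huang2018making}; there is no in-paper proof to compare your argument against.
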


The following lemma shows that samples from an MHR distribution are rarely too large.
\begin{lemma}\label{lem:mhr_largest_value}
Let $F$ be an MHR distribution. Let $X=\max\{v_1, \ldots, v_N\}$ where $v_1, \ldots, v_N$ are $N$ i.i.d.~samples from $F$. For any $x\ge v^*$, we have $\Pr[X>x] \le Ne^{-x/v^*+1}$.
\end{lemma}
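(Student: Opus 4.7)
The plan is to first reduce the tail bound on the maximum to a tail bound on a single sample by a union bound, and then leverage the MHR structure together with the first-order characterization of $v^*$ to obtain the exponential decay.

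\textbf{Step 1 (Reduction via union bound).} Since $v_1,\ldots,v_N$ are i.i.d., I would write
\[
\Pr[X>x] \;\le\; \sum_{i=1}^{N} \Pr[v_i>x] \;=\; N\,q(x),
\]
where $q(x)=1-F(x)$. So it suffices to prove the pointwise bound $q(x)\le e^{-x/v^*+1}$ for all $x\ge v^*$.

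\textbf{Step 2 (Identify the hazard rate at $v^*$).} By definition $v^*$ maximizes $v\,q(v)=v(1-F(v))$ on the support of $F$, so the first-order condition gives
\[
(1-F(v^*)) - v^* f(v^*) \;=\; 0,
\]
i.e.\ the hazard rate $h(v)=f(v)/(1-F(v))$ satisfies $h(v^*)=1/v^*$. For MHR distributions the density exists on the interior of the support and $v^*$ is interior (recall $q^*\ge 1/e$ from Lemma \ref{lem:mhr_optimal_q}), so this computation is legitimate.

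\textbf{Step 3 (Exponential bound via monotone hazard).} Writing the survival function in hazard-rate form,
\[
q(x) \;=\; q(v^*)\,\exp\!\Bigl(-\!\int_{v^*}^{x} h(t)\,\dd t\Bigr),
\]
and using that $h$ is non-decreasing with $h(t)\ge h(v^*)=1/v^*$ for all $t\ge v^*$, I obtain
\[
q(x) \;\le\; q(v^*)\,\exp\!\Bigl(-\tfrac{x-v^*}{v^*}\Bigr) \;\le\; \exp\!\Bigl(-\tfrac{x}{v^*}+1\Bigr),
\]
where the last inequality uses $q(v^*)\le 1$. Plugging this into Step 1 yields $\Pr[X>x]\le N e^{-x/v^*+1}$, as required.

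\textbf{Expected obstacle.} None of the steps is deep; the only subtlety is justifying the first-order condition $h(v^*)=1/v^*$ (needing differentiability of $F$ at $v^*$ and the fact that $v^*$ is an interior maximizer). For MHR distributions this is standard, but if one wishes to avoid differentiability assumptions the same bound can be argued in quantile space from $R'(q^*)=0$ together with $R(q)=q\,v(q)$ and the monotonicity of $h\circ v$, which is the cleanest fallback.
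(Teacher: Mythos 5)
Your proof is correct and follows essentially the same route as the paper: union bound over the $N$ samples, the first-order condition at $v^*$ giving hazard rate $h(v^*)=1/v^*$, and the MHR property giving $h(t)\ge 1/v^*$ for $t\ge v^*$, hence $q(x)\le e^{-(x-v^*)/v^*}$. The only cosmetic difference is that you start the hazard-rate integral at $v^*$ and use $q(v^*)\le 1$, while the paper starts it at $0$ and drops the part of the integrand below $v^*$; these are the same bound.
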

\begin{proof}
 Note that  $1-F(x)=\exp\left\{-\int_0^x\frac{f(v)}{1-F(v)}\dd v\right\} \le \exp\left\{-\int_{v^*}^x\frac{f(v)}{1-F(v)}\dd v\right\}$. By the definition of $v^*$ we know $(v^*[1-F(v^*)])'=0$, or $\frac{f(v^*)}{1-F(v^*)}=\frac{1}{v^*}$. By the definition of MHR, we have $\frac{f(x)}{1-F(x)} \ge \frac{1}{v^*}$ for any $x\ge v^*$, thus 
 \[1-F(x) \le \exp\left\{-\int_{v^*}^x\frac{1}{v^*}\dd v\right\} = \exp\left\{-\frac{x-v^*}{v^*}\right\}.\]
Then the lemma follows from a simple union bound: 
\[\Pr[X>x] = \Pr[\exists i, v_i>x] \le N[1-F(x)] \le N\exp\left\{-\frac{x}{v^*}+1\right\}.\]
\end{proof}

We will use above lemmas to prove some further lemmas which characterize the behavior of $\erm^c$ on samples from a MHR distribution, where $c$ can be any value between $m/N$ and $1/(2e)$. The samples we consider consist of $m$ copies of $+\infty$, denoted by $v_I$, and $N-m$ random draws from $F$. We sort the samples non-increasingly and use
\[v_{-I}=(v_{m+1}\ge v_{m+2}\ge \cdots\ge v_N)\]
to denote the random draws. Let $q_{m+1}\le q_{m+2}\le\cdots\le q_N$ denote their quantiles where $q_j=q(v_j)$. 
\begin{lemma}\label{lem:mhr_revenue_modified}
Let $F$ be an MHR distribution. Suppose $m=o(\sqrt{N})$. Fix $m$ values $v_I$ to be $+\infty$, and randomly draw $N-m$ values $v_{-I}$ from $F$. Let $k^*=\argmax_{i> cN}\{iv_{i}\}$, i.e., the index selected by $\erm^c$, where $\frac{m
}{N}\le c\le \frac{1}{2e}$. Then we have 
\[R(q_{k^*}) \ge \left(1-O\left(\sqrt{\frac{\log N}{N}}\right)\right)R(q^*),\] 
with probability at least $1-O\left(\frac{1}{N}\right)$.
\end{lemma}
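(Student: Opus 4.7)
The plan is to combine concentration of empirical quantiles with an MHR-specific tail bound on the largest sample, use the empirical optimality of $k^*$ against a carefully chosen comparison index $j^*$ whose quantile is near $q^*$, and finally bootstrap the resulting rough estimate via Lemma~\ref{lem:mhr_quadratic} to reach the claimed $O(\sqrt{\log N / N})$ revenue loss.

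First I would set $\gamma = O(\sqrt{\log N/N})$ and invoke Claim~\ref{claim:concentration} (i.e., Lemma~\ref{lem:concentration} with $\delta = 1/(N-m)$), together with Lemma~\ref{lem:mhr_largest_value} applied at $x = O(v^*\log N)$, to define a good event $\mathcal{G}$ of probability $1-O(1/N)$ on which (i) $|q_j - j/N|\le \gamma$ for every $j>m$, with the residual $m/N$ absorbed since $m=o(\sqrt N)$, and (ii) $v_{m+1}\le O(v^*\log N)$ (the $m$ copies of $+\infty$ occupy the first $m$ positions and are ignored by $\erm^c$ since $cN\ge m$, hence $v_{k^*}\le v_{m+1}$).

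Next I would take $j^*$ to be the number of samples of value at least $v^*$, so that $v_{j^*}\ge v^* \ge v_{j^*+1}$ and $q_{j^*+1}\ge q^*\ge q_{j^*}$. Applying concentration at index $j^*+1$ gives $j^*/N \ge q^* - O(\gamma)$, and since $q^*\ge 1/e$ by Lemma~\ref{lem:mhr_optimal_q} while $c\le 1/(2e)$, we have $j^*>cN$ for $N$ large; hence $j^*$ is a valid competitor against $k^*$. Optimality of $k^*$ then yields
\[
k^* v_{k^*} \;\ge\; j^* v_{j^*} \;\ge\; j^* v^* \;\ge\; N(q^* - O(\gamma))\,v^* \;=\; N\,R(q^*) - O(N\gamma v^*).
\]
Converting back through $q_{k^*}\ge k^*/N-\gamma$ gives
\[
R(q_{k^*}) \;=\; q_{k^*} v_{k^*} \;\ge\; \tfrac{k^*v_{k^*}}{N} - \gamma v_{k^*} \;\ge\; R(q^*) - O(\gamma v^*) - \gamma v_{k^*},
\]
and plugging in the coarse bound $v_{k^*}\le O(v^*\log N)$ together with $R(q^*)=q^* v^*\ge v^*/e$ produces the rough estimate $R(q_{k^*})\ge R(q^*)\bigl(1-O(\gamma\log N)\bigr)$.

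Finally I would bootstrap. The rough bound and Lemma~\ref{lem:mhr_quadratic} give $(q^*-q_{k^*})^2\le O(\gamma\log N)=o(1)$, so $q_{k^*}\ge q^*/2$ for $N$ large, and therefore $v_{k^*} = R(q_{k^*})/q_{k^*}\le R(q^*)/q_{k^*} = O(v^*)$. Substituting the improved bound $v_{k^*}\le O(v^*)$ back into the conversion step removes the $\log N$ factor and yields $R(q_{k^*})\ge R(q^*)-O(\gamma v^*) = R(q^*)\bigl(1-O(\sqrt{\log N/N})\bigr)$, as claimed. The main obstacle is pinning down $v_{k^*}$: a direct appeal to the MHR tail bound only controls it up to a factor of $\log N$, so the MHR quadratic envelope around $q^*$ is indispensable for upgrading the quantile estimate into a tight value estimate; the MHR assumption is also essential through $q^*\ge 1/e>c$, which ensures the comparison index $j^*$ passes the guard $cN$.
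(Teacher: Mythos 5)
Your proof is correct and establishes the lemma, but it takes a somewhat different path from the paper's own argument. Both proofs draw on the same three ingredients --- uniform concentration of empirical quantiles (Claim~\ref{claim:concentration}), the MHR tail bound of Lemma~\ref{lem:mhr_largest_value}, and the quadratic revenue envelope of Lemma~\ref{lem:mhr_quadratic} --- and both use a competitor index near $q^*$ (your $j^*$, the count of samples at least $v^*$, plays the same role as the paper's sample landing in $[q^*-\gamma,q^*]$; that part of the variation is cosmetic). The genuine divergence is in how the estimate on $\frac{k^*}{N}v_{k^*}$ is converted into a bound on $R(q_{k^*})=q_{k^*}v_{k^*}$, which requires controlling the error term $\gamma v_{k^*}$. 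The paper avoids ever bounding $v_{k^*}$ directly: it does a three-way case analysis on $q_{k^*}$ with an intermediate threshold $T^{-1}=N^{-1/4}$, excluding $q_{k^*}<T^{-1}$ via the tail bound, excluding $T^{-1}\le q_{k^*}<\frac{1}{2e}+\gamma$ via the quadratic envelope, and for $q_{k^*}\ge\frac{1}{2e}+\gamma$ noting that $q_{k^*}$ being bounded below by a constant already makes $\frac{k^*}{N}v_{k^*}$ a $(1\pm O(\gamma))$ multiple of $q_{k^*}v_{k^*}$, no value bound needed. You instead take a first crude pass using $v_{k^*}\le v_{m+1}=O(v^*\log N)$, obtain $R(q_{k^*})\ge(1-O(\gamma\log N))R(q^*)$, and then bootstrap through the quadratic envelope to get $q_{k^*}\ge q^*/2$ and hence $v_{k^*}=O(v^*)$, which tightens the bound to $1-O(\gamma)$. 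Both routes are sound; the paper's threshold $T=N^{1/4}$ is chosen precisely so that both the tail-probability term and the envelope-contradiction term are negligible simultaneously, dispensing with the second pass, while your bootstrap is arguably more mechanical and easier to adapt. One small check worth making explicit: your application of concentration at index $j^*+1$ requires $j^*\le N-1$; if $j^*=N$ (all samples exceed $v^*$) the comparison $k^*v_{k^*}\ge Nv_N\ge Nv^*\ge NR(q^*)$ is even stronger, so the edge case is harmless.
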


\begin{proof}
Let $\gamma \defeq 2\sqrt{ \frac{4\ln(2(N-m))}{N-m} } + \frac{m}{N} = O\left(\sqrt{\frac{\log N}{N}}\right)$ as in \cref{claim:concentration}. We have $| q_j - \frac{j}{N}|\le \gamma$ for any $j>m$ with probability at least $1-\frac{1}{N-m}$. We thus assume $| q_j - \frac{j}{N}|\le \gamma$.

The intuition is follows: The product $jv_j$ divided by $N$ approximates $R(q_j)=q_j v_j$ up to an $O(\gamma)$ error. Our proof consists of three steps: The first step is to show that with high probability, there must be some sample with quantile $q_i$ that is very close to $q^*$ so its revenue $R(q_i)\approx R(q^*) \approx\frac{i}{N}v_i$. The second step is to argue that all samples with quantile $q_j<\frac{1}{2e}$ are unlikely to be chosen by $\erm^c$ because $q_j$ is too small and the gap between $q^*$ and $\frac{1}{2e}$ leads to a large loss in revenue, roughly speaking, $\frac{j}{N}v_j\approx R(q_j) < (1-\frac{1}{4}(\frac{1}{2e})^2) R(q^*)\approx (1-\Omega(1))\frac{i}{N}v_i$. The final step is to show that if a quantile $q_j>\frac{1}{2e}$ is to be chosen by $\erm^c$, then it must have equally good revenue as $q_i$.

Formally:
\begin{enumerate}
\item Firstly, consider the quantile interval $[q^*-\gamma, q^*]$. Each random draw $q_i$, if falling into this interval, will satisfy: 
\begin{equation}\label{eqn:mhr_ivi}
\frac{i}{N}v_i \ge (q_i-\gamma)v_i\ge (q^*-2\gamma)v_i \ge (q^*-2\gamma)v^* \ge (1-2e\gamma)q^*v^*, 
\end{equation}
where the last but one inequality is because $q_i\le q^*$ and the last one follows from $q^*\ge \frac{1}{e}$. The probability that no quantile falls into $[q^*-\gamma, q^*]$ is at most
\[(1-\gamma)^{N-m}=\left(1-O\left(\sqrt{\frac{\log N}{N}}\right)\right)^{N-m} = o(\frac{1}{N}).\]

\item 
For the second step, first note that the $q_i\in[q^*-\gamma, q^*]$ in the first step will be considered by $\erm^c$ since $i\ge(q_i-\gamma)N\ge (q^*-2\gamma)N\ge (\frac{1}{e}-2\gamma)N > cN$. Then suppose $\erm^c$ chooses another quantile $q_j$ instead of $q_i$, we must have
\begin{equation}\label{eqn:mhr_jvj_vs_ivi}
\frac{j}{N}v_j\ge \frac{i}{N}v_i.
\end{equation}
We will show that such probability is small if $q_j<\frac{1}{2e}+\gamma$. Pick a threshold quantile $\frac{1}{T}$ where $T=N^{1/4}$. Consider two cases: 
\begin{itemize}
    \item If $0\le q_j<\frac{1}{T}$. We argue that $\erm^c$ picks $q_j$ with probability at most $o(\frac{1}{N})$. Note that
    \begin{equation}\label{eqn:mhr_jvj_small_q_j}
        \frac{j}{N}v_j \le (q_j+\gamma) v_j \le (\frac{1}{T} + \gamma)v_j,  
    \end{equation}
    together with \eqref{eqn:mhr_jvj_vs_ivi} and \eqref{eqn:mhr_ivi}, we obtain $(\frac{1}{T} + \gamma)v_j \ge (1-2e\gamma)q^*v^*$, implying 
    \[ v_j\ge \frac{1-2e\gamma}{e} \frac{Tv^*}{1+T\gamma} = \Omega(Tv^*), \]
    since $T\gamma=O\left(\frac{\sqrt{\log N}}{N^{1/4}}\right)\to 0$. According to \cref{lem:mhr_largest_value}, the probability that there exists $v_j>\Omega(Tv^*)$ is at most
    \[N \exp\left\{-\frac{\Omega(Tv^*)}{v^*}+1\right\} = o(\frac{1}{N}). \]
    
    \item If $\frac{1}{T}\le q_j<\frac{1}{2e}+\gamma$. We argue that $\erm^c$ will never choose such $q_j$. Note that
    \begin{equation}\label{eqn:mhr_jvj_middle_q_j}
        \frac{j}{N}v_j \le (q_j+\gamma) v_j \le (1+T\gamma)q_jv_j,  
    \end{equation}
    together with \eqref{eqn:mhr_jvj_vs_ivi} and \eqref{eqn:mhr_ivi}, we obtain $(1+T\gamma)q_jv_j> (1-2e\gamma)q^*v^*$. Then by \cref{lem:mhr_quadratic},
    \begin{equation}\label{eqn:mhr_case_middle_conclude}
    \frac{1-2e\gamma}{1+T\gamma} \le \frac{q_jv_j}{q^*v^*} \le 1-\frac{1}{4}(q_j-q^*)^2 \le 1-\frac{1}{4}(\frac{1}{2e}-\gamma)^2.
    \end{equation}
    However, the left hand side of \eqref{eqn:mhr_case_middle_conclude} approaches to 1 since $\gamma$ and $T\gamma$ approach 0 while the right hand side is strictly less than 1, a contradiction. So this case never happens.
    
\end{itemize}

\item Finally, if $q_j\ge \frac{1}{2e}+\gamma$. We argue that if $\erm^c$ picks $q_j$ instead of $q_i$, then $R(q_j)$ approximates $R(q_i)$ well, satisfying the conclusion in the lemma. This is because
    \begin{align*}
        R(q_j) = q_jv_j \ge{}& (\frac{j}{N}-\gamma)v_j \\
        \ge{}& (1-2e\gamma)\frac{j}{N}v_j && \frac{j}{N}\ge q_j-\gamma\ge \frac{1}{2e}\\
        \ge{}& (1-2e\gamma)\frac{i}{N}v_i && \text{\cref{eqn:mhr_jvj_vs_ivi}}\\
        \ge{}& (1-2e\gamma)(1-2e\gamma)q^*v^* && \text{\cref{eqn:mhr_ivi}} \\
        ={}& (1-O(\gamma))R(q^*).
    \end{align*}
\end{enumerate}

Combining above three steps and the event in the beginning of the proof, we have $R(q_{k^*}) \ge (1-O(\sqrt{\frac{\log N}{N}}))R(q^*)$ except with probability at most 
\[\frac{1}{N-m} + o(\frac{1}{N}) + o(\frac{1}{N}) = O(\frac{1}{N}). \]
\end{proof}

\begin{lemma}\label{lem:mhr_event}
Let $F$ be an MHR distribution. Suppose $m=o(\sqrt{N})$. Fix $m$ values $v_I$ to be $+\infty$, and randomly draw $N-m$ values $v_{-I}$ from $F$. Let $k^*=\argmax_{i> cN}\{iv_{i}\}$, i.e., the index selected by $\erm^c$, where $\frac{m
}{N}\le c\le \frac{1}{2e}$. Let $\epsilon=\sqrt[4]{\frac{\log N}{N}}$. Then with probability at least $1-O\left(\frac{1}{N}\right)$, the following inequalities hold:
\begin{enumerate}
    \item $q_{k^*}\ge q^*-O(\epsilon)$;
    \item $k^* \ge [q^*-O(\epsilon)] N > \frac{1}{2e}N$; 
    \item $v_{k^*}\le [1+O(\epsilon)]v^*$. 
\end{enumerate}
\end{lemma}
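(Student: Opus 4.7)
The plan is to deduce all three estimates deterministically on the joint high-probability event where (a) Lemma~\ref{lem:mhr_revenue_modified} holds, so $R(q_{k^*}) \ge (1 - O(\sqrt{\log N / N})) R^*$, and (b) the empirical-quantile concentration inequality (Claim~\ref{claim:concentration}) holds, giving $|q_j - j/N| \le O(\sqrt{\log N/N}) + m/N$ for every $j > m$. Each event fails with probability $O(1/N)$, so a union bound yields the stated $1 - O(1/N)$ guarantee, and the rest of the argument is purely algebraic on that event.

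To obtain the first inequality, I would convert the revenue loss into a quantile gap using the quadratic decay of Lemma~\ref{lem:mhr_quadratic}: rearranging $\tfrac{1}{4}(q^* - q_{k^*})^2 R^* \le R^* - R(q_{k^*}) \le O(\sqrt{\log N/N}) R^*$ immediately yields $|q^* - q_{k^*}| \le O((\log N/N)^{1/4}) = O(\epsilon)$, which in particular gives $q_{k^*} \ge q^* - O(\epsilon)$.

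Next, I would pass from quantiles to indices using the concentration event: $k^*/N \ge q_{k^*} - O(\sqrt{\log N/N}) - m/N = q_{k^*} - O(\epsilon)$, where the last equality uses $m = o(\sqrt{N})$ and the fact that $\sqrt{\log N/N} = O(\epsilon^2) = O(\epsilon)$. Combined with the first bound this gives $k^*/N \ge q^* - O(\epsilon)$, and the strict inequality $k^* > N/(2e)$ then follows from $q^* \ge 1/e$ (Lemma~\ref{lem:mhr_optimal_q}) together with $\epsilon \to 0$, so that for $N$ large enough the $O(\epsilon)$ slack is strictly smaller than the constant gap $1/e - 1/(2e) = 1/(2e)$.

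Finally, I would bound $v_{k^*}$ from above using the optimality of $R^*$: since $R(q_{k^*}) = q_{k^*} v_{k^*}\le R^* = q^* v^*$, we get $v_{k^*} \le (q^*/q_{k^*}) v^*$, and the first bound together with $q^* \ge 1/e$ converts this into $v_{k^*} \le v^* / (1 - O(\epsilon)) = (1 + O(\epsilon))\, v^*$. No further probabilistic work is needed, so I do not foresee a serious obstacle; the only delicate choice is the exponent in $\epsilon$, which is dictated by taking a square root of the $O(\sqrt{\log N/N})$ revenue-curve error through Lemma~\ref{lem:mhr_quadratic}, and this is precisely what produces the stated rate $\epsilon = (\log N/N)^{1/4}$.
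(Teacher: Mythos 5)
Your proposal is correct and follows essentially the same route as the paper: conditioning on the high-probability intersection of the event from Lemma~\ref{lem:mhr_revenue_modified} and the concentration event from Claim~\ref{claim:concentration}, then deriving (1) via the quadratic revenue-decay bound of Lemma~\ref{lem:mhr_quadratic}, (2) by passing from quantiles to indices through concentration together with $q^*\ge 1/e$, and (3) from $q_{k^*}v_{k^*}\le q^*v^*$ and the lower bound on $q_{k^*}$.
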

\begin{proof}
For inequality (1), by \cref{lem:mhr_quadratic} and \cref{lem:mhr_revenue_modified}, with probability at least $1-O(\frac{1}{N})$, we have
\[ \frac{1}{4}(q_{k^*} - q^*)^2 \le \frac{R(q^*) - R(q_{k^*})}{R(q^*)} \le O\left(\sqrt{\frac{\log N}{N}}\right). \]
Taking the square root, we obtain $q_{k^*}\ge q^* - O\left(\sqrt[4]{\frac{\log N}{N}}\right)$. 

Assume that (1) holds. To prove (2), note that by \cref{claim:concentration}, we have $\frac{k^*}{N}\ge q_{k^*} - O\left(\sqrt{\frac{\log N}{N}}\right) \ge q^*-O(\epsilon)$ except with probability at most $O(\frac{1}{N})$, and $q^*>\frac{1}{e}$. 

Finally, inequality (3) follows from
\[\frac{v_{k^*}}{v^*}=\frac{R(q_{k^*})}{q_{k^*}}\frac{q^*}{R(q^*)}\le 1\cdot \frac{q^*}{q_{k^*}}\le \frac{q^*}{q^*-O(\epsilon)} = 1+\frac{O(\epsilon)}{q^*-O(\epsilon)} \le 1+O(e\epsilon).  \]
\end{proof}

\subsection{Detailed Proof of Theorem \ref{thm:connection_TPM_PRM_combined} for MHR Distributions}
\label{sec:proof_lemma_1_mhr}

Let $\Delta U(\bm{v}_i, b_I, v_{-I}) = U_i^{\TP}(\bm{v}_i, b_I, v_{-I}) - U_i^{\TP}(\bm{v}_i, v_I, v_{-I})$. 
Similar to the proof for bounded distributions, we have for any $\bm{v}_i, b_I, v_{-I}$, 
\begin{align*}
\Delta U(\bm{v}_i, b_I, v_{-I}) 
\le{} m_2\cdot \bigg(\erm^c(v_I, v_{-I}) - \erm^c(b_I, v_{-I}) \bigg) \le{} m_2 \cdot \erm^c(v_I, v_{-I}) \cdot \delta^{\interim}_{m_1}(v_{-I}).
\end{align*}

By \cref{claim:interim_to_delta}, we have $\erm^c(v_I, v_{-I})\le \erm^c(\overline{v}_I, v_{-I})$ where $\overline{v}_I$ can be any $m_1$ values (e.g., $+\infty$) that are greater than the maximal value in $v_{-I}$, when $c\ge \frac{m_1}{T_1K_1}$. 

Let $N=T_1K_1$, define two threshold prices $T_1=\sqrt{N}v^*$ and $T_2=[1+O(\epsilon)]v^*$ where $\epsilon=\sqrt[4]{\frac{\log N}{N}}$ as in \cref{lem:mhr_event}. 
Note that for sufficiently large $N$, $T_1>T_2$. With the random draw of $v_{-I}$ from $F$, denote the random variable $\erm^c(\overline{v}_I, v_{-I})$ by $P$, we have: 

\begin{align}
\E_{\bm{v}_{-i}}\left[ \Delta U(\bm{v}_i, b_I, v_{-I})\right] 
= {} & \E_{v_{-I}}\left[ \Delta U(\bm{v}_i, b_I, v_{-I}) \mid P\le T_2\right]\cdot \Pr[P\le T_2] \nonumber\\
+ {} & \E_{v_{-I}}\left[ \Delta U(\bm{v}_i, b_I, v_{-I}) \mid T_2<P\le T_1\right]\cdot \Pr[T_2<P\le T_1] \nonumber \\
+ {} &\E_{v_{-I}}\left[ \Delta U(\bm{v}_i, b_I, v_{-I}) \mid P>T_1\right]\cdot \Pr[P>T_1] \nonumber \\
\defeq{}& \E_1 + \E_2 + \E_3. 
\end{align}

\begin{enumerate}
\item For the first term $\E_1$, 
\begin{align*}
\E_1 ={} &\E_{v_{-I}}\left[ \Delta U(\bm{v}_i, b_I, v_{-I}) \mid P\le T_2\right]\cdot \Pr[P\le T_2] \\
\le{}& \E_{v_{-I}}\left[ m_2\cdot P \cdot \delta^{\interim}_{m_1}(v_{-I}) \mid P\le T_2\right]\cdot \Pr[P\le T_2] \\
\le{} & m_2 \cdot T_2\cdot \E_{v_{-I}}\left[\delta^{\interim}_{m_1}(v_{-I})  \mid P\le T_2\right] \cdot \Pr[P\le T_2] \\
\le{} & m_2 \cdot [1+O(\epsilon)]v^*\cdot \E_{v_{-I}}\left[ \delta^{\interim}_{m_1}(v_{-I})\right] \\
={} & O\left(m_2 \cdot v^*\cdot \Delta^{\interim}_{N, m_1}  \right).
\end{align*}

\item For the second term, we claim that $\E_2=O(\frac{m_2v^*}{\sqrt{N}})$.

By \cref{lem:mhr_event}, we have $\Pr[P>[1+O(\epsilon)]v^*]\le O(\frac{1}{N})$. 
Therefore, 
\begin{align*}
\E_2 ={} &\E_{v_{-I}}\left[ \Delta U(\bm{v}_i, b_I, v_{-I}) \mid T_2< P\le T_1\right]\cdot \Pr[T_2< P\le T_1] \\
\le{} & \E_{v_{-I}}\left[ m_2\cdot P \cdot 1 \mid T_2< P\le T_1\right]\cdot \Pr[T_2< P\le T_1] \\
\le{} & m_2 \cdot T_1\cdot \Pr[P> T_2] \\
\le{} & m_2 \cdot \sqrt{N}v^* \cdot O \left(\frac{1}{N}\right) \\
={} & O\left(\frac{m_2v^*}{\sqrt{N}}\right).\\
\end{align*}

\item For the third term, we claim that $\E_3 = o(\frac{m_2v^*}{N})$.

Let $B$ be the upper bound on the support of $F$ ($B$ can be $+\infty$). Let $F_{P}(x)$ be the distribution of $P$. For convenience, suppose it is continuous and has density $f_P(x)$. We have: 
\begin{align*}
\E_3 ={}& \E_{v_{-I}}\left[ \Delta U(\bm{v}_i, b_I, v_{-I}) \mid P>T_1\right]\cdot \Pr[P>T_1] \\
\le{}& \E_{v_{-I}}\left[m_2\cdot P\cdot 1\mid P>T_1 \right]\cdot\Pr[P>T_1]\\
= {}& m_2\cdot \E_{v_{-I}}\left[ P \mid P>T_1 \right]\cdot\Pr[P>T_1] \\
={}& m_2\cdot \int_{T_1}^B xf_{P}(x)\dd x \\
={}& m_2\cdot\left( \int_{T_1}^B [1-F_{P}(x)]\dd x + T_1[1-F_{P}(T_1)]\right).
\end{align*}

Let $\max\{v_{-I}\}$ denote the maximum value in the $N-m_1$ samples $v_{-I}$. By \cref{lem:mhr_largest_value}, we have for any $x\ge v^*$, 
\[1-F_{P}(x)= \Pr[P>x] \le \Pr[\max\{v_{-I}\}>x] \le Ne^{-\frac{x}{v^*}+1}.\]
Thus, 
\begin{align*}
\int_{T_1}^B [1-F_P(x)]\dd x + T_1[1-F_{P}(T_1)] \le{}& v^*Ne^{-\frac{T_1}{v^*}+1} + T_1Ne^{-\frac{T_1}{v^*}+1} \\
={}& v^*N(1+\sqrt{N})e^{-\sqrt{N}+1} \\
={}& o(\frac{v^*}{N}),
\end{align*}
as desired.
\end{enumerate}

Combining the three items, 
\begin{align*} \E_{\bm{v}_{-i}}\left[\Delta U(\bm{v}_i, b_I, v_{-I})\right] = O\left(m_2v^* \Delta^{\interim}_{N, m_1}  \right) + O\left(\frac{m_2v^*}{\sqrt{N}}\right). \end{align*}

\subsection{An Improved Bound on Incentive-Awareness Measure for MHR Distributions}\label{sec:improve_MHR}
Here we improve the upper bound on $\Delta^{\interim}_{N, m}$ for MHR distributions by proving:

\begin{lemma}[Tigher bound for MHR distributions]\label{lem:mhr_main_upper-bound}
Moreover, if $F$ is MHR, let $d=\frac{1}{2e}$, and suppose $\frac{m}{N}\le c\le \frac{1}{4e}$, we have 
\[\Delta^{\interim}_{N, m} \le O \left(\frac{m} {d^{7/2}}\frac{\log^3 N}{\sqrt{N}}\right) + \Pr[\event]. \]
\end{lemma}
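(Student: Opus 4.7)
I will keep intact the overall architecture of the proof of Lemma~\ref{lem:main_upper-bound} --- the decomposition in \eqref{eqn:Delta_interim_integral}, the concentration event $\conc$ of Claim~\ref{claim:concentration}, the union bound over the thresholded bad events $\Bad(\eta_t, \theta_t)$ of Lemma~\ref{lem:thresholds}, and the geometric sequence of Claim~\ref{cl:eta-calculation} --- and focus solely on sharpening the per-event estimate $\Pr[\Bad(\eta_t, \theta_t) \wedge \overline{\event} \wedge \conc]$ using MHR structure. The target is to replace the generic $O(m\log^2 N/(d^4\theta_t\sqrt{\eta_t^3 N}))$ bound of Lemma~\ref{lem:Pr_Bad} by the tighter $O(m\log^2 N/(d^{7/2}\theta_t\sqrt{\eta_t N}))$, shaving a factor of $\eta_t$ inside the square root at the cost of a $d^{1/2}$ in the denominator. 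Summing this over the same geometric sequence yields $G(\eta) = O(m\log^3 N/(d^{7/2}\sqrt{N\eta}))$, and integrating $\int_0^1 \min\{1, G(\eta)\}\dd\eta$ then gives the claimed rate $O(m\log^3 N/(d^{7/2}\sqrt N))$.

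The new input is the MHR structure. Lemma~\ref{lem:mhr_event} says that for $d = 1/(2e)$ and $c \in [m/N, 1/(4e)]$, with probability $1-O(1/N)$ the selected index $k^*$ obeys $q_{k^*} \in [q^* - O(\epsilon), q^*]$, $k^* \ge (q^* - O(\epsilon))N$, and $v_{k^*} = (1+O(\epsilon))v^*$ with $\epsilon = \sqrt[4]{\log N/N}$; the failure probability is absorbed into $\Pr[\event]$ (which is $O(1/N)$ for MHR by the remark after Lemma~\ref{lem:main_upper-bound}). Conditioning on this together with $\conc$, any witness $j \ge k^*$ of $\Bad(\eta_t,\theta_t)$ satisfies $R(q_j) = q_j v_j \ge jv_j/N - O(\gamma)v^* \ge R(q_{k^*}) - O(m/(N\theta_t))v^* - O(\gamma)v^*$ where $\gamma = \tilde O(N^{-1/2})$; combined with $R(q_{k^*}) \ge (1-O(\epsilon))R(q^*)$ this yields $R(q^*) - R(q_j) = O(\epsilon + \gamma + m/(Nd\theta_t))R(q^*)$. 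Applying Lemma~\ref{lem:mhr_quadratic} (quadratic decay of $R$ around $q^*$) then confines $q_j$ to a quantile window of width $W = O(\sqrt{m/(Nd\theta_t)} + \sqrt{\epsilon+\gamma})$ around $q^*$.

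With $q_j$ confined to a width-$W$ window, when I re-run the proof of Lemma~\ref{lem:Pr_Bad} the outer sum over the quantile intervals of \eqref{eqn:quantile_divide} only ranges over the $O(W/h)$ intervals intersecting the window, rather than over all $\Theta(1/h)$ intervals. Re-tuning the interval length (essentially $h \approx W$, so the outer sum is $O(1)$) and re-verifying the inner Chernoff-style estimate of Lemma~\ref{lem:Pr_W} then, in the dominant regime $W \approx \sqrt{m/(Nd\theta_t)}$, produces the target per-event bound $O(m\log^2 N/(d^{7/2}\theta_t\sqrt{\eta_t N}))$; the shift of the $d$-exponent from $4$ to $7/2$ reflects the new balancing of $d$-factors between the window width $W$ and the denominator $\sqrt{h d^3 N}$ of the Chernoff step.

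The main obstacle is the regime where $W$ is dominated by the concentration error $\sqrt{\epsilon+\gamma} = \tilde O(N^{-1/4})$ rather than by $\sqrt{m/(Nd\theta_t)}$: for large enough $\theta_t$ the MHR shrinkage of $W$ stalls and the above per-event estimate degrades. I plan to handle this by truncating the geometric sequence $\{(\eta_t, \theta_t)\}$ of Claim~\ref{cl:eta-calculation} before $\theta_t$ enters this regime, and then bounding the tail contribution directly from Lemma~\ref{lem:mhr_event}, which already forces $\delta_I = \tilde O(\sqrt{\epsilon})$ on the good event and thus contributes only lower-order additive terms to $\int G(\eta)\dd\eta$. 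The other delicate step is careful bookkeeping of $d$-factors to land exactly on the $d^{-7/2}$ exponent stated in the lemma.
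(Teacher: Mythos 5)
Your high-level idea --- use the MHR structure via Lemma~\ref{lem:mhr_event} to confine the witness quantile $q_j$ to a narrow window around $q^*$ and thereby shrink the union bound in Lemma~\ref{lem:Pr_Bad} --- is exactly the paper's idea, but the way you execute it has several genuine gaps.

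First, the ``re-tuning $h \approx W$'' step is not available. In the proof of Lemma~\ref{lem:Pr_Bad} the interval length $h$ is not a free parameter: Claim~\ref{claim:two_consequences} guarantees only that $q_j - q_{k^*} \ge 2h$ with $h = \Theta(d\eta_t)$, and one needs the mesh to be no coarser than this gap so that $q_j$ and $q_{k^*}$ are separated by a full interval $I_{l+1}$ (which is precisely what supplies the randomness in the Chernoff step of Lemma~\ref{lem:Pr_W}). You can take $h$ smaller, but never larger than $\Theta(d\eta_t)$, and since the per-interval estimate $\Pr[W_l] = O(H\log^2 N / \sqrt{h d^3 N})$ improves with larger $h$, the optimum is $h = \Theta(d\eta_t)$, as the paper chooses. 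The paper's gain comes only from replacing the count $1/h$ by $W/h$ in the outer union bound, not from re-choosing $h$. Relatedly, your ``dominant regime $W \approx \sqrt{m/(Nd\theta_t)}$'' cannot occur: under $\theta_t \ge \Omega((m/d)\sqrt{\log N / N})$ one has $m/(N d\theta_t) = O(1/\sqrt{N\log N})$, which is always dominated by the $\gamma = O(\sqrt{\log N / N})$ concentration error, so the window is in fact $W = \Theta((\log N / N)^{1/4})$ uniformly in $\eta_t, \theta_t$. (Also, to get $W = O((\log N/N)^{1/4})$ you must use the revenue loss $R(q_{k^*}) \ge (1 - O(\sqrt{\log N/N}))R(q^*)$ from Lemma~\ref{lem:mhr_revenue_modified}, not the $\epsilon = \sqrt[4]{\log N/N}$ from Lemma~\ref{lem:mhr_event}; taking $\sqrt{\epsilon + \gamma}$ with the latter choice gives a wider window $(\log N/N)^{1/8}$.)

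Second, because of the above, your target per-event bound $O(m\log^2 N/(d^{7/2}\theta_t\sqrt{\eta_t N}))$ (scaling as $\eta_t^{-1/2}N^{-1/2}$) only holds when $W \lesssim h$, i.e.~$\eta_t \gtrsim W/d = \tilde\Theta(N^{-1/4})$. For smaller $\eta_t$ --- which is precisely what drives the integral $\int_0^1 G(\eta)\dd\eta$ --- you are back to the paper's estimate $O(W H\log^2 N / (h\sqrt{hd^3 N}))$ with $h = \Theta(d\eta_t)$, which scales as $\eta_t^{-3/2}N^{-3/4}$. If one tries to bound the contribution of $\eta < \tilde\Theta(N^{-1/4})$ trivially, that range alone costs $\tilde O(N^{-1/4})$, which already exceeds the claimed $\tilde O(N^{-1/2})$. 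Your proposed remedy of truncating the geometric sequence and appealing to ``$\delta_I = \tilde O(\sqrt\epsilon)$ on the good event'' is not supported by Lemma~\ref{lem:mhr_event}, which bounds $q_{k^*}$, $k^*$ and $v_{k^*}$ but says nothing directly about $\delta_I$ (the bidder can still move $k^*$ by manipulating $b_I$). The paper does not face this dilemma because its improved per-event bound (Lemma~\ref{lem:Pr_Bad_MHR}) retains the $\eta_t^{-3/2}$ dependence and simply multiplies by the window width $O((\log N/N)^{1/4})$ uniformly in $t$; the extra $N^{-1/4}$ is what turns the $N^{-1/3}$ rate of the generic argument into $N^{-1/2}$ after integration (Lemma~\ref{lem:first_term_MHR} and the final case split with $C^{2/3}$ versus $A$).

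As a smaller point, even if the $\eta_t^{-1/2}$ per-event bound were uniformly valid, summing $\eta_{t+1}/(\eta\sqrt{\eta_t})$ over any admissible sequence with $\eta_{M+1}=1$ produces $\Omega(1/\eta)$ (the $t=M$ term alone is $1/(\eta\sqrt{\eta_M})\ge 1/\eta$), not the $O(\log N / \sqrt\eta)$ you assert; this happens not to break the final integral, but it indicates the proposal has not been carried through. In short, the structural choices $h = \Theta(d\eta_t)$ and a fixed window $W = \Theta((\log N/N)^{1/4})$, and the resulting $\eta_t^{-3/2}N^{-3/4}$ per-event bound, are essential to the proof; your attempt to trade them for $\eta_t^{-1/2}N^{-1/2}$ does not go through.
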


The main idea is to limit the range of the quantile $q_j$ of the ``bad value'' $v_j$ in $\Bad(\eta_t, \theta_t)$ in \cref{lem:Pr_Bad}. Recall that in the proof of \cref{lem:Pr_Bad} we assume $q_j$ can take any value in $[0, 1]$, divide $[0, 1]$ into $O(1/h)$ intervals (as in \eqref{eqn:quantile_divide}), and take a union bound to upper-bound the probability that a bad $q_j$ exists.  For MHR distributions, however, we will show that $q_jv_j$ is a $(1-O(\sqrt{\frac{\log N}{N}}))$ approximation to $R(q^*)$, thus we can use \cref{lem:mhr_quadratic} to reduce the possible range of $q_j$ from 1 to $O(\sqrt[4]{\frac{\log N}{N}})$.

\subsubsection{Proof of \cref{lem:mhr_main_upper-bound}}

We repeat the argument for \cref{lem:main_upper-bound} until \cref{claim:concentration}, before which we have:
\begin{align}
\Delta^{\interim}_{N, m} \le {}&\int_0^1\Pr[\delta_I(\overline{v}_I, v_{-I})>\eta\, \land \overline{\event} ]\dd \eta ~+~ \Pr[\event]. \label{eqn:Delta_interim_integral_MHR}
\end{align}

Let $\gamma=O(\sqrt{\frac{\log N}{N}})$ be the upper bound on $|q_j-\frac{j}{N}|$ in $\conc$. With the random draw of $N-m$ samples $v_{-I}$ from $F$ (and assume other $m$ samples $\overline{v}_I$ are equal to $\max\{v_{-I}\}$), we have $|q_j-\frac{j}{N}|<\gamma$ for any $j>m$ with probability at least $1-\frac{1}{N-m}$. Moreover, by \cref{lem:mhr_revenue_modified} and \cref{lem:mhr_event}, with probability at least $1-O(\frac{1}{N})$ we have $R(q_{k^*}) = q_{k^*}v_{k^*} \ge (1-\epsilon) R(q^*)$ and $q_{k^*} \ge q^*-O(\sqrt{\epsilon}) > \frac{1}{2e}$, 
where $\epsilon=O(\sqrt{\frac{\log N}{N}})$. Combine the above two inequalities with $\conc$ and denote the combined event by $\conc'$, i.e., 
\[\conc' \defeq \conc\,\land\, \left[R(q_{k^*})\ge (1-\epsilon) R(q^*)\right] \,\land\, \left[q_{k^*} \ge q^*-O(\sqrt{\epsilon}) > \frac{1}{2e}\right]. \]
We have $\Pr[\overline{\conc'}]\le O(\frac{1}{N})$. Re-define $G(\eta) = \Pr[\delta_I(\overline{v}_I, v_{-I})>\eta\,\land\,\overline{\event}\,\land\,\conc']$, and re-write \eqref{eqn:delta_event_conc}:
\begin{equation}\label{eqn:delta_event_conc_mhr}
\Pr[\delta_I(\overline{v}_I, v_{-I})>\eta\,\land \overline{\event}] \le G(\eta) + O\left(\frac{1}{N}\right).
\end{equation}

The following steps of bounding $G(\eta) =\Pr[\delta_I(\overline{v}_I, v_{-I})>\eta\,\land\,\overline{\event}\,\land\,\conc']$ are the same as before (in particular, \cref{lem:thresholds} in \cref{lem:first_term}), until upper-bounding $\Pr[\Bad(\first, \second)\,\land\,\overline{\event}\,\land\,\conc']$ (\cref{lem:Pr_Bad}), where we improve the bound by a factor of $\sqrt[4]{\frac{\log N}{N}}$. 
\begin{lemma}[Improved \cref{lem:Pr_Bad} for MHR distributions]\label{lem:Pr_Bad_MHR}
Let $d=\frac{1}{2e}$. 
If $\first$ and $\second$ are at least $\Omega\left(\frac{m}{d}\sqrt{\frac{\log (N-m)}{N-m}}\right)$, then 
$\Pr[\Bad(\first, \second)\,\land\,\overline{\event}\,\land\,\conc'] = O\left(\sqrt[4]{\frac{\log N}{N}}\frac{m\log^2 N}{d^4\second\sqrt{\first^3N}}\right)$.
\end{lemma}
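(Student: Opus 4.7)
The idea is to re-run the proof of Lemma~\ref{lem:Pr_Bad} essentially verbatim, but exploit the extra information packed into $\conc'$ (on top of what $\conc$ already gives) to dramatically restrict the range of the quantile $q_j$ of the ``bad'' sample promised by $\Bad(\first,\second)$. Concretely, under $\Bad(\first,\second)\land\overline{\event}\land\conc'$ there exists $j\ge k^*$ with $jv_j\ge k^*v_{k^*}-\tfrac{m}{\second}v_{k^*}$. Dividing by $N$ and substituting $|q_j-j/N|\le\gamma$ and $|q_{k^*}-k^*/N|\le\gamma$ (where $\gamma=O(\sqrt{\log N/N})$ is supplied by $\conc$), I would derive
$$R(q_j)=q_jv_j\ \ge\ R(q_{k^*})\,-\,2\gamma\,v_{k^*}\,-\,\tfrac{m}{N\second}\,v_{k^*}.$$
The MHR-specific clauses in $\conc'$ then kick in: $R(q_{k^*})\ge(1-\epsilon)R(q^*)$ with $\epsilon=O(\sqrt{\log N/N})$, and $v_{k^*}\le(1+O(\sqrt[4]{\log N/N}))v^*=O(R(q^*))$ since $q^*\ge 1/e$. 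Combined with the hypothesis $\second=\Omega(\tfrac{m}{d}\sqrt{\log N/N})$ which forces $m/(N\second)=O(\gamma)$, this yields $R(q_j)\ge(1-O(\sqrt{\log N/N}))R(q^*)$.

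Next I would apply the quadratic decay of the MHR revenue curve (Lemma~\ref{lem:mhr_quadratic}): $(q^*-q_j)^2\le 4(R(q^*)-R(q_j))/R(q^*)=O(\sqrt{\log N/N})$, so $|q_j-q^*|=O(\sqrt[4]{\log N/N})$. Hence, on $\Bad(\first,\second)\land\overline{\event}\land\conc'$ the bad quantile $q_j$ is confined to a window of length $O(\sqrt[4]{\log N/N})$ around $q^*$, and for large $N$ this window sits safely inside $[d/2,1]$ because $q^*\ge 1/e>1/(4e)=d/2$.

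Finally I would redo the interval decomposition from the proof of Lemma~\ref{lem:Pr_W}, but only over this shrunken window instead of all of $[d/2,1]$. Since $q_j$ must fall in some sub-interval $I_{l+2}$ of length $h$ intersecting the window, only $O(\sqrt[4]{\log N/N}/h)$ values of $l$ need to be union-bounded, rather than $O(1/h)$. The per-index estimate $\Pr[W_l]=O(H\log^2 N/\sqrt{h d^3 N})$ from Lemma~\ref{lem:Pr_W} is untouched, so plugging in $H=O(m/(d\second))$ and $h=\Omega(d\first)$ gives the improved bound $O\!\bigl(\sqrt[4]{\log N/N}\cdot m\log^2 N/(d^4\second\sqrt{\first^3 N})\bigr)$ as claimed.

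The main obstacle I anticipate is the bookkeeping in the first step: I must verify that every error term in the comparison between $R(q_j)$ and $R(q^*)$ -- the two $\gamma v_{k^*}$ slacks from $\conc$, the $(m/(N\second))v_{k^*}$ slack from $\Bad$, the $(1+O(\sqrt[4]{\log N/N}))$ factor on $v_{k^*}/v^*$, and the constant $O(1/q^*)$ from converting $v^*$ to $R(q^*)$ -- aggregates to no more than $O(\sqrt{\log N/N})$ relative to $R(q^*)$. This is exactly the exponent that Lemma~\ref{lem:mhr_quadratic} converts (via a square root) into the $\sqrt[4]{\log N/N}$ shrinkage of the window, so any looser aggregation would degrade the final improvement factor.
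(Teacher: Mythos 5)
Your proof is correct and matches the paper's argument essentially step by step: derive $R(q_j)\ge(1-O(\sqrt{\log N/N}))R(q^*)$ from $\Bad$, $\conc$, and the two extra clauses of $\conc'$; invert the quadratic revenue bound (Lemma~\ref{lem:mhr_quadratic}) to confine $q_j$ to a window of width $O(\sqrt[4]{\log N/N})$ around $q^*$; and then union-bound $W_l$ only over the $O(\sqrt[4]{\log N/N}/h)$ intervals meeting that window. The only cosmetic difference is that the paper bounds $v_{k^*}$ via $R(q_{k^*})/q_{k^*}$ and $q_{k^*}\ge 1/(2e)$, while you bound it via $v_{k^*}\le(1+O(\sqrt[4]{\cdot}))v^*$ (which follows from the same clauses of $\conc'$, as in Lemma~\ref{lem:mhr_event}); both yield the identical $O(R(q^*))$ estimate.
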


\begin{proof}
The proof is the same as that of \cref{lem:Pr_Bad} (in \cref{sec:proof_lem:Pr_Bad}), except that before dividing the quantile space $[0, 1]$, we argue that the space to be divided can be shortened to $[q^*-O(\sqrt{\epsilon}), q^*+O(\sqrt{\epsilon})]$. 

Consider the index $j$ that is promised to exist in $\Bad(\first, \second)$,
\begin{align*}
R(q_j) ={} & q_jv_j \\
\ge{}& (\frac{j}{N}-\gamma) v_j  && \text{$\conc'$} \\
\ge{}& \frac{k^*}{N}v_{k^*}-\frac{m}{N\second}v_{k^*} - \gamma v_{k^*} && \text{$jv_j\ge k^*v_{k^*}-\frac{m}{\second}v_{k^*}$ and $v_{j^*}\le v_{k^*}$}
\\
\ge{}& (q_{k^*}-\gamma)v_{k^*} -\frac{m}{N\second}v_{k^*} - \gamma v_{k^*} && \conc' \\
={}& \left[q_{k^*} - (2\gamma+ \frac{m}{N\second})\right] v_{k^*} \\
={}&  \left[1 - \frac{2\gamma+ \frac{m}{N\second}}{q_{k^*}} \right] R(q_k^*) \\
\ge{}& \left[1 - 2e(2\gamma+ \frac{m}{N\second}) \right] (1-\epsilon)R(q^*) && \text{$q_{k^*}\ge \frac{1}{2e}$ and $R(q_{k^*})\ge(1-\epsilon)R(q^*)$ in $\conc'$} \\ 
={}& \left[1 - O\left(\sqrt{\frac{\log N}{N}}\right) \right] R(q^*)  && \text{Definition of $\gamma$ and $\epsilon$, and $\theta=\Omega\left(m\sqrt{\frac{\log N}{N}}\right)$}. 
\end{align*}

By \cref{lem:mhr_quadratic}, we have 
\begin{equation*}
\left| q_j - q^* \right | \le 2\sqrt{O\left(\sqrt{\frac{\log N}{N}}\right)} = O\left(\sqrt[4]{\frac{\log N}{N}}\right).
\end{equation*}

Now we modify the analysis after \cref{claim:two_consequences}. Consider those intervals $I_l$'s with length $h$ in \eqref{eqn:quantile_divide} that intersect with 
\[I_{q_j}= \left[q^*-O\left(\sqrt[4]{\frac{\log N}{N}}\right), q^*+O\left(\sqrt[4]{\frac{\log N}{N}}\right)\right].\]
There are at most $O\left(\frac{1}{h}\sqrt[4]{\frac{\log N}{N}}\right)$ such intervals and we denote the set of (indices of) those intervals by $\mathcal{L}$. 
The definitions of $i^*_l$, $i^*_{<(l+1)}$, $A_l$, $A_{<(l+1)}$, and $W_l$ remain unchanged. 
By choosing the index $l$ such that $q_j\in I_{l+2}$, we know that if the event $[\Bad(\first, \second)\,\land\, \overline{\event}\,\land\,\conc']$ holds then $W_l$ must hold for some $l$ such that $l+2\in\mathcal{L}$. By \cref{lem:Pr_W} we have $\Pr[W_l] \le O(\frac{H\log^2 N}{\sqrt{h d^3 N}})$. Taking a union bound over $l$, we obtain 
\begin{equation*}
   \Pr[\Bad(\first, \second) \,\land\,\overline{\event}\,\land\,\conc'] \le O\left(\frac{1}{h}\sqrt[4]{\frac{\log N}{N}} \cdot \frac{H\log^2 N}{\sqrt{h d^3 N}}\right) = O\left(\sqrt[4]{\frac{\log N}{N}}\frac{m\log^2 N}{d\second\sqrt{(d\first)^3 d^3 N}}\right),
\end{equation*}
where the last equality is because $H=O(\frac{m}{d\second})$ and $h=\Omega(d\first)$ under the assumption that $\first$ and $\second$ are at least $\Omega(\frac{m}{d}\sqrt{\frac{\log(N-m)}{N-m}})$.
\end{proof}

Then we improve \cref{lem:first_term} based on \cref{lem:Pr_Bad_MHR}.  
\begin{lemma}[Improved \cref{lem:first_term}] \label{lem:first_term_MHR}
 If $\eta$ is at least $\Omega\left(\frac{m}{d}\sqrt{\frac{\log (N-m)}{N-m}}\right)$, then $G(\eta) =  O\left(\frac{m\log^3 N}{d^4N^{3/4}}\frac{1}{\eta^{3/2}}\right)$. 
\end{lemma}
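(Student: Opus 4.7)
The strategy is to re-run the proof of Lemma~\ref{lem:first_term} verbatim, but replace every invocation of Lemma~\ref{lem:Pr_Bad} with the sharper MHR bound from Lemma~\ref{lem:Pr_Bad_MHR}, and replace $\conc$ with $\conc'$ throughout. The only genuinely new ingredient is the $\sqrt[4]{\log N / N}$ factor; everything else is bookkeeping.

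First I would check that the combinatorial decomposition of the bad event still goes through unchanged. Lemma~\ref{lem:thresholds} does not refer to $\conc$ at all: it only uses the hypothesis $k^* > dN$ (which is $\overline{\event}$) together with $\delta_I(\overline{v}_I, v_{-I}) > \eta$ to conclude that at least one of the $M+1$ events $\Bad(\eta_t, \theta_t)$ occurs. Hence, using the same choice of thresholds $\eta_0 = \eta/2,\ \eta_1 = \eta,\ \eta_2 = 2\eta,\ \eta_3,\ldots,\eta_M$ and $\theta_t = \eta/(2\eta_{t+1})$ as before, a union bound intersected with $\overline{\event} \land \conc'$ yields
\begin{align*}
G(\eta) \;=\; \Pr[\delta_I(\overline{v}_I, v_{-I}) > \eta \,\land\, \overline{\event} \,\land\, \conc']
\;\le\; \sum_{t=0}^{M} \Pr[\Bad(\eta_t,\theta_t) \,\land\, \overline{\event} \,\land\, \conc'].
\end{align*}

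Next I would plug in Lemma~\ref{lem:Pr_Bad_MHR}, which says each summand is $O\bigl(\sqrt[4]{\log N / N}\cdot m \log^2 N / (d^4 \theta_t \sqrt{\eta_t^3 N})\bigr)$, provided $\eta_t, \theta_t \ge \Omega\bigl(\tfrac{m}{d}\sqrt{\log(N-m)/(N-m)}\bigr)$. Since $\eta_t, \theta_t \ge \eta/2$ and the hypothesis of the lemma gives $\eta = \Omega\bigl(\tfrac{m}{d}\sqrt{\log(N-m)/(N-m)}\bigr)$, this condition is satisfied. Therefore
\begin{align*}
G(\eta) \;\le\; O\!\left(\sqrt[4]{\tfrac{\log N}{N}}\cdot\tfrac{m \log^2 N}{d^4 \sqrt{N}}\right) \cdot \sum_{t=0}^{M}\frac{\eta_{t+1}/\eta}{(\eta_t)^{3/2}}\cdot \eta.
\end{align*}
Factoring out $\eta$ and $1/\eta^{3/2}$ and applying Claim~\ref{cl:eta-calculation} (the same choice of $\{\eta_t\}$ that worked in the bounded case still works here, since the claim only requires $\eta = \Omega\bigl(\tfrac{m}{d}\sqrt{\log(N-m)/(N-m)}\bigr)$) the tail sum is $O(\log\log(N-m)/\eta^{3/2})$.

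Combining,
\begin{align*}
G(\eta) \;=\; O\!\left(\sqrt[4]{\tfrac{\log N}{N}}\cdot\tfrac{m \log^2 N}{d^4 \sqrt{N}}\cdot \tfrac{\log\log(N-m)}{\eta^{3/2}}\right)
\;=\; O\!\left(\tfrac{m \log^3 N}{d^4 N^{3/4}}\cdot\tfrac{1}{\eta^{3/2}}\right),
\end{align*}
where the final equality uses $\sqrt[4]{\log N/N}\cdot 1/\sqrt N = (\log N)^{1/4}/N^{3/4}$ and absorbs the $(\log N)^{1/4}\log\log N\cdot \log^2 N = O(\log^3 N)$ factor. The main obstacle in the argument is really the one already handled in Lemma~\ref{lem:Pr_Bad_MHR}, namely showing that the MHR concentration-of-revenue fact ($R(q_{k^*}) \ge (1-\epsilon)R(q^*)$ combined with quadratic decay of $R$) forces the bad quantile $q_j$ to lie in a window of length $O(\sqrt[4]{\log N/N})$ around $q^*$; once that is in hand, the rest is the same telescoping union bound as in the bounded case.
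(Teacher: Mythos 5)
Your proof follows the paper's argument essentially line for line: decompose via Lemma~\ref{lem:thresholds} (correctly observing that it never touches $\conc$), take the union bound with $\conc'$ replacing $\conc$, insert the improved $\sqrt[4]{\log N/N}$ factor from Lemma~\ref{lem:Pr_Bad_MHR}, collapse the telescoping sum via Claim~\ref{cl:eta-calculation}, and absorb the logarithms. The only slip is a stray trailing $\cdot\,\eta$ in the display after plugging in Lemma~\ref{lem:Pr_Bad_MHR} --- the sum should read $\sum_{t=0}^{M}\frac{\eta_{t+1}}{\eta\,\eta_t^{3/2}}$ without the extra factor --- but this is clearly a typo since the next line correctly invokes Claim~\ref{cl:eta-calculation} and arrives at the paper's bound.
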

\begin{proof} Modify the end of \cref{sec:proof_lem:first_term}, 
\begin{flalign*}
\Pr[\delta_I(\overline{v}_I, v_{-I})>\eta\,\land\,\overline{\event}\,\land\,\conc'] \le{}& \sum_{t=0}^M \Pr[\mathrm{Bad}(\eta_t, \theta_t) \,\land\,\overline{\event}\,\land\,\conc'] && \text{Lemma \ref{lem:thresholds}}\\
={}& \sum_{t=0}^M O\left(\sqrt[4]{\frac{\log N}{N}}\frac{m\log^2 N}{d^4\theta_t\sqrt{\eta_t^3N}}\right) && \text{Lemma~\ref{lem:Pr_Bad_MHR}}\\
={}& \sum_{t=0}^M O\left(\sqrt[4]{\frac{\log N}{N}}\frac{m\log^2 N}{d^4 \sqrt{N}}\frac{\eta_{t+1}} { \eta\sqrt{\eta_t^3}}\right) && \text{Definition of } \theta_t\\
={}& O\left(\sqrt[4]{\frac{\log N}{N}}\frac{m\log^2 N}{d^4\sqrt{N}} \cdot \sum_{t=0}^M  \frac{\eta_{t+1}}{\eta}\frac{1}{\eta_{t}^{3/2}} \right) && 
\end{flalign*}
Note that because $\eta_t, \theta_t\ge \frac{\eta}{2}$, the condition of \cref{lem:Pr_Bad_MHR} is satisfied when $\eta=\Omega(\frac{m}{d}\sqrt{\frac{\log (N-m)}{N-m}})$. 


By \cref{cl:eta-calculation}, we can choose a sequence of $\{\eta_t\}$ such that  
\begin{align*}
    \sum_{t=0}^M  \frac{\eta_{t+1}}{\eta}\frac{1}{\eta_{t}^{3/2}} = O\left(\frac{\log\log (N-m)}{\eta^{3/2}}\right). 
\end{align*}
Therefore, 
\[\Pr[\delta_I(\overline{v}_I, v_{-I})>\eta\,\land\,\overline{\event}\,\land\,\conc'] \le O\left(\frac{m\log^{2+1/4} N}{d^4 N^{1/2 + 1/4}}\cdot \frac{ \log\log (N-m) )}{\eta^{3/2}} \right)=  O\left(\frac{m\log^3 N}{d^4 N^{3/4}} \frac{1}{\eta^{3/2}}\right).\] 
\end{proof}

We finish the proof of Lemma~\ref{lem:mhr_main_upper-bound} by computing the integral in \eqref{eqn:Delta_interim_integral_MHR}. Let $C=\Theta\left(\frac{m\log^3 N}{d^4 N^{3/4}}\right)$ be the bound on $G(\eta)$ in Lemma~\ref{lem:first_term_MHR}, and let $A=\Theta(\frac{m}{d}\sqrt{\frac{\log (N-m)}{N-m}})=\Theta(\frac{m}{d}\sqrt{\frac{\log N}{N}})$ be the condition on the lower bound on $\eta$ in Lemma~\ref{lem:first_term_MHR}. Then
$G(\eta) < \frac{C}{\eta^{3/2}}$
when $\eta>\max\{C^{2/3}, A\}$. If $C^{2/3}>A$, then we have 
\begin{align*}
    \int_0^1\Pr[\delta_I(\overline{v}_I, v_{-I})>\eta\, \land \overline{\event} ]\dd \eta \le{}&
    \int_0^1 \left(G(x) + O\left(\frac{1}{N}\right)\right)\dd x && \text{by \eqref{eqn:delta_event_conc_mhr}}\\
    \le{} & \int_0^{C^{\frac{2}{3}}} 1\dd x + \int_{C^{\frac{2}{3}}}^1 \frac{C}{x^{\frac{3}{2}}}\dd x 
    +O\left(\frac{1}{N}\right) \\
    ={}& C^\frac{2}{3} + \frac{C}{-\frac{1}{2}} - \frac{C}{-\frac{1}{2}}C^{-\frac{1}{3}} +O\left(\frac{1}{N}\right) \\
    \le{}& 3C^\frac{2}{3} +O\left(\frac{1}{N}\right)\\
    ={}& O \left(\frac{m^{2/3}} {d^{8/3}}\frac{\log^2 N}{\sqrt{N}}\right)+O\left(\frac{1}{N}\right)\\
    ={}& O \left(\frac{m^{2/3}} {d^{8/3}}\frac{\log^2 N}{\sqrt{N}}\right). 
\end{align*}
If $A>C^{2/3}$, then we have: 
\begin{align*}
    \int_0^1\Pr[\delta_I(\overline{v}_I, v_{-I})>\eta\, \land \overline{\event} ]\dd \eta \le{}&
    \int_0^1 \left(G(x) + O\left(\frac{1}{N}\right)\right)\dd x && \text{by \eqref{eqn:delta_event_conc_mhr}}\\
    \le{} & \int_0^{A} 1\dd x + \int_{A}^1 \frac{C}{x^{\frac{3}{2}}}\dd x 
    +O\left(\frac{1}{N}\right) \\
    ={}& A + \frac{C}{-\frac{1}{2}} - \frac{C}{-\frac{1}{2}}A^{-\frac{1}{2}} +O\left(\frac{1}{N}\right) \\
    \le {}& \Theta\left(\frac{m}{d}\sqrt{\frac{\log N}{N}}\right) + \Theta\left(\frac{m^{1-1/2}\log^{3-1/4}N}{d^{4-1/2}N^{3/4-1/4}}\right) +O\left(\frac{1}{N}\right)\\
    ={}& O \left(\frac{m} {d^{7/2}}\frac{\log^3 N}{\sqrt{N}}\right), 
\end{align*}
which, together with \eqref{eqn:Delta_interim_integral_MHR}, concludes the proof of Lemma~\ref{lem:mhr_main_upper-bound}.


\section{Analysis for $\alpha$-Strongly Regular Distributions}\label{sec:appendix-alpha-strong}

\subsection{Useful Lemmas}

\begin{lemma}[\citet{cole2014the}]\label{lem:alpha_optimal_q}
Any $\alpha$-strongly regular distribution has a unique $q^*$, and $q^* \ge \alpha^{\frac{1}{1-\alpha}}$.
\end{lemma}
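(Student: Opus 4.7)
The plan is to translate $\alpha$-strong regularity into a first-order ODE inequality on $h(q) \defeq v(q)$, compare against the explicit solution of the corresponding equality ODE, and then read off the lower bound on $q^*$ from a non-negativity condition at $q=1$.

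Uniqueness of $q^*$ is immediate: since $\alpha>0$, the inequality $\phi(y)-\phi(x)\ge \alpha(y-x)$ for $y>x$ forces $\phi$ to be strictly increasing, so $\phi(v)=0$ has at most one solution $v^*$; consequently $R'(q)=\phi(v(q))$ is strictly decreasing in $q$ (since $v$ is strictly decreasing on the support), so $R$ is strictly concave and has a unique maximizer $q^*=q(v^*)$. The degenerate cases where $\phi$ does not vanish inside the support give $q^*\in\{0,1\}$, for which the bound holds trivially since $\alpha^{1/(1-\alpha)}\in(0,1)$ for $\alpha\in(0,1)$.

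For the bound, differentiating $R(q)=qv(q)$ yields $qh'(q)=R'(q)-h(q)=\phi(h(q))-h(q)$. Combining this with the strong-regularity inequality $\phi(h(q))-\phi(v^*)\ge \alpha(h(q)-v^*)$ on $q<q^*$ (where $h(q)>v^*$) and the reverse inequality on $q>q^*$, I obtain
\[ qh'(q) \ge (\alpha-1)h(q) - \alpha v^* \quad \text{on } (0,q^*], \]
with the opposite inequality on $[q^*,1]$. The corresponding equality is a linear Euler-type ODE with general solution $\tilde h(q) = Cq^{\alpha-1} - \frac{\alpha v^*}{1-\alpha}$; the boundary condition $\tilde h(q^*)=v^*$ pins down $C = \frac{v^*(q^*)^{1-\alpha}}{1-\alpha}$.

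The crux is a Gronwall-type comparison via $L(q)\defeq q^{1-\alpha}(h(q)-\tilde h(q))$: a short calculation using the ODE inequalities gives $L'\ge 0$ on $(0,q^*]$ and $L'\le 0$ on $[q^*,1]$, together with $L(q^*)=0$. Hence $L\le 0$ on the whole support, so $h(q)\le \tilde h(q)$ throughout. Evaluating at $q=1$ and using $h(1)=v(1)=A\ge 0$ forces $\tilde h(1)\ge 0$, i.e., $C\ge \frac{\alpha v^*}{1-\alpha}$; substituting the expression for $C$ gives $(q^*)^{1-\alpha}\ge \alpha$, equivalent to $q^*\ge \alpha^{1/(1-\alpha)}$. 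The main obstacle I expect is the bookkeeping around the direction flip of the ODE inequality at $q^*$ and verifying that the sign of $L'$ really does reverse cleanly on the two sides, together with the edge case $A=0$ where the constraint $h(1)\ge 0$ still suffices to close the argument.
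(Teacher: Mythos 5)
The paper does not prove this lemma itself; it cites \citet{cole2014the}. So there is no in-paper proof to compare against, and I am evaluating your argument on its own terms. Your quantile-space ODE-comparison proof is correct: from $R(q)=qv(q)$ and $R'(q)=\phi(v(q))$ you get $q h'(q)=\phi(h(q))-h(q)$ for $h=v$; $\alpha$-strong regularity applied between $h(q)$ and $v^*$ gives the differential inequality $qh' \ge (\alpha-1)h-\alpha v^*$ on $(0,q^*]$ with the reverse on $[q^*,1]$; the general solution $\tilde h(q)=Cq^{\alpha-1}-\frac{\alpha v^*}{1-\alpha}$ checks out, and with $C=\frac{v^*(q^*)^{1-\alpha}}{1-\alpha}$ you get $\tilde h(q^*)=v^*$. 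The comparison via $L(q)=q^{1-\alpha}(h-\tilde h)$ is sound: $q^\alpha L'=(1-\alpha)(h-\tilde h)+q(h'-\tilde h')\ge 0$ on $(0,q^*]$ and $\le 0$ on $[q^*,1]$, and $L(q^*)=0$, so $L\le 0$ on all of $(0,1]$. Evaluating at $q=1$ with $h(1)=A\ge 0$ gives $\frac{v^*}{1-\alpha}\bigl[(q^*)^{1-\alpha}-\alpha\bigr]\ge A\ge 0$, which (for nondegenerate $v^*>0$ and $\alpha<1$) is $q^*\ge\alpha^{1/(1-\alpha)}$, with a strict improvement if $A>0$.

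One slip: the sentence that the degenerate cases ``give $q^*\in\{0,1\}$, for which the bound holds trivially'' is false when $q^*=0$, since $0<\alpha^{1/(1-\alpha)}$. What saves you is that $q^*=0$ cannot occur: $\alpha$-strong regularity forces $\phi'\ge\alpha>0$, hence $\phi(x)\to+\infty$ as $x\uparrow B$ (or $\phi(B)=B>0$ if $B<\infty$), so $R'(q)=\phi(v(q))>0$ for $q$ near $0$ and the maximizer satisfies $q^*>0$. You should state that $q^*=0$ is impossible rather than claim the bound is trivial there. For context, the standard proof (in value space, and essentially the Cole--Roughgarden argument) integrates the hazard-rate bound $\frac{f(x)}{1-F(x)}\le\frac{1}{\alpha v^*+(1-\alpha)x}$ for $x\le v^*$, which follows from $(\frac{1-F}{f})'\le 1-\alpha$ together with $\frac{1-F(v^*)}{f(v^*)}=v^*$, giving $1-F(v^*)\ge\bigl(\frac{\alpha v^*+(1-\alpha)A}{v^*}\bigr)^{1/(1-\alpha)}\ge\alpha^{1/(1-\alpha)}$. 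Your Euler-ODE comparison in quantile space is a different parametrization of the same Gronwall-type idea; both are valid, and the value-space version is somewhat shorter.
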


\begin{lemma}[\citet{huang2018making}, Lemma 3.5]\label{lem:alpha_quadratic}
For any $\alpha$-strongly regular distribution $F$, for any $0\le q\le 1$, $R(q^*) - R(q) \ge \frac{\alpha}{3}(q^*-q)^2R(q^*)$. 
\end{lemma}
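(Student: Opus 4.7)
The plan is to work in quantile space, where $R(q) = q\,v(q)$ has the clean derivative $R'(q) = \phi(v(q))$ after using $v'(q) = -1/f(v(q))$; a second differentiation gives $R''(q) = -\phi'(v(q))/f(v(q))$, so $\alpha$-strong regularity yields the pointwise bound $-R''(q) \ge \alpha/f(v(q))$. In particular $R$ is strictly concave with unique maximizer $q^*$ satisfying $\phi(v(q^*)) = 0$, and \cref{lem:alpha_optimal_q} supplies the lower bound $q^* \ge \alpha^{1/(1-\alpha)}$ that will be needed later.

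The key step is to convert this pointwise bound on $R''$ into a closed-form differential inequality. Solving $R'(q) = v(q) - q/f(v(q))$ for $1/f(v(q))$ gives $1/f(v(q)) = R(q)/q^2 - R'(q)/q$, so
\begin{equation*}
-R''(q) \;\ge\; \frac{\alpha R(q)}{q^2} - \frac{\alpha R'(q)}{q},
\end{equation*}
i.e.~$R$ satisfies the Euler-type inequality $q^2 R''(q) - \alpha q R'(q) + \alpha R(q) \le 0$. The associated equality $q^2 y'' - \alpha q y' + \alpha y = 0$ has characteristic exponents $1$ and $\alpha$, so the extremal family is $y(q) = A q + B q^\alpha$, which plays the role of the tight instance.

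The next move is to compare $R$ to the extremal $y$ matched at $q = q^*$: setting $y(q^*) = R(q^*)$ and $y'(q^*) = 0$ pins down $A$ and $B$ up to the overall scale $R(q^*)$ and yields the explicit ratio $y(q)/y(q^*) = [(q/q^*)^\alpha - \alpha (q/q^*)]/(1 - \alpha)$. A standard second-order ODE comparison then gives $R(q) \le y(q)$ on $[0,1]$, reducing the lemma to a one-variable calculus check on this explicit ratio in which the prefactor $(q^*)^2$ hidden inside $(q-q^*)^2$ is controlled by $q^* \ge \alpha^{1/(1-\alpha)}$.

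The main obstacle will be justifying the ODE comparison cleanly, since the inequality is one-sided, the boundary data sit at the interior point $q^*$, and $\alpha = 1$ is degenerate (the two characteristic roots coalesce and the extremal family becomes $A q + B q\log q$). A pedestrian fallback, if the comparison step turns out to be delicate, is to integrate $\phi(v(s)) \ge \alpha(v(s) - v(q^*))$ twice, estimate the resulting double integral via the identity $1/f(v(t)) = (v(t) - R'(t))/t$ together with $R(q^*) = (q^*)^2/f(v(q^*))$, and track the constant down to $\alpha/3$ through separate case analysis for $q < q^*$ and $q > q^*$.
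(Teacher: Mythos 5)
This lemma is cited from \citet{huang2018making} rather than proved in the paper, so there is no in-paper argument to compare against; the question is just whether your proposal is sound, and it is. Your Euler-type differential inequality $q^2 R'' - \alpha q R' + \alpha R \le 0$ is derived correctly, the extremal family $y(q) = Aq + Bq^\alpha$ matched to $R$ and $R'$ at $q^*$ is the right comparison object, and the one-sided comparison $R \le y$ does go through: with $w = y - R$ one has $L[w]\ge 0$, $w(q^*)=w'(q^*)=0$, and the variation-of-parameters kernel $K(q,s) = (u_1(s)u_2(q)-u_2(s)u_1(q))/W(s)$ with $u_1=q$, $u_2=q^\alpha$, $W=(\alpha-1)s^\alpha$ is nonnegative on $q^* \le s \le q$ and nonpositive on $q \le s \le q^*$, which is exactly what the one-sided sign argument needs. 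It is worth stating this sign computation explicitly, since what you are invoking is a Chaplygin-type comparison for the initial value problem at $q^*$, not a Sturm comparison, and it is not quite automatic.

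Two small corrections. First, \cref{lem:alpha_optimal_q} is not what closes the final calculus step: $q^*\ge\alpha^{1/(1-\alpha)}$ is a \emph{lower} bound, and it cannot control the prefactor $(q^*)^2$ from above. In fact no such bound is needed. Writing $t=q/q^*\in[0,1/q^*]$ and $h(t)=\frac{1-\alpha+\alpha t-t^\alpha}{1-\alpha}-\frac{\alpha}{3}(q^*)^2(1-t)^2$, one has $h(1)=h'(1)=0$ and $h''(t)=\alpha t^{\alpha-2}-\frac{2\alpha}{3}(q^*)^2$; since $t^{\alpha-2}$ is decreasing, its minimum on $(0,1/q^*]$ is $(q^*)^{2-\alpha}\ge (q^*)^2 > \tfrac{2}{3}(q^*)^2$, so $h''>0$ throughout, $h\ge 0$, and the lemma follows. (This even shows the constant can be pushed above $\alpha/3$.) Second, the degenerate case $\alpha=1$ is genuinely a removable singularity: $\lim_{\alpha\to1}\frac{t^\alpha-\alpha t}{1-\alpha}=t-t\log t$, the comparison solutions become $q$ and $q\log q$, and the same sign argument and calculus check go through; alternatively one appeals directly to \cref{lem:mhr_quadratic}. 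You should also note that $R$ may only be twice differentiable a.e.\ if $\phi$ is merely $\alpha$-strongly regular in the difference-quotient sense; the kernel representation of $w$ still holds with $L[w]$ interpreted distributionally, or one falls back to the double-integration of $R'(s)=\phi(v(s))$ that you sketch, which is closer to the original argument in \citet{huang2018making}.
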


\begin{lemma}\label{lem:alpha_largest_value}
Let $F$ be an $\alpha$-strongly regular distribution. Let $X=\max\{v_1, \ldots, v_N\}$ where $v_1, \ldots, v_N$ are $N$ i.i.d.~samples from $F$. For any $x\ge v^*$, we have $\Pr[X>x] \le N\left( \frac{v^*}{(1-\alpha)x+\alpha v^*} \right)^{\frac{1}{1-\alpha}}$.
\end{lemma}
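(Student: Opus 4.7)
The plan is to follow the template of the MHR analogue (Lemma on $\Pr[X>x] \le Ne^{-x/v^*+1}$), replacing the bound on the hazard rate coming from monotonicity by one that comes from $\alpha$-strong regularity. The key object is the inverse hazard function $\lambda(x) \defeq \frac{1-F(x)}{f(x)}$, which lets us rewrite the virtual value as $\phi(x) = x - \lambda(x)$. The $\alpha$-strong regularity condition $\phi'(x) \ge \alpha$ becomes the differential inequality $\lambda'(x) \le 1-\alpha$, which is the main handle I will use.

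First I would identify the value of $\lambda$ at $v^*$. By the first-order condition for $v^* = \argmax_v v[1-F(v)]$, one has $(1-F(v^*)) - v^* f(v^*) = 0$, equivalently $\lambda(v^*) = v^*$. Integrating $\lambda'(x) \le 1-\alpha$ from $v^*$ up to any $x \ge v^*$ then yields the linear upper bound
\[\lambda(x) \le v^* + (1-\alpha)(x - v^*) = (1-\alpha)x + \alpha v^*.\]
Consequently, the hazard rate satisfies $\frac{f(x)}{1-F(x)} = \frac{1}{\lambda(x)} \ge \frac{1}{(1-\alpha)x + \alpha v^*}$ for every $x \ge v^*$.

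Next I would plug this into the standard identity $1 - F(x) = \exp\{-\int_0^x \frac{f(v)}{1-F(v)}\, dv\}$, discard the contribution on $[0, v^*]$ (which only makes the bound larger), and evaluate the remaining integral in closed form:
\[\int_{v^*}^x \frac{dv}{(1-\alpha)v + \alpha v^*} = \frac{1}{1-\alpha}\ln\!\left(\frac{(1-\alpha)x + \alpha v^*}{v^*}\right).\]
Exponentiating gives $1-F(x) \le \bigl(\frac{v^*}{(1-\alpha)x + \alpha v^*}\bigr)^{1/(1-\alpha)}$. A union bound $\Pr[X>x] = \Pr[\exists i\colon v_i > x] \le N(1-F(x))$ then yields the claimed inequality.

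I do not anticipate a real obstacle; the only subtlety is justifying that $\lambda$ is differentiable (or at least absolutely continuous) so that the pointwise inequality $\lambda' \le 1-\alpha$ integrates correctly. This is standard for $\alpha$-strongly regular distributions with a positive density on their support, which is the setting the excerpt adopts; a limiting argument covers the remaining edge cases. The $\alpha \to 0$ limit degenerates into the regular case, and the $\alpha \to 1$ limit recovers the MHR exponential tail bound (with constant $e$ instead of $1$ because we integrate from $0$ rather than $v^*$), which is a useful sanity check of the final expression.
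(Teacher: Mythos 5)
Your proof matches the paper's proof step for step: both start from $1-F(x)=\exp\{-\int_0^x \frac{f}{1-F}\}$, truncate the integral to $[v^*,x]$, use the first-order condition to get $\lambda(v^*)=v^*$, integrate $\lambda' \le 1-\alpha$ to bound the hazard rate, evaluate the resulting log-integral, and finish with a union bound. The argument is correct and essentially identical to the paper's.
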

\begin{proof}
 Note that $1-F(x)=\exp\left\{-\int_0^x\frac{f(v)}{1-F(v)}\dd v\right\} \le \exp\left\{-\int_{v^*}^x\frac{f(v)}{1-F(v)}\dd v\right\}$. By the definition of $v^*$ we know $(v^*[1-F(v^*)])'=0$, or $\frac{f(v^*)}{1-F(v^*)}=\frac{1}{v^*}$. By the definition of $\alpha$-strong regularity, we have 
 \begin{equation*}
     \left( \frac{1-F(x)}{f(x)} \right)'=1-\frac{d\phi}{dx} \le 1-\alpha
 \end{equation*}
 and 
 \begin{equation*}
     \frac{1-F(x)}{f(x)} \le \frac{1-F(v^*)}{f(v^*)} + (1-\alpha) (x-v^*).
 \end{equation*}
 Thus 
 \begin{align*}
     \int_{v^*}^x\frac{f(v)}{1-F(v)} & \ge \int_{v^*}^x \frac{1}{\frac{1-F(v^*)}{f(v^*)} + (1-\alpha) (v-v^*)}\dd v = \frac{1}{1-\alpha} \left[ \ln \left(v^*+(1-\alpha)(x-v^*)\right) - \ln v^* \right]
 \end{align*}
 and 
 \begin{equation*}
     1-F(x) \le \exp \left\{-\frac{1}{1-\alpha} \ln \frac{v^*+(1-\alpha)(x-v^*)}{v^*} \right\} = \left( \frac{v^*}{(1-\alpha)x+\alpha v^*} \right)^{\frac{1}{1-\alpha}}.
 \end{equation*}
Then the lemma follows from a simple union bound: 
\[\Pr[X>x] = \Pr[\exists i, v_i>x] \le N[1-F(x)] \le N\left( \frac{v^*}{(1-\alpha)x+\alpha v^*} \right)^{\frac{1}{1-\alpha}}.\]
\end{proof}

\begin{claim} \label{claim:alpha_concentration}
(Improved \cref{claim:concentration})
Define event $\conc$: 
\[ \conc=\left[\forall j>m, \left | q_j - \frac{j}{N} \right | \le 2\sqrt{\frac{3}{\alpha} \frac{\ln(2(N-m))}{N-m} } + \frac{m}{N} \right], \]
then $\Pr[\overline{\conc}] \le \frac{1}{(N-m)^{\frac{3-2\alpha}{2\alpha}}}$,
where the probability is over the random draw of the $N-m$ samples $v_{-I}$.
\end{claim}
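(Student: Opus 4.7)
}
The plan is to recover the improved constant by calibrating the confidence parameter $\delta$ in Lemma~\ref{lem:concentration} appropriately, so that the trade-off between the failure probability and the deviation bound is shifted in the direction demanded here. Concretely, I would set
\[
\delta \;=\; (N-m)^{-\frac{3-2\alpha}{2\alpha}},
\]
which directly yields the claimed bound $\Pr[\overline{\conc}] \le \delta$ once the in-event inequality is verified.

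With this choice, the quantity inside the logarithm in Lemma~\ref{lem:concentration} becomes $2(N-m)\delta^{-1} = 2\,(N-m)^{3/(2\alpha)}$, so
\[
\ln\bigl(2(N-m)\delta^{-1}\bigr) \;=\; \ln 2 + \tfrac{3}{2\alpha}\ln(N-m) \;\le\; \tfrac{3}{2\alpha}\ln\bigl(2(N-m)\bigr),
\]
where the last step uses $\alpha \in (0,1]$, so $\tfrac{3}{2\alpha}\ge 1$. Plugging this estimate into the two terms produced by Bernstein in Lemma~\ref{lem:concentration} gives, for each $j>m$,
\[
\Bigl|q_j - \tfrac{j}{N}\Bigr| \;\le\; \sqrt{\tfrac{3}{\alpha}\cdot\tfrac{\ln(2(N-m))}{N-m}} \;+\; \tfrac{3}{2\alpha}\cdot\tfrac{\ln(2(N-m))}{N-m} \;+\; \tfrac{m}{N}.
\]

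The only remaining step is to absorb the linear-in-$\ln(N-m)/(N-m)$ term into the square-root term: for $N-m$ large enough (which is the only regime that matters for the asymptotic applications where this claim is invoked, exactly as in the original Claim~\ref{claim:concentration}) we have $\ln(2(N-m))/(N-m) \le 4\alpha/3$, and therefore the linear term is bounded by $\sqrt{(3/\alpha)\ln(2(N-m))/(N-m)}$. Combining the two gives the factor-of-$2$ square-root bound $2\sqrt{(3/\alpha)\ln(2(N-m))/(N-m)}$ stated in the claim, and adding back the $m/N$ triangle-inequality slack completes the proof.

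I do not anticipate any real obstacle: the proof is essentially a calibration exercise, and the only mildly subtle point is keeping track of the asymmetric exponent $(3-2\alpha)/(2\alpha)$ when manipulating $\ln(2(N-m)^{3/(2\alpha)})$. The reason this recalibration is worthwhile is that the tighter leading constant $\sqrt{3/\alpha}$ (as opposed to the $\sqrt{2}$-style constant obtained from the $\delta=1/(N-m)$ choice in Claim~\ref{claim:concentration}) is what feeds, via Lemmas~\ref{lem:alpha_quadratic} and~\ref{lem:alpha_largest_value}, into the strongly-regular analogues of the MHR bounds in Section~\ref{sec:improve_MHR}; the weaker failure probability $(N-m)^{-(3-2\alpha)/(2\alpha)}$ is still summable into the final $\Pr[\event]$-style error terms used there.
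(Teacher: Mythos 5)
Your proof is correct and follows exactly the same route as the paper's: both instantiate Lemma~\ref{lem:concentration} with $\delta = (N-m)^{-\frac{3-2\alpha}{2\alpha}}$. The paper's proof is literally the one-line statement of this choice, so your write-up simply fills in the verification the paper leaves implicit --- namely $\ln(2(N-m)\delta^{-1}) = \ln 2 + \tfrac{3}{2\alpha}\ln(N-m) \le \tfrac{3}{2\alpha}\ln(2(N-m))$ (using $\alpha\le 1$), and the absorption of the Bernstein linear term into one extra copy of the square-root term, valid once $\ln(2(N-m))/(N-m) \le 4\alpha/3$. Both checks are right.

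One small correction to your closing motivation: the word ``Improved'' in the claim refers to the \emph{failure probability}. Choosing $\delta=(N-m)^{-(3-2\alpha)/(2\alpha)}$ makes $\Pr[\overline{\conc}]$ decay like $(N-m)^{-(3-2\alpha)/(2\alpha)}$, much faster than the $(N-m)^{-1}$ of Claim~\ref{claim:concentration}, which is what is later needed to cancel the $N^{3(1-\alpha)/(2\alpha)}v^*$ threshold in the $\alpha$-strongly-regular analysis. The price paid is that the in-event deviation constant $2\sqrt{3/\alpha}$ is generally \emph{looser} than the $2\sqrt{4}$ of Claim~\ref{claim:concentration} (it is tighter only when $\alpha>3/4$), not tighter as you wrote. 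This does not affect the proof, which is right.
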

\begin{proof}
Set $\delta=\frac{1}{(N-m)^{\frac{3-2\alpha}{2\alpha}}}$ in \cref{lem:concentration}. 
\end{proof}

\begin{lemma}\label{lem:alpha_revenue_modified}
Let $F$ be an $\alpha$-strongly regular distribution. Suppose $m=o(\sqrt{N})$. Fix $m$ values $v_I$ to be $+\infty$, and randomly draw $N-m$ values $v_{-I}$ from $F$. Let $k^*=\argmax_{i> cN}\{iv_{i}\}$, i.e., the index selected by $\erm^c$, where $\left(\frac{\log N}{N}\right)^{\frac{1}{3}}\le c\le \frac{\alpha^{1/(1-\alpha)}}{2}$. Then we have 
\[R(q_{k^*}) \ge \left(1-O\left(\sqrt{\frac{\log N}{N}}\right)\right)R(q^*),\] 
with probability at least $1-O\left(\frac{1}{N^{\frac{3-2\alpha}{2\alpha}}}\right)$. The constants in the big $O$'s depend on $\alpha$. 

\end{lemma}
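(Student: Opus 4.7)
The argument will adapt the three-step strategy of Lemma~\ref{lem:mhr_revenue_modified} to the $\alpha$-strongly regular setting, using Lemma~\ref{lem:alpha_optimal_q} ($q^* \ge \alpha^{1/(1-\alpha)}$), Lemma~\ref{lem:alpha_quadratic} (quadratic revenue drop with constant $\alpha/3$), and Claim~\ref{claim:alpha_concentration} with $\gamma = \Theta_\alpha(\sqrt{(\log N)/N})$; Lemma~\ref{lem:alpha_largest_value} will turn out not to be invoked here. I would first condition on the concentration event, which fails with probability $O(N^{-(3-2\alpha)/(2\alpha)})$ and otherwise guarantees $|q_j - j/N| \le \gamma$ for every $j > m$.

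\paragraph{Step 1 (benchmark).} As in the MHR proof, some i.i.d.\ draw lands in the quantile window $[q^* - \gamma, q^*]$ except with probability $(1-\gamma)^{N-m} = o(1/N)$. Since $q^* \ge \alpha^{1/(1-\alpha)} \ge 2c$, the index $i$ of that sample satisfies $i/N \ge q^* - 2\gamma > c$, so $\erm^c$ considers it, giving $(i/N)\, v_i \ge (q^* - 2\gamma)\, v^* = (1 - O(\gamma/q^*))\, R(q^*)$.

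\paragraph{Step 2 (exclude far quantiles).} Suppose $\erm^c$ selects $j$ with $(j/N) v_j \ge (i/N) v_i$. Because $j > cN$, concentration gives $q_j \ge c - \gamma = \Omega_\alpha((\log N/N)^{1/3})$. Split on $q_j \ge q^*/2$ versus $q_j < q^*/2$. In the first case, $R(q_j) = q_j v_j \ge (j/N)\,v_j \cdot (1 - \gamma/(q^*/2 - \gamma)) \ge (1 - O(\gamma/q^*))(i/N) v_i \ge (1 - O(\sqrt{(\log N)/N}))\, R(q^*)$, which is the desired conclusion. In the second case, I would derive a contradiction: from $(q_j + \gamma) v_j \ge (j/N) v_j \ge (1 - O(\gamma)) R(q^*)$ and $q_j \ge c - \gamma$, multiplying by $q_j/(q_j + \gamma) = 1 - O(\gamma/c) = 1 - O((\log N/N)^{1/6})$ yields $R(q_j) \ge (1 - o(1)) R(q^*)$; but Lemma~\ref{lem:alpha_quadratic} with $q^* - q_j \ge q^*/2$ forces $R(q_j) \le (1 - (\alpha/3)(q^*/2)^2) R(q^*) \le (1 - (\alpha/12)\alpha^{2/(1-\alpha)}) R(q^*)$, a bound strictly less than $1$ by a positive $\alpha$-dependent constant.

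\paragraph{Main obstacle.} The chief difficulty in directly transcribing the MHR argument is that Lemma~\ref{lem:alpha_largest_value} only provides a polynomial tail on $\max v_i$, rather than the exponential tail enjoyed by MHR distributions. In the MHR proof, the analogous small-quantile case ($q_j \in [0, 1/T)$ with $T = N^{1/4}$) was closed using the exponential tail; if one tried to replay that step verbatim, the polynomial tail would swamp the target failure probability $N^{-(3-2\alpha)/(2\alpha)}$ for all but the largest $\alpha$. The key observation that makes the plan work is that the hypothesis $c \ge (\log N/N)^{1/3}$ cuts off the small-quantile regime altogether: every considered index has $q_j = \Omega((\log N/N)^{1/3}) \gg \gamma$, so the ``middle'' range collapses into the single interval $[c-\gamma,\, q^*/2]$ and the dichotomy above suffices. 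The remaining work is bookkeeping of the $\alpha$-dependent constants arising from $\alpha^{1/(1-\alpha)}$ in Lemma~\ref{lem:alpha_optimal_q}, $\alpha/3$ in Lemma~\ref{lem:alpha_quadratic}, and the $(1-\alpha)^{-1}$ scaling in Claim~\ref{claim:alpha_concentration}.
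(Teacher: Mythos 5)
Your proposal is correct and follows essentially the same route as the paper's proof: Step~1 is identical, and for Step~2 you correctly spot that the hypothesis $c \ge (\log N/N)^{1/3}$ eliminates the small-quantile case entirely (so Lemma~\ref{lem:alpha_largest_value} is never invoked), leaving only a two-way dichotomy with the quadratic revenue drop of Lemma~\ref{lem:alpha_quadratic} supplying the contradiction. The one cosmetic difference is your choice of threshold $q^*/2$ rather than the paper's $\alpha^{1/(1-\alpha)}/2 + \gamma$; both are valid since $q^* \ge \alpha^{1/(1-\alpha)}$ and $\gamma \to 0$, so the argument is the same in substance.
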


\begin{proof}
Let $\gamma \defeq 2\sqrt{\frac{3}{\alpha} \frac{\log(2(N-m))}{N-m} } + \frac{m}{N} = O\left(\sqrt{\frac{\log N}{N}}\right)$ as in \cref{claim:alpha_concentration}. We have $| q_j - \frac{j}{N}|\le \gamma$ for any $j>m$ with probability at least $1-\frac{1}{(N-m)^{\frac{3-2\alpha}{2\alpha}}}$. We thus assume $| q_j - \frac{j}{N}|\le \gamma$. For simplicity, we define $e(\alpha)=\alpha^{1/(1-\alpha)}$, and \cref{lem:alpha_optimal_q} implies $q^* \ge e(\alpha)$.

The intuition is follows: The product $jv_j$ divided by $N$ approximates $R(q_j)=q_j v_j$ up to an $O(\gamma)$ error. Our proof consists of three steps: The first step is to show that with high probability, there must be some sample with quantile $q_i$ that is very close to $q^*$ so its revenue $R(q_i)\approx R(q^*) \approx\frac{i}{N}v_i$. The second step is to argue that all samples with quantile $q_j<\frac{e(\alpha)}{2}$ are unlikely to be chosen by $\erm^c$ because $q_j$ is too small and the gap between $q^*$ and $\frac{e(\alpha)}{2}$ leads to a large loss in revenue, roughly speaking, $\frac{j}{N}v_j\approx R(q_j) < (1-\frac{\alpha}{3}(\frac{e(\alpha)}{2})^2) R(q^*)\approx (1-\Omega(1))\frac{i}{N}v_i$. The final step is to show that if a quantile $q_j>\frac{e(\alpha)}{2}$ is to be chosen by $\erm^c$, then it must have equally good revenue as $q_i$.

Formally:
\begin{enumerate}
    \item Firstly, consider the quantile interval $[q^*-\gamma, q^*]$. Each random draw $q_i$, if falling into this interval, will satisfy: 
\begin{equation}\label{eqn:alpha_ivi}
\frac{i}{N}v_i \ge (q_i-\gamma)v_i\ge (q^*-2\gamma)v_i \ge (q^*-2\gamma)v^* \ge (1-2\gamma/e(\alpha))q^*v^*, 
\end{equation}
where the last but one inequality is because $q_i\le q^*$ and the last one follows from $q^*\ge e(\alpha)$. The probability that no quantile falls into $[q^*-\gamma, q^*]$ is at most
\[(1-\gamma)^{N-m}=\left(1-O\left(\sqrt{\frac{\log N}{N}}\right)\right)^{N-m} = o(\frac{1}{N^{\frac{3-2\alpha}{2\alpha}}}).\]

\item For the second step, first note that the $q_i\in[q^*-\gamma, q^*]$ in the first step will be considered by $\erm^c$ since $i\ge(q_i-\gamma)N\ge (q^*-2\gamma)N\ge (e(\alpha)-2\gamma)N > cN$. Then suppose $\erm^c$ chooses another quantile $q_j$ instead of $q_i$, we must have
\begin{equation}\label{eqn:alpha_jvj_vs_ivi}
\frac{j}{N}v_j\ge \frac{i}{N}v_i.
\end{equation}
We will show that such $q_j$ does not exist. 

Suppose $\erm^c$ chooses $q_j$, then $j$ must satisfy $j/N > c > \left( \frac{\log N}{N} \right)^{\frac{1}{3}}$, and as a result, $q_j > \left( \frac{\log N}{N} \right)^{\frac{1}{3}} - \gamma$.

If $\left( \frac{\log N}{N} \right)^{\frac{1}{3}} - \gamma < q_j<\frac{e(\alpha)}{2}+\gamma$, note that
\begin{equation}\label{eqn:alpha_jvj_middle_q_j}
\frac{j}{N}v_j \le (q_j+\gamma) v_j \le \left(1+\frac{\gamma}{\left(\frac{\log N}{N}\right)^{\frac{1}{3}}-\gamma}\right)q_jv_j,  
\end{equation}
together with \eqref{eqn:alpha_jvj_vs_ivi} and \eqref{eqn:alpha_ivi}, we obtain $\left(1+\frac{\gamma}{\left(\frac{\log N}{N}\right)^{\frac{1}{3}}-\gamma}\right)q_jv_j> (1-2\gamma/e(\alpha))q^*v^*$. Then by \cref{lem:alpha_quadratic},
\begin{equation}\label{eqn:alpha_case_middle_conclude}
\frac{1-2\gamma/e(\alpha)}{1+\frac{\gamma}{\left(\frac{\log N}{N}\right)^{\frac{1}{3}}-\gamma}} \le \frac{q_jv_j}{q^*v^*} \le 1-\frac{\alpha}{3}(q_j-q^*)^2 \le 1-\frac{\alpha}{3}(\frac{e(\alpha)}{2}-\gamma)^2.
\end{equation}
However, the left hand side of \eqref{eqn:alpha_case_middle_conclude} approaches to 1 while the right hand side is strictly less than 1, a contradiction. So this case never happens.

    \item Finally, if $q_j\ge \frac{e(\alpha)}{2}+\gamma$. We argue that if $\erm^c$ picks $q_j$ instead of $q_i$, then $R(q_j)$ approximates $R(q_i)$ well, satisfying the conclusion in the lemma. This is because
    \begin{align*}
        R(q_j) = q_jv_j \ge{}& (\frac{j}{N}-\gamma)v_j \\
        \ge{}& (1-\frac{2\gamma}{e(\alpha)})\frac{j}{N}v_j && \frac{j}{N}\ge q_j-\gamma\ge \frac{e(\alpha)}{2}\\
        \ge{}& (1-\frac{2\gamma}{e(\alpha)})\frac{i}{N}v_i && \text{\cref{eqn:alpha_jvj_vs_ivi}}\\
        \ge{}& (1-\frac{2\gamma}{e(\alpha)})(1-\frac{2\gamma}{e(\alpha)})q^*v^* && \text{\cref{eqn:alpha_ivi}} \\
        ={}& (1-O(\gamma))R(q^*).
    \end{align*}
\end{enumerate}

Combining above three steps and the event in the beginning of the proof, we have $R(q_{k^*}) \ge (1-O(\sqrt{\frac{\log N}{N}}))R(q^*)$ except with probability at most 
\[\frac{1}{(N-m)^{\frac{3-2\alpha}{2\alpha}}} + o(\frac{1}{N^{\frac{3-2\alpha}{2\alpha}}}) = O(\frac{1}{N^{\frac{3-2\alpha}{2\alpha}}}). \]
\end{proof}

\begin{lemma}\label{lem:alpha_event}
Let $F$ be an $\alpha$-strongly regular distribution. Suppose $m=o(\sqrt{N})$. Fix $m$ values $v_I$ to be $+\infty$, and randomly draw $N-m$ values $v_{-I}$ from $F$. Let $k^*=\argmax_{i> cN}\{iv_{i}\}$, i.e., the index selected by $\erm^c$, where $\left(\frac{\log N
}{N}\right)^{\frac{1}{3}}\le c\le \frac{\alpha^{1/(1-\alpha)}}{2}$. Let $\epsilon=\sqrt[4]{\frac{\log N}{N}}$. Then with probability at least $1-O\left(\frac{1}{N^{\frac{3-2\alpha}{2\alpha}}}\right)$, the following three inequalities hold:
\begin{enumerate}
    \item $q_{k^*}\ge q^*-O(\epsilon)$;
    \item $k^* \ge [q^*-O(\epsilon)] N > \frac{\alpha^{1/(1-\alpha)}}{2}N$; 
    \item $v_{k^*}\le [1+O(\epsilon)]v^*$. 
\end{enumerate}
The constants in the big $O$'s depend on $\alpha$. 
\end{lemma}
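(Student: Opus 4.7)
The plan is to mirror the proof of \cref{lem:mhr_event}, swapping the MHR-specific ingredients for their $\alpha$-strongly-regular analogues. All three conclusions will follow from just two high-probability events: the revenue approximation guarantee \cref{lem:alpha_revenue_modified}, and the quantile concentration event $\conc$ of \cref{claim:alpha_concentration}. Each fails with probability at most $O(1/N^{(3-2\alpha)/(2\alpha)})$, so by a union bound we may condition on both holding while losing only $O(1/N^{(3-2\alpha)/(2\alpha)})$ in the total probability bound, which is precisely the target bound in the statement.

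For inequality (1), I would combine the quadratic lower bound on the revenue loss in \cref{lem:alpha_quadratic}, namely $R(q^*)-R(q_{k^*}) \ge \frac{\alpha}{3}(q^*-q_{k^*})^2 R(q^*)$, with the revenue approximation $R(q_{k^*}) \ge (1-O(\sqrt{\log N/N})) R(q^*)$ from \cref{lem:alpha_revenue_modified}. Rearranging gives $(q^*-q_{k^*})^2 \le \frac{3}{\alpha} \cdot O(\sqrt{\log N/N})$, so taking square roots yields $q^* - q_{k^*} \le O(\sqrt[4]{\log N/N}) = O(\epsilon)$, where the hidden constant depends on $\alpha$. This is exactly (1).

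For inequality (2), I would invoke the $\conc$ event of \cref{claim:alpha_concentration}, which gives $|q_{k^*}-k^*/N| \le O(\sqrt{\log N/N}) + m/N = O(\sqrt{\log N/N})$ (the $m/N$ term is negligible since $m=o(\sqrt{N})$). Combining with (1) yields $k^*/N \ge q^* - O(\epsilon)$. Since \cref{lem:alpha_optimal_q} guarantees $q^* \ge \alpha^{1/(1-\alpha)}$ and $\epsilon \to 0$ as $N \to \infty$, for sufficiently large $N$ the right-hand side exceeds $\frac{\alpha^{1/(1-\alpha)}}{2}$. Inequality (3) then follows routinely: from $R(q_{k^*}) \le R(q^*)$ we get $v_{k^*}/v^* \le q^*/q_{k^*}$, and using (1) together with $q^* \ge \alpha^{1/(1-\alpha)}$ gives $q^*/q_{k^*} \le 1 + O(\epsilon)/(\alpha^{1/(1-\alpha)} - O(\epsilon)) = 1 + O(\epsilon)$, again with an $\alpha$-dependent constant.

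I do not anticipate any serious obstacle in this argument, since the real work has already been done in establishing \cref{lem:alpha_revenue_modified}: once one has a revenue-approximation guarantee and a quadratic lower bound on revenue loss away from $q^*$, the three inequalities are essentially algebraic consequences. The only mild subtlety is tracking that the probability bound $O(1/N^{(3-2\alpha)/(2\alpha)})$ from \cref{claim:alpha_concentration} dominates (and is consistent with) the probability bound in \cref{lem:alpha_revenue_modified}, so that a union bound leaves the overall failure probability at the claimed order. No new probabilistic arguments beyond those two events are needed.
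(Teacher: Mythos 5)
Your proposal matches the paper's proof essentially step for step: inequality (1) from \cref{lem:alpha_quadratic} plus \cref{lem:alpha_revenue_modified}, inequality (2) from \cref{claim:alpha_concentration} plus (1) and \cref{lem:alpha_optimal_q}, inequality (3) from the identity $v_{k^*}/v^* = (R(q_{k^*})/q_{k^*})(q^*/R(q^*)) \le q^*/q_{k^*}$, and a union bound over the two failure events to control the overall probability. No substantive differences.
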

\begin{proof}
Define $e(\alpha)=\alpha^{1/(1-\alpha)}$. We have $q^*\ge e(\alpha)$. 

For inequality (1), by \cref{lem:alpha_quadratic} and \cref{lem:alpha_revenue_modified}, with probability at least $1-O(\frac{1}{N^{\frac{3-2\alpha}{2\alpha}}})$, we have
\[ \frac{\alpha}{3}(q_{k^*} - q^*)^2 \le \frac{R(q^*) - R(q_{k^*})}{R(q^*)} \le O\left(\sqrt{\frac{\log N}{N}}\right). \]
Taking the square root, we obtain $q_{k^*}\ge q^* - O\left(\sqrt[4]{\frac{\log N}{N}}\right)$. 

To prove (2), note that by \cref{claim:alpha_concentration}, we have $\frac{k^*}{N}\ge q_{k^*} - O\left(\sqrt{\frac{\log N}{N}}\right) \ge q^*-O(\epsilon)$. 

Finally, (3) follows from
\[\frac{v_{k^*}}{v^*}=\frac{R(q_{k^*})}{q_{k^*}}\frac{q^*}{R(q^*)}\le 1\cdot \frac{q^*}{q_{k^*}}\le \frac{q^*}{q^*-O(\epsilon)} = 1+\frac{O(\epsilon)}{q^*-O(\epsilon)} \le 1+O\left(\frac{\epsilon}{e(\alpha)}\right).  \]
\end{proof}

\subsection{Detailed Proof of Theorem \ref{thm:connection_TPM_PRM_combined} for $\alpha$-Strongly Regular Distributions}
\label{sec:proof_lemma_1_alpha}

Let $\Delta U(\bm{v}_i, b_I, v_{-I}) = U_i^{\TP}(\bm{v}_i, b_I, v_{-I}) - U_i^{\TP}(\bm{v}_i, v_I, v_{-I})$. 
Similar to the proof for bounded distributions, we have for any $\bm{v}_i, b_I, v_{-I}$, 
\begin{align*}
\Delta U(\bm{v}_i, b_I, v_{-I}) 
\le{} m_2\cdot \bigg(\erm^c(v_I, v_{-I}) - \erm^c(b_I, v_{-I}) \bigg) \le{} m_2 \cdot \erm^c(v_I, v_{-I}) \cdot \delta^{\interim}_{m_1}(v_{-I}).
\end{align*}

By \cref{claim:interim_to_delta}, we have $\erm^c(v_I, v_{-I})\le \erm^c(\overline{v}_I, v_{-I})$ where $\overline{v}_I$ can be any $m_1$ values (e.g., $+\infty$) that are greater than the maximal value in $v_{-I}$, when $c\ge \frac{m_1}{T_1K_1}$. 

Let $N=T_1K_1$, define two threshold prices $T_1=N^{\frac{3(1-\alpha)}{2\alpha}}v^*$ and $T_2=[1+O(\epsilon)]v^*$ where $\epsilon=\sqrt[4]{\frac{\log N}{N}}$ as in \cref{lem:alpha_event}. 

Note that for sufficiently large $N$, $T_1>T_2$. With the random draw of $v_{-I}$ from $F$, denote the random variable $\erm^c(\overline{v}_I, v_{-I})$ by $P$, we have: 

\begin{align}
\E_{\bm{v}_{-i}}\left[ \Delta U(\bm{v}_i, b_I, v_{-I})\right] 
= {} & \E_{v_{-I}}\left[ \Delta U(\bm{v}_i, b_I, v_{-I}) \mid P\le T_2\right]\cdot \Pr[P\le T_2] \nonumber\\
+ {} & \E_{v_{-I}}\left[ \Delta U(\bm{v}_i, b_I, v_{-I}) \mid T_2<P\le T_1\right]\cdot \Pr[T_2<P\le T_1] \nonumber \\
+ {} &\E_{v_{-I}}\left[ \Delta U(\bm{v}_i, b_I, v_{-I}) \mid P>T_1\right]\cdot \Pr[P>T_1] \nonumber \\
\defeq{}& \E_1 + \E_2 + \E_3. 
\end{align}

\begin{enumerate}
\item For the first term $\E_1$, 
\begin{align*}
\E_1 ={} &\E_{v_{-I}}\left[ \Delta U(\bm{v}_i, b_I, v_{-I}) \mid P\le T_2\right]\cdot \Pr[P\le T_2] \\
\le{}& \E_{v_{-I}}\left[ m_2\cdot P \cdot \delta^{\interim}_{m_1}(v_{-I}) \mid P\le T_2\right]\cdot \Pr[P\le T_2] \\
\le{} & m_2 \cdot T_2\cdot \E_{v_{-I}}\left[\delta^{\interim}_{m_1}(v_{-I})  \mid P\le T_2\right] \cdot \Pr[P\le T_2] \\
\le{} & m_2 \cdot [1+O(\epsilon)]v^*\cdot \E_{v_{-I}}\left[ \delta^{\interim}_{m_1}(v_{-I})\right] \\
={} & O\left(m_2 \cdot v^*\cdot \Delta^{\interim}_{N, m_1}  \right).
\end{align*}

\item For the second term, we claim that $\E_2=O(\frac{m_2v^*}{\sqrt{N}})$.

By \cref{lem:alpha_event}, we have $\Pr[P>[1+O(\epsilon)]v^*]\le O(\frac{1}{N^{\frac{3-2\alpha}{2\alpha}}})$. 
Therefore, 
\begin{align*}
\E_2 ={} &\E_{v_{-I}}\left[ \Delta U(\bm{v}_i, b_I, v_{-I}) \mid T_2< P\le T_1\right]\cdot \Pr[T_2< P\le T_1] \\
\le{} & \E_{v_{-I}}\left[ m_2\cdot P \cdot 1 \mid T_2< P\le T_1\right]\cdot \Pr[T_2< P\le T_1] \\
\le{} & m_2 \cdot T_1\cdot \Pr[P> T_2] \\
\le{} & m_2 \cdot N^{\frac{3(1-\alpha)}{2\alpha}}v^* \cdot O \left(\frac{1}{N^{\frac{3-2\alpha}{2\alpha}}}\right) \\
={} & O\left(\frac{m_2v^*}{\sqrt{N}}\right).\\
\end{align*}

\item For the third term, we claim that $\E_3 = o(\frac{m_2v^*}{\sqrt{N}})$.

Let $B$ be the upper bound on the support of $F$ ($B$ can be $+\infty$). Let $F_{P}(x)$ be the distribution of $P$. For convenience, suppose it is continuous and has density $f_P(x)$. We have: 
\begin{align*}
\E_3 ={}& \E_{v_{-I}}\left[ \Delta U(\bm{v}_i, b_I, v_{-I}) \mid P>T_1\right]\cdot \Pr[P>T_1] \\
\le{}& \E_{v_{-I}}\left[m_2\cdot P\cdot 1\mid P>T_1 \right]\cdot\Pr[P>T_1]\\
= {}& m_2\cdot \E_{v_{-I}}\left[ P \mid P>T_1 \right]\cdot\Pr[P>T_1] \\
={}& m_2\cdot \int_{T_1}^B xf_{P}(x)\dd x \\
={}& m_2\cdot\left( \int_{T_1}^B [1-F_{P}(x)]\dd x + T_1[1-F_{P}(T_1)]\right).
\end{align*}

Let $\max\{v_{-I}\}$ denote the maximum value in the $N-m_1$ samples $v_{-I}$. By \cref{lem:alpha_largest_value}, we have for any $x\ge v^*$, 
\[1-F_{P}(x)= \Pr[P>x] \le \Pr[\max\{v_{-I}\}>x] \le N\left( \frac{v^*}{(1-\alpha)x+\alpha v^*} \right)^{\frac{1}{1-\alpha}}.\]
Thus, 
\begin{align*}
& \int_{T_1}^B [1-F_P(x)]\dd x + T_1[1-F_{P}(T_1)] \\
\le{}& \int_{T_1}^B N\left( \frac{v^*}{(1-\alpha)x+\alpha v^*} \right)^{\frac{1}{1-\alpha}}\dd x + T_1 N \left( \frac{v^*}{(1-\alpha)T_1+\alpha v^*} \right)^{\frac{1}{1-\alpha}} \\
\le {}& \frac{N}{\alpha} \cdot \frac{(v^*)^{\frac{1}{1-\alpha}}}{[(1-\alpha)T_1+\alpha v^*]^{\frac{\alpha}{1-\alpha}}} + T_1 N \left( \frac{v^*}{(1-\alpha)T_1+\alpha v^*} \right)^{\frac{1}{1-\alpha}}\\
={}& \frac{N}{\alpha} \cdot (T_1+\alpha v^*) \left( \frac{v^*}{(1-\alpha)T_1+\alpha v^*} \right)^{\frac{1}{1-\alpha}} \\
={}& O\left(\frac{v^*}{\sqrt{N}}\right),
\end{align*}
as desired.
\end{enumerate}

Combining the three items, 
\begin{align*} \E_{v_{-I}}\left[\Delta U(v_I, b_I, v_{-I})\right] = O\left(m_2v^* \Delta^{\interim}_{N, m_1}  \right) + O\left(\frac{m_2v^*}{\sqrt{N}}\right), \end{align*}
where the constants in $O$'s depend on $\alpha$.

\section{Lower Bounds}\label{sec:appendix-lower-bound}
\subsection{Discussion}
\noindent
{\bf A lower bound on $\Delta^{\interim}_{N, m}$. }
Theorem~\ref{thm:discount_combined} gives an upper bound on $\Delta^{\interim}_{N, m}$ for a specific range of $c$'s. When one considers respective lower bounds, a preliminary question would be: how does the choice of $c$ affect the possible lower bound? The following result shows that it is enough to prove a lower bound for one specific $c$ in the range of allowed $c$'s. The same lower bound will then hold for all $c$'s in that range.

\begin{proposition}\label{prop:choice_of_c}
Let $\Delta^{\interim}_{N, m}(c)$ denote the worst-case incentive-awareness measure of $\erm^c$. Suppose $m=o(\sqrt{N})$. 
\begin{itemize}
    \item For bounded distributions, $\Delta^{\interim}_{N, m}(c) = \Delta^{\interim}_{N, m}(\frac{m}{N})$, for any $c\in[\frac{m}{N}, \frac{1}{2D}]$. 
    \item For MHR distributions, 
    $\Delta^{\interim}_{N, m}(c)$ is bounded by $\Delta^{\interim}_{N, m}(\frac{m}{N}) \pm O(\frac{1}{N})$, for any $c\in[\frac{m}{N}, \cupperboundmhr]$. 
\end{itemize}
\end{proposition}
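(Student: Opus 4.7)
The plan is to exploit the fact that $\erm^c(v_1,\ldots,v_N)$ depends on $c$ only through the constraint $i>cN$ inside its argmax; if the selected index $k^*$ exceeds $cN$ for every $c$ in the stated range, then $\erm^c$ and $\erm^{m/N}$ coincide pointwise. Via \cref{claim:interim_to_delta} this reduces the proposition to showing pointwise equality (respectively, up to $O(1/N)$ in expectation) of both the denominator $\erm^c(\overline{v}_I, v_{-I})$ and the numerator $\inf_{b_I}\erm^c(b_I, v_{-I})$ in $\delta_I$ across $c$. By \cref{claim:multiple_bid} I may further restrict the infimum to symmetric bids $b_I = (b,\ldots,b)$, so the task reduces to showing that for every $b \geq 0$ the argmax of $\erm^{m/N}(b,\ldots,b,v_{-I})$ sits above $cN$ for all admissible~$c$.

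For the bounded case I would split on whether $b \geq 1$. If $b \geq 1$, all $N$ merged values are at least $1$, so $v_N \geq 1$ and $Nv_N \geq N$; since at most $m$ of the top $i$ entries come from $b_I$, one also has $v_i \leq v^{(i-m)} \leq D$ for $i>m$, hence $iv_i \leq N$ on $m<i\leq N/D$, and the tie-breaking rule forces $k^* > N/D$. If $b<1$, the $m$ copies of $b$ sit at positions $N-m+1,\ldots,N$ so that $v_i = v^{(i)}$ for $i \leq N-m$; then $(N-m)v^{(N-m)} \geq N-m > N/2$ (using $m = o(\sqrt N) < N/2$), while $iv_i \leq N/2$ for every $i \leq N/(2D)$, which strictly excludes $k^* \leq N/(2D)$. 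Either subcase gives $k^* > N/(2D) \geq cN$ for all $c \leq 1/(2D)$, so $\erm^c(b,\ldots,b,v_{-I}) = \erm^{m/N}(b,\ldots,b,v_{-I})$; the denominator is handled by the $b \to +\infty$ instance of the first subcase. Passing to the infimum over $b$ and forming the ratio then gives the exact equality $\Delta^{\interim}_{N,m}(c) = \Delta^{\interim}_{N,m}(m/N)$.

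For the MHR case the support-based inputs $v_i \leq D$ and $v_N \geq 1$ are unavailable, so I will substitute \cref{lem:mhr_event}: on a good event $\mathrm{Good}$ of probability $1 - O(1/N)$ over $v_{-I}$, the argmax of $\erm^{m/N}(\overline{v}_I, v_{-I})$ satisfies $k^* > N/(2e)$. A parallel split on $b$ relative to the smallest order statistic $v^{(N-m)}$ (small $b$ falls to the bottom, so the argmax over $i \leq N-m$ is governed by the same concentration as \cref{lem:mhr_event} and yields $k^* > N/(2e)$ or $k^*=N$; large $b$ only alters indices $\leq m$, so the lemma still applies to the lower tail from which $k^*$ is drawn) will show that on $\mathrm{Good}$ the bound $k^* > N/(4e) \geq cN$ holds uniformly in $b$ for every $c \leq \cupperboundmhr = 1/(4e)$. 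Hence on $\mathrm{Good}$ both numerator and denominator are invariant in $c$, so $\delta^{\interim}_m(v_{-I})$ is identical; on the complementary bad event the trivial bound $\delta^{\interim}_m \in [0,1]$ contributes at most $O(1/N)$ to the expectation, producing the $\pm O(1/N)$ slack claimed.

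The main obstacle will be the MHR subcase in which $b$ interleaves with the $v_{-I}$ order statistics near the optimal quantile $q^*$, since no uniform support bound is available there; the inserted bids effectively delete $m$ slots from the quantile-indexed maximization driving \cref{lem:mhr_event}, and I will need to rerun its concentration argument with these missing slots to confirm that the argmax location is displaced by only $o(1)$ in quantile and hence remains above $1/(4e)$.
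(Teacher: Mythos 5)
Your reduction to showing that $k^*$ exceeds $\frac{1}{2D}N$ (resp.\ $\frac{1}{4e}N$) uniformly over $b$ is the right target, and your treatment of the bounded case is correct, if somewhat roundabout. But the MHR case has a genuine gap, which you yourself flag as ``the main obstacle'': you propose to rerun the concentration argument behind \cref{lem:mhr_event} for every $b$, including the intermediate regime where $b$ interleaves with the order statistics of $v_{-I}$ near $q^*$. That lemma is stated and proved only for the configuration where $m$ samples are pinned to $+\infty$, and adapting it to a movable $b$ in the middle of the sort is not a cosmetic change — the empirical-quantile offsets shift by up to $m/N$ position-by-position in a $b$-dependent way, and the whole case analysis of \cref{lem:mhr_revenue_modified} would have to be redone. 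You have not done this, and it is not obvious it comes for free.

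The paper's proof sidesteps the problem entirely via a simple structural observation you are missing. Condition on the denominator event $k^*(\overline{v}_I,v_{-I},\frac mN)>dN$. If $b_I$ strictly lowers the price, i.e.\ $P(b_I,v_{-I},\frac mN)<P(\overline{v}_I,v_{-I},\frac mN)$, then the new selected value is strictly smaller than the old one, so its index in the sorted $(b_I,v_{-I})$ must be strictly larger than the index of the old selected value in $(b_I,v_{-I})$; and that latter index is at least $k^*(\overline{v}_I,v_{-I},\frac mN)-m>dN-m\geq\frac d2 N$. Hence the minimizing argmax index for the numerator is automatically $>\frac d2 N$ whenever the denominator index is $>dN$, with no probabilistic argument about $b$ at all. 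The proposition then reduces to controlling a single $\Pr[\event]$ for $\overline{v}_I$ (zero for bounded, $O(1/N)$ for MHR via \cref{lem:mhr_event} as already proved), plus a bounded contribution of at most $\Pr[\event]$ from the bad event. This is the step your proof needs; without it, the MHR half of the proposition is not established.
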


\begin{proof}\label{proof:prop-choice-of-c}
Fix any $d\in(0, 1)$ and any $c\in[\frac{m}{N}, \frac{d}{2}]$.
Recall that $\Delta_{N, m}^{\interim}(c) = \E_{v_{-I}\sim F}[ \delta_m^{\interim}(v_{-I},c)]$,
where, letting $\overline{v}_I$ be a vector of $m$ identical values that are equal to $\max v_{-I}$, then by Claim~\ref{claim:interim_to_delta}, 
$$\delta^{\interim}_m(v_{-I},c) = 1 - \frac{\inf_{b_I\in\mathbb{R}_+^m} P(b_I, v_{-I},c)}{P(\overline{v}_I, v_{-I},c)}.$$
Let $k^*(v,c)=\argmax_{i>cN}\{iv_{(i)}\}$ where $v=(\overline{v}_I, v_{-I})$, i.e.~the index of $P(v,c)$ in $v$. We show that, if $k^*(v,c) > dN$ for $c=\frac{m}{N}$, then 

\begin{itemize}

\item $P(\overline{v}_I,v_{-I}, c')=P(\overline{v}_I,v_{-I}, \frac{m}{N})$ for any $c'\in [\frac{m}{N}, d]$. To see this, note that
\[k^*(v, \frac{m}{N})=\argmax_{i>\frac{m}{N}N}\{iv_{(i)}\} = \argmax_{i>dN}\{iv_{(i)}\} = \argmax_{i>c'N}\{iv_{(i)}\} = k^*(v, c'),  \]
where the second equality follow from our assumption that $k^*(v, \frac{m}{N}) > dN>m$ and the third equality is because $\frac{m}{N}\le c' \le d$. 

\item $\inf_{b_I\in\mathbb{R}_+^m} P(b_I, v_{-I}, c') = \inf_{b_I\in\mathbb{R}_+^m} P(b_I, v_{-I}, \frac{m}{N})$ for any $c'\in[\frac{m}{N}, \frac{d}{2}]$. Fix $c=\frac{m}{N}$ and consider any $c'\in[\frac{m}{N}, \frac{d}{2}]$. 
Let $b_I\in\mathbb{R}_+^m$ be any bids such that $P(b_I, v_{-I}, c) < P(\overline{v}_I, v_{-I}, c)$. Let $v^b=(b_I, v_{-I})$. Consider $k^*(v^b, c)$, we have $v^b_{(k^*(v^b, c))} < v_{(k^*(v, c))}$, so $k^*(v^b, c)$ must be greater than the index of $v_{(k^*(v,c))}$ in $v^b$. The index of $v_{(k^*(v,c))}$ in $v^b$ is at least $k^*(v,c)-m$, thus $k^*(v^b, c) > k^*(v,c)-m> dN-m \ge \frac{d}{2} N$. 
We claim that $P(b_I, v_{-I}, c') = P(b_I, v_{-I}, c)$. To see this, note that 
\[k^*(b_I, v_{-I}, \frac{m}{N})=\argmax_{i>\frac{m}{N}N}\{iv^b_{(i)}\} = \argmax_{i>\frac{d}{2}N}\{iv^b_{(i)}\} = \argmax_{i>c'N}\{iv^b_{(i)}\} = k^*(b_I, v_{-I}, c'),  \]
where the second equality is because $k^*(v^b, \frac{m}{N})> \frac{d}{2}N$, and the third equality follows from $\frac{m}{N}\le c'\le \frac{d}{2}$. 
\end{itemize}
Thus, $\delta^{\interim}_m(v_{-I}, c) = \delta^{\interim}_m(v_{-I}, \frac{m}{N})$ for any $c\in[\frac{m}{N}, \frac{d}{2}]$. Define $\event(c)=[k^*\le dN]$, then 
\begin{align}\label{eqn:Delta_interim_any_c_bounded}
\Delta^{\interim}_{N, m}(c) ={}& \E[\delta^{\interim}_m(v_{-I}, c)] \nonumber \\
={}& \E[\delta^{\interim}_m(v_{-I}, c)\mid \overline{\event(c)}]\cdot\Pr[\overline{\event(c)}] +  \E[\delta^{\interim}_m(v_{-I}, c)\mid \event(c)]\cdot\Pr[\event(c)]  \nonumber\\
={}& \E[\delta^{\interim}_m(v_{-I}, \frac{m}{N})\mid \overline{\event(c)}]\cdot\Pr[\overline{\event(c)}] +  \E[\delta^{\interim}_m(v_{-I}, c)\mid \event(c)]\cdot\Pr[\event(c)]. \nonumber
\end{align}

For bounded distributions, consider $d=\frac{1}{D}$ and $c\in[\frac{m}{N},\frac{1}{2D}]$. We have proved in Theorem~\ref{thm:discount_combined} that $\Pr[\event(c)]=0$ for any $c\in[\frac{m}{N}, \frac{d}{2}]$. Thus $\Delta^{\interim}_{N, m}(c) = \E[\delta^{\interim}_m(v_{-I}, \frac{m}{N})\mid \overline{\event(c)}]\cdot\Pr[\overline{\event(c)}] = \Delta^{\interim}_{N, m}(\frac{m}{N})$.

For MHR distributions, let $d=\frac{1}{2e}$, as proved in  Lemma~\ref{lem:mhr_event}, $\Pr[\event]=O(\frac{1}{N})$ for $c\in[\frac{m}{N}, \frac{1}{4e}]$. Then 
$0\leq \E[\delta^{\interim}_m(v_{-I}, c)\mid \event(c)]\cdot \Pr[\event(c)] \leq 1\cdot \Pr[\event(c)] = O(\frac{1}{N})$. Thus  $\left| \Delta^{\interim}_{N, m}(c) - \Delta^{\interim}_{N, m}(\frac{m}{N})\right| \le O(\frac{1}{N})$ for any $c\in[\frac{m}{N}, \frac{1}{4e}]$.

\end{proof}

\citet{lavi2019redesigning} show a lower bound that can be compared to our upper bounds in Theorem~\ref{thm:discount_combined}. Specifically, they show that for the two-point distribution $v=1$ and $v=2$, each w.p.~0.5,
$\Delta^{\interim}_{N, 1}=\Omega(1/\sqrt{N})$. It is easy to adopt their analysis to any two-point distribution with $v_1=1$ and $v_2>1$. Since this is a bounded distribution, we obtain the following corollary: 

\begin{corollary}
For the class of bounded distribution with support in $[1, D]$, and any choice of $\frac{m}{N} \leq c \leq \frac{1}{2D}$, $\erm^c$ gives $\Delta^{\interim}_{N, m}(c) = \Omega(\frac{1}{\sqrt{N}})$ where the constant in $\Omega$ depends on $D$.
\end{corollary}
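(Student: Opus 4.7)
The plan is to combine \cref{prop:choice_of_c}, which for bounded distributions collapses $\Delta^{\interim}_{N,m}(c)$ to $\Delta^{\interim}_{N,m}(m/N)$ across the whole range $c\in[m/N,1/(2D)]$, with an explicit witness in the class. Fix $D>1$ and take $F$ to be the two-point distribution with $\Pr[v=1]=1-1/D$ and $\Pr[v=D]=1/D$; its support $\{1,D\}\subset[1,D]$ puts it in the stated class. Let $X$ count the coordinates of $v_{-I}$ equal to $D$, so $X\sim Bin(N-m,1/D)$.

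For this $F$ I would first evaluate the truthful and strategic prices explicitly. Using \cref{claim:interim_to_delta} to replace $v_I$ by $\overline{v}_I=(+\infty,\ldots,+\infty)$, the sorted combined array is $m$ infinities, then $X$ copies of $D$, then $N-m-X$ copies of $1$, so the two candidate maxima of $iv_i$ on $\{i>m\}$ are $(m+X)D$ and $N$, yielding $\erm^{m/N}(\overline{v}_I,v_{-I})=D$ iff $(m+X)D>N$ (the tie-breaking rule sends the boundary to the smaller value $1$). For the strategic side I would use the specific deviation $b_I=(1,\ldots,1)$, which merges the $m$ bids into the $1$-block; the same comparison is now between $XD$ and $N$, giving $\erm^{m/N}(b_I,v_{-I})=D$ iff $XD>N$. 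Hence $\delta^{\interim}_m(v_{-I})\ge 1-1/D$ on the integer window $\mathcal{G}=\{k\in\mathbb{Z}:N/D-m<k\le N/D\}$, which contains roughly $m\ge 1$ integers and, for $N$ large, satisfies $k>m$ inside so the $D$-block is non-empty and the computation is valid.

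The remaining step is anti-concentration. Since $m=o(\sqrt{N})$, every $k\in\mathcal{G}$ lies within $O(m)=o(\sqrt{N})$ of the mean $(N-m)/D$ of $X$, well inside one standard deviation $\Theta(\sqrt{N})$ (with constant depending on $D$). A direct Stirling estimate of the binomial pmf (or the local central limit theorem) then yields $\Pr[X=k]=\Theta(1/\sqrt{N})$ for each $k\in\mathcal{G}$, with constant depending on $D$. Taking any one such $k$ gives $\Pr[X\in\mathcal{G}]=\Omega(1/\sqrt{N})$, hence
\[\Delta^{\interim}_{N,m}(m/N)\;\ge\;(1-1/D)\,\Pr[X\in\mathcal{G}]\;=\;\Omega(1/\sqrt{N}),\]
which by \cref{prop:choice_of_c} extends to all $c\in[m/N,1/(2D)]$ and proves the corollary. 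The main bookkeeping hazard I foresee is aligning the tie-breaking direction at the two boundaries $(m+X)D=N$ and $XD=N$ so that $\mathcal{G}$ is defined with the correct strict versus non-strict inequalities; the anti-concentration itself is a short standard computation.
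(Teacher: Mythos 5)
Your proof is correct and follows the paper's intended approach: reduce to $c=m/N$ via \cref{prop:choice_of_c}, then run a binomial anti-concentration argument for the two-point distribution $\Pr[v=1]=1-1/D$, $\Pr[v=D]=1/D$. The paper delegates the detailed computation to an adaptation of \citet{lavi2019redesigning}'s analysis, which you have reconstructed cleanly, including the correct handling of the tie-breaking rule at $(m+X)D=N$ and $XD=N$ and the hypothesis $m=o(\sqrt{N})$ needed for \cref{prop:choice_of_c}.
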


It remains open to prove other lower bounds on $\Delta^{\interim}_{N, m}$, especially for MHR distributions.

\noindent
{\bf A lower bound on the approximate truthfulness parameter, $\epsilon_1$. }
Since $\Delta^{\interim}_{N, m}$ only upper bounds the approximate truthfulness parameter $\epsilon_1$,
a lower bound on $\Delta^{\interim}_{N, m}$ does not immediately implies a lower bound on $\epsilon_1$. However, an argument similar to above shows the same lower bound directly on $\epsilon_1$. Consider the two-point distribution $F$ where for $X\sim F$, $\Pr[X=1]=1-\frac{1}{D}$ and $\Pr[X=D]=\frac{1}{D}$. For simplicity let $K_1=K_2=2$ and suppose bidder $i$ participates in $m_1$ and $m_2$ auctions in the two phases, respectively. Let $N=T_1K_1$ and assume $m_1=o(\sqrt{N})$. Suppose the first-phase mechanism $\mathcal{M}$ is the second price auction with no reserve price.
Then,
\begin{proposition}\label{prop:lower-bound-two-phase}
In the above setting, $\epsilon_1 = \Omega\left( \frac{m_2}{\sqrt{N}}\right)$ for any $c\in[\frac{m_1}{N}, \frac{1}{2D}]$,
where the constant in $\Omega$ depends on $D$.
\end{proposition}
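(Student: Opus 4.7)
The plan is to exhibit a specific one-parameter family of first-phase deviations for bidder $i$ whose expected gain over truthful bidding is $\Omega(m_1m_2/\sqrt{N})$ with constant depending on $D$. Since $m_1 \ge 1$, this immediately yields $\epsilon_1 = \Omega(m_2/\sqrt{N})$. The deviation I would use is: bidder $i$ bids $1+\eta$ in every one of her $m_1$ phase~1 auctions, for a small $\eta>0$ to be sent to zero. Because in the no-reserve two-bidder second price auction against an opponent whose truthful bid lies in $\{1,D\}$ the bid $1+\eta$ beats the opponent exactly when she bids $1$ and is beaten exactly when she bids $D$, the allocation and payment for bidder $i$ are the same as under truthful bidding regardless of her own realized value. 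Hence the expected phase~1 utility is unchanged by the deviation, and the entire gain accrues in phase~2.

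For phase~2, I would analyze the reserve price $\erm^c$ produces from the $N=T_1K_1$ phase~1 bids. Let $K_{-i}\sim\mathrm{Bin}(N-m_1,1/D)$ count the $D$-bids contributed by the $N-m_1$ other bids, and let $k_i\sim\mathrm{Bin}(m_1,1/D)$, independent of $K_{-i}$, count the $D$'s in bidder $i$'s own phase~1 values. A direct case analysis on the candidate indices of $\erm^c$, using that $c\le 1/(2D)$ so the constraint $i>cN$ never binds at the relevant indices near $N/D$, shows: under truthful bidding, $\erm^c$ outputs $D$ iff $K_{-i}+k_i>N/D$; under the $1+\eta$ deviation, for $\eta$ small enough the ``middle'' candidate at index $K_{-i}+k_i$ carrying the bid $1+\eta$ is dominated by the candidate at index $N$ with value $1$, so $\erm^c$ outputs a price in $\{1,1+\eta\}$ whenever $K_{-i}\le N/D$. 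The reserve thus drops from $D$ to something close to $1$ precisely on the event $\{K_{-i}\le N/D< K_{-i}+k_i\}$.

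I would then estimate this flip probability by a local central limit theorem. The random variable $K_{-i}$ has variance $\Theta(N)$ with constant depending on $D$, and the hypothesis $m_1=o(\sqrt{N})$ places the threshold $N/D$ within $o(\sqrt{N})$ of the mean of $K_{-i}$, so a local limit theorem yields $\Pr[K_{-i}=j]=\Theta(1/\sqrt{N})$ uniformly for $j$ in a window of length $m_1=o(\sqrt{N})$ around $N/D$. Conditioning on $k_i$ and using independence, this gives
\[
\Pr[K_{-i}\le N/D< K_{-i}+k_i] \;=\; \E[k_i]\cdot\Theta(1/\sqrt{N}) \;=\; \Theta(m_1/\sqrt{N}).
\]

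Finally, I would combine this flip probability with the per-auction phase~2 gain. A direct computation using $\usecond^2(v,p)$ from \eqref{eqn:def_u_second} shows that $\E_{v_i\sim F}[\usecond^2(v_i,1)-\usecond^2(v_i,D)]=(D-1)^2/D^2$, a strictly positive constant depending only on $D$, and that the analogous quantity for reserve $1+\eta$ in place of $1$ differs by $O(\eta)$. Multiplying this per-auction gain by $m_2$ phase~2 auctions and by the flip probability gives a total expected deviation gain of $\Omega(m_1m_2/\sqrt{N})$, hence $\epsilon_1 = \Omega(m_2/\sqrt{N})$ as claimed. The main technical obstacle is the uniform local CLT estimate on a window of length $m_1$ around $N/D$, which is the place where the hypothesis $m_1=o(\sqrt{N})$ is crucially used to keep both the window length and the offset of the threshold from the mean of $K_{-i}$ within the central limit scale.
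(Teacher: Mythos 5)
Your proposal is correct, and it in fact establishes a strictly stronger lower bound than the paper's, $\Omega(m_1 m_2/\sqrt{N})$ rather than $\Omega(m_2/\sqrt{N})$, via a genuinely different choice of deviation. The paper fixes all of bidder $i$'s values to $D$ and has her shade \emph{infinitesimally} to $D-D^2/N$; this flips the $\erm^c$ output only when $K_{-i}$ (the number of $D$'s among the other $N-m_1$ bids) lies in a window of width $\approx 1$ around $N/D$, and the paper bounds that single binomial point mass by $\Omega(1/\sqrt{N})$. You instead shade \emph{all the way down} to $1+\eta$, so that the deviated output is $\approx 1$ as soon as $K_{-i}\le N/D$, while truthful bidding yields reserve $D$ whenever $K_{-i}$ plus the count of $D$'s among bidder $i$'s own phase-1 values exceeds $N/D$. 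This opens a flip window of width $\Theta(m_1)$ (not $1$) in $K_{-i}$, and the uniform local CLT estimate over a window of length $o(\sqrt{N})$ then gives flip probability $\Theta(m_1/\sqrt{N})$. Both proofs rely on the same observation that phase-1 utility is exactly unchanged by the shading, since any bid strictly in $(1,D)$ still beats an opponent bidding $1$ and loses to one bidding $D$. Your route pays slightly more in the concentration step (local CLT over a window vs.\ a single point mass), but this is standard and the hypothesis $m_1=o(\sqrt{N})$ is precisely what keeps both the window and the offset of $N/D$ from $\E[K_{-i}]$ within the central-limit scale.

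Two small remarks. First, there is a typo: after the deviation the ``middle'' ERM candidate sits at index $K_{-i}+m_1$, not $K_{-i}+k_i$, since you bid $1+\eta$ in all $m_1$ slots regardless of the realized value; this only makes the needed domination inequality $(K_{-i}+m_1)(1+\eta)\le N$ slightly stronger, and it still holds once $\eta$ is fixed small depending only on $D$. Second, since the $\epsilon$-BIC notion quantifies over all $\bm v_i$, you could simplify by fixing $\bm v_i$ to all $D$'s as the paper does: that makes $k_i=m_1$ deterministic, avoids averaging over phase-2 values, and replaces your per-auction gain $(D-1)^2/D^2$ by the larger constant $(D-1)^2/D$, streamlining the argument without weakening the conclusion.
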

\noindent The proof is in \cref{sec:lb-on-epsilon-1}. Once again, it remains open (and interesting, we believe) to prove a lower bound for MHR distributions, and to close the gap between our upper bound for bounded distributions which is $O(N^{-1/3} \log^2 N)$.

\subsection{Proof of \cref{prop:lower-bound-two-phase}: Lower Bound for the Two-Phase Model}\label{sec:lb-on-epsilon-1}
Consider the two-point distribution $F$ where for $X\sim F$, $\Pr[X=1]=1-\frac{1}{D}$ and $\Pr[X=D]=\frac{1}{D}$. For simplicity let $K_1=K_2=2$, and suppose bidder $i$ participates in $m_1$ and $m_2$ auctions in the two phases, respectively. Let $N=T_1K_1$ and assume $m_1=o(\sqrt{N})$. Suppose the first-phase mechanism $\mathcal{M}$ is the second price auction with no reserve price. We argue that to satisfy $\epsilon_1$-approximate truthfulness, $\epsilon_1$ must be $\Omega\left( \frac{m_2(D-1)^2}{\sqrt{(D-1)N}}\right)$, for $c\in[\frac{m}{N}, \frac{1}{2D}]$. 

Suppose the values of bidder $i$ across two phases are all $D$'s, i.e., 
\[\bm{v}_i=(\overbrace{D, \ldots, D}^{m_1}, \overbrace{D, \ldots, D}^{m_2}),\] 
and bidder $i$ bids $m_1$ $(D-\epsilon)$'s with $\epsilon=\frac{D^2}{N}$ in the first phase (assume $N\gg D^2$), 
\[b_I = (\overbrace{D-\epsilon, \ldots, D-\epsilon}^{m_1}). \]
Recall the definition of the interim utility of bidder $i$: 
\[\E_{v_{-I}}\left[ U^{\TP}_i(\bm{v}_i, b_I, v_{-I})\right] = \E_{v_{-I}}\left[ U^{\mathcal{M}}_i(\bm{v}_i, b_I, v_{-I})
    +
    m_2 \usecond^{K_2}(D, P(b_I, v_{-I})) \right].
\]
\begin{itemize} 
\item First consider the increase of interim utility in the second phase. If the reserve price is $D$, then bidder $i$'s utility is
\begin{equation}\label{eq:utility-phase-2-reserve-1}
 \usecond^{K_2}(D, D)=0.
\end{equation}
If the reserve price is $1$, then her utility becomes
\begin{equation}\label{eq:utility-phase-2-reserve-D}
 \usecond^{K_2}(D, 1) = (1-\frac{1}{D})\cdot (D-1) + \frac{1}{D}\cdot 0 = \frac{(D-1)^2}{D}.
\end{equation}
We then consider the probability that the reserve price is decreased from $D$ to $1$ because bidder $i$ deviates from $D$ to $D-\epsilon$. This probability is over the random draw of $N-m_1$ values $v_{-I}$. Suppose there are exactly $(\frac{1}{D}N+1-m_1)$ $D$'s in $v_{-I}$. Then when bidder $i$ bids truthfully, there are $(\frac{1}{D}N+1)$ $D$'s in $(v_I, v_{-I})$ in total, which results in $P(v_I, v_{-I}) = D$ because $(\frac{1}{D}N+1)\cdot D > N\cdot 1$. However, if bidder $i$ deviates to $b_I$, then $P(b_I, v_{-I})$ becomes $1$, because
\[ (\frac{1}{D}N+1)\cdot(D-\epsilon) =  N(1+\frac{D}{N})(1-\frac{D}{N}) < N \cdot 1,\quad\text{ and }\quad (\frac{1}{D}N+1-m_1)\cdot D \le N\cdot 1.\]
Thus, the reserve price is decreased from $D$ to $1$ with probability at least: 
\begin{equation}\label{eq:prob-decrease-D-1}
\Pr[Bin(N-m_1, \frac{1}{D})=\frac{1}{D}N+1-m_1] = \Omega\left(\frac{1}{\sqrt{\frac{1}{D}(1-\frac{1}{D})N}}\right). 
\end{equation}
Combining \eqref{eq:prob-decrease-D-1}, \eqref{eq:utility-phase-2-reserve-D} and $\eqref{eq:utility-phase-2-reserve-1}$, we obtain 
\begin{multline}\label{eq:utility-phase-2-difference}
 \E_{v_{-I}}\left[ m_2 \usecond^{K_2}(D, P(b_I, v_{-I}))  - m_2 \usecond^{K_2}(D, P(v_I, v_{-I}))\right] \\
 \ge  \Omega\left(\frac{1}{\sqrt{\frac{1}{D}(1-\frac{1}{D})N}}\right)m_2\left( \frac{(D-1)^2}{D} - 0 \right) = \Omega\left( \frac{m_2(D-1)^2}{\sqrt{(D-1)N}}\right). 
\end{multline}
\item Then we upper bound the utility loss due to non-truthful bidding in the first phase for bidder $i$. Note that since $\mathcal{M}$ is the second price auction with no reserve price, no matter bidder $i$ bids $D$ or $D-\epsilon>1$, her interim utility is the same: 
\[ \E_{v_{-I}}\left[ U^{\mathcal{M}}_i(\bm{v}_i, v_I, v_{-I}) \right] = m_1 \left((1-\frac{1}{D})(D-1) + \frac{1}{D}\cdot 0\right) = \frac{m_1(D-1)^2}{D}, \]
\[ \E_{v_{-I}}\left[ U^{\mathcal{M}}_i(\bm{v}_i, b_I, v_{-I}) \right]= m_1 \left((1-\frac{1}{D})(D-1) + \frac{1}{D}\cdot 0\right) = \frac{m_1(D-1)^2}{D}. \]
Thus
\begin{equation}\label{eq:utility-phase-1-difference}
\E_{v_{-I}}\left[ U^{\mathcal{M}}_i(\bm{v}_i, b_I, v_{-I})  - U^{\mathcal{M}}_i(\bm{v}_i, v_I, v_{-I}) \right] = 0.
\end{equation}
\end{itemize}
Finally, by \eqref{eq:utility-phase-2-difference} and \eqref{eq:utility-phase-1-difference}, we have
\[\E_{v_{-I}}\left[ U^{\TP}_i(\bm{v}_i, b_I, v_{-I}) - U^{\TP}_i(\bm{v}_i, v_I, v_{-I})\right] \ge \Omega\left( \frac{m_2(D-1)^2}{\sqrt{(D-1)N}}\right), \] 
which gives a lower bound on $\epsilon_1$.

\section{Unbounded Regular Distributions}
\label{sec:appendix-unbounded-regular-distributions}

\subsection{Discussion}
\cref{thm:TPM_utility_revenue_combined} shows that approximate truthfulness and revenue optimality can be obtained simultaneously for bounded (regular) distributions and for MHR distributions.
%
%
A natural question would then be: what is the largest class of value distribution that we can consider? 
If $K_1 > 1$ or $K_2 > 1$ (i.e., each auction includes multiple bidders, at least $2$), then running a second price auction with an anonymous reserve price may not be optimal if the distribution is non-regular \cite{myerson1981optimal}. Moreover, even in the one-bidder case, the sample complexity literature analyzes the ERM algorithm only for regular or for non-regular and bounded distributions. For other classes of distributions, ERM does not seem to be the correct choice. Thus, the class of general unbounded regular distributions is the largest class we can consider. Still, our results do not cover this entire class since MHR distributions is a strict sub-class of regular distributions and for regular but non-MHR distributions we assume boundedness.

Our results can be generalized to the class of $\alpha$-strongly regular distributions with $\alpha > 0$. As defined in \cite{cole2014the}, a distribution $F$ with positive density function $f$ on its support $[A, B]$ where $0\le A\le B\le+\infty$ is \emph{$\alpha$-strongly regular} if the virtual value function $\phi(x) = x-\frac{1-F(x)}{f(x)}$ satisfies $\phi(y)-\phi(x)\ge \alpha (y-x)$
whenever $y>x$ (or $\phi'(x) \ge \alpha$ if $\phi(x)$ is differentiable). As special cases, regular and MHR distributions are $0$-strongly and $1$-strongly regular distributions, respectively.
For $\alpha>0$, we obtain bounds similar to MHR distributions on $\Delta^{\interim}_{N, m}$ and approximate incentive-compatibility in the two-phase model and the uniform-price auction.  Specifically, \cref{thm:discount_combined} can be extended to any $\alpha$-strongly regular distribution with $\alpha>0$ as follows: 
\begin{theorem}\label{thm:delta-upper-bound-alpha}
If $F$ is $\alpha$-strongly regular for $0< \alpha \le 1$, then $\Delta^{\interim}_{N,m}=O\left(m\frac{\log^3 N}{\sqrt{N}}\right)$, when $m=o(\sqrt{N})$ and $\frac{m}{N} \le \left(\frac{\log N}{N}\right)^{1/3}\le c \le \frac{\alpha^{1/(1-\alpha)}}{4}$. 
The constants in $O$ and $o$ depend on $\alpha$. 
\end{theorem}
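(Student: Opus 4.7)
The plan is to follow the two-step argument used for MHR distributions in Lemma~\ref{lem:mhr_main_upper-bound}, substituting the MHR-specific ingredients with their $\alpha$-strongly regular analogues that have already been developed in the appendix (Lemmas~\ref{lem:alpha_optimal_q}--\ref{lem:alpha_event}). First, I would apply the Main Lemma framework to obtain
\[ \Delta^{\interim}_{N,m} \le \int_0^1 \Pr[\delta_I(\overline{v}_I, v_{-I}) > \eta \,\land\, \overline{\event}]\, d\eta + \Pr[\event], \]
choosing $d = \alpha^{1/(1-\alpha)}/2$ so that the Main Lemma's hypothesis $c \le d/2$ matches the stated upper bound on $c$. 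Lemma~\ref{lem:alpha_event} then yields $\Pr[\event] = O(N^{-(3-2\alpha)/(2\alpha)})$, which for every $\alpha \in (0,1]$ is negligible compared to the target rate $O(\log^3 N / \sqrt{N})$.

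Next, I would sharpen the integrand by replacing $\conc$ with an event $\conc'$ that additionally enforces $R(q_{k^*}) \ge (1 - O(\sqrt{\log N / N}))R(q^*)$ and $q_{k^*} \ge q^* - O((\log N/N)^{1/4})$, both guaranteed by Lemmas~\ref{lem:alpha_revenue_modified} and \ref{lem:alpha_event}. Combined with the quadratic decay of $R(q)$ around $q^*$ (Lemma~\ref{lem:alpha_quadratic}, where the leading constant is $\alpha/3$ instead of $1/4$), this forces any bad index $j$ appearing in $\Bad(\eta_t, \theta_t)$ to have quantile $q_j$ within an interval of length $O((\log N/N)^{1/4})$ around $q^*$. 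Repeating the union bound from Lemma~\ref{lem:Pr_Bad_MHR} over this shortened range gives
\[ \Pr[\Bad(\eta_t, \theta_t) \land \overline{\event} \land \conc'] = O\left(\sqrt[4]{\tfrac{\log N}{N}} \cdot \tfrac{m \log^2 N}{d^4 \theta_t \sqrt{\eta_t^3 N}}\right), \]
where the $d$-dependence now absorbs constants in $\alpha$. Summing this over the $M = O(\log\log N)$ bad events from Lemma~\ref{lem:thresholds} and applying Claim~\ref{cl:eta-calculation} yields $G(\eta) = O(m \log^3 N / (d^4 N^{3/4} \eta^{3/2}))$, and integration over $\eta \in [0,1]$ produces the claimed $O(m \log^3 N / \sqrt{N})$ bound.

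The main obstacle, and the reason for the stronger lower bound $c \ge (\log N/N)^{1/3}$ in the hypothesis, lies in handling indices of very small quantile. In the MHR proof, Lemma~\ref{lem:mhr_largest_value} provides an exponential tail bound that rules out $\erm^c$ picking an index with $q_j < N^{-1/4}$ by a simple union bound. For $\alpha < 1$, however, Lemma~\ref{lem:alpha_largest_value} gives only a polynomial tail of the form $(v^*/x)^{1/(1-\alpha)}$, which is too weak to close this case in the same way. The assumption $c \ge (\log N/N)^{1/3}$ bypasses the issue entirely by directly forbidding $\erm^c$ from selecting any index below $cN$; the remaining range $q_j \in [(\log N/N)^{1/3} - \gamma, \alpha^{1/(1-\alpha)}/2 + \gamma]$ is then eliminated via the quadratic revenue-loss contradiction of Lemma~\ref{lem:alpha_revenue_modified}. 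Finally, I would track how each constant depends on $\alpha$ through $e(\alpha) = \alpha^{1/(1-\alpha)}$, noting that these constants blow up as $\alpha \to 0$; this confirms the argument does not extend to general regular distributions, consistent with the conjecture raised in Section~\ref{sec:discussion}.
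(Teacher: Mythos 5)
Your proposal is correct and follows the exact route the paper sketches (and then omits) for this theorem: run the Main Lemma with $d=\alpha^{1/(1-\alpha)}/2$, replace the MHR ingredients by their $\alpha$-strongly regular analogues (Lemmas~\ref{lem:alpha_optimal_q}--\ref{lem:alpha_event}, with quadratic constant $\alpha/3$ from Lemma~\ref{lem:alpha_quadratic}), and use the extra hypothesis $c \ge (\log N/N)^{1/3}$ to eliminate the small-quantile branch that the MHR argument handled via the exponential tail of Lemma~\ref{lem:mhr_largest_value}. Your explanation that Lemma~\ref{lem:alpha_largest_value}'s polynomial tail is too weak to reproduce that branch for $\alpha<1$, and that the lower bound on $c$ circumvents it exactly as implemented in Lemma~\ref{lem:alpha_revenue_modified}, is the intended justification; the residual $\Pr[\event]=O(N^{-(3-2\alpha)/(2\alpha)})=O(N^{-1/2})$ is indeed absorbed by the target rate for all $\alpha\in(0,1]$.
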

Note that this bound holds only for large enough $N$'s since $\alpha^{1/(1-\alpha)} \rightarrow 0$ as $\alpha \rightarrow 0$. However, for any fixed $\alpha > 0$ there exists a large enough $N$ such that the relevant range for appropriate $c$'s will be non-empty.\footnote{The constant in the big $O$ depends on $\alpha$, so it is constant only if the distribution function is fixed. However, it goes to infinity as $\alpha\to 0$.} The proof of the upper bound on $\Delta^{\interim}_{N, m}$ is similar to the proof for MHR distributions (Lemma \ref{lem:mhr_main_upper-bound}), except that, we need $c\ge \left(\frac{\log N}{N}\right)^{1/3}$ to guarantee $\Pr[\event]=O(\frac{1}{N})$ (\cref{lem:alpha_event}); thus we omit the proof.

Similarly, both of \cref{thm:connection_TPM_PRM_combined} which says that the two-phase model is $
(O(m_2v^* \Delta^\worst_{T_1K_1, m_1}) + O(\frac{m_2v^*}{\sqrt{T_1K_1}}))$-BIC and \cref{thm:uniform-price-bne} which says that the uniform-price auction is $
(m, (O(v^* \Delta^\worst_{N, m}) + O(\frac{v^*}{\sqrt{N}})))$-group BIC, hold for any $\alpha$-strongly regular distribution with $\alpha>0$.  The proof of the former is in \cref{sec:proof_lemma_1_alpha} and the latter is omitted.

\begin{figure}[tb]
    \centering
    \begin{minipage}{.45\textwidth}
        \centering
        \includegraphics[width=\linewidth]{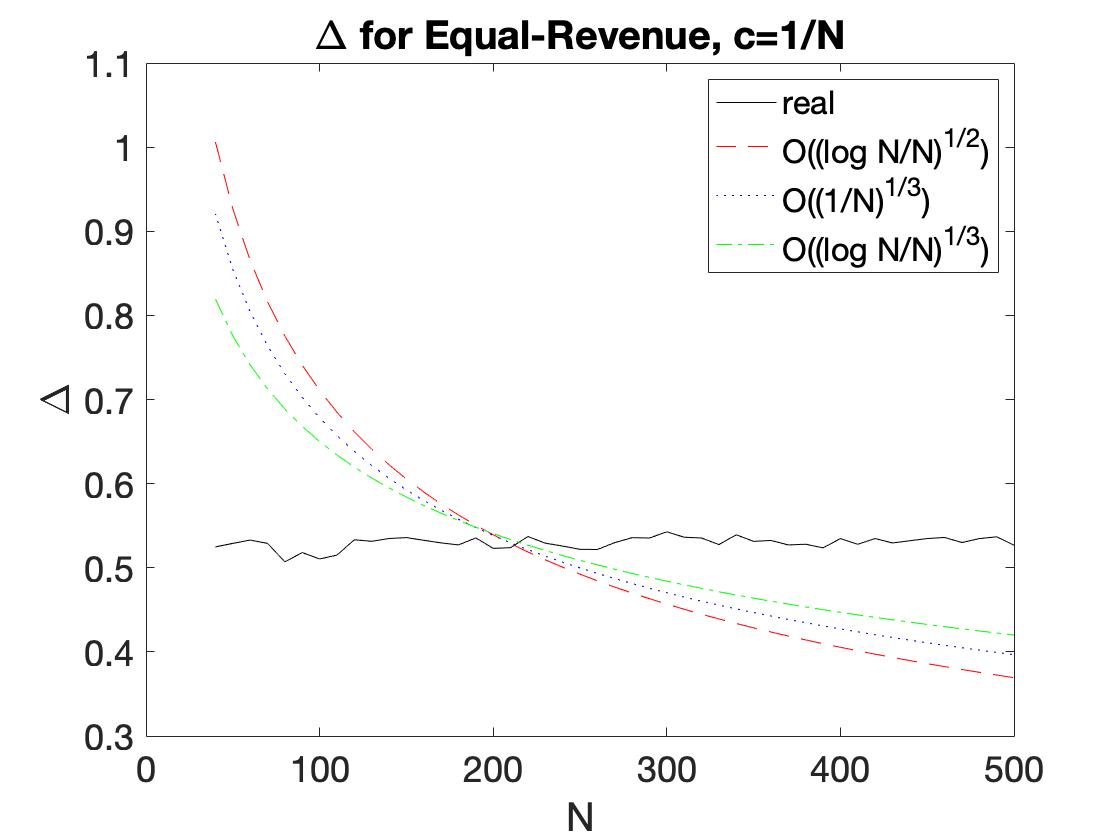}
        \caption{For equal-revenue distribution, $\Delta^{\interim}_{N, 1}$ (in black) on the $y$-axis as a function of $N$ on the $x$-axis, with $c=1/N$. Three other functions are plotted for reference.}
        \label{fig:ER-constant}
    \end{minipage}%
    \hfill 
    \begin{minipage}{0.45\textwidth}
        \centering
        \includegraphics[width=\linewidth]{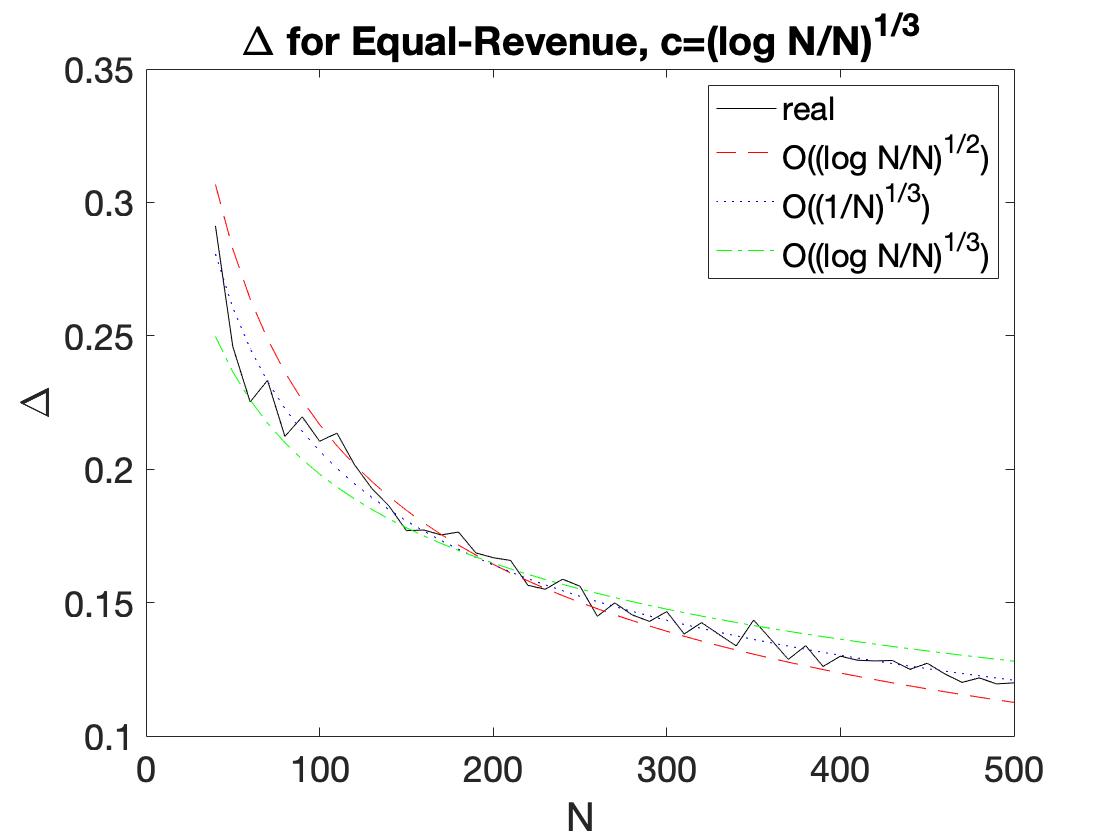}
        \caption{For equal-revenue distribution, $\Delta^{\interim}_{N, 1}$ (in black) on the $y$-axis as a function of $N$ on the $x$-axis, with $c=\Theta((\log N/N)^{1/3})$. Three other functions are plotted for reference.}
        \label{fig:ER-non-constant}
    \end{minipage}
\end{figure}

It remains an open problem for future research whether $\erm^c$ is incentive-aware in the large for regular distributions that are not $\alpha$-strongly regular for any $\alpha>0$. For these distributions additional technical challenges exist since the choice of $c$ in $\erm^c$ creates a clash between approximate truthfulness and approximate revenue optimality. Unlike MHR and bounded regular distributions for which we can fix $c=m/N$ to obtain approximate truthfulness and revenue optimality, for arbitrary unbounded distribution we have to choose $c$ more carefully. If $c$ is too large, for example, a positive constant, then we cannot obtain nearly optimal revenue.

Specifically, to obtain close-to-optimal revenue for all bounded distributions on $[1, D]$ it is easy to verify that we need $c \leq 1/D$. Since the class of unbounded regular distributions contains all bounded regular distributions for all $D \in \mathbb R_+$, it follows that $c$ cannot be a constant. We therefore need to consider a non-constant $c(N)$. In fact, it has been shown in \cite{huang2018making} that if $c(N)\to 0$ as $N \rightarrow \infty$ then approximate revenue optimality can be satisfied. However, if $c$ is too small, truthfulness will be violated, as discussed in the following two examples (assume $m=1$ for simplicity).

\begin{example}[Small $c$ hurts truthfulness]\label{app-example:ER-constant-delta}
Suppose we choose $c(N)=\frac{1}{N}$, that is, $\erm^c$ ignores only the largest sample. Consider the equal-revenue distribution $F(v)=1-\frac{1}{v}$ for $v\in[1, +\infty)$. Note that this is a $0$-strongly regular distribution but not $\alpha$-strongly regular for any $\alpha>0$ since for any $x$, $\phi(x) = 0$. Similarly to \citet{yao2018incentive}, we prove in Appendix~\ref{sec:ER-proof} that $\Delta^{\interim}_{N, 1}$ does not go to $0$ as $N\to+\infty$. This is also visible in Fig.~\ref{fig:ER-constant}, which shows in black $\Delta^{\interim}_{N, 1}$ (on the $y$-axis) as a function of $N$ (on the $x$-axis). This was obtained via simulation, for $c=1/N$. Three other functions are plotted in other colors, for reference. However, whether $\Delta^{\interim}_{N, 1}\to 0$ crucially depends on the choice of $c$, as can be seen in Fig.~\ref{fig:ER-non-constant}, where $\Delta^{\interim}_{N, 1}$ seems to converge to zero with $c=\Theta((\log N / N)^{1/3})$.
\end{example}

\begin{example}[Does an intermediate $c$ hurt truthfulness as well?]
Now assume $c=\Theta((\log N/N)^{1/3})$, and consider the ``triangular'' distribution $F(v)=1-\frac{1}{v+1}$ for $v\in [0, +\infty)$.
This distribution can be seen as the limit of a series of bounded regular distributions whose upper bounds and optimal reserve prices both tend to $+\infty$. Note that it is a regular distribution but not $\alpha$-strongly regular for any $\alpha>0$. We do not know whether
$\Delta^{\interim}_{N, 1}\to 0$ as $N \rightarrow \infty$ (see Fig.~\ref{fig:delta-triangular}). In particular, our main lemma (Lemma~\ref{lem:main_upper-bound}) may not suffice to analyze this distribution as $\Pr[\event]$ (as defined in that lemma) is unlikely to go to zero as $N$ goes to infinity (see Fig.~\ref{fig:triangular_Pr_E} for simulation results). 
\end{example}

\begin{figure}
\centering
\begin{minipage}{.45\textwidth}
  \centering
  \includegraphics[width=\linewidth]{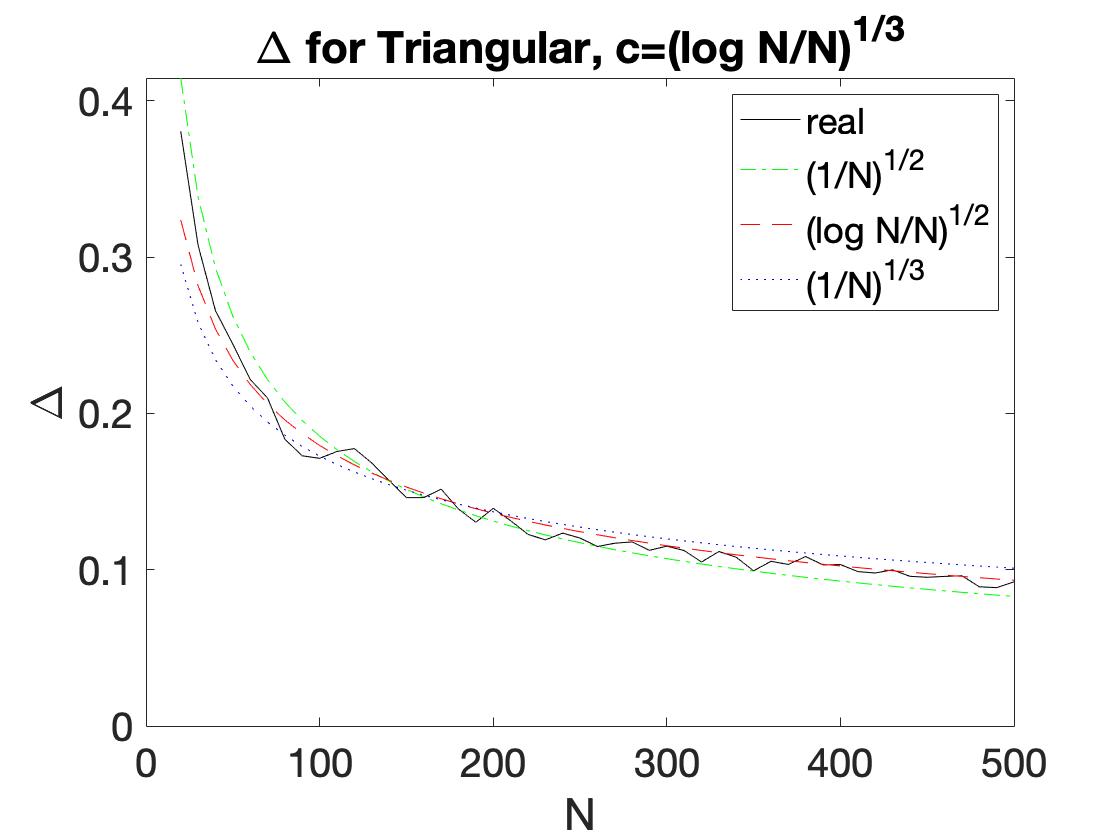}
  \caption{$\Delta^{\interim}_{N, 1}$ for the triangular distribution, with $c=\Theta((\log N/N)^{1/3})$}
  \label{fig:delta-triangular}
\end{minipage}%
\hfill
\begin{minipage}{.45\textwidth}
  \centering
  \includegraphics[width=\linewidth]{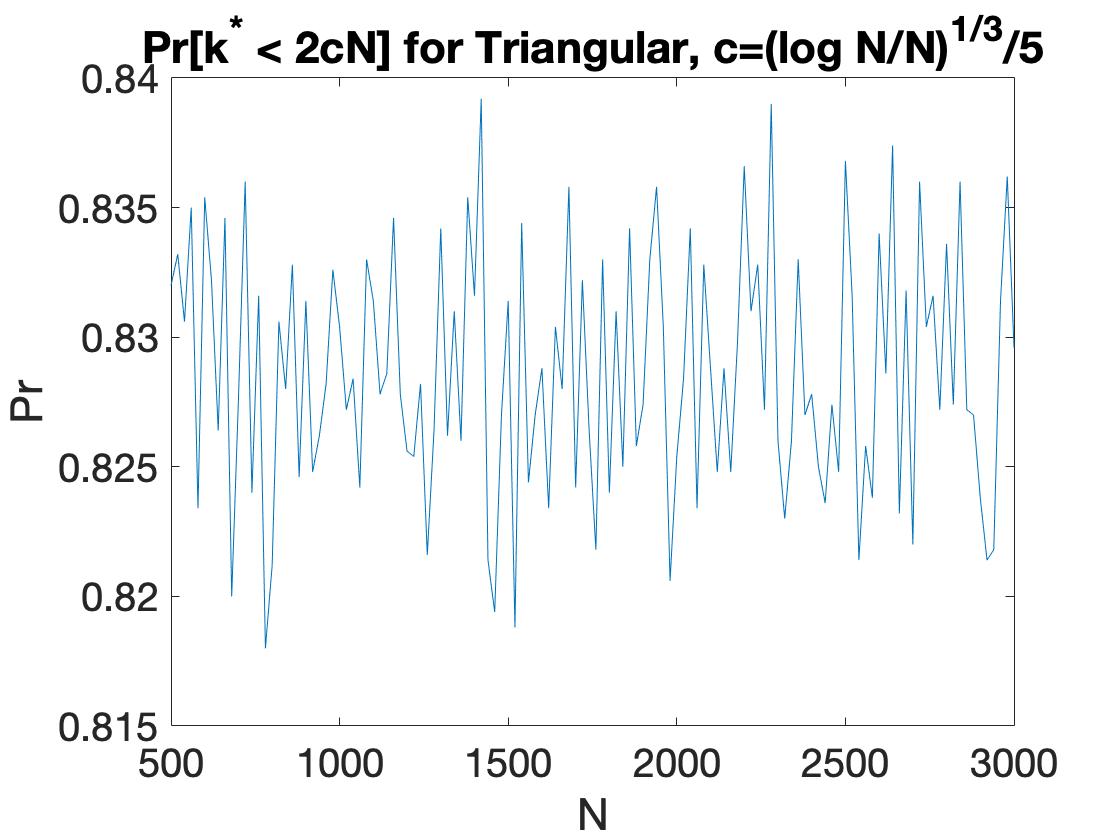}
  \caption{$\Pr[k^* < 2cN]$ as a function of $N$ for the triangular distribution, with $c=(\log N/N)^{1/3}/5$. Our simulations show that $\Pr[\event]\ge\Pr[k^*<2cN]\ge 0.815$.}
  \label{fig:triangular_Pr_E}
\end{minipage}
\end{figure}

\subsection{Proof of Example \ref{app-example:ER-constant-delta}: $\Delta^{\interim}_{N, 1}\not\to 0$ for the Equal-Revenue Distribution and $c=1/N$}
\label{sec:ER-proof}

We show that when $F$ is the equal-revenue distribution, $F(v)=1-\frac{1}{v}$, ($v\in[1, +\infty)$), and $c=1/N$, $\Delta^{\interim}_{N, 1}$ does not go to $0$ as $N\to+\infty$.

Recall Definition \ref{def:discount}, $\Delta^{\interim}_{N, 1}=\E_{v_{-I}}[\delta^{\interim}_{N, 1}(v_{-I})]$: we draw $N-1$ i.i.d. values $v_{-I}=\{v_2,\dots,v_N\}$ from $F$, and a bidder can change any value $v_{I}=v_1$ to any non-negative bid $b_1$. Let the other values $v_{-I}=\{v_2,\dots,v_N\}$ be sorted as $v_{(2)}\geq \dots \geq v_{(N)}$.
Let $\lambda=10, \lambda'=1$, and let $T_{N-1}$ be the event that $v_{(2)} > \lambda (N-1)$ and $v_{(3)} < \lambda' (N-1)$.
Let $V_{N-1}$ be the event that $\max_{3\leq i \leq N}\{i\cdot v_{(i)}\}\leq \lambda (N-1)$.

When $v_{-I}$ satisfies event $T_{N-1}\land V_{N-1}$, then
\begin{equation}\label{eq:delta_lambda}
\delta_{m}^{\interim}(v_{-I}) \geq 1-\frac{\lambda'}{\lambda}.
\end{equation}
This is because when $T_{N-1}\land V_{N-1}$ happens, there exists some $v_1 > \lambda (N-1)$, resulting $P(v_{I},v_{-I}) > \lambda (N-1)$, while the bidder can strategically bid $b_1 < \lambda' (N-1)$ and change the price to $P(b_{I},v_{-I}) < \lambda' (N-1)$.

Moreover, we will show that the probability that the event $T_{N-1}\land V_{N-1}$ happens satisfies 
\begin{equation}\label{eqn:tnvn}
\Pr[T_{N-1}\land V_{N-1}] \geq 0.9\cdot \frac{1}{\lambda}e^{-\frac{2}{\lambda'}}.
\end{equation}
Combining \eqref{eq:delta_lambda}  and \eqref{eqn:tnvn}, we know
$$\Delta_{m}^{\interim} \geq (1-\frac{\lambda'}{\lambda})\cdot(0.9\cdot \frac{1}{\lambda}e^{-\frac{2}{\lambda'}}) > 0.$$

The proof of \eqref{eqn:tnvn} is separated into two parts.
Firstly, 
$$\Pr[T_{N-1}] = \binom{N-1}{1}\frac{1}{\lambda (N-1)}(1-\frac{1}{\lambda'(N-1)})^{N-2}\geq \frac{1}{\lambda}e^{-\frac{2}{\lambda'}}.$$
Then we show that
\[ \Pr[\overline{V_{N-1}} \mid T_{N-1}]<0.1.\]
Let $z_3, \dots, z_{N}$ be i.i.d.~random draws according to $F$ conditioning on $z<\lambda'(N-1)$, i.e., for any $t\in [1,N-1]$, recalling that $\lambda'=1$, 
$$\Pr_{z\sim F}[ z>t \mid z<\lambda'(N-1)]=\frac{1}{1-\frac{1}{N-1}}(\frac{1}{t}-\frac{1}{N-1}).$$
Let $Y_{N-1}^{max} = \max_{3\leq i \leq N} \{i\cdot z_{(i)}\}$ where $z_{(3)} \ge \cdots \ge z_{(N)}$ is the sorted list of $z_i$'s. 
Clearly,
$$\Pr[\overline{V_{N-1}} \mid T_{N-1}] =  \Pr[Y_{N-1}^{max} \geq \lambda (N-1)].$$
For any $t\geq 1$, let $M_t$ be the number of $z_i$'s ($3\leq i\leq N$) satisfying $z_i\geq t$, and $B_t$ be the event that $t\cdot (M_t+2)\geq \frac{\lambda(N-1)}{2}$.
Let $t_k=\frac{N-1}{2^k}$ for $1\leq k \leq \lfloor \log_2 (N-1)\rfloor$.
As the event $Y_{N-1}^{max}\geq \lambda(N-1)$ implies $\bigvee_{1\leq k\leq\lfloor \log_2 (N-1)\rfloor}B_{t_k}$, we have
\begin{equation}\label{eqn:yn}
\Pr[Y_{N-1}^{max} \geq \lambda (N-1)]\leq \sum\limits_{k=1}^{\lfloor \log_2 (N-1)\rfloor} \Pr[B_{t_k}].
\end{equation}
Note that $\E[M_t] = \frac{N-1}{t} - 1$.
Using Chernoff's bound, we have
\begin{equation}\label{eqn:bt}
\Pr[B_t] = \Pr[M_t\ge \frac{\lambda(N-1)}{2t}-2] \leq e^{-\frac{1}{3}(\frac{N-1}{t}-1)(\frac{\lambda}{2}-1)^2}.
\end{equation}
Combining \eqref{eqn:yn} and \eqref{eqn:bt}, we know 
\begin{align*}
\Pr[\overline{V_{N-1}} \mid T_{N-1}] = \Pr[Y_{N-1}^{max} \geq \lambda (N-1)]\leq \sum_{k=1}^{\lfloor \log_2 (N-1)\rfloor} e^{-\frac{1}{3}(2^k-1)(\frac{\lambda}{2}-1)^2}  < \sum_{k=1}^{+\infty} e^{-\frac{1}{3}k(\frac{\lambda}{2}-1)^2} < 0.1, 
\end{align*}
and this completes the proof of \eqref{eqn:tnvn}.

\end{document}